\pdfoutput=1
\documentclass[11pt]{article}
\usepackage{style}
\usepackage{shortcuts}

\begin{document}

\title{Min-Plus Convolution Lower Bounds\\via a Higher-Order BSG Theorem}
\author{%
    Nick Fischer\footnote{Max Planck Institute for Informatics. E-Mail: \url{nfischer@mpi-inf.mpg.de}} \and
    Ce Jin\footnote{University of California, Berkeley. E-Mail: \url{cejin@berkeley.edu}} \and
    Yinzhan Xu\footnote{University of California, San Diego. E-Mail: \url{xyzhan@ucsd.edu}}}
\date{}
\maketitle

\begin{abstract}
    \noindent
\emph{Min-Plus Convolution} is a central problem in fine-grained complexity, and the associated Min-Plus Convolution Hypothesis forms the basis for a wide range of conditional lower bounds for fundamental problems. It is closely connected to the APSP and 3SUM Hypotheses, and in fact implies both, making it a unifying hypothesis for two of the main pillars of the area. In this work we establish several strong results related to Min-Plus Convolution. We design a \emph{universe reduction}, showing, under a plausible additive combinatorics assumption, that the Min-Plus Convolution Hypothesis is \emph{equivalent} to the \emph{Strong} Min-Plus Convolution Hypothesis. We also obtain tight conditional lower bounds for multiple long-standing problems, including Min-Max Convolution and Bounded Monotone Min-Plus Convolution.

Our approach is inspired by Fischer's recent equivalence between several variants of APSP [STOC '26], but extending that technique to the arithmetic setting requires overcoming deep obstacles. To this end, we develop a novel additive structure theorem that can be viewed as a higher-order substitute of the Balog--Szemerédi--Gowers (BSG) theorem, allowing us to extract strong additive structure even from weakly structured sets. Building on this structural result, we show that certain structured 3SUM instances (namely, sets with low \emph{rank}) can be solved in truly subquadratic time. This algorithm forms the main algorithmic ingredient in our reductions. Besides, it generalizes \emph{all} previously known truly subquadratic-time special cases of 3SUM, and is therefore of independent interest.

\end{abstract}
\thispagestyle{empty}
\clearpage

\tableofcontents
\setcounter{page}{0}
\thispagestyle{empty}
\clearpage

\section{Introduction} \label{sec:intro}
The \emph{Min-Plus Convolution} problem is one of the central problems in fine-grained complexity. Given two integer sequences $A$ and $B$ of length $n$, the task is to compute the sequence $A * B$ where $(A * B)[k] = \min_i (A[i] + B[k - i])$. Naively this takes time $O(n^2)$, and the fastest known algorithm achieves only mild improvements, $n^2 / 2^{\Omega(\sqrt{\log n})}$~\cite{Williams18}. This leads to the \emph{Min-Plus Convolution Hypothesis}~\cite{CyganMWW19,KunnemannPS17} postulating that truly subquadratic time $O(n^{2-\epsilon})$, for some constant $\epsilon > 0$, is impossible.\footnote{Unless otherwise specified, the entry bounds of all problems we consider are polynomially bounded in the input size. The postulated lower bounds hold in the Word-RAM model with $\Theta(\log n)$-bit words, and hold against potentially randomized algorithms.} This hypothesis has been used to establish conditional lower bounds for problems including Knapsack~\cite{CyganMWW19,KunnemannPS17}, Unbounded Subset Sum and the Frobenius problem~\cite{Klein22}, Integer Programming~\cite{JansenR23}, Tree Sparsity~\cite{BackursIS17}, hop-bounded shortest paths~\cite{KociumakaP23,VassilevskaXXZ25}, and an approximate version of 3SUM~\cite{BringmannGK24}.

Beyond its many applications, the Min-Plus Convolution Hypothesis plays a central conceptual role in the landscape of fine-grained complexity. It is closely connected to the famous \emph{3SUM} and \emph{APSP} Hypotheses, and in fact implies them both~\cite{BremnerCDEHILPT14,BackursIS17}. Thus, it can be viewed as a unified hypothesis of two of the main pillars of the area. In particular, the Min-Plus Convolution Hypothesis is one of the strongest well-established hypotheses, implying more conditional lower bounds than most other fine-grained assumptions. This makes studying Min-Plus Convolution a fundamental problem in its own right.

In this work, we establish several strong results that significantly sharpen our understanding of Min-Plus Convolution. We design a \emph{universe reduction}: assuming a plausible hypothesis from additive combinatorics, we show that the Min-Plus Convolution Hypothesis is equivalent to its \emph{strong} variant, which asserts hardness even for instances with integer weights bounded by $O(n)$. As a consequence, we obtain tight conditional lower bounds for a variety of long-standing convolution problems with ``intermediate'' complexity, including Min-Max Convolution and Bounded Monotone Min-Plus Convolution.

\subsection{Starting Point}
Our starting point is a recent breakthrough by Fischer~\cite{Fischer26}, which establishes analogous results for the APSP problem. Specifically, Fischer shows that the APSP Hypothesis and the Strong APSP Hypothesis are equivalent, assuming a plausible hypothesis from additive combinatorics and that $\omega = 2$. He further shows that the Directed Unweighted APSP Hypothesis, stating that APSP in unweighted directed graphs requires $n^{2.5-o(1)}$ time, is in this equivalence class. This greatly sharpens our understanding of the role of the universe size in APSP.

Given that Min-Plus Convolution and APSP are similar in many ways, one might wonder whether Fischer's technique transfers. In particular, can it be applied to show the equivalence between the Min-Plus Convolution Hypothesis and the Strong Min-Plus Convolution Hypothesis?

At a high level, Fischer~\cite{Fischer26} follows a structure-versus-random approach that could, in principle, apply to any problem $P$, and that has already been successful in other fine-grained reductions~\cite{AbboudBKZ22,AbboudBF23,JinX23,ChanVX23}. It can be summarized in three generic steps:
\begin{enumerate}[label=(\arabic*)]
    \item Identify a suitable measure of \emph{structure} for $P$.
    \item Show that $P$ can be solved fast on \emph{structured} instances. 
    \item Then, suppose that we aim to design a fine-grained reduction from $P$ to some other problem~$Q$. Given any instance of $P$, we first repeatedly extract structured subinstances, and solve these subinstances with the fast algorithm from step 2. Once there is no more structured subinstance, we regard the remaining instance as \emph{pseudo-random}, and exploit the pseudo-randomness in the reduction to $Q$.
\end{enumerate}
Of course, successfully implementing this approach hinges crucially on the choice of the structure notion. It has to be weak enough so that structured instances can be solved efficiently, but strong enough so that the corresponding pseudo-random instances can be exploited in the reduction. 

In Fischer's setup, $P$ is the APSP problem (or rather, for technical reasons, the equivalent \emph{Min-Plus Product} problem), and the relevant notion of structure is a matrix rank parameter known as \emph{select-plus rank}\footnote{The \emph{select-plus rank} of an $n \times n$ integer matrix $A$ is the smallest number $r$ such that there exist matrices $U, V$ of sizes $n \times r$ and $r \times n$, respectively, so that $A[i, j] \in \bigcup_{k=1}^r \set{U[i, k] + V[k, j]}$.}. A major technical contribution is a fast algorithm for instances of low rank, from which the universe reduction and further lower bounds follow via the general framework.

Let us return to Min-Plus Convolution. As the analogy to Min-Plus Product is strong, a priori one may hope that Fischer's techniques can be extended to Min-Plus Convolution without too much difficulty. However, our situation is quite the exception. Obtaining the analogous results for Min-Plus Convolution poses several substantially new obstacles that are not present in the matrix world. First, it is entirely unclear what the appropriate notion of structure should be for sequences or sets. This naturally falls into the realm of additive combinatorics; however, any such notion must capture \emph{extremely weak} additive structure, far from the commonly used strong structure measures. This leads to a second major challenge: the existing tools from additive combinatorics are not well suited for this regime, and we need to invent our own structural results. All in all, our route involves three steps, each of which is interesting in its own right:
\begin{itemize}
    \item We develop a new additive structure theorem that can be viewed as a ``higher-order'' substitute of the Balog--Szemerédi--Gowers theorem (\cref{sec:intro:additive-comb}).
    \item We then identify the appropriate structure notion, called \emph{rank} in analogy to~\cite{Fischer26}, and show that low-rank instances of Min-Plus Convolution can be solved in subquadratic time, based on the new structure theorem. In fact, we directly work with the harder 3SUM problem (\cref{sec:intro:low-rank-3sum}). 
    \item Equipped with this new algorithm, we derive the aforementioned universe reduction and strong lower bounds from Min-Plus Convolution in analogy to Fischer's proof (\cref{sec:intro:sec:univ-reduct,sec:intro:sec:intermediate}).
\end{itemize}

\subsection{A New Additive Structure Theorem} \label{sec:intro:additive-comb}
Using additive combinatorics in fine-grained algorithm design and complexity is an exciting recent trend~\cite{ChanL15,BringmannW21,BringmannN20,BringmannN21,AbboudBF23,JinX23,ChanVX23,Jin24,Bringmann24,ChenLMZ24a,BringmannFN25,FischerJX25,AbboudFJVX25,ChenMZ26,Fischer26}. This was pioneered by Chan and Lewenstein~\cite{ChanL15}, who introduced the celebrated Balog--Szemerédi--Gowers (BSG) Theorem to our community, and derived various strong algorithmic consequences of it. The BSG Theorem has since found many more applications~\cite{ChanL15,AbboudBF23,JinX23,ChanVX23,FischerJX25,AbboudFJVX25}, notably also in Fischer's work~\cite{Fischer26}, and became an important part of the fine-grained toolkit. To understand both the strengths and the limitations of the BSG Theorem---and how we overcome them---let us take a short excursus into additive combinatorics.

Intuitively, the BSG theorem states that any set $A$ with a lot of ``local structure'', e.g., in the form of many solutions to the equation $a_1 + a_2 = a_3 + a_4$ for $a_1, a_2, a_3, a_4 \in A$, must admit a large subset $H$ that is ``globally structured'', in the sense that the \emph{sumset} $H + H = \set{a + b : a, b \in H}$ is small. Such sets $H$ are also called \emph{low-doubling} sets since their \emph{doubling} \smash{$\frac{|H+H|}{|H|}$} is small. To state the BSG Theorem quantitatively, let us introduce a piece of notation that will turn out convenient later:
\begin{equation*}
    \Pop^k(A; p) = \set*{ x : \abs{\set{(a_1, \dots, a_k) \in A^k : x = a_1 + \dots + a_k}} \geq p},
\end{equation*}
i.e., the set of all sums $a_1 + \dots + a_k$ that can be expressed in at least $p$ distinct ways. We refer to these as the \emph{popular $k$-sums}. Then the BSG theorem can be equivalently stated in the following (non-standard) form:\footnote{To relate \cref{thm:bsg-reformulated} to the usual BSG Theorem (e.g., \cref{thm:bsg-symm}), note that $A$ has a popular 4-sum, $\Pop^4(A; |A|^3 / K) \neq \emptyset$, if and only if $A$ has large \emph{additive energy}, $E(A) \geq |A|^3 / K'$, where $K$ and $K'$ depend polynomially on each other. The ``if'' direction is immediate by Cauchy--Schwarz. The ``only if'' direction follows from some basic Fourier analysis and Hölder's inequality (see e.g., \cite[Lemma 2.11]{JinX23}).}

\begin{theorem}[BSG Theorem, Reformulated] \label{thm:bsg-reformulated}
Let $A \subseteq \Int$ be a set with $\Pop^4(A; |A|^3 / K) \neq \emptyset$. Then there is a set $H \subseteq \Int$ satisfying that:
\begin{itemize}
    \item $|H| \leq |A|$ (in fact, we can take $H$ as a subset of $A$),
    \item $|H + H| \leq K^{O(1)} |H|$,
    \item $|A \cap H| \geq |A| / K^{O(1)}$.
\end{itemize}
\end{theorem}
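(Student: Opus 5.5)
The plan is to reduce the statement to the standard symmetric BSG Theorem (\cref{thm:bsg-symm}). That theorem says: if $E(A) \geq |A|^3/K'$, then there is a subset $H \subseteq A$ with $|H| \geq |A|/K'^{O(1)}$ and $|H+H| \leq K'^{O(1)}|H|$. Taking this subset $H \subseteq A$ gives all three bullets directly. We have $|H| \leq |A|$, and $|A \cap H| = |H| \geq |A|/K'^{O(1)}$. So the only real work is to show that $\Pop^4(A; |A|^3/K) \neq \emptyset$ implies $E(A) \geq |A|^3/K^{O(1)}$.

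For this implication, fix $x \in \Pop^4(A; |A|^3/K)$. Let $r(y) = |\{(a_1,a_2) \in A^2 : a_1+a_2 = y\}|$ count representations as a 2-sum. Then the number of representations of $x$ as a 4-sum is
\begin{equation*}
    \sum_y r(y)\, r(x-y) \geq \frac{|A|^3}{K}.
\end{equation*}
By Cauchy--Schwarz,
\begin{equation*}
    \sum_y r(y)\, r(x-y) \leq \Bigl(\sum_y r(y)^2\Bigr)^{1/2} \Bigl(\sum_y r(x-y)^2\Bigr)^{1/2} = \sum_y r(y)^2 = E(A).
\end{equation*}
Here $E(A)=\sum_y r(y)^2$ counts solutions of $a_1+a_2=a_3+a_4$. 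Hence $E(A) \geq |A|^3/K$, so we may take $K' = K$.

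Applying \cref{thm:bsg-symm} with this energy bound then yields the claimed $H \subseteq A$, with all bounds polynomial in $K$. I do not expect a real obstacle, since the heavy lifting is inside the BSG theorem, which we take as given. The only point to check is that the version of \cref{thm:bsg-symm} in use gives polynomial dependence on $K$ and produces a subset of $A$ itself, rather than of a related set such as $A-A$. If it only provides a set $H'$ with large intersection with $A$ and small doubling, the statement still follows with $H = H' \cap A$ after adjusting. Alternatively, one can keep $H'$ itself, provided $|H'| \leq |A|$, which standard statements guarantee.
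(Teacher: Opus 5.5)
Your proposal is correct and follows the route the paper gives in its footnote to this theorem: a $|A|^3/K$-popular 4-sum $x$ gives $E(A) \geq \sum_y r(y)\,r(x-y) \geq |A|^3/K$ by Cauchy--Schwarz, and then \cref{thm:bsg-symm} yields $H \subseteq A$ with $|H| \geq \Omega(K^{-2}|A|)$ and $|H+H| \leq O(K^{24}|A|) \leq K^{O(1)}|H|$. The Fourier-analytic/H\"older argument the paper mentions is only needed for the converse implication (large energy implies a popular 4-sum), which this statement does not use.
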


In other words, the existence of a popular 4-sum certifies the presence of a large low-doubling subset.

Unfortunately, the BSG Theorem is insufficient to deal with sets exhibiting much weaker additive structure. Consider the following example. Let~$A$ be a random size-$n$ set in a large universe~$\set{0, \dots, u}$, say for $u = n^{10}$. In this case, with good probability, each 4-sum is unique. In particular, there are no popular 4-sums. This is consistent with the fact that $A$ contains no large low-doubling subset. Nevertheless, $A$ is not entirely unstructured: it is contained in a much larger low-doubling set, namely the entire universe $\set{0, \dots, u}$. The challenge is that this weak global structure is invisible to the 4-sums. Our key insight is that such weak structure can be detected by moving to larger $k$-sums. Specifically, we show that for sufficiently large $k$, the presence of a popular $k$-sum reveals the desired structure. This leads to the following higher-order analogue of the BSG Theorem.

\begin{restatable}[Popular $k$-Sums Imply Low-Doubling Structure]{theorem}{thmksumpop} \label{thm:k-sum-pop}
For every $c > 0$ and $\epsilon > 0$, there is some $k = k(c, \epsilon)$ such that the following holds. Let $A \subseteq \Int$ and assume that $\Pop^k(A; p) \neq \emptyset$ for some $p = \Omega(|A|^{k-1-c})$. Then there exists a set $H \subseteq \Int$ and some $K = O_{c, \epsilon}(|A|^\epsilon)$ such that:
\begin{itemize}
    \item $|H| \leq K |A|^{1+c}$,
    \item $|H + H| \leq K |H|$,
    \item $|A \cap H| \geq |A| / K$.
\end{itemize}
\end{restatable}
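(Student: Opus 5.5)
Let $n = |A|$. The plan is an energy-increment iteration over the sets $A, 2A, 4A, \dots$, using the BSG Theorem (\cref{thm:bsg-reformulated}) at the step where the doubling gain stalls.

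\textbf{From a popular $k$-sum to many representations.} A popular sum $x \in \Pop^k(A;p)$ means there are at least $n^{k-1-c}$ tuples $(a_1,\dots,a_k)$ with $a_1+\dots+a_k = x$. Take $k = 2^m$. Split each tuple into two halves and write $r_j(y)$ for the number of representations of $y$ as a sum of $2^{j}$ elements of $A$. Then
\begin{equation*}
    \sum_y r_{m-1}(y)\, r_{m-1}(x-y) \ge n^{k-1-c}.
\end{equation*}
By Cauchy--Schwarz this gives a lower bound on the additive energy of the weighted set $r_{m-1}$. Define the normalized energy deficit
\begin{equation*}
    \delta_j = \frac{n^{3\cdot 2^{j}}\,/\,E(r_j)}{|\mathrm{supp}\, r_j|}.
\end{equation*}
The key quantity is $D_j = |\mathrm{supp}(2^jA)|/n^{1+c}$, which controls "how large the structure at level $j$ is".

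\textbf{Dichotomy at each level.} The core claim I would try to prove says one of two things holds at each level $j$:
\begin{enumerate}
    \item[(a)] $|2^{j+1}A| \ge n^{\epsilon'}\,|2^jA|$, i.e.\ the iterated sumset grows polynomially;
    \item[(b)] the level-$j$ representation function is nearly flat, with $|2^jA + 2^jA| \le n^{\epsilon'}|2^jA|$ on a popular part.
\end{enumerate}
Case (a) can happen at most $O(1/\epsilon')$ times before $|2^jA|$ would exceed $n^{2^j}$ in a way inconsistent with the popularity hypothesis. Indeed, the popular sum forces $|kA| \lesssim n^{1+c}$ in a dense-representation sense: the number of sums carrying most of the mass is at most $n^k/p \approx n^{1+c}$. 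So choosing $m \approx 1/\epsilon'$ guarantees a level where (b) holds.

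\textbf{Extracting $H$.} At that level, apply the BSG Theorem, or rather its weighted/asymmetric version, to the set $S$ of popular elements of $2^jA$. This yields $H' \subseteq S$ with $|H'+H'| \le n^{O(\epsilon')}|H'|$ and $|H'| \le |S| \lesssim n^{1+c}$.

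Next, transfer back to $A$. Since $H'$ carries a large fraction of the mass of $r_j$, by averaging there is some fixed $(2^j-1)$-tuple sum $t$ such that $H := H' - t$ satisfies $|A\cap H| \ge n/n^{O(\epsilon')}$. Translation preserves the size and the doubling of $H'$. Choosing $\epsilon'$ small relative to $\epsilon$, and $k = 2^m$ accordingly, gives $K = O(n^{\epsilon})$.

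\textbf{Main obstacle.} The hard step is proving the dichotomy rigorously. Large energy of $r_j$ must be converted into either real growth of the support at the next level or a flat, low-doubling popular piece, and the losses from dyadic pigeonholing on the popularity levels of $r_j$ must stay polynomially small across all $m$ levels. I would first try to handle this by restricting at each level to a single dyadic popularity class, which costs $\log n$ per level. I would then use Plünnecke--Ruzsa to control $|S+S|$ in terms of $|S+T|$ for the next-level sumset.
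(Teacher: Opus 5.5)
\textbf{Gap: the dichotomy is false as stated.} The dichotomy carries the entire argument. The hypothesis gives one sum $x$ with $p=\Omega(|A|^{k-1-c})$ representations. That may be only a $|A|^{-1-c}$ fraction of all $|A|^k$ tuples, so it bounds neither $|kA|$ nor the number of sums carrying most of the mass. Hence your claim that case (a) occurs only $O(1/\epsilon')$ times has no basis.

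Concretely, let $A=I\cup R$, where $I$ is an interval of size $n/2$ and $R$ is a far-away set of size $n/2$ with no nontrivial additive relations of order at most $k$. The middle element of $kI$ is $\Omega(n^{k-1})$-popular, and $H=I$ satisfies the conclusion. Yet $|2^{j+1}A|\geq n^{2^j-o(1)}\,|2^jA|$ at every level, so (a) never stops. Support size is the wrong potential.

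The extraction step has matching holes. Applying BSG to a popular set $S$ needs $E(S)\geq|S|^3/n^{O(\epsilon')}$. Case (b) as written already presupposes small doubling, which would make BSG superfluous, and the needed energy bound is never derived. Your averaging over $(2^j-1)$-tuples yields $|A\cap H|\geq n^{1-O(\epsilon')}$ only if $H'$ carries an $n^{-O(\epsilon')}$ fraction of the total mass $n^{2^j}$ of $r_j$. This is not derived either, and a set of popular sums need not carry most of the mass.

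\textbf{The missing idea: track normalized higher energy.} Let $\tau(\ell)=T_\ell(A)/|A|^{2\ell-1}$.
\begin{itemize}
\item $\tau$ is non-increasing: Young's inequality $\|f*g\|_2\leq\|f\|_1\|g\|_2$ gives $T_{2\ell}(A)\leq|A|^{2\ell}T_\ell(A)$.
\item $\tau(1)=1$, and $\tau(k/2)=\Omega(|A|^{-c})$ by your Cauchy--Schwarz step, which is correct.
\item Pigeonholing over $\ell=2^j$ with $j<m$ gives a level with $T_{2\ell}(A)\geq|A|^{2\ell-\delta}T_\ell(A)$ for $\delta=(c+o(1))/m$.
\end{itemize}
Split $r_{\ell A}$ into dyadic level sets $P_D$ and take $D$ maximizing $D^4E(P_D)$, so that $T_{2\ell}\leq|A|^{o(1)}D^4E(P_D)$. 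Combine this with $T_\ell\geq D^2|P_D|$, $D|P_D|\leq|A|^\ell$, $E(P_D)\leq|P_D|^3$ and $T_{2\ell}\geq|A|^{4\ell-1-c-o(1)}$. This yields three bounds:
\begin{itemize}
\item $E(P_D)\geq|P_D|^3|A|^{-\delta-o(1)}$,
\item $D|P_D|\geq|A|^{\ell-\delta/2-o(1)}$,
\item $|P_D|\leq|A|^{1+c+o(1)}$.
\end{itemize}
These are exactly the inputs your symmetric-BSG-plus-averaging step needs. So your doubling scheme can be completed this way, with $k=2^{O(c/\epsilon)}$.

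\textbf{Comparison with the paper.} The paper instead moves from $k-1$ to $k$ by adding one copy of $A$, recursing on $k$ with $\delta=2c/k$. Near-equality then gives only the unbalanced bound $E(A,P_D)\geq|A|^{2-\delta-o(1)}|P_D|$, with $|P_D|$ up to $|A|^{1+c}$. That is why it needs the asymmetric BSG theorem of Tao--Vu. Its $(1+|P_D|/|A|)^{\epsilon}$ loss produces $H$ with $|H|\leq K|P_D|$ and $|A\cap H|\geq|A|/K$ directly, without any mass-concentration argument.
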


To illustrate how to apply the theorem, consider again the example from before. We apply \cref{thm:k-sum-pop} with $c = 9$, and for some arbitrarily small constant $\epsilon$. The theorem then chooses some constant $k = k(c, \epsilon)$. There is indeed some $k$-sum in $A$ that is $\Omega(|A|^{k-1-c})$-popular (as there are $|A|^k$ $k$-sums taking at most $k |A|^{10}$ distinct values), satisfying the premise of the theorem. \cref{thm:k-sum-pop} then implies that a large part of $A$ is contained in some set $H$ of size at most $K |A|^{10}$ and doubling $|H + H| \leq K |H|$. Here, the doubling $K = O_{c, \epsilon}(|A|^\epsilon)$ is controlled by the initial choice of $\epsilon$. I.e., we can force $H$ to be arbitrarily well-structured (as long as $K$ is polynomially large), at the only cost of increasing the constant $k$. \cref{thm:k-sum-pop} turns out to be a surprisingly powerful tool that is likely to find more applications beyond our specific interest related to the Min-Plus Convolution problem.

The proof of \cref{thm:k-sum-pop} builds on the \emph{asymmetric} BSG Theorem of Tao and Vu~\cite{TaoV06}, a significantly more intricate extension of BSG to pairs of sets of different sizes. In fact, our result is the first in fine-grained complexity, and even theoretical computer science as a whole, that exploits the asymmetric BSG Theorem, to our knowledge.

At a very high level, the proof of \cref{thm:k-sum-pop} is by increasing $k$ gradually and tracking the sets of popular $\ell$-sums for $\ell = 4, 5, \dots, k$. By a pigeonhole argument, at some point, the popular $\ell$-sums and $(\ell+1)$-sums correlate (both in size and structurally), at which point the asymmetric BSG theorem can be applied to extract the desired low-doubling set. We provide a detailed proof in \cref{sec:pop-sums}. We also face the extra challenge that for our algorithmic applications, it does not suffice to know that $H$ exists, but we must be able to \emph{compute} $H$. Therefore, along the way we also devise an efficient randomized algorithm to find $H$ as in \cref{thm:k-sum-pop} (see \cref{thm:k-sum-pop-algo}). Notably, to this end we also had to design an efficient algorithm for the asymmetric BSG Theorem (see \cref{sec:asym-bsg}). This black box will certainly be handy for other algorithms in the future.

Finally, let us remark that there are some other higher-order variants of BSG that have been studied in additive combinatorics~\cite{SudakovSV05,BorensteinC11,Shkredov14,Lee25}. However, unfortunately none of these existing results applies in the weakly structured regime required for our applications (though some of the same ideas, e.g.\ as in~\cite{BatemanK11,Shkredov14}, turn out useful for the proof of \cref{thm:k-sum-pop} as well). This concludes our additive combinatorics excursus.

\subsection{Low-Rank 3SUM} \label{sec:intro:low-rank-3sum}
Equipped with \cref{thm:k-sum-pop} we are now in a position to start implementing the structure-versus-randomness approach following Fischer~\cite{Fischer26}. Recall from before that the foundation of Fischer's proof is to identify a certain \emph{rank} parameter, and to design a truly subcubic-time algorithm for computing the min-plus product of \emph{low-rank} matrices. An immediate challenge of extending Fischer's proof to Min-Plus Convolution is the lack of such a rank notion. It is unclear what should be the rank of a sequence or a set with respect to normal convolutions, let alone Min-Plus Convolution. 

A main contribution of our paper is to identify a suitable notion of rank for sequences, which is strong enough (so that unstructured instances are sufficiently pseudo-random to complete our reductions) but not overpowering (so we can still design a fast algorithm for Min-Plus Convolution on low-rank sequences).

In fact, we will directly work with \emph{3SUM}, a harder problem than Min-Plus Convolution. Given three sets $A, B, C$, it asks to decide if there is a solution to $a + b = c$ for $a \in A, b \in B, c \in C$. 3SUM can be solved naively in time $O(n^2)$, and the \emph{3SUM Hypothesis} speculates that it cannot be solved in truly subquadratic time. This hypothesis is among the ``three big'' fine-grained assumptions. As mentioned before, Min-Plus Convolution reduces tightly to 3SUM, and for this reason we prove a stronger statement by showing that low-rank 3SUM is in subquadratic time. Another advantage is that the structure measure can be defined for sets rather than sequences, which turns out cleaner:

\begin{definition}[Rank] \label{defn:low-rank}
The \emph{rank} of a set $A \subseteq \Int$, denoted $r(A)$, is defined as the smallest integer $r \geq 0$ such that there is a set $X \subseteq \Int$ of size $r$ satisfying $A \subseteq X + X$.
\end{definition}

This notion has been considered before in the additive combinatorics literature, see, e.g.,~\cite{Erdos77,Bukh25}.\footnote{In~\cite{Erdos77,Bukh25}, no explicit name is given for the measure in \cref{defn:low-rank}. Our name \emph{rank} is inspired both from the connection to~\cite{Fischer26}, and since, coincidentally, the name \emph{$[2]$-sumset rank} has already been used in some previous work~\cite{BulteauFRV15}.} Observe that the rank of $A$ ranges from $\ceil{\sqrt{|A|}}$ to $|A|+1$. Sets with low rank are considered structured; for instance intervals or arithmetic progressions have rank close to $\sqrt{|A|}$. At the other extreme, random-like sets have rank close to $|A|$. We show that for sets with rank up to~$|A|^{1-\epsilon}$, 3SUM can be solved in subquadratic time:

\begin{theorem}[Informal \cref{thm:3sum-low-rank}] \label{thm:informal-low-rank-3sum}
Let $\epsilon > 0$. Any 3SUM instance $(A, B, C)$ where one of the sets has rank at most $O(n^{1-\epsilon})$ can be solved in time $O(n^{2-\epsilon'})$ for some $\epsilon' > 0$.
\end{theorem}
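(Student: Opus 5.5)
Write $A \subseteq X + X$ with $|X| = r \le n^{1-\epsilon}$; by symmetry of 3SUM (replacing $C$ by $-C$ and swapping roles) I may assume that $A$ is the low-rank set. The plan is a structure-versus-randomness recursion driven by \cref{thm:k-sum-pop}, applied to $X$ rather than to $A$.

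\textbf{Step 1: extract structure from $X$.} Each $a \in A$ has a representation $a = x + x'$ with $x, x' \in X$. Since $|A| = n \geq r^{1/(1-\epsilon)} = r^{1+\delta}$, the set $X + X$ covers a lot. The intended route is to look at $k$-fold sums of $X$. Because $|X+X| \ge |A|$ while $|X|^2 = r^2$, the covering alone does not give popularity. The useful popularity comes instead from the 3SUM structure. I would therefore run the standard reduction to the universe $[u]$ with $u = \mathrm{poly}(n)$, so that $kX$ lies in an interval of length $k u$. Then $\Pop^k(X; r^k/(ku)) \neq \emptyset$ by pigeonhole. Writing $u = r^{1+c}$ for a constant $c$, this is exactly the premise of \cref{thm:k-sum-pop}. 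Its algorithmic version \cref{thm:k-sum-pop-algo} then returns $H$ with $|H + H| \le K|H|$, $|H| \le K r^{1+c}$, and $|X \cap H| \ge r/K$, where $K = r^{\epsilon_0}$ for a small $\epsilon_0$ of my choosing.

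\textbf{Step 2: solve the structured part.} Let $X_1 = X \cap H$ and $A_1 = A \cap (X_1 + X_1)$. Then $A_1 \subseteq H + H$, and by Plünnecke--Ruzsa the iterated sums of $H+H$ have size $K^{O(1)}|H|$. The task is to decide whether $a + b = c$ for some $a \in A_1$. I would cover $B$ and $C$ by translates of $H+H$ (or $2H-2H$). For each translate pair this becomes a sparse convolution inside a low-doubling set. Such a convolution can be computed by FFT after a Freiman isomorphism, or by the Chan--Lewenstein style bucketing, in time $\tilde O(K^{O(1)}|H|)$ per piece, with about $n/|A_1|$-many relevant pieces.

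The main obstacle is the balancing. The removed portion of $A$ must be large, with $|A_1| \gtrsim n^{\Omega(1)}$ times the cost per element, while $|H|$ may be as large as $r^{1+c}$, far larger than $n$. I would try to use the fact that an arbitrary low-doubling $H$ of size up to $n^{O(1)}$ still has bounded Freiman dimension, hence is compressible to a universe of size $K^{O(1)}|H|$ only on small pieces. So I would instead choose $c$ such that $r^{1+c}\approx n^{2-\Omega(\epsilon)}$, achieved by first hashing to the universe $u\approx n^{2-\epsilon/2}$ using almost-linear hashing with a false-positive check. FFT over $H+H$ then costs $n^{2-\Omega(\epsilon)}$.

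\textbf{Step 3: iterate.} Remove $X_1$ from $X$ and repeat. Each round discards a $1/K$ fraction of $X$, so after $K\log r$ rounds $X$ is empty. The total cost is $K^{O(1)} n^{2-\Omega(\epsilon)}$, which stays subquadratic once $\epsilon_0 \ll \epsilon$. A residual $X'$ too small for \cref{thm:k-sum-pop} to apply has $|X'| \le n^{1-\epsilon}$. The elements of $A$ using it are then handled by brute force: for each pair $(x,x')$ with $x \in X'$, enumerate only the $a = x+x'$ actually in $A$, and test membership in $C - B$ via the above convolution over the interval hash buckets. This costs $O(n\cdot |X'|) = O(n^{2-\epsilon})$ since every $a$ is charged to some $x\in X'$ together with a scan over $B$ per $x$, i.e.\ $|X'| \cdot n$ work.
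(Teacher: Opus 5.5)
There is a genuine gap, and it is already in Step~1. Pigeonholing $kX$ over an interval of length $ku$ and feeding this to \cref{thm:k-sum-pop} only yields $|H| \le K r^{1+c} = Ku$. The interval $[u]$ is itself such a container, so this step extracts no information and never uses $|X| \le n^{1-\epsilon}$. This is exactly the paper's random-set example, where the theorem simply rediscovers the universe.

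Your repair, hashing to $u \approx n^{2-\epsilon/2}$ with a false-positive check, does not use the low rank either. If it worked, it would put arbitrary 3SUM in subquadratic time. After hashing, each output has about $n^2/u$ spurious witnesses, and certifying them is precisely the bottleneck of 3SUM: listing them by sampling costs $\tO(u \cdot n^2/u)=\tO(n^2)$.

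The residual step has the same problem. For fixed $x$, deciding which $a = x+x' \in A$ lie in a solution is the 3SUM instance $x' + b = c - x$, not a single scan of $B$. So the $O(n|X'|)$ bound is unfounded.

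In Step~2, nothing bounds the number of translates of $H$ needed to cover $B$ by $n/|A_1|$. With $|H| \approx n^{2-\Omega(\epsilon)}$, even a few pieces are too expensive unless pieces containing few elements of $B$ are brute-forced, as in the Jin--Xu covering (\cref{lem:3sum-cover-low-doubling}). Freiman compression is also unnecessary here: sparse convolution (\cref{lem:sparse-conv}) runs in time $|A'+B'|$ up to polylogarithmic factors.

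The missing idea is where the structure comes from. The paper never extracts structure from $X$; it uses $X$ only through Fredman's trick.

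\begin{itemize}
\item Relabel so that the low-rank set is $C \subseteq X+X$. Every solution then gives $a - x_1 = x_2 - b$, so all quadruples $(a,b,x_1,x_2)$ can be listed by meet-in-the-middle in time $O(n|X|)$ plus their number.
\item Elements $b$ lying in few quadruples are handled by this listing (\cref{lem:3sum-low-rank-pop-cover}). The surviving $b$ are popular sums of $X+X-A$.
\item Swapping the roles of $B$ and $C$ and iterating $k$ times, survivors witness a popular $k$-sum of $\pm A$, the \emph{other} set, with popularity about $|A|^k/|X|^2$ up to an arbitrarily small polynomial factor.
\item Beforehand, the algorithm peels off every part of $A$ carrying such a popular $k$-sum. \cref{thm:k-sum-pop} gives each part a container of size about $|X|^2$, again up to an arbitrarily small polynomial factor.
\item Consequently, the iteration is forced to empty $B$ or $C$, that is, to list every remaining solution.
\end{itemize}

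It is this container bound, governed by $|X|^2 \le n^{2-2\epsilon}$ rather than by $u$, that makes the final covering and sparse-convolution step subquadratic.
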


Interestingly, \cref{thm:informal-low-rank-3sum} generalizes and unifies \emph{all} previously known special cases of 3SUM that are known to be in truly subquadratic time:
\begin{itemize}
    \item \emph{Small-Universe 3SUM:} When $A, B, C \subseteq [u]$, 3SUM can be solved in time $\tO(u)$ by the Fast Fourier Transform. This is truly subquadratic when $u \leq n^{2-\epsilon}$. In this case, all three sets have rank $O(\sqrt u) \leq n^{1-\epsilon/2}$ (witnessed by $X = \{0, 1, \ldots, \lceil \sqrt{u}\rceil -1\} \cup \{\lceil \sqrt{u}\rceil, 2\lceil \sqrt{u}\rceil, \ldots, \lceil \sqrt{u}\rceil^2\}$). 
    \item \emph{Small-Doubling 3SUM:} As a vast generalization to Small-Universe 3SUM, Abboud, Bringmann, and Fischer \cite{AbboudBF23} and Jin and Xu \cite{JinX23} considered 3SUM on inputs with $|A+A|\le K|A|$, and gave truly subquadratic time algorithm provided $K\le n^\delta$ where $\delta>0$ is sufficiently small.\footnote{A related problem, though significantly simpler, is to solve 3SUM when $|A + B|$ or $|A - B|$ is small; see~\cite{BringmannFN22,Fischer24}.} For example, \cite{JinX23}'s algorithm for this case runs in $\tO(Kn^{1.5})$ time, which is truly subquadratic provided $K\le n^{0.5-\epsilon}$.
    As an improvement to their algorithms, our \cref{thm:informal-low-rank-3sum} implies a truly subquadratic 3SUM algorithm whenever $K\le n^{1-\epsilon}$.
    
    This implication follows from the following simple fact: if $|A+A|\le |A|^{2-\delta}$, then there exists a set $X$ of size $|X|\le \tO(|A|^{1-\delta/2})$ such that $A\subseteq X+X$. Such $X$ exists by a simple probabilistic argument: Pick random subsets $Y\subseteq -A$ and $Z\subseteq A+A$ with sizes $|Y|,|Z| = \Theta(|A|^{1-\delta/2}\sqrt{\log |A|})$ and let $X=Y\cup Z$. Since each $a\in A$ is represented as $a=y+z$ by at least $|A|$ distinct pairs $(y,z)\in (-A)\times (A+A)$, with high probability at least one pair $(y,z)$ survives in $Y\cup Z$ for every $a\in A$.
    \item \emph{Clustered 3SUM}. Chan and Lewenstein \cite{ChanL15} studied \emph{clustered} 3SUM, where a set is clustered if it can be covered by $n^{1-\delta}$ intervals of length $n$ for some $\delta
     > 0$. They showed a truly subquadratic-time algorithm for this case. It is simple to show that if $A$ is clustered, then a low-rank decomposition of $A$ can be found efficiently: Let $S$ be the set of the smallest numbers in each of the intervals. Then notice that $A \subseteq X + X$ where $X = (S + \{0, 1, \ldots, n^{\delta / 2}\}) \cup \{n^{\delta / 2}, 2 n^{\delta / 2}, \ldots, n\}$, which has size $O(n^{1-\delta/2})$.\footnote{Chan and Lewenstein also studied a generalization of clustered 3SUM where the inputs are from $\Z^d$ for constant~$d$. 3SUM with inputs from $\Z^d$ is essentially the same as 3SUM with integer inputs: Given a 3SUM instance with inputs from $\Z^d$, we can map each input vector to an integer by viewing the vector as the $d$-digit representation of the integer in some large enough base. In the clustered case, when $A \subseteq S + H$, where $|S| \le n^{1-\delta}$ and $H$ is a hypercube of volume at most $n$, their algorithm also runs in truly subquadratic time. Similar as the integer case, we can write $H$ as the sum of $Y + Z$ where $Y$ has size $O(n^\delta)$ and $Z$ has $O(n^{1-\delta})$ (this is possible because $H$ is a hypercube). Hence, letting $X = (S + Y) \cup Z$ guarantees that $A \subseteq X + X$ and $|X| \le O(n^{1-\delta / 2})$.}
\end{itemize}

\subsection{Universe Reduction for Min-Plus Convolution} \label{sec:intro:sec:univ-reduct}
Having stated our two main technical innovations, let us finally turn to our results involving Min-Plus Convolution. We start with our universe reduction. Here, by \emph{universe} $[u]$ we refer to the range of the entries in the sequences $A$ and $B$. Recall that the Min-Plus Convolution Hypothesis postulates that the min-plus convolution $A * B$ of two given sequences $A, B$ cannot be computed in subquadratic time. In this hypothesis, as is common in fine-grained complexity, the assumption is that the universe is polynomially bounded, $u = n^{O(1)}$. A stronger hypothesis, the \emph{Strong} Min-Plus Convolution Hypothesis, postulates that subquadratic time is impossible even for sequences over the universe $[n]$:

\begin{hypothesis}[Strong Min-Plus Convolution]
Min-Plus Convolution over the universe $[n]$ cannot be solved in time $O(n^{2-\epsilon})$ for any constant $\epsilon > 0$.
\end{hypothesis}

For even smaller universes, however, there are faster algebraic algorithms. Specifically, the Min-Plus Convolution problem can be reduced to computing the sumset of two integer sets in $\set{0, \dots, n u}$, leading to an alternative $\tO(n u)$-time algorithm using the FFT. This algorithm differs substantially from the naive brute-force algorithm, suggesting that in this small-universe regime ($u \ll n$) the Min-Plus Convolution problem might behave quite differently than in the large-universe regime ($u \gg n$). Thus, a natural question is whether this $\tO(n u)$ could be improved, perhaps by some more elaborate algebraic technique? Investigating this question is also highly relevant for the Min-Plus Convolution problem as a source of hardness, as many lower bounds under the Min-Plus Convolution Hypothesis could be strengthened if we could assume hardness for small universes. We successfully resolve this question, proving that time $\tO(n u)$ is best-possible across all ranges of~$u \leq n$:

\begin{restatable}[Small-Universe Reduction]{theorem}{smalluniverseminplus}
\label{thm:small-universe-min-plus}
    Let $\gamma \in [0, 1]$. Min-Plus Convolution for length-$n$ input sequences over a universe of size $u = n^{\gamma}$ cannot be solved in time $O(n^{1+\gamma - \eps})$ (for any constant~$\eps > 0$), unless the Strong Min-Plus Convolution Hypothesis fails.
\end{restatable}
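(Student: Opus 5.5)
We prove the contrapositive by a self-reduction. Suppose Min-Plus Convolution with length-$m$ sequences over universe $[m^\gamma]$ can be solved in time $O(m^{1+\gamma-\eps})$. We show how to solve Min-Plus Convolution with length-$n$ sequences over universe $[n]$ in time $O(n^{2-\eps'})$, which contradicts the Strong Min-Plus Convolution Hypothesis. The case $\gamma = 1$ is trivial, so assume $\gamma < 1$.

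\textbf{Step 1: Split the index range into blocks.} Take $A, B$ of length $n$ with entries in $[n]$. Set the block length $m = n^{1/(2-\gamma)}$, so that $m^\gamma = n/m$. Split $A$ and $B$ into $n/m$ consecutive blocks of length $m$. Then $(A*B)[k]$ is the minimum, over pairs of blocks $(A_s, B_t)$, of the block convolutions $A_s * B_t$ (suitably shifted). There are $(n/m)^2$ such pairs. That is too many to afford $m^{1+\gamma}$ each, since $(n/m)^2 m^{1+\gamma} = n^2 \cdot m^{\gamma-1} \cdot n/m \cdot \ldots$ gives no improvement. So brute-forcing all pairs is not the route.

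\textbf{Step 2: Split the value range instead.} The better decomposition splits by values. For a pair of blocks $A_s, B_t$, the entries of each block may span all of $[n]$, not $[m^\gamma]$. The natural fix is to split each block further into value-classes of width $w = m^\gamma$. The trouble is that a convolution across value-classes does not preserve minima without extra bookkeeping.

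The approach I would actually follow is the standard \emph{additive shifting} trick.
\begin{enumerate}
\item Add a steep linear function to both sequences: $A'[i] = A[i] + Mi$ and $B'[j] = B[j] + Mj$. This preserves the minimizers of $A*B$ and makes both sequences monotone.
\item Combine this with the reduction of bounded monotone convolution to small-universe instances.
\end{enumerate}
I am unsure this closes, and I expect it to be the crux.

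\textbf{Step 3: The reduction I would attempt first.} Write each entry as $A[i] = q_A[i]\, w + r_A[i]$ with $r_A[i] \in [w]$, and similarly for $B$. The minimum of $A[i] + B[j]$ is attained among the pairs whose high part $q_A[i] + q_B[j]$ is minimal or minimal plus one. So I would proceed as follows.
\begin{enumerate}
\item Compute the high-part convolution $Q = q_A * q_B$ recursively. It lives over the universe $n/w$.
\item For each $k$, keep only the pairs $(i,j)$ with $i+j=k$ and $q_A[i] + q_B[j] \in \{Q[k], Q[k]+1\}$.
\item Resolve the low parts with universe-$w$ convolutions restricted to those pairs.
\end{enumerate}
The restriction in the last item cannot be done directly. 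The standard remedy is to replace each low part by $r_A[i] + (\text{large}) \cdot [\text{high part non-minimal}]$. That requires knowing, per pair, whether it is minimal, which is the main obstacle.

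\textbf{Step 4: The obstacle and how I would handle it.} I expect this to need the paper's structure-vs-randomness machinery. I would look for a reduction where the number of distinct high-part patterns is small, meaning they are low-rank in the sense of \cref{defn:low-rank}. The high-rank remainder would be handled by the pseudo-randomness argument, mirroring Fischer's universe reduction for APSP.

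Concretely, I would follow the plan below.
\begin{enumerate}
\item Reduce Min-Plus Convolution to a 3SUM-like instance.
\item Repeatedly peel off low-rank parts and solve them via \cref{thm:informal-low-rank-3sum}, computing the needed structure with \cref{thm:k-sum-pop}.
\item On the pseudo-random remainder, hash the values modulo a random integer of size about $m^\gamma$. Pseudo-randomness makes false collisions rare, so they can be verified and discarded.
\item Obtain $(n/m)^2 \cdot (m/ m)\cdots$ small-universe calls of size $m$. Balancing these yields a total time of $n^{2-\Omega(\eps)}$.
\end{enumerate}
The hardest step is showing that the pseudo-random remainder produces few false collisions after hashing into universe $m^\gamma$. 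I would try to bound these collisions by the absence of popular $k$-sums.
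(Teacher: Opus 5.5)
Your proposal does not reach a proof. Steps 1--2 are abandoned, and the shift $A[i]+Mi$ actually enlarges the universe, which is the wrong direction. Step 3 stops exactly at the real difficulty, and Step 4 rests on a step that fails.

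Step 3 is the right start and matches the paper. Take $q=\lceil n^{1-\gamma}\rceil$ and write $A=A_1q+A_0$, $B=B_1q+B_0$, assuming w.l.o.g.\ (four subinstances) that the low parts are below $q/2$. Then $C_1=A_1*B_1$ is a single call to the universe-$[n^\gamma]$ algorithm; no recursion is needed for it. Moreover, $(A*B)[k]=C_1[k]\,q+C_0[k]$, where $C_0[k]$ is the minimum of $A_0[i]+B_0[k-i]$ over the \emph{pseudo-witnesses} $i$ of $k$, i.e.\ those with $A_1[i]+B_1[k-i]=C_1[k]$.

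\textbf{The missing idea.} The restriction to pseudo-witnesses is handled by a dichotomy on their \emph{number} per output $k$, not by finding structure in the input.
\begin{itemize}
\item If $k$ has fewer than $n^{1-\gamma+\delta}$ pseudo-witnesses, you simply list them all, since they are exactly the witnesses of $(A_1*B_1)[k]$. By \cref{lem:min-plus-conv-listing} this costs $\tilde O(n^{1-\gamma+\delta})$ calls to the small-universe algorithm.
\item If $k$ has more, a random index set $X$ of rate about $n^{-(1-\gamma+\delta)/2}$ contains some pseudo-witness pair $i,k-i$ w.h.p. Then $(k,(A*B)[k])\in\mathcal X_1+\mathcal X_2$, where $\mathcal X_1=\{(i,A_1[i]q+a): i\in X,\ 0\le a<\lceil\sqrt q\rceil\}$ and $\mathcal X_2=\{(j,B_1[j]q+b\lceil\sqrt q\rceil): j\in X,\ 0\le b<\lceil\sqrt q\rceil\}$. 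Any pseudo-witness works, because the offsets $a,b$ supply the low part. This gives rank $\tilde O(|X|\sqrt q)=\tilde O(n^{1-\delta/2})$.
\end{itemize}
To keep the target set of size $O(n)$, the paper first recursively computes the minimum of the halved low parts over pseudo-witnesses, which leaves three candidates per $k$. It then solves the All-Integer 3SUM instance $(\mathcal I(A),\mathcal I(B),\mathcal C)$ with \cref{thm:3sum-low-rank}. So the rank decomposition is \emph{constructed} by sampling. You never need to peel low-rank parts off an arbitrary set, and \cref{thm:k-sum-pop} enters only inside the low-rank 3SUM algorithm.

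\textbf{Why your remainder step fails.} Reducing values modulo a random integer of size $m^\gamma$ preserves equalities but not the order of sums. A min-plus algorithm run on the hashed values therefore computes minima unrelated to the true ones. The obstruction is wraparound, not false collisions, and the absence of popular $k$-sums says nothing about it. Order-preserving compression is precisely what requires \cref{hypo:sum-preserving} in the large-universe reduction; the small-universe reduction avoids it entirely through the base-$q$ split.

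\textbf{Accounting.} With the dichotomy in place, the accounting is short:
\begin{itemize}
\item An $O(n^{1+\gamma-\eps})$ algorithm lists $n^{1-\gamma+\delta}$ witnesses in $\tilde O(n^{2-\eps+\delta})$ time by \cref{lem:min-plus-conv-listing}.
\item The popular part costs $O(n^{2-\delta'})$ for some $\delta'>0$ by \cref{thm:3sum-low-rank}.
\item The low-part recursion adds only a logarithmic factor.
\end{itemize}
Choosing $\delta=\eps/2$ therefore gives a truly subquadratic algorithm over universe $[n]$.
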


Even more ambitiously, can one also show a reduction from the large-universe regime? Is there a universe reduction from, say, $u = n^{10}$ to $u = n$? We \emph{almost} answer this question in the affirmative. Unfortunately, in analogy with~\cite{Fischer26}, there remains one key step in our proof for which the state-of-the-art additive combinatorics tools are not strong enough yet. For this reason, Fischer encapsulated the missing structural statement as the following plausible hypothesis. Let us call a function $h : X \to \Int$ \emph{sum-order-preserving} if, for every $x_1, x_2, y_1, y_2 \in X$, $x_1 + x_2 < y_1 + y_2$ implies $h(x_1) + h(x_2) < h(y_1) + h(y_2)$.

\begin{hypothesis}[\cite{Fischer26}]
\label{hypo:sum-preserving}
    There is a constant $c$ such that the following holds. For every set of integers $X$ with doubling $|X + X| \le K|X|$, there is a subset $Y \subseteq X$ of size $|Y| \ge |X| / K^c$ and a sum-order-preserving function $h: Y \rightarrow \{0, 1, \ldots, |X|\}$. Moreover, given $X$, one can compute $Y$ and~$h$ in time $|X|^{1+o(1)} K^{O(1)}$. 
\end{hypothesis}

Intuitively, \cref{hypo:sum-preserving} speculates that every low-doubling integer set can be compressed to an interval in such a way that the order of pairwise sums is preserved. This statement is closely related to the \emph{Polynomial Freiman--Ruzsa} (PFR) Conjecture for integers, which, if true, would almost resolve \cref{hypo:sum-preserving}. Currently, the state-of-the-art bounds~\cite{Raghavan25} together with the work of~\cite{AmirkhanyanBC18} imply that one can take $\exp((\log K)^{1+o(1)})$ instead of $K^c$ in \cref{hypo:sum-preserving}. We therefore hope that \cref{hypo:sum-preserving} will eventually be resolved alongside the PFR Conjecture. We refer to~\cite{Fischer26} for a more detailed discussion of this hypothesis and its plausibility. 

Under \cref{hypo:sum-preserving} we do obtain the full universe reduction:

\begin{restatable}[Large-Universe Reduction]{theorem}{thmstrongminplus}
\label{thm:strong-min-plus-conv-equal}
    Under \cref{hypo:sum-preserving}, the Strong Min-Plus Convolution Hypothesis and the Min-Plus Convolution Hypothesis are equivalent. 
\end{restatable}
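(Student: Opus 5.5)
One direction is trivial: the Strong Min-Plus Convolution Hypothesis restricts to universe $[n]$, so it implies the general hypothesis. The content is the converse. Given a subquadratic algorithm for universe-$[n]$ Min-Plus Convolution, we must solve Min-Plus Convolution over universe $[n^{O(1)}]$ in subquadratic time. I would follow the structure-versus-randomness template of Fischer~\cite{Fischer26}, with set rank (\cref{defn:low-rank}) as the structure measure and \cref{thm:informal-low-rank-3sum} as the structured-case solver.

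First I reduce Min-Plus Convolution to a decision/witness version. By standard techniques it suffices to decide, for candidate thresholds, whether $A[i]+B[k-i] \le T[k]$. This can be recast as 3SUM-type instances over sets such as $\set{(i, A[i])}$ encoded as integers. I then iterate the following extraction. Look at the set of pairwise sums that are popular in the sense of $\Pop^k$ for a large constant $k$. If some $k$-sum is $\Omega(n^{k-1-c})$-popular, apply the algorithmic version \cref{thm:k-sum-pop-algo} of \cref{thm:k-sum-pop} to find $H$ with $|H| \le K n^{1+c}$, doubling $K = n^{\epsilon}$, and $|A\cap H| \ge n/K$.

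Now apply \cref{hypo:sum-preserving} to $H$. This yields a large subset $Y$ and a sum-order-preserving $h : Y \to \set{0,\dots,|H|}$. On the entries lying in $Y$, we can replace values by their images under $h$ without changing which sums are minimal among $Y$-pairs. This compresses that part of the instance to universe $n^{1+c+\epsilon}$. It is then handled by the small-universe algorithm via \cref{thm:small-universe-min-plus}-style splitting into blocks, or directly by the assumed universe-$[n]$ algorithm after chopping into $n^{c+\epsilon}$ shifted pieces.

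Cross interactions between the compressed part and the rest are covered by the low-rank 3SUM algorithm. A set covered by a small-doubling $H$ has rank at most $n^{1-\delta}$, by the probabilistic argument given after \cref{thm:informal-low-rank-3sum}. We remove the covered $n/K^{O(1)}$ elements and repeat, about $K^{O(1)}\log n$ rounds, with polynomially small total overhead once $\epsilon$ is chosen small.

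When no popular $k$-sum remains, the instance is pseudo-random: all $k$-sums have multiplicity $O(n^{k-1-c})$. In this regime I would hash the entries by a random linear map modulo a prime $p \approx n$, with bucketing in Fischer's style. Because $k$-fold collisions are rare, the hashed instance produces few false candidate minima per output index. We then solve the hashed instance over universe $O(n)$ with the strong-hypothesis algorithm and verify candidates. The $k$-sum sparsity bounds the expected number of spurious witnesses, via a $k$-th moment argument that controls the tail.

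The main obstacle I expect is this last step. The exact form of pseudo-randomness we obtain (no popular $k$-sums) must be turned into a guarantee that the modular hash preserves the min-plus order up to polylogarithmically many candidate lists. This requires choosing $k$ and $c$ carefully, so that the moment bound beats the $n^{c}$ universe blow-up, while keeping the extraction phase's rank below $n^{1-\delta}$ so that \cref{thm:informal-low-rank-3sum} applies.
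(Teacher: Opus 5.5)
The easy direction is fine. The converse has a genuine gap, and it lies in your choice of dichotomy, not only in the step you flag. Popularity of $k$-sums in the input controls additive \emph{coincidences}, whereas Min-Plus Convolution is about \emph{order}. A random linear map modulo $p\approx n$ scrambles the order of sums, so minimizers of the hashed instance need not correspond to true minimizers. Bounding $k$-fold collisions says nothing about how many spurious candidates appear.

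This residual case is also the entire problem. For generic inputs with values in $[n^{10}]$ there are no $n^{k-1-c}$-popular $k$-sums once $c$ is noticeably below $10$, and these are exactly the hard instances. For $c\approx 10$ popular sums always exist, but the container returned by \cref{thm:k-sum-pop-algo} may be the whole box $[0,n]\times[0,n^{10}]$, so compressing to universe $|H|$ buys nothing.

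The structured branch fails as well:
\begin{itemize}
\item Chopping a universe of size $n^{1+c+\epsilon}$ into value blocks of size $n$ creates $n^{2(c+\epsilon)}$ block pairs.
\item \cref{thm:small-universe-min-plus} is a hardness reduction, not an algorithm.
\item $H$ covers only part of $A$, while a sum-order-preserving map must be applied to the entries of \emph{both} sequences, in translates of the same $Y$.
\item The probabilistic rank argument gives only about $\sqrt{K|H|}$, roughly $n^{(1+c)/2}$, which is below $n^{1-\delta}$ only when $c<1$.
\item The low-rank 3SUM algorithm needs an explicit low-rank candidate output set, which your cross-interaction step never constructs.
\end{itemize}

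The paper instead splits on the number of pseudo-witnesses per output index at dyadic scales $2^\ell$ (\cref{lem:doubling-reduct}). Consider an index with at most $t\approx n^{2\rho}$ pseudo-witnesses at scale $2^\ell$ but at least $t$ at scale $2^{\ell+1}$. Sampling indices at rate $n^{-\rho}$ and brute-forcing the rounded convolution on the sample yields candidate outputs $\mathcal R\subseteq\mathcal X+\mathcal X$ of rank $O(n^{1-\rho})$. \cref{lem:3sum-low-rank-to-low-doubling-contained} then lists some solutions and splits the rest into subinstances whose joint index sets have $(n^{\epsilon},n^{2-2\rho})$-containers. The ``few pseudo-witnesses at scale $2^\ell$'' half turns these into Listing instances with threshold $t$.

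Only then is \cref{hypo:sum-preserving} applied, to $\phi(H)$ with $\phi(i,a)=i\cdot 10M+a$, together with the greedy cover $\phi(H)\subseteq Y+S$ of \cref{lem:sumset-cover}. The index occupies the high-order position and $h$ is order-preserving. Hence each shifted pair $(A_s,B_t)$ becomes, after taking suffix minima, a \emph{monotone} instance over universe $[n^{2-2\rho}]\subseteq[n^2]$. \cref{lem:equiv-mono}, which reduces Monotone Min-Plus Convolution over $[nu]$ to Min-Plus Convolution over $[u]$, then lands it in universe $[n]$.

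So the two ideas you are missing are these. First, the scale-wise pseudo-witness dichotomy, which turns many near-optimal witnesses into low rank of the \emph{output} set. Second, the monotonicity together with \cref{lem:equiv-mono}, which absorbs a container larger than $n$.
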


In summary, under \cref{hypo:sum-preserving}, \cref{thm:small-universe-min-plus,thm:strong-min-plus-conv-equal} fully characterize the complexity of Min-Plus Convolution as $\min\set{n^2, n u}^{1\pm o(1)}$ for all universe sizes $u = n^{\Theta(1)}$. If this algorithm can be improved in any regime, then the Min-Plus Convolution Hypothesis fails.

\subsection{Application to Intermediate Convolution Problems} \label{sec:intro:sec:intermediate}
Fischer~\cite{Fischer26} also showed a list of APSP-based tight conditional lower bounds for ``intermediate'' problems, a well-studied class of matrix problems whose running times are about $n^{2.5}$ when $\omega = 2$. Likewise, our techniques also allow us to show matching conditional lower bounds for a range of intermediate sequence problems, whose best known running times are $\tO(n^{1.5})$.

\paragraph{Min-Max Convolution. } In the Min-Max Convolution problem, we are given sequences $A, B$ of length $n$, and need to compute a sequence $C$ where $C[k] = \min_{i + j = k} \max\{A[i], B[j]\}$. Bringmann, K{\"{u}}nnemann and Węgrzycki \cite{BringmannKW19} showed that this problem is equivalent to approximate Min-Plus Convolution.  An $\tO(n^{1.5})$ time algorithm for computing Min-Max Convolution was known as early as 1989 \cite{Kosaraju89}. However, no further improvements were known since then, and there is also no strong evidence why this running time might be optimal. We show a tight conditional lower bound for Min-Max Convolution. 

\begin{restatable}[Min-Max Convolution]{theorem}{thmminmaxconv} \label{thm:min-max-conv}
Min-Max Convolution cannot be solved in time $O(n^{3/2-\epsilon})$ (for any constant $\epsilon > 0$), unless the Strong Min-Plus Convolution Hypothesis fails.
\end{restatable}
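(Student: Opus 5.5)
We reduce Min-Plus Convolution over universe $[n]$ (hard under the Strong Min-Plus Convolution Hypothesis) to Min-Max Convolution, passing through \cref{thm:small-universe-min-plus}. By that theorem with $\gamma = 1/2$, Min-Plus Convolution of length-$n$ sequences over universe $u = \sqrt{n}$ has no $O(n^{3/2-\epsilon})$-time algorithm unless the Strong Min-Plus Convolution Hypothesis fails. So it suffices to solve such small-universe instances with a constant number of Min-Max Convolution calls on sequences of length $O(n)$. The intermediate problem is approximate Min-Plus Convolution, which Bringmann, Künnemann and Węgrzycki~\cite{BringmannKW19} show equivalent to Min-Max Convolution.

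The steps, in order, are as follows.

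First, fix an instance $A, B$ with entries in $[u]$, $u = \sqrt{n}$. Every finite value of $A*B$ lies in $[2u]$. A $(1+\delta)$-approximation with $\delta < 1/(2u)$ therefore recovers $A*B$ exactly after rounding, provided we shift all entries by $+u$ so that every value lies in $[u, 3u]$. The additive error is then $\delta \cdot 3u$, and this is below $1$ once $\delta < 1/(3u)$.

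Second, I would not rely on the black-box equivalence's polylogarithmic dependence on $1/\delta$. The dependence is the crux: approximate Min-Plus Convolution with $\delta = n^{-1/2}$ is basically exact. I would instead reduce directly. Each value $v \in [3u]$ is encoded by a threshold structure: for each threshold $t$, Min-Max Convolution on the indicator sequences $[A \le \cdot]$ computes Boolean-type information. However, Min-Max with thresholds only yields $\min_{i}\max$, not sums.

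So, more concretely, I would adapt the standard $\tO(n\sqrt{n})$-style bucketing in reverse. Write each entry as $A[i] = \alpha_i \sqrt{u} + \alpha'_i$ and encode $A$ into a longer sequence, placing $A[i]$ at position $iL + A[i]$ with value $A[i]$ or similar, so that the max operation together with positional sums simulates addition. The tool here is that Min-Max Convolution computes, for each position $k$, the smallest $\tau$ such that $\{i : A[i]\le\tau\}+\{j : B[j]\le \tau\}\ni k$.

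The main obstacle is precisely this encoding step: simulating $+$ by $\max$ on a universe of size $\sqrt{n}$ while only blowing up the length by a constant factor, so that the total cost stays $n^{3/2-\epsilon}$. If a direct encoding fails, I would fall back on the paper's structural machinery. Use the low-rank 3SUM algorithm (\cref{thm:informal-low-rank-3sum}) to remove structured parts. Then exploit pseudo-randomness of the remainder to show that the required approximation precision is only polylogarithmic on it, and apply~\cite{BringmannKW19} there. I expect this last step to be where most of the real work lies, in analogy with Fischer's intermediate-problem reductions~\cite{Fischer26}.
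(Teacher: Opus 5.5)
There is a genuine gap. Your plan rests on solving exact Min-Plus Convolution over the universe $[\sqrt n]$ with a constant (or $n^{o(1)}$) number of Min-Max calls on length-$O(n)$ sequences, and that step is never carried out; you flag it yourself as the main obstacle. None of your sketches closes it. The positional encoding at $iL + A[i]$ multiplies the length by $L \ge u = \sqrt n$, so even writing the instance down already exceeds the $n^{3/2-\epsilon}$ budget. Going through~\cite{BringmannKW19} with precision $\delta \approx n^{-1/2}$ reintroduces an overhead growing with $1/\delta$, as you note. The fallback (low-rank 3SUM plus ``polylogarithmic precision suffices on the pseudo-random remainder'') comes with no argument: an additive error of order $\sqrt n/\mathrm{polylog}(n)$ does not determine values in $[\sqrt n]$. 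Moreover, the paper only ever exploits pseudo-randomness by \emph{listing} few witnesses, not through approximate values.

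The missing idea is that you do not need constant overhead. Reduce instead from the listing form of the small-universe bound, \cref{cor:small-universe-listing} with $\gamma = 1/2$, rather than from \cref{thm:small-universe-min-plus}. Listing $\sqrt n$ witnesses per output over universe $[\sqrt n]$ requires $n^{2-o(1)}$ time, so you can afford one Min-Max call per value $b \in [\sqrt n]$. Take $A_b = A$ and set $B_b[j] = -\infty$ if $B[j] = b$ and $+\infty$ otherwise. Then $C_b[k] := \MinMax(A_b, B_b)[k] = \min\{A[i] : B[k-i] = b\}$, and $(i, k-i)$ is a witness of $C[k]$ if and only if $C_b[k] = C[k] - b$ and $i$ is a witness of $C_b[k]$. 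The $\pm\infty$ indicator turns $\max$ into a restricted minimum, so you never have to simulate $+$ by $\max$.

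The rest of the paper's argument is bookkeeping:
\begin{itemize}
\item Compute $C = A*B$ by FFT in $\tO(n^{3/2})$.
\item Use \cref{lem:no-too-frequent-elements} to assume every value occurs at most $n^{1/2+\epsilon/3}$ times in $B$.
\item List $n^{2\epsilon/3}$ witnesses of every $C_b[k]$ via \cref{lem:min-plus-conv-listing}, in total time $\tO(\sqrt n \cdot n^{3/2-\epsilon/3})$.
\item For each remaining pair $(b,k)$ with $C_b[k] = C[k]-b$ and more witnesses, scan the at most $n^{1/2+\epsilon/3}$ positions $j$ with $B[j] = b$.
\end{itemize}
Each such scan yields more than $n^{2\epsilon/3}$ new witnesses of $C[k]$, so at most $n^{1/2-2\epsilon/3}$ scans per $k$ are needed. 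The total time is $O(n^{2-\epsilon/3})$, contradicting the corollary.
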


\paragraph{Min-Equality Convolution. } In the Min-Equality Convolution problem, we are given sequences $A, B$ of length $n$, and need to compute a sequence $C$ where
\begin{equation*}
    C[k] = \min_{\substack{i + j = k \\ A[i] = B[j]}} A[i].
\end{equation*}
Previously, it was studied by Chan, Vassilevska Williams and Xu \cite{ChanVX23}, who showed that Min-Equality Convolution requires $n^{12/11-o(1)}$ time, under the Strong Min-Plus Convolution Hypothesis (they also showed an $n^{7/6-o(1)}$ lower bound based on the Strong APSP Hypothesis and an $n^{5/4-o(1)}$ lower bound based on the unweighted directed APSP Hypothesis, but neither matches the $\tO(n^{1.5})$ upper bound). In our work, we significantly improve their result, by showing a tight lower bound based on the Strong Min-Plus Convolution Hypothesis. 

\begin{restatable}[Min-Equality Convolution]{theorem}{thmminequalconv} \label{thm:min-equal-conv}
Min-Equality Convolution cannot be solved in time $O(n^{3/2-\epsilon})$ (for any constant $\epsilon > 0$), unless the Strong Min-Plus Convolution Hypothesis fails.
\end{restatable}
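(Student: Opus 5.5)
We prove \cref{thm:min-equal-conv} by a reduction from Strong Min-Plus Convolution, using the small-universe reduction \cref{thm:small-universe-min-plus} as the intermediate step. Fix a parameter $u = n^{\gamma}$. By \cref{thm:small-universe-min-plus}, Min-Plus Convolution of length-$n$ sequences over universe $[u]$ needs time $n^{1+\gamma-o(1)}$. The plan is to encode such an instance into Min-Equality Convolution instances of total size $N$ with $N^{3/2} \approx n^{1+\gamma}$. We then choose $\gamma$ so that the bound becomes tight.

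For the encoding, write each value $A[i] \in [u]$ and target $c = A[i]+B[j]$. The standard approach, as in \cite{ChanVX23} and Fischer's intermediate-problem reductions, is to guess the value $a = A[i]$ and force $B[j] = c - a$ via an equality test. For each output $k$ and each candidate sum $c$, we must decide whether some pair $i+j=k$ has $A[i]+B[j] \le c$. To do this, we interleave coordinates. We build sequences $A'$ and $B'$ of length about $nu$, indexed by pairs $(i,\alpha)$ and $(j,\beta)$ in a base-$(2n)$-style packing so that indices add coordinatewise. Set $A'[(i,\alpha)] = $ an encoding of $\alpha - A[i]$ paired with $\alpha$. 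Set $B'[(j,\beta)]$ similarly with $B[j] - \beta$. Then the equality $A'=B'$ holds exactly when the threshold relation is met. The $\min$ then recovers the smallest achievable value in each output cell.

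Concretely, we target: Min-Plus Convolution over $[u]$ at length $n$ reduces to one Min-Equality Convolution instance of length $N = O(n\sqrt{u}\,)$, or to $\sqrt u$ instances of length $O(n)$. In either case the $N^{3/2}$ cost matches $n^{1+\gamma}$ when $u=n$ and $\gamma=1$. We achieve the $\sqrt u$ blow-up by splitting each value as $A[i] = \sqrt u\, a_1 + a_0$. We handle the high digit $a_1$ by the equality constraint, and the low digit $a_0$ by the index, where it is added into the convolution index with padding. This mirrors the rank-$\sqrt{u}$ decomposition $X = \{0,\dots,\sqrt u-1\}\cup \sqrt u\cdot\{0,\dots,\sqrt u\}$ from the intro.

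The main obstacle is that a naive equality encoding only finds pairs with $A[i]+B[j]$ \emph{exactly} equal to a target, whereas Min-Plus needs the minimum. Minimizing over $a_0$ carries and over the high digits is handled by the outer $\min$ of Min-Equality Convolution only if the value being minimized, $A'[i]$, is monotone in the true sum. If direct encoding fails, I would instead reduce from the hard pseudo-random instances produced inside the proof of \cref{thm:small-universe-min-plus}. There, via \cref{thm:3sum-low-rank} and the structure-versus-randomness framework, one already has a sum-order-preserving compression to a small interval, which makes the digit splitting lossless. I would verify this carry issue first, since it determines whether the $\sqrt u$ factor is achievable.
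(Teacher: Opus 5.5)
There is a genuine gap: the proposal never reaches a working encoding, and the cost accounting you set up cannot produce a contradiction.

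\textbf{Why the plan cannot close.} Starting from \cref{thm:small-universe-min-plus} is too weak for this kind of reduction. That theorem says plain Min-Plus Convolution over $[u]$, $u=n^\gamma$, needs $n^{1+\gamma-o(1)}$ time. Even a good encoding, one instance per difference $\Delta$ as below, gives $\Theta(u)$ Min-Equality instances of length $n$. Their cost $u\cdot n^{3/2-\epsilon}$ is at least $nu$, so no choice of $\gamma$ yields a contradiction.

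Your target of ``$\sqrt u$ instances of length $O(n)$'' helps only at $\gamma\approx 1$. That would be a direct reduction from universe $[n]$, and the digit-splitting sketch does not deliver it. Moving the low digit $a_0\in[0,\sqrt u)$ into the index multiplies the length by $\sqrt u$. Your interleaved encoding also never says how an equality test certifies a threshold $A[i]+B[j]\le c$; enumerating the slack in the index is exactly what blows up the length.

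The single-instance option is miscounted as well: $N=n\sqrt u$ with $u=n$ gives $N^{3/2}=n^{9/4}$, not $n^2$.

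The fallback is also unavailable. Sum-order-preserving compressions come from \cref{hypo:sum-preserving} and are used only in the large-universe reduction. The small-universe reduction via \cref{thm:3sum-low-rank} produces no such map, and the theorem must hold without that hypothesis.

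\textbf{What the paper does instead.} The small-universe reduction actually produces Min-Plus \emph{Listing} instances, and that is the right source. By \cref{cor:small-universe-listing} with $\gamma=1/2$, listing $\sqrt n$ witnesses per output over universe $[\sqrt n]$ requires $n^{2-o(1)}$ time. Here $C=A*B$ is computable in $\tilde O(n^{3/2})$ time.

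The carry/monotonicity obstacle you flagged is resolved by fixing the \emph{difference} rather than splitting digits.
For each $\Delta\in\{-\sqrt n,\dots,\sqrt n\}$, set $A_\Delta=A+\Delta$ and $B_\Delta=B$.
A pair with $A_\Delta[i]=B_\Delta[j]$ has common value $B[j]=(A[i]+B[j]+\Delta)/2$, which is monotone in the true sum.
One checks that $(i,j)$ is a Min-Plus witness for $k$ if and only if $2C_\Delta[k]=C[k]+\Delta$ and $A_\Delta[i]=B_\Delta[j]=C_\Delta[k]$, where $\Delta=B[j]-A[i]$.
This costs $O(\sqrt n)\cdot n^{3/2-\epsilon}=O(n^{2-\epsilon})$.

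The remaining work is the listing.
\begin{itemize}
\item Use \cref{lem:min-plus-conv-listing} to list $n^{2\epsilon/3}$ witnesses per $C_\Delta[k]$.
\item For a pair $(\Delta,k)$ with more witnesses and $2C_\Delta[k]=C[k]+\Delta$, the value $v=(C[k]+\Delta)/2$ of $B[j]$ is determined. So scan $\{j:B[j]=v\}$, which has size at most $n^{1/2+\epsilon/3}$ after applying \cref{lem:no-too-frequent-elements}.
\item Each such scan yields more than $n^{2\epsilon/3}$ new witnesses, and only $\sqrt n$ are needed per $k$. Hence at most $O(n^{3/2-2\epsilon/3})$ scans occur, for a total of $O(n^{2-\epsilon/3})$.
\end{itemize}
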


\paragraph{Bounded Monotone Min-Plus Convolution. } Bounded Monotone Min-Plus Convolution is a version of Min-Plus Convolution with structured inputs. A length-$n$ sequence is called bounded monotone if all entries are from $[O(n)]$, and the sequence is monotonically nondecreasing. Chan and Lewenstein \cite{ChanL15} were the first to study Bounded Monotone Min-Plus Convolution: they showed that when both input sequences are bounded monotone, Min-Plus Convolution can be computed in $O(n^{1.859})$ time. This running time has been improved to $\tO(n^{1.5})$ by Chi, Duan, Xie and Zhang~\cite{ChiDXZ22}, and Bringmann, D{\"{u}}rr and Polak \cite{BringmannDP24} generalized their algorithm to the case where only one of the sequences needs to be bounded monotone. Bounded-Monotone Min-Plus Convolution has found a variety of applications, including Jumbled Indexing~\cite{ChanL15} and several variants of knapsack~\cite{BringmannC22,BringmannDP24,ChenLMZ25}. We show that \cite{BringmannDP24}'s algorithm is conditionally optimal:

\begin{restatable}[Bounded Monotone Min-Plus Convolution]{theorem}{thmminplusconvmono} \label{thm:min-plus-conv-mono}
Min-Plus Convolution between an arbitrary sequence and a monotone sequence bounded by $[n]$ cannot be solved in time $O(n^{3/2-\epsilon})$ (for any constant $\epsilon > 0$), unless the Strong Min-Plus Convolution Hypothesis fails.
\end{restatable}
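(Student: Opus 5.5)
We prove \cref{thm:min-plus-conv-mono} by reduction from Strong Min-Plus Convolution. The approach follows Fischer's intermediate-problem reductions~\cite{Fischer26} in the three-step structure-versus-randomness framework from the introduction, with the low-rank 3SUM algorithm (\cref{thm:informal-low-rank-3sum}) handling the structured parts.

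\textbf{Setup and reduction to a 3SUM-type problem.} Suppose Min-Plus Convolution between an arbitrary sequence and a monotone $[n]$-bounded sequence can be solved in time $O(n^{3/2-\epsilon})$. Take a Strong Min-Plus Convolution instance $A, B$ of length $n$ with entries in $[n]$. The standard encoding maps each entry $(i, A[i])$ to the pair, or to the integer $i\cdot M + A[i]$. The convolution is then determined by the 3SUM-type question of which targets $(k, v)$ are attained as $(i, A[i]) + (j, B[j])$. Via the usual binary search over thresholds, it reduces to deciding, for each $k$ and candidate value, whether some $i$ satisfies $A[i] + B[k-i] \le v$.

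\textbf{Bucketing into structured and pseudo-random parts.} We split the index range into blocks of length $\sqrt n$. Within each block, an arbitrary sequence can be written as a monotone sequence plus a bounded correction only if the values are concentrated. So we partition the values of $A$ and $B$ into windows of width $\sqrt n$ and consider the pieces $A_{I,W}$, meaning indices in block $I$ with values in window $W$.

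Pairs of pieces whose combined point sets have low rank are structured. We repeatedly extract them using the algorithmic version of \cref{thm:k-sum-pop} (\cref{thm:k-sum-pop-algo}) and solve them with the low-rank 3SUM algorithm in truly subquadratic time.

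Once no structured part remains, the residual instance is pseudo-random. Its defining property is that every popular sum has few representations, so each output $k$ receives only $n^{1/2+o(1)}$ relevant candidate pairs, counted across block and window pairs. We then construct the monotone instance. The arbitrary sequence encodes $A$ shifted by block offsets. The monotone sequence is a ``staircase'' $B'[j] = \lfloor j/\sqrt n\rfloor\cdot\sqrt n + (\text{value of } B \text{ within its window})$, padded to stay monotone and bounded by $O(n)$. Each call to the oracle has size $O(n)$ and resolves the interactions of roughly $\sqrt n$ window pairs simultaneously. A total of $O(\sqrt n)$ calls, each costing $n^{3/2-\epsilon}$, then gives $n^{2-\epsilon}$, refuting the Strong Min-Plus Convolution Hypothesis.

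\textbf{Main obstacle.} The hard step is making the pseudo-random instance fit the monotone constraint. With no structure, the values within a block can be arbitrary, and monotonicity forces collisions between different block–window pairs that would produce false minima. What we would try first is a random shift or hash $h$ applied to the values within each window. Pseudo-randomness, i.e.\ the absence of popular $k$-sums, then guarantees that spurious collisions occur only $n^{o(1)}$ times per output. These can be detected and corrected by brute force within budget.

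If this fails, the fallback is to first reduce to Min-Max or Min-Equality Convolution (\cref{thm:min-max-conv,thm:min-equal-conv}), assuming those are proven in the same framework. We would then use the known reductions from those problems to Bounded Monotone Min-Plus Convolution, which preserve the $n^{3/2}$ regime up to subpolynomial factors.
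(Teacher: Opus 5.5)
\paragraph{Verdict.} There is a genuine gap, and it sits at the step you yourself flag as the main obstacle.

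\paragraph{The staircase is not monotone.} Your sequence $B'[j] = \lfloor j/\sqrt n\rfloor\cdot\sqrt n + (\text{value of } B \text{ in its window})$ fails to be monotone. The first term is constant on each block of $\sqrt n$ consecutive indices. The second term fluctuates arbitrarily within a range of size $\sqrt n$. No padding repairs this without destroying the data.

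To absorb fluctuations of size $\sqrt n$ you need a slope of about $\sqrt n$ per index, e.g.\ $B'[j] = B[j] + \sqrt n\cdot j$. That stays within $O(n)$ only if the monotone side has length $O(\sqrt n)$. This is the missing idea.

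\paragraph{How the paper gets around this.} The paper first proves (\cref{lem:bounded-monotone-helper}) that an unbalanced problem needs $n^{3/2-o(1)}$ time: $A$ of length $n$, $B$ of length $\sqrt n$, both with entries in $[\sqrt n]$. The argument runs as follows.
\begin{itemize}
\item It starts from \cref{cor:small-universe-listing} with $\gamma=1/2$: listing $\sqrt n$ witnesses over universe $[\sqrt n]$ needs $n^{2-o(1)}$ time.
\item It subsamples $B$ so that each relevant output has $O(\sqrt n)$ witnesses.
\item It applies $\tilde O(\sqrt n)$ random affine maps $j\mapsto xj+y$ over $\F_N$, keeping only the roughly $\sqrt N$ indices of $B$ that land in $[0,\sqrt N]$. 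Near pairwise independence isolates any fixed witness with probability $\Omega(N^{-1/2})$.
\end{itemize}
Only then is the staircase applied.
\begin{itemize}
\item On the monotone side, $B'[j] = B[j] + \sqrt n\, j$ for $j\le\sqrt n$.
\item On the other side, $A'[i] = A[i] + \sqrt n\,(i \bmod \sqrt n)$ on blocks with odd $\lfloor i/\sqrt n\rfloor$, with filler $10n$ elsewhere (and symmetrically for even blocks).
\item The resulting offset is $\sqrt n\,(k - \lfloor (k-j)/\sqrt n\rfloor\sqrt n)$. It depends only on $k$, because the parity restriction pins down $\lfloor (k-j)/\sqrt n\rfloor$ among its two possible values.
\end{itemize}

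\paragraph{The structured/pseudo-random split does not work as stated.}
\begin{itemize}
\item You have no way to test whether pieces $A_{I,W}$ have low rank.
\item \cref{thm:k-sum-pop-algo} outputs low-doubling containers from popular $k$-sums, not rank decompositions.
\item In the paper's proof of this theorem, low-rank 3SUM enters only through the small-universe reduction (\cref{lem:small-universe-listing}). There the rank decomposition is built explicitly, by sampling pseudo-witnesses of the high-bit convolution for outputs that have many of them. Outputs with few pseudo-witnesses are handled by listing.
\item Your per-window random hash has no argument behind it.
\item The residual property you rely on (``every popular sum has few representations'') is not something any step of your plan guarantees.
\end{itemize}

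\paragraph{The fallback does not close the gap.} It would need a tight reduction from Min-Max or Min-Equality Convolution to Bounded Monotone Min-Plus Convolution. You neither give nor cite one, and the paper does not use one.
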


\paragraph{Min-Witness Convolution. }  Given two length-$n$ Boolean sequences $A, B$, their Min-Witness Convolution $C$ is defined such as $C[k]$ equals the minimum index $i$ where $A[i]$ and $B[k - i]$ are both true. This problem has been listed in papers studying fine-grained complexity of intermediate problems~\cite{LincolnPV20, ChanVX23} as a natural target for proving conditional lower bound for, but there has been no nontrivial lower bound so far for this problem. We prove the first super-linear (albeit non-tight) conditional lower bound for Min-Witness Convolution:

\begin{restatable}[Min-Witness Convolution]{theorem}{thmminwitnessconv} \label{thm:min-witness-conv}
Min-Witness Convolution cannot be solved in time $O(n^{4/3-\epsilon})$ (for any constant $\epsilon > 0$), unless the Strong Min-Plus Convolution Hypothesis fails.
\end{restatable}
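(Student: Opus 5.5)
The plan is to reduce from the Strong Min-Plus Convolution Hypothesis through the structure-versus-randomness pipeline already used for \cref{thm:min-equal-conv}, and then to encode the resulting pseudo-random instance into Min-Witness Convolution. The encoding places the quantity to be minimized in the high-order digits of the witness index. I have not verified that the parameters balance to exactly $4/3$.

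\textbf{Step 1 (decomposition).} Start from a Strong Min-Plus Convolution instance $A,B$ over $[n]$. Following the template of \cref{thm:min-equal-conv}, repeatedly extract structured subinstances, meaning those whose relevant sets have rank at most $n^{1-\delta}$, and solve them in truly subquadratic time using the low-rank 3SUM algorithm (\cref{thm:informal-low-rank-3sum}). What remains is a pseudo-random residual instance. In this instance no value class, and no translate of a value class, has additive structure. Concretely, for all shifts the collision counts between value classes are close to those of random sets of the same density.

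\textbf{Step 2 (Min-Equality to Min-Witness).} The residual instance is an equality-type problem: for each $k$, find the minimum $v$ such that there exist $i+j=k$ with $A[i]=B[j]=v$. I would encode it as a Boolean convolution. Fix $L=2n$ and set
\begin{itemize}
\item $A'[vL+i]=1$ whenever $A[i]=v$,
\item $B'[(V-v)L+j]=1$ whenever $B[j]=v$, where $V$ is the largest value.
\end{itemize}
A witness at target position $VL+k$ must have matching values, because a mismatch shifts the position by a nonzero multiple of $L$. The minimum witness index is then $vL+i$ with $v$ minimal, so the minimum value can be read off from it.

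The direct encoding has length $nV$, which is too long. I would therefore treat values by frequency.
\begin{itemize}
\item Frequent values (frequency at least $f$) number at most $n/f$. Renaming them to $[n/f]$ while preserving their order gives an encoding of length $N=n^2/f$.
\item Rare values are handled by enumerating all pairs, at cost $nf$.
\end{itemize}

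\textbf{Step 3 (obstacle).} A naive balance of these two parts only yields a trivial bound, so the main difficulty is to exploit pseudo-randomness to shorten the encoding. Pseudo-randomness allows values to be hashed into groups of $g$ values, with the total collision count within a group controlled, which replaces $n/f$ digits by roughly $n/(fg)$. False witnesses caused by hashing are removed by brute-force verification; each false witness costs $O(1)$, and pseudo-randomness bounds how many there are. A further option is to split the index range into $n^{1/3}$ blocks and run the encoding per block pair.

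I would then optimize $f$, $g$ and the block size so that an $N^{4/3-\epsilon}$-time Min-Witness algorithm yields an $n^{2-\epsilon'}$ algorithm for the original instance, aiming for $N\approx n^{3/2}$. The crux, and the step I expect to fail first in a naive attempt, is bounding the verification cost of false witnesses by $n^{2-\Omega(1)}$ in the pseudo-random case. The low-rank algorithm is essential here, since it ensures that every structured configuration that would create many false witnesses has already been removed in Step 1.
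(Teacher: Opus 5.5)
\textbf{There is a genuine gap.} The decisive step is missing: you never give an encoding of a Min-Plus-type instance into Min-Witness Convolution of near-linear length, and the substitutes you sketch do not work.

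\emph{Step 1} postulates a residual instance whose value classes are additively pseudo-random for all shifts. Nothing in the paper's machinery gives such a statement, and it is not what the low-rank 3SUM algorithm buys. What it buys, via \cref{cor:small-universe-listing}, is a reduction to Min-Plus Convolution Listing over universe $[n^{1/3}]$ with threshold $n^{2/3}$. The pseudo-randomness there means only that few witnesses per output must be listed. It says nothing about additive structure of value classes. Your claim in \emph{Step 2} that the residual is a Min-Equality instance is likewise unjustified.

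\emph{Step 3}'s hashing-with-verification cannot be repaired by bounding the number of false witnesses, because a Min-Witness oracle reports only the smallest witness index per output. Once several values share a digit, the reported index is minimal by position rather than by value, and it may be a false match. Either way the correct witness is masked. Since the oracle never reports a second candidate, checking the reported one in $O(1)$ time does not expose the next one. The frequency split alone gives length $n^2/f$ against brute-force cost $nf$, which you acknowledge does not balance, and no working parameters are given.

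\textbf{The missing idea} is to keep the encoded length $O(n)$ and pay for the value encoding with repetitions instead.
\begin{enumerate}
\item Reduce to universe $[n^{1/3}]$ and threshold $n^{2/3}$ as above, and cap value frequencies in $B$ by \cref{lem:no-too-frequent-elements}.
\item Fix the value $b$ of $B[j]$, so that the second sequence is in $\{0,\perp\}$.
\item Permute indices by a random affine map $i\mapsto xi+y$ over $\F_N$. Keep a permuted position $t$ of $A$ only if its value equals $\lfloor t/N^{2/3}\rfloor$.
\end{enumerate}
On the surviving positions the block index equals the value. Hence the minimum witness at position $xk+2y$ automatically has minimum value (\cref{lem:min-witness-helper}).

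By pairwise independence, a fixed witness is the unique survivor among at most $n^{1/3}$ witnesses with probability $\Omega(N^{-1/3})$. So $\tilde O(n^{1/3})$ repetitions list them all. Use listing threshold $n^{1/3+2\epsilon/3}$ per value $b$, and handle the remaining heavy pairs $(b,k)$ by brute force over the at most $n^{2/3+\epsilon/3}$ positions of $b$ in $B$. The total cost is $n^{1/3}\cdot n^{1/3+2\epsilon/3}\cdot n^{4/3-\epsilon}=n^{2-\epsilon/3}$, which is exactly where the exponent $4/3$ comes from.
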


A careful reader might have noticed that our lower bounds for intermediate sequence problems are based on the Strong Min-Plus Convolution Hypothesis, while Fischer \cite{Fischer26}'s lower bounds for intermediate matrix problems are based on the APSP Hypothesis itself rather than its strong variant. Without relying on \cref{hypo:sum-preserving}, Fischer first reduced APSP to a version of Min-Plus Product where the universe has low-doubling (rather than simply an interval $[u]$), and then obtained APSP-based lower bounds for intermediate matrix problems from this first step. We also manage to prove an analogous version of the first step in Min-Plus Convolution (\cref{lem:doubling-reduct}); unfortunately, it seems hard to obtain tight reductions for intermediate sequence problems from it. Intuitively, one reason why it is harder to design reductions to intermediate sequence problems is the following: In Fischer's universe reduction for APSP, both the inner dimension  and the universe of the Min-Plus Product are reduced, which make it easier to further reduce to intermediate matrix problems. In contrast, for sequence problems, there is no analogue of an ``inner dimension'', so there is less room to work with in downstream reductions. 

\subsection{Proof Ideas}
In this section we give a quick high-level overview for the results we have not sketched before.

\subsubsection{Low-Rank 3SUM in Subquadratic Time}
We start with the ideas behind our subquadratic-time algorithm for low-rank 3SUM. Let $(A, B, C)$ denote a given 3SUM instance, and assume that $C \subseteq X + X$ for some small set, $|X| \leq n^{1-\epsilon}$. At the heart of our algorithm lies a curious win-win paradigm: Either we can solve the instance by a clever brute-force algorithm, or we can \emph{amplify} structure.

Specifically, to find a 3SUM solution $a + b = c$ we could try to enumerate \emph{all} solutions
\begin{equation} \label{eq:meet-in-the-middle}
    a + b = x_1 + x_2, \qquad \text{for $a \in A, b \in B, x_1, x_2 \in X$.}
\end{equation}
For each such solution we can test if $a + b = x_1 + x_2$ is actually an element in $C$; only then have we found a solution. This might seem like overkill---after all we are listing many more solutions than necessary. But the crucial advantage is that solutions to this equation can be listed very efficiently. Indeed, rewriting each solution as
\begin{equation*}
    a - x_1 = x_2 - b,
\end{equation*}
a natural meet-in-the-middle algorithm is to preprocess the sets $A - X$ and $X - B$ in time $O(|A| \, |X| + |B| \, |X|) = O(n^{2-\epsilon})$. Afterwards, we can list all solutions in time $O(N)$, where $N$ denotes the number of solutions. If $N \leq n^{2-\delta}$, say, then this algorithm runs in subquadratic time. But of course, $N$ might be much larger, possibly up to $|X|^2 n = n^{3-2\epsilon}$, and in that case the meet-in-the-middle strategy is doomed.

Our key insight is that, in this situation, the sets $A$ and $B$ admit amplified structure. Concretely, focus on $B$. For simplicity, assume that all elements $b \in B$ participate in roughly the same number of solutions to \cref{eq:meet-in-the-middle}, i.e., in about $N / |B|$ solutions. Then, in other words, each $b \in B$ is an $N / |B| \geq n^{1-\delta}$-popular sum in the set $X + X - A$. Note that we started from the structural information that $C \subseteq X + X$, but ended up with the information that $B$ is covered by the $n^{1-\delta}$-popular sums in $X + X - A$. It turns out that this argument can be iterated. By arguing as above, but exchanging the roles of $B$ and $C$, we can either solve the instance, or further derive that $C$ is covered by the $n^{2-2\delta}$-popular sums in $X + X - A + A$. And iterating this argument $k$ times, we eventually infer that $C$ is covered by the $n^{2k-2k\delta}$-popular sums in $X + X - kA + kA$. 

At this point, $C$ witnesses an $n^{2k-2k\delta}$-popular sum in $X + X - kA + kA$. This is where our structure theorem (\cref{thm:k-sum-pop}) comes into play. When carefully applied, it implies that there is some low-doubling set $H$ of size at about $|X|^2 n^{2k\delta} \leq n^{2-2\epsilon+2k\delta}$ that covers a large fraction of~$A$. Here, intuitively think of $\delta$ as an extremely small constant, so that the $2k \delta$ term effectively becomes~$0$. That is, intuitively, we have showed that $A$ is covered by a low-doubling set of size $|X|^2$. All in all, it is quite intriguing \emph{how} the structure amplifies: We start from a weakly structured set~$C$, then infer some more structure for $B$, then again for $C$, and so on. And eventually, we conclude that $A$ admits a small low-doubling container. Besides, it is interesting that for a 3SUM algorithm the relevant structure lies in the number of $k$-sums.

Of course, we are not completely done. What remains is a 3SUM instance $(A, B, C)$ equipped with a container set $H \supseteq A$ of size $|H| \leq n^{2-\epsilon}$ and with arbitrarily low doubling. But this instance can be dealt with following ideas that have already appeared in earlier work (notably,~\cite{JinX23}), and for this reason we defer further details to \cref{sec:3sum-low-rank}.

\subsubsection{Universe Reduction for Min-Plus Convolution}
Finally, we give a brief glimpse into our new universe reduction. Only at this point will it become clear why the rank of a set is a reasonable notion in our context.

For concreteness, focus on the universe reduction from $[n]$ to $[\sqrt n]$. That is, we argue that if Min-Plus Convolution over the universe $[\sqrt n]$ is in time $O(n^{3/2-\epsilon})$, then Min-Plus Convolution over the universe $[n]$ is in time $O(n^{2-\epsilon'})$ (contradicting the Strong Min-Plus Convolution Hypothesis). Let $A, B$ denote the given length-$n$ sequences over the universe $[n]$. We rewrite the sequences into their base-$\sqrt n$ representation as
\begin{align*}
    A &= A_1 \cdot q + A_0, \\
    B &= B_1 \cdot q + B_0,
\end{align*}
where the four sequences $A_0, B_0, A_1, B_1$ are each over the universe $[\sqrt n]$. Let us call an index $i$ a \emph{witness} of $k$ if $(A * B)[k] = A[i] + B[k - i]$, and a \emph{pseudo-witness} if $(A_1 * B_1)[k] = A_1[i] + B_1[k - i]$. Equivalently, the pseudo-witnesses are those indices that would become indices when rounding $A, B$ to multiples of $\sqrt n$. By standard approaches, we can assume $A_0$ and $B_0$ have entries less than $\sqrt{n} / 2$, so that a witness must be a pseudo-witness. 

We will now execute the structure-versus-randomness paradigm that we sketched in the beginning. The notion of \emph{pseudo-randomness} turns out to concern the number of pseudo-witnesses: in a pseudo-random instance, all indices $k$ have few pseudo-witnesses, say at most $n^{1/2+\epsilon/2}$. This is indeed a good case for our reduction, as we can afford to list \emph{all} pseudo-witnesses. Notably, this list also contains the true witnesses. To obtain the list of pseudo-witnesses, we apply a simple sampling trick (\cref{lem:min-plus-conv-listing}), reducing the listing problem to $\tO(n^{1/2+\epsilon/2})$ plain min-plus convolutions of restrictions of $A_1$ and $B_1$. These are sequences over the universe $[\sqrt n]$, and can thus be convolved in time $O(n^{3/2-\epsilon})$ by our initial assumption. The total time is $O(n^{2-\epsilon/2})$.

Of course, we first need to transform the initial instance $(A, B)$ into a pseudo-random one, and this is where the rank becomes relevant. To simplify in this overview, suppose that \emph{all} indices $k$ have many (i.e., more than $n^{1/2+\epsilon/2}$) pseudo-witnesses.\footnote{In fact, this idealized situation is impossible. For instance, $k = 0$ always has exactly one (pseudo\=/)witness, namely $i = 0$. But this is a technicality that shall not concern us in this overview.} In this case, so we claim, the whole instance has low rank and can directly be solved in subquadratic time. To see this, we first apply the standard reduction from Min-Plus Convolution to 3SUM: We approximate $C \approx A * B$ (by some standard trick, with coordinate-wise constant additive error), then replace the sequences $A, B, C$ by their \emph{indexed sets}
\begin{align*}
    \mathcal A &= \set{(i, A[i]) : 0 \leq i \leq n}, \\
    \mathcal B &= \set{(j, B[j]) : 0 \leq j \leq n}, \\
    \mathcal C &= \set{(k, C[k]) : 0 \leq k \leq 2n},
\end{align*}
and solve the remaining 3SUM instance $(\mathcal A, \mathcal B, \mathcal C)$. These are 2-dimensional sets that can be flattened to plain integers via the usual embedding. We argue that $\mathcal C$ has low rank, $r(\mathcal C) \leq \tilde O(n^{1-\epsilon/4})$. This can be shown in two steps:
\begin{itemize}
    \item First, consider the set
    \begin{equation*}
        \mathcal C_1 = \set{(k, (A_1 * B_1)[k] \cdot \sqrt{n}) : 0 \leq k \leq 2n}.
    \end{equation*}
    Assuming that each $k$ has many pseudo-witnesses, we can sample a set of $\tilde O(n^{3/4-\epsilon/4})$ indices $X \subseteq [0, n]$. With good probability, for each $k$ we include some witness $i \in X$ and simultaneously also $k - i \in X$. In this case, the set
    \begin{equation*}
        \mathcal X_1 = \set{(i, A_1[i] \cdot \sqrt{n}) : i \in X} \cup \set{(j, B_1[j] \cdot \sqrt{n}) : j \in X}
    \end{equation*}
    satisfies $\mathcal C_1 \subseteq \mathcal X_1 + \mathcal X_1$, witnessing that $r(\mathcal C_1) \leq |\mathcal X_1| \leq O(n^{3/4-\epsilon/4})$. 
    \item Second, note that $\mathcal C_1$ is an indexed set that approximates $\mathcal C$ well. Indeed, for any $c \in \mathcal C$, there exists $c_1 \in \mathcal C_1$ such that $c - c_1 \in \mathcal I$, where $\mathcal I = \set{(0, 0), \dots, (0, \sqrt{n})}$. Hence, we can expand $\mathcal X_1$ to $\mathcal{X}$ so that $\mathcal X_1 + \mathcal X_1 + \mathcal I \subseteq \mathcal{X} + \mathcal{X}$, which will then imply $\mathcal C \subseteq \mathcal X + \mathcal X$. To achieve this, we set \[\mathcal X = \mathcal X_1 \cup \left\{(0, 0), (0, 1), \ldots, (0, n^{1/4})\right\} \cup \left\{(0, n^{1/4}), (0, 2\cdot n^{1/4}), \ldots, (0, \sqrt{n})\right\}.\] Finally, $r(\mathcal C) \leq |\mathcal X| \leq O(n^{1-\epsilon/4})$. 
\end{itemize}
In summary, our low-rank 3SUM algorithm (\cref{thm:3sum-low-rank}) can solve the 3SUM instance $(\mathcal A, \mathcal B, \mathcal C)$ in subquadratic time $O(n^{2-\epsilon'})$.
\section{Preliminaries} \label{sec:prelims}

We use $[n]$ to denote $\{1,2,\dots,n\}$ for any integer $n \ge 0$. 
\subsection{Additive Combinatorics}
For two sets $A, B \subseteq \Int$, define their \emph{sumset} $A+B = \{a+b : a\in A, b\in B\}$ and \emph{difference set} $A-B = \{a-b : a\in A, b\in B\}$. Let $-A= \{-a: a\in A\}$.
The $k$-fold sumset $kA$ is recursively defined by $kA=(k-1)A + A$ with $1A=A$. Denote $r_{A+B}(z) = |\{(x,y)\in A\times B: x+y=z\}|$. We often conveniently refer to $r_{A+B}(z)$ as the \emph{multiplicity} of $z$ in the sumset $A+B$. Similarly, the \emph{multiplicity} of $z$ in the $k$-fold sumset $kA$ means $|\{(x_1,\dots,x_k)\in A^k: x_1+\dots + x_k = z\}|$.

\begin{lemma}[Pl\"unnecke--Ruzsa Inequality] \label{lem:plunnecke-ruzsa}
    For two sets $A, B$, if $|A + B| \le K |A|$, then for any integers $n, m \ge 0$, $|mB - nB| \le K^{m+n} |A|$. 
\end{lemma}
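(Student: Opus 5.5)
The statement is the Plünnecke--Ruzsa inequality in the form $|A+B|\le K|A| \Rightarrow |mB-nB|\le K^{m+n}|A|$. I would not reprove it from scratch. It is a classical result, and the plan is to cite the standard proof, which I outline here. The route goes through Petridis' lemma and Ruzsa's triangle inequality.

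\textbf{Step 1 (Petridis' lemma).} Among all nonempty subsets $X\subseteq A$, choose one minimizing the ratio $|X+B|/|X|$, and call the minimum value $K'$. Taking $X=A$ shows $K'\le K$. The key claim is that for every finite set $C$,
\[
|X+B+C|\le K'|X+C|.
\]
I would prove this by induction on $|C|$. Add one new element $c$ to $C$. Let $Y\subseteq X$ consist of those $x$ with $x+B+c\subseteq X+B+C$. Then the new part of $X+B+(C\cup\{c\})$ has size at most $|X+B|-|Y+B|$. By minimality of $K'$ this is at most $K'(|X|-|Y|)$. On the other side, the new part of $X+C\cup\{c\}$ has size at least $|X|-|Y|$. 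This is the only delicate step, since the argument must identify exactly which translates $x+c$ are already covered in $X+C$.

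\textbf{Step 2 (iterate).} Apply Step 1 with $C=(m-1)B$. Inducting on $m$ gives $|X+mB|\le K'^m|X|\le K^m|X|$ for every $m\ge 0$.

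\textbf{Step 3 (Ruzsa triangle inequality).} The inequality states $|U|\,|V-W|\le |U+V|\,|U+W|$. Its proof is the injection $(u,d)\mapsto(u+v_d,\,u+w_d)$, where $d=v_d-w_d$ is a fixed representation of each $d\in V-W$. Apply it with $U=X$, $V=mB$, $W=nB$ to get
\[
|mB-nB|\le \frac{|X+mB|\,|X+nB|}{|X|}\le K^{m+n}|X|\le K^{m+n}|A|.
\]
The degenerate cases $m=0$ or $n=0$ are covered by the same formula, since $|X+0B|=|X|$.

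The main obstacle is the induction in Step 1. Everything after it is bookkeeping.
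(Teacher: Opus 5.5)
Your proof is correct. Petridis' lemma for a ratio-minimizing $X\subseteq A$, followed by the Ruzsa triangle inequality with the same $X$ used for both $mB$ and $nB$, is the standard argument. The delicate lower bound on the new part of $X+(C\cup\{c\})$ holds because every $x$ with $x+c\in X+C$ already lies in $Y$, and the cases $Y=\emptyset$ and $m=0$ or $n=0$ are fine under the convention $0B=\{0\}$.

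The paper itself gives no proof of this lemma; it is quoted as a classical fact in the preliminaries. Your outline is the standard proof behind that citation.
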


\begin{lemma}[Sparse Convolution, \cite{BringmannFN22}] \label{lem:sparse-conv}
There is a deterministic algorithm that, given $A, B \subseteq [-u, u]$, computes the sumset $A + B$ along with the multiplicities $r_{A+B}$ in time $|A + B| \cdot (\log u)^{O(1)}$.
\end{lemma}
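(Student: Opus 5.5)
The statement in question is the sparse convolution lemma (\cref{lem:sparse-conv}), cited from~\cite{BringmannFN22}. We would not reprove it from scratch. The plan is to recall the two-stage structure of output-sensitive convolution and indicate how it yields the stated bound $|A+B| \cdot (\log u)^{O(1)}$, with multiplicities, deterministically.

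\textbf{Stage 1: a small-size container for the support.} The first step is to compute a set $T$ with $A + B \subseteq T$ and $|T| = O(|A+B|)$. The standard route is recursive rounding. Consider $A_\ell = \lfloor A / 2^\ell \rfloor$ and $B_\ell = \lfloor B / 2^\ell \rfloor$ for $\ell = \log u, \dots, 0$. Then $A_\ell + B_\ell$ is contained in $\lfloor (A+B)/2^\ell \rfloor + \{0,1\}$, so its size is at most $2|A+B|$. Knowing (a superset of) $A_{\ell+1}+B_{\ell+1}$ gives a candidate superset for $A_\ell + B_\ell$ of size $O(|A+B|)$, because each element has only $O(1)$ preimages under halving. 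Iterating over the $O(\log u)$ levels yields $T$.

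\textbf{Stage 2: multiplicities via sparse recovery.} The second step is to compute $r_{A+B}$ restricted to $T$. Pick a prime $p$, or a family of moduli, such that $T$ is hashed into $\mathbb{Z}_p$ with few collisions, where $p = O(|T|)$. Compute the cyclic convolution of the indicator vectors of $A \bmod p$ and $B \bmod p$ by FFT, and also of the weighted vectors $x \cdot \mathbf 1_A$. For an isolated bucket, the ratio of the two values identifies the actual sum, and its count gives the multiplicity. Buckets with collisions are handled by repeating with $O(\log u)$ different moduli and peeling off the sums already recovered.

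\textbf{Main obstacle.} The hardest part is determinism. The randomized version just uses random primes. To derandomize, we would follow~\cite{BringmannFN22}: choose the moduli so that, with respect to the known container $T$, a constant fraction of elements is isolated. Such moduli can be found deterministically because $T$ is known explicitly. Concretely, one counts collisions among differences of elements of $T$ over primes in a range of size $(\log u)^{O(1)}|T|$, and picks a good prime by an averaging argument. This costs only polylogarithmic overhead per level.

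Combining the two stages, each level costs $\tilde O(|A+B|)$, giving the claimed running time.
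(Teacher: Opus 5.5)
\textbf{Paper vs.\ your proposal.} The paper gives no proof of this lemma. It imports it from \cite{BringmannFN22}, so deferring to that reference is the right move. The problem is the concrete derandomization you attribute to it, because that is where the content of the lemma lies and where your sketch breaks.

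\textbf{The gap: derandomization.} An averaging argument only shows that a good prime \emph{exists}. A nonzero difference of two elements of $T$ has at most $O(\log u/\log|T|)$ prime divisors of size at least $|T|$. Hence at most roughly $|T|\log u/(c\log|T|)$ primes in your range produce more than $c|T|$ colliding pairs. This bounds how many primes are bad, not where they sit. Nothing in your argument prevents a deterministic scan from testing $\tilde\Theta(|T|)$ bad candidates before it reaches a good one. Evaluating a single candidate (bucketing $T \bmod p$ and counting collisions) already costs $\Theta(|T|)$, so the total is $\tilde\Theta(|T|^2)$. Your sketch has no mechanism for computing collision counts for many primes at once. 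Knowing $T$ explicitly does not remove this obstacle: Chan and Lewenstein~\cite{ChanL15} assumed precisely such a small superset of the output and still paid a $2^{O(\sqrt{\log|A+B|\cdot\log\log u})}$ overhead. Reducing this to $(\log u)^{O(1)}$ deterministically is the main new contribution of \cite{BringmannFN22}, and it needs ideas beyond ``isolate and peel with an averaged prime.'' As written, your argument supports only the randomized version of the lemma.

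\textbf{Stage~1 also needs a fix,} even in the randomized setting. If you only carry a superset $T_{\ell+1}\supseteq A_{\ell+1}+B_{\ell+1}$, the next candidate set $2T_{\ell+1}+\{0,1,2,3\}$ is four times larger. Over $\log u$ levels the container then grows by a factor $u^{O(1)}$. You must prune to the \emph{exact} support of $A_\ell+B_\ell$ at every level. That means running the Stage~2 recovery at each level and discarding candidates with zero count, which is valid because multiplicities are nonnegative. So the two stages are interleaved, not sequential.

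\textbf{Smaller slips.}
\begin{itemize}
\item $\lfloor a/2^\ell\rfloor+\lfloor b/2^\ell\rfloor$ lies in $\lfloor(a+b)/2^\ell\rfloor+\{-1,0\}$, not $+\{0,1\}$.
\item You need count vectors of $A\bmod p$ and $B\bmod p$ rather than indicator vectors.
\item Over a bucket, $(x\cdot\mathbf 1_A)\star\mathbf 1_B$ yields $\sum a$ rather than $\sum(a+b)$. With $T$ known, though, you can test isolation directly and drop the ratio trick altogether.
\end{itemize}
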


The \emph{additive energy} between two sets $A,B\subseteq \Z$ is
\begin{equation*}
    E(A,B) = \abs{\set{(a_1,a_2,b_1,b_2)\in A^2 \times B^2: a_1+b_1=a_2+b_2}} = \sum_{z}r_{A+B}(z)^2.
\end{equation*}
The additive energy of a set $A\subseteq \Z$ is $E(A)=E(A,A)$. We also define the following standard higher-order analogue of the additive energy, for any integer $k \geq 1$:
\begin{equation*}
    T_k(A) = \abs{\set{(a_1, \dots, a_k, b_1, \dots, b_k) \in A^{2k} : a_1 + \dots + a_k = b_1 + \dots + b_k}}.
\end{equation*}

The Balog--Szemer\'edi--Gowers (BSG) Theorem was originally proved in \cite{BalogS94,Gowers01}.
We use the following algorithmic version of the BSG theorem appearing in \cite[Theorem 3.3]{AbboudBF23} (which is an adaptation of \cite{ChanL15}).
\begin{theorem}[BSG Theorem]
\label{thm:bsg-symm}
\label{thm:bsg}
If set $A$ has additive energy $E(A)\ge |A|^3/K$, then there exists a subset $A'\subseteq A$ such that
   \begin{itemize}
       \item $|A'|\ge \Omega(K^{-2}|A|)$, and
       \item $|A'+A'|\le O(K^{24}|A|)$.
   \end{itemize}
   Moreover, given $A$ and $K$, we can compute $A'$ in time $\tO(K^{12}|A|)$ by a randomized algorithm.
\end{theorem}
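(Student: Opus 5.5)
This statement is quoted from \cite[Theorem 3.3]{AbboudBF23}, so in the paper it is simply invoked. If I were to reprove it, I would follow the standard graph-theoretic proof of Balog--Szemer\'edi--Gowers, in Gowers' paths-of-length-three form, and make each step algorithmic by sampling, as in \cite{ChanL15}.

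\emph{Step 1: popular sums and the graph.} Let $S=\set{s : r_{A+A}(s)\ge |A|/(2K)}$. The non-popular sums contribute at most $\frac{|A|}{2K}\sum_s r_{A+A}(s) = |A|^3/(2K)$ to $E(A)$. Since also $r_{A+A}(s)\le|A|$, this gives $\sum_{s\in S} r_{A+A}(s)\ge |A|^2/(2K)$. Double counting gives $|S|\le 2K|A|$. Now define the bipartite graph $G$ on $A\times A$ with an edge $(a,b)$ whenever $a+b\in S$; it has density at least $1/(2K)$.

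\emph{Step 2: the path lemma.} I would use the dependent-random-choice lemma: in a bipartite graph of density $\delta$ there is $A'\subseteq A$ with $|A'|\ge \Omega(\delta^{2}|A|)$ such that every pair $a,a'\in A'$ is joined by at least $\delta^{O(1)}|A|^2$ paths $a-b-c-a'$ of length three. To find it, pick a random vertex $v$ on the right side and take its neighbourhood $N(v)$. Then discard from $N(v)$ the vertices lying in too many pairs with few common neighbours. A first-moment computation shows that the discarded part is a small fraction in expectation.

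\emph{Step 3: counting.} Each such path yields
\[
a+a'=(a+b)-(c+b)+(c+a'),
\]
a representation of $a+a'$ as $s_1-s_2+s_3$ with $s_i\in S$. Each $s_i$ expands into $|A|/(2K)$ pairs of elements of $A$, so every $x\in A'+A'$ has at least $K^{-O(1)}|A|^{5}$ representations as $x=a_1+a_2-a_3-a_4+a_5+a_6$ with all $a_i\in A$, counted with suitable distinctness of the path.

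There are only $|A|^6$ such six-tuples in total, so this naive count gives merely $|A'+A'|\le K^{O(1)}|A|$ with a weaker exponent. To reach the stated $K^{24}$, I would do the count more carefully: fix the middle vertices $b,c$ and use $|S|\le 2K|A|$ directly. The exponent $24$ comes from tracking $\delta\sim K^{-1}$ through the path lemma.

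\emph{Step 4: the algorithm.} We cannot compute $S$ or $G$ explicitly. Instead:
\begin{itemize}
    \item test whether $s\in S$ by sampling $\tO(K)$ random $a\in A$ and checking $s-a\in A$ with a hash table;
    \item pick a random $v$ and compute $N(v)$ with $O(|A|)$ such membership tests;
    \item for the pruning step, estimate the number of common neighbours by sampling $K^{O(1)}$ vertices.
\end{itemize}
Each test costs $\tO(K^{O(1)})$, and $\tO(K^{O(1)})$ independent repetitions boost the success probability.

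I expect the main obstacle to be the pruning in Step 2. We must certify that every surviving pair of $A'$ has many length-three paths, but we cannot afford to examine all $|A'|^2$ pairs. I would first try the approach of \cite{ChanL15}: only estimate, for each vertex, the fraction of partners with which it forms a bad pair, and prune the vertices where this fraction is high. This yields a slightly weaker ``most pairs'' version of the path lemma, which still suffices after a Pl\"unnecke-type cleanup.
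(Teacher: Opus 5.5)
You are right that the paper contains no proof of this statement; it imports it from \cite[Theorem~3.3]{AbboudBF23}. Step~1 is correct: $|S|\le 2K|A|$ and $|G|\ge |A|^2/(2K)$. The count in Step~3 is also correct once you have many length-three paths for every pair, since $a+a'=(a+b)-(c+b)+(c+a')$.

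The gap is the path lemma in Step~2. In a bipartite graph a path of length three joins a left vertex to a right vertex, so you are asking for a \emph{single} set $A'\subseteq A$ that serves as both endpoint sets at once. That is false for graphs, even symmetric ones. Split $A=A_1\sqcup A_2$ into halves and join $a,b$ iff they lie in different halves. The density is $1/2$, yet every path $a-b-c-a'$ has $a$ and $a'$ in different halves, so any $A'$ in which all pairs of distinct elements are joined has $|A'|\le 2$. Hence no argument that uses only the density of $G$, dependent random choice included, can prove your lemma. Indeed, your construction (a random neighbourhood $N(v)$, then pruning) stays on one side and controls common neighbours, i.e.\ paths of length two, not three.

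There are two standard repairs.
(i) Use Gowers' lemma in its genuinely two-sided form. It gives $A_L,A_R\subseteq A$ of size $K^{-O(1)}|A|$ such that every $(a,a')\in A_L\times A_R$ is joined by $K^{-O(1)}|A|^2$ paths of length three. Here $A_R$ is defined as the right vertices with many neighbours in the left set, so in general $A_R\ne A_L$. Your Step~3 count then gives $|A_L+A_R|\le K^{O(1)}|A|$. The Pl\"unnecke--Ruzsa inequality (\cref{lem:plunnecke-ruzsa} with base set $A_R$) then yields $|A_L+A_L|\le K^{O(1)}|A|$.
(ii) Keep your one-sided set, but use paths of length four. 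After pruning, every survivor has few bad partners. So any two survivors $a,a'$ share many intermediates $c$ such that both $(a,c)$ and $(c,a')$ have many common neighbours. This writes $a-a'=s_1-s_2+s_3-s_4$ in many ways, so $|A'-A'|\le K^{O(1)}|A'|$, and Pl\"unnecke--Ruzsa turns this into a bound on $|A'+A'|$.

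Route~(ii) also resolves the obstacle you raise at the end. Per-vertex estimates of bad partners already give an all-pairs guarantee through the intermediate-vertex argument, so no pair ever needs to be examined. Your proposed alternative, a ``most pairs'' lemma followed by a Pl\"unnecke-type cleanup, does not work. Passing from a small sumset restricted to most pairs to a small full sumset is exactly what BSG asserts, and Pl\"unnecke does not supply it.

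Finally, the exponents $K^{-2}$ and $K^{24}$ and the running time $\tO(K^{12}|A|)$ are only asserted in your sketch, not derived. Only polynomial dependence on $K$ matters in this paper.
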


The Asymmetric BSG Theorem was proved in the textbook by Tao and Vu \cite[Section 2.6]{TaoV06}.  We need an algorithmic version of this theorem stated below, which we prove in \cref{sec:asym-bsg}.
\begin{restatable}[Asymmetric BSG Theorem]{theorem}{thmasymbsg} \label{thm:asym-bsg}
Let $\epsilon > 0$ and $0 < \alpha \leq 1$. Let $A, B \subseteq \Int$ be sets with energy $E(A, B) \geq \alpha |A|^2 |B|$. Then there are sets $H, \Lambda \subseteq \Int$ and some $K \leq O_\epsilon(\alpha^{-O_\epsilon(1)} (1 + |B| / |A|)^\epsilon)$ such that:
\begin{itemize}
    \item $|H + H| \leq K |H|$,
    \item $|H| \, |\Lambda| \leq K |B|$,
    \item $|A \cap H| \geq |A| / K$ and $|B \cap (H + \Lambda)| \geq |B| / K$.
\end{itemize}
Moreover, there is a randomized algorithm to compute $H, \Lambda$ in time $\tO((|A| + |B|) K)$.
\end{restatable}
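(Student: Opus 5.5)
We follow the approach of Tao and Vu~\cite[Section 2.6]{TaoV06}, made algorithmic in the style of \cref{thm:bsg-symm}. The plan has four steps: pass to popular differences, run a graph-theoretic BSG argument on a bipartite graph between $A$ and $B$, extract a low-doubling $H$ through an iterated Ruzsa-covering or energy-increment argument, and then pick a translate set $\Lambda$.

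\emph{Step 1 (popular differences).} From $E(A,B)\ge \alpha|A|^2|B|$ we get a set $D$ of popular sums $z$ with $r_{A+B}(z)\ge \alpha|A|/2$. These sums carry at least half the energy, so
\[
  \sum_{z\in D} r_{A+B}(z)\ge \alpha|A||B|/2 .
\]
Define the bipartite graph $G\subseteq A\times B$ with edges $(a,b)$ such that $a+b\in D$. It has density at least $\alpha/2$.

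\emph{Step 2 (paths of length two).} Apply the standard dependent random choice / paths-of-length-two lemma to $G$. This yields $A_1\subseteq A$ with $|A_1|\ge \alpha^{O(1)}|A|$ such that most pairs $a,a'\in A_1$ have $\alpha^{O(1)}|B|$ common neighbours. For each such pair we get $\alpha^{O(1)}|A|^2|B|$ representations of $a-a'$ as $(d-b)-(d'-b)$. A counting argument then shows that $A_1$ has a large subset $A_2$ with small difference set against itself \emph{relative to the scale} $|B|/|A|$. More precisely, we obtain
\[
  |A_2 - A_2|\le \alpha^{-O(1)}|B|,
\]
which is weaker than low doubling whenever $|B|\gg|A|$. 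This is exactly why the symmetric theorem fails here and the $(1+|B|/|A|)^\epsilon$ loss appears.

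\emph{Step 3 (iteration to low doubling).} This is the key step, where we follow Tao--Vu's iteration. Set $L=|B|/|A|$. We have a set $S$ (initially $A_2$) with $|S|\ge |A|/K$ and $|S-S|\le K L|S|$. We repeatedly ask whether $S$ has energy at least $|S|^3/L^{\epsilon}$.
\begin{itemize}
  \item If it does, \cref{thm:bsg-symm} gives a subset with doubling $L^{O(\epsilon)}$, and we stop. We set $H$ to be this subset, or a sumset of it so that it also covers part of $A$.
  \item If it does not, the small energy combined with the covering structure $|S-S|\lesssim L|S|$ lets us pass to a set $S'=S-S$ restricted to popular differences. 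This set covers many elements and has ratio $|S'|/|S|\ge L^{\epsilon/2}$, while its relative size compared to $L$ stays controlled.
\end{itemize}
Since the size can grow by at most a factor $L$ in total, the process ends after $O(1/\epsilon)$ rounds. The accumulated losses are $\alpha^{-O_\epsilon(1)}L^{O(\epsilon)}$, and rescaling $\epsilon$ gives the bound on $K$ in \cref{thm:asym-bsg}. I expect this step to be the main obstacle. The difficulty is making sure that $|A\cap H|\ge|A|/K$ survives each round. To handle this we track, in every round, a translate of the current set that has large intersection with $A$, using averaging over the translates $S'+s$.

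\emph{Step 4 (choosing $\Lambda$ and the algorithm).} Given $H$ with $|H+H|\le K|H|$ and large intersection with $A$, we choose $\Lambda$ by a greedy Ruzsa-covering argument. We sample elements $b\in B$ and keep the translates $b-H$ that are popular. Since $G$ is dense, the sets $H+\lambda$ cover a $1/K$ fraction of $B$. Disjointness of the translates $\lambda + (H-H)$ together with Plünnecke--Ruzsa (\cref{lem:plunnecke-ruzsa}) gives $|H||\Lambda|\le K|B|$.

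For the running time we implement every step by sampling:
\begin{itemize}
  \item popular sums and common-neighbourhood sizes are estimated via random sampling and \cref{lem:sparse-conv};
  \item the energy tests use sampled $4$-tuples;
  \item each of the $O(1/\epsilon)$ rounds works on sets of size $\tO((|A|+|B|)K)$.
\end{itemize}
This gives total time $\tO((|A|+|B|)K)$.
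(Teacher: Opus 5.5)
Your outline has the right overall shape, and your Step~3 (iterate popular differences, apply \cref{thm:bsg-symm} once a size ratio is small, pull the cover back to $A$ by averaging over translates) matches the paper's Steps~1--3. But it is missing the ingredient that makes the Tao--Vu iteration terminate: every iterate must lie inside a symmetry set of $B$.

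\textbf{How the paper anchors the iterates.} It writes $E(A,B)=\sum_{a_1,a_2\in A} r_{B-B}(a_1-a_2)$ to get $A_1\subseteq\Sym(B;\alpha/2)$ directly, with no bipartite graph or popular sums of $A+B$. It then uses \cref{lem:asym-bsg-iter}: if $S\subseteq\Sym(B;\beta)$, at least $\beta^2|S|^2/2$ pairs of $S$ have difference in $\Sym(B;\beta^2/2)$. This lets each $A_{j+1}$ be chosen inside $\Sym(B;\alpha_{j+1})$. Hence $|A_J|\le |B|/\alpha_J$ is an \emph{absolute} bound. So the product of the $J$ ratios $|A_{j+1}|/|A_j|$ is $O_J(1+|B|/|A|)$, and some ratio is $O_J(L^{1/J})$.

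\textbf{Why your Step~3 does not terminate.} Your only control is $|S-S|\lesssim L|S|$ for the initial set. The next set $S'$ consists of popular differences of $S$, so $S'-S'\subseteq 2S-2S$, and nothing you have bounds that. Your claim that ``the size can grow by at most a factor $L$ in total'' is therefore unsupported: $|S^{(j)}|$ may exceed $|B|$, and the number of rounds is not bounded. Your Step~2 bound $|A_2-A_2|\le\alpha^{-O(1)}|B|$ is also not what the count gives. Common neighbours put the good differences $a-a'$ into the $\alpha^{O(1)}|B|$-popular part of $D-D$, which has size $\alpha^{-O(1)}|B|$. Controlling all of $A_2-A_2$ needs one more path, and that yields only $\alpha^{-O(1)}|B|^2/|A|$.

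\textbf{Why your Step~4 fails for the same reason.} To cover a $1/K$ fraction of $B$ with about $K|B|/|H|$ translates of $H$, you need $H$ to correlate with $B$. The paper gets $|H\cap\Sym(B;\alpha_j)|\ge |H|/K_j^2$ from three facts: $A_j\subseteq\Sym(B;\alpha_j)$, $|A_j\cap H|\ge |A_j|/K_j$, and $|H|\le K_j|A_j|$. This gives at least $\alpha_j|B|\,|H|/K_j^2$ pairs $(b,h)$ with $b+h\in B$. Sampling $\Lambda\subseteq B$ at rate about $K_j^2/(\alpha_j|H|)$ then finishes. Your Ruzsa-covering variant would also work from that point, with $H$ replaced by $H+H-H$. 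But ``$G$ is dense'' and ``$|A\cap H|$ is large'' give no such correlation. For instance, a long interval containing $A$ has doubling at most $2$ and contains all of $A$. Yet once $|H|>K|B|$ it admits no $\Lambda$ with $|H|\,|\Lambda|\le K|B|$.

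\textbf{The fix.} Replace your Steps~1--2 with the symmetry-set construction, and carry the invariant $A_j\subseteq\Sym(B;\alpha_j)$ through the iteration. That single invariant supplies both the absolute size bound needed for termination and the correlation with $B$ needed to build $\Lambda$.
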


Related to these tools, we will also need to compute approximations of sumset multiplicities. 

\begin{lemma}[Approximating Sumset Multiplicities] \label{lem:approx-mult}
There is a randomized algorithm that, given sets $A, B, C$ and $\epsilon > 0$, computes approximations $\tilde r(x)$, for all $x \in C$, with additive error
\begin{equation*}
    |r_{A+B}(x) - \tilde r(x)| \leq \epsilon|A|.
\end{equation*}
The algorithm runs in time $\tO(|A| + |B| / \epsilon + |C|)$.
\end{lemma}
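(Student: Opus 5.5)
The plan is random sampling of $A$ combined with sparse convolution (\cref{lem:sparse-conv}) and a Chernoff bound. Let $s = \Theta(\epsilon^{-2}\log n)$. For $t = 1, \dots, s$, we sample a random subset $B_t \subseteq B$ by including each element independently with probability $q = \min\{1, 1/(\epsilon |A|)\}$; equivalently, we could sample $A$. For each $x \in C$ we set
\begin{equation*}
    \tilde r(x) = \frac{1}{sq} \sum_t r_{A+B_t}(x).
\end{equation*}
This is an unbiased estimator of $r_{A+B}(x)$. The estimator $r_{A+B_t}(x)$ is a sum of independent indicators, one for each $b \in B$ with $x - b \in A$. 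There are $r_{A+B}(x) \le |A|$ such terms, each with weight $1/q \le \max\{1,\epsilon|A|\}$.

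The accuracy step uses Bernstein or Chernoff bounds. The variance of a single scaled sample is at most $r_{A+B}(x)/q \le |A| \cdot \epsilon|A|$. Averaging over $s$ samples divides this by $s$, giving variance at most $\epsilon|A|^2/s$. We need deviation at most $\epsilon|A|$. Bernstein gives failure probability $\exp(-\Omega(s\epsilon^2|A|^2/(\epsilon|A|^2 + \epsilon|A|\cdot\epsilon|A|)))$. This equals $\exp(-\Omega(s\epsilon))$, so $s = \Theta(\epsilon^{-1}\log n)$ suffices. A union bound over all $x \in C$ then gives the claim with high probability. If $q = 1$, the computation is exact.

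The main obstacle is the running time: computing $A + B_t$ by sparse convolution costs $|A + B_t| \le |A|\,|B_t|$, which is too large. To avoid this, we will not compute full sumsets. Instead, for each $x \in C$ and each sampled $b \in B_t$, we test membership of $x - b$ in $A$ using a hash table. This costs $O(|C|\,|B_t|)$, which is also too large.

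The better approach is to swap roles. We sample $A' \subseteq A$ with probability $q' = \Theta(\log n/(\epsilon^2|A|))$, so that $|A'| = \tO(\epsilon^{-2})$. For each $x \in C$ we count the $a \in A'$ with $x - a \in B$, using a hash table on $B$, and rescale by $1/q'$. The counted quantity is a sum of $|A|$ independent indicators with mean $q' r_{A+B}(x)$. Chernoff (additive form, with deviation $\epsilon q'|A| = \Theta(\log n/\epsilon)$ against mean at most $q'|A|$) gives error $\epsilon|A|$ with high probability. The cost is $\tO(|A| + |B| + |C|\epsilon^{-2})$.

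To reach the stated bound $\tO(|A| + |B|/\epsilon + |C|)$, we need to reduce the per-query cost. We balance the terms by bucketing: we hash $B$ modulo a random prime, or use sparse convolution on the sampled $A'$ against $B$ restricted to a window. This lets us compute $r_{A'+B}$ on all of $C$ in time $\tO(|A'+B|) \le \tO(|B|/\epsilon)$, after sampling $A'$ of size $\tO(1/\epsilon)$ via a median-of-means boosting. We apply \cref{lem:sparse-conv} to $A'$ and $B$ directly, which costs $\tO(|A'|\,|B|) = \tO(|B|/\epsilon)$, and then read off values at $x \in C$ in $\tO(|C|)$.

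The remaining delicate point is the variance bound with only $|A'| = \tO(1/\epsilon)$ samples rather than $1/\epsilon^2$. Each sampled $a$ contributes a $\{0,1\}$ term with mean $p = r(x)/|A| \le 1$, and the target relative error is $\epsilon$ in units of $|A|$. With $m$ samples this requires $m \gtrsim p/\epsilon^2$ in general. I therefore expect the main work to be justifying the $1/\epsilon$ dependence. This would come from the fact that Bernoulli sampling at rate $q'$ with mean $\mu = q' r$ concentrates to additive $\epsilon q'|A|$. Accepting $\tO(1/\epsilon^2)$ samples otherwise, one would repeat $O(\log n)$ times and take medians.
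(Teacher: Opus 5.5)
\textbf{There is a genuine gap in your last step.} You point to it yourself but do not resolve it. With $m=\tO(1/\epsilon)$ uniform samples from $A$, the rescaled hit count is unbiased for $r_{A+B}(x)$. However, its standard deviation is $|A|\sqrt{p(1-p)/m}$ with $p=r_{A+B}(x)/|A|$. For $p=1/2$ this is about $\sqrt{\epsilon}\,|A|$ up to logarithmic factors, far above the target $\epsilon|A|$.

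Your hoped-for statement, that Bernoulli sampling at rate $q'$ concentrates within additive $\epsilon q'|A|$, is exactly what fails. When $q'|A|=\tO(1/\epsilon)$ and $r_{A+B}(x)\approx|A|/2$, the hit count fluctuates by about $\sqrt{q'|A|}\approx\epsilon^{-1/2}$, whereas $\epsilon q'|A|=\tO(1)$. Median-of-means, or medians over $O(\log n)$ runs, only boosts confidence, not accuracy.

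What your arguments actually prove is weaker: $\tO(|A|+|B|+|C|\epsilon^{-2})$ for the hash-table version, or $\tO(|A|+|B|\epsilon^{-2}+|C|)$ for sampling $B$ plus sparse convolution. Estimators built from sampling elements of $A$ or $B$ inherently need about $\epsilon^{-2}$ effective samples.

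\textbf{The missing idea is to not sample at all,} and to let hash collisions be the only error, which is one-sided. The paper does not prove this lemma itself but refers to prior work; the following argument gives exactly the stated bound.
\begin{itemize}
\item \emph{Construction.} Assume $A,B,C\subseteq[-U,U]$ and pick a random prime $p\in[P,2P]$ with $P=\Theta(|B|\log U/\epsilon)$. Build the count vectors of $A\bmod p$ and $B\bmod p$ over $\Int/p\Int$. Compute their cyclic convolution by FFT in time $\tO(P)=\tO(|B|/\epsilon)$, and let $\tilde r(x)$ be its entry at $x\bmod p$.
\item \emph{No underestimation.} Every pair with $a+b=x$ is counted, so $\tilde r(x)\ge r_{A+B}(x)$ always.
\item \emph{Overcount.} A pair with $a+b\neq x$ is counted only if $p$ divides the nonzero integer $a+b-x$. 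This integer has magnitude at most $3U$, hence at most $\log_P(3U)$ prime divisors in $[P,2P]$, out of $\Omega(P/\log P)$ primes there. So this happens with probability $O(\log U/P)$. Summing over at most $|A|\,|B|$ pairs, the expected overcount is at most $\epsilon|A|/2$ for a suitable constant.
\item \emph{Boosting.} By Markov, $\tilde r(x)\le r_{A+B}(x)+\epsilon|A|$ with probability at least $1/2$. Take the minimum over $O(\log n)$ independent repetitions, which is valid precisely because the error is one-sided, and union-bound over $C$. This yields the lemma in time $\tO(|A|+|B|/\epsilon+|C|)$.
\end{itemize}
Your remark about hashing $B$ modulo a random prime was pointing in this direction. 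But you used hashing only to speed up an estimator on a sample of $A$, rather than applying it to all of $A$ and $B$ as the estimator itself.
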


A proof of  \cref{lem:approx-mult} can be found in \cite[Footnote 6]{FischerJX25}.
See \cite[Corollary~3.1]{FischerJX25} for a derandomization of this algorithm. 

\subsection{Problem Definitions}
As is common in fine-grained complexity, hardness hypotheses and reductions implicitly allow Monte-Carlo randomization (though there has been some interest in derandomizing 3SUM-related reductions~\cite{ChanH20,FischerKP24}).

\begin{definition}[3SUM]
    Given three sets of integers $A, B, C$, determine whether there exists $a \in A, b \in B, c \in C$ such that $a + b = c$. In its All-Integer version we need to determine for every number in the input, whether it participates in a solution $a + b = c$. 
\end{definition}

We use $\ThreeSUM(n)$ to denote the running time of 3SUM instances with size-$n$ inputs. We also use $\ThreeSUM(n \mid r \le R)$ to denote the running time of 3SUM instances where the rank (\cref{defn:low-rank}) of one of the sets is at most $R$, and such a rank decomposition is given. A subinstance of a 3SUM instance with input $(A, B, C)$ is another 3SUM instance with input $(A', B', C')$ where $A' \subseteq A, B' \subseteq B, C'\subseteq C$.

\begin{definition}[Min-Plus Convolution]
    Given two sequences $A, B \in \Z^{[0, n]}$, compute a sequence $C = A * B \in \Z^{[0, 2n]}$, where $C[k] = \min_{i + j = k} \{A[i] + B[j]\}$. 
\end{definition}

We use $\MinPlusConv(n)$ to denote the running time of Min-Plus Convolution instances with size-$n$ inputs. We also use $\MinPlusConv(n \mid u \le U)$ to denote the running time of Min-Plus Convolution instances where the entries are from $[U]$.

\begin{definition}[Min-Plus Convolution Listing]
\label{def:mpconvlist}
    Given two sequences $A, B \in \Z^{[0, n]}$, and an integer threshold $t$, list $t$ witnesses $(i, j)$ where $A[i] + B[j] = (A * B)[k]$ for every $k$. If there are fewer than $t$ witnesses for certain indices $k$, list all the witnesses for these indices. 
\end{definition}

Similar as before, we use $\MinPlusConvListing(n, t)$ and $\MinPlusConvListing(n, t \mid u \le U)$ to denote the running times. A subinstance of Min-Plus Convolution (and all sequence problems considered in this paper) can be obtained by setting some entries to $\perp$, a value that cannot be involved in any solution. 

\begin{definition}[Bounded-Monotone Min-Plus Convolution]
    Given two sequences $A, B \in \Z^{[0, n]}$ where $B$ is monotonically nondecreasing and has entries in $[O(n)]$, compute $A * B$.
\end{definition}

\begin{definition}[Min-Max Convolution]
    Given two sequences $A, B \in \Z^{[0, n]}$, compute a sequence $C = \MinMax(A, B) \in \Z^{[0, 2n]}$, where $C[k] = \min_{i + j = k} \max\{A[i], B[j]\}$. 
\end{definition}

\begin{definition}[Min-Equality Convolution]
    Given two sequences $A, B \in \Z^{[0, n]}$, compute a sequence $C = \MinEqual(A, B) \in \Z^{[0, 2n]}$, where $C[k] = \min_{i + j = k, A[i] = B[j]} A[i]$. 
\end{definition}

\begin{definition}[Min-Witness Convolution]
    Given two Boolean sequences $A, B \in \{0, 1\}^{[0, n]}$, compute a sequence $C = \MinWitness(A, B) \in (\Z\cup \{\bot\})^{[0, 2n]}$, where $C[k] = \min\{i \in [0, n]: k - i \in [0, n] \wedge A[i] = B[k - i] = 1\}$ (if no such $i$ exists, define $C[k]=\bot$).  
\end{definition}

We now define the index set of a sequence. We will frequently use this notion to convert Min-Plus Convolution instances to 3SUM instances.
\begin{definition}[Index set $\mathcal{I}(\cdot)$]
\label{defn:indexset}
   For a sequence $A\in \Z^{[0,n]}$ indexed by integers in $[0,n]$, the \emph{index set} of $A$ is $\mathcal{I}(A) = \{(i,A[i]): i\in [0,n]\}$. 

   More generally,  for a sequence $A\in (\Z \cup \{\bot \})^{[0,n]}$, the index set of $A$ is $\mathcal{I}(A) = \{(i,A[i]): i\in [0,n], A[i]\neq \bot\}$. 
\end{definition}

\subsection{Reduction from Listing to Non-Listing Problems}

For most problems considered in this paper, one can use an algorithm for the non-listing version of a problem to list witnesses,  via a standard approach based on sampling and parallel binary search (e.g., \cite{AlonGMN92,Seidel95}). This is summarized in the following lemma. 

\begin{lemma}[Listing Witnesses for Convolution Problems] 
\label{lem:min-plus-conv-listing}
Any Min-Plus Convolution Listing  instance $(A, B)$ with threshold $t$ can be reduced to $t (\log n)^{O(1)}$ Min-Plus Convolution (non-Listing) subinstances. The reduction runs in time $\tO(n t)$. In particular,
\begin{equation*}
    \MinPlusConvListing(n, t \mid u) \leq \tO(t \cdot \MinPlusConv(n \mid u)).
\end{equation*}
Moreover, for any $t_1, t_2 \geq 1$,
\begin{equation*}
    \MinPlusConvListing(n, t_1 t_2 \mid u) \leq \tO(t_1 \cdot \MinPlusConvListing(n, t_2 \mid u)).
\end{equation*}
The same statements also hold with ``Min-Plus'' replaced by ``Min-Max'' and ``Min-Equality''.
\end{lemma}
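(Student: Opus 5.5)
We prove the first claim by random sampling combined with parallel binary search. The second claim then follows by applying the first claim inside each sampled subinstance.

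\emph{Sampling into subinstances.} Fix a scale $s \in \{1, 2, 4, \dots, 2n\}$. We sample $\Theta(s \log n)$ subinstances. In each one, every index $i$ of $A$ is kept independently with probability $1/s$, and all other entries are set to $\perp$. We also keep the full sequence $B$, or sample it the same way; only the sampling of $A$ matters. Let $k$ have $w \approx s$ witnesses $(i, k-i)$. Each witness survives alone, in the sense that it is the unique surviving witness for $k$, with probability $\Theta(1/s)$. Surviving witnesses of a subinstance are still witnesses of the original instance, provided the restricted min at $k$ equals $(A*B)[k]$. We can check this by comparing against the full convolution $A * B$, which we compute once. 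Over $\Theta(t \log n)$ repetitions at scale $s \approx w$, each fixed witness is isolated at least once with high probability when $w \le t$. When $w > t$, we instead take $\Theta(t\log n)$ samples at scale $w$, and these isolate $t$ distinct witnesses with high probability, by a coupon-collector bound. Every $k$ is served by the scale nearest its witness count, so $O(\log n)$ scales suffice, giving $t(\log n)^{O(1)}$ subinstances in total.

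\emph{Recovering the isolated witness.} In a subinstance where $k$ has exactly one witness, we find its index $i$ by binary search on the bits of $i$. For bit $b$ and value $\beta \in \{0,1\}$, we restrict $A$ further to the indices whose bit $b$ equals $\beta$ and recompute the convolution. The witness has bit $b$ equal to $\beta$ exactly when the value at $k$ is still $(A*B)[k]$. Since the witness is unique, this determines all bits of $i$. These $O(\log n)$ restrictions are shared across all $k$ at once, which is what makes the search parallel. Finally, we verify each candidate pair directly in $O(1)$ time and remove duplicates. The convolution calls total $t(\log n)^{O(1)}$, and the bookkeeping costs $O(n)$ per call, so the overhead is $\tO(nt)$. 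Restricting entries to $\perp$ keeps values inside $[u]$, which gives the stated bound in terms of $\MinPlusConv(n \mid u)$.

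\emph{Second claim and other convolutions.} For $t_1 t_2$, we run the sampling step with $t_1(\log n)^{O(1)}$ subinstances per scale, but sample at scale $s/t_2$ rather than $s$. Then each subinstance contains about $t_2$ witnesses per $k$ in expectation. We call a threshold-$t_2$ lister on each subinstance and take the union of the outputs. Every witness of $k$ survives in some subinstance in which $k$ has at most $O(t_2)$ witnesses, and in that subinstance the lister returns all of them. This uses a constant-factor slack in $t_2$, which we absorb by calling the lister with threshold $O(t_2)$; we then charge a constant number of such calls to one call with threshold $t_2$, using the first claim. The argument uses only two properties: setting entries to $\perp$ defines subinstances, and witnesses are exactly the pairs attaining the optimum at $k$. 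Both hold for Min-Max and Min-Equality Convolution, so the statement transfers verbatim.

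\emph{Main obstacle.} The delicate part is the probability analysis. We must show that, with high probability, every $k$ simultaneously gets $\min(t, w_k)$ distinct witnesses, in particular when $w_k$ is just above $t$. This requires choosing the scales and repetition counts so that the coupon-collector and isolation bounds hold uniformly, followed by a union bound over all $k$.
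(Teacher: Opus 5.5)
Your proposal is the standard random-subsampling plus parallel-binary-search argument that the paper defers to (it adapts \cite[Section~5.1]{Fischer26}, cf.\ \cite{AlonGMN92,Seidel95}), with restrictions to $\perp$ keeping the universe at $[u]$, and it is correct in substance. Two local slips need repair. First, the number of subinstances per scale should be $\Theta(\min(s,t)\log n)$, or simply $\Theta(t\log n)$ as your later analysis uses, rather than $\Theta(s\log n)$, which would already give $\Theta(n\log n)$ calls at the top scales. Second, in the second claim the constant-factor slack cannot be absorbed ``using the first claim'', since that only yields $\MinPlusConvListing(n, ct_2 \mid u) \le \tO(t_2 \cdot \MinPlusConv(n \mid u))$ and loses a factor $t_2$. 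It disappears if you sample at rate $\min(1, t_2/(Cs))$ for a constant $C \ge 2$: conditioned on a fixed witness surviving, Markov's inequality then leaves at most $t_2$ surviving witnesses with probability at least $1-1/C$, at the cost of only a constant factor more subinstances.
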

We omit the proof of \cref{lem:min-plus-conv-listing} here since it is standard. We refer readers to \cite[Section~5.1]{Fischer26} for a proof in the analogous setting of Min-Plus Matrix Product, which can be adapted to Min-Plus Convolution in a straightforward manner. The same proof strategy also applies to Min-Max Convolution and Min-Equality Convolution.\footnote{
En route we rely on a sampling-based approach, which requires setting some input values to $\perp$, so that they cannot be used in any solution (see \cite[Section 5.1]{Fischer26}). In the problems we consider, $\perp$ can be replaced with actual numbers without heavily affecting the problem sizes. In Min-Plus Convolution with universe $[u]$, we can simply set~$\perp$ to $2u + 1$, so they can never participate in a solution with value less than or equal to $2u$, while all the solutions obtained from non-$\perp$ values have value at most $2u$.  
Similarly, in Min-Max Convolution with universe $[u]$, we can simply set $\perp$ to $u+1$.
In Min-Equality Convolution, we can set $\perp$ in $A$ and in $B$ to be two different special values that do not occur in the non-$\perp$ values of $A,B$.
}

The same sampling strategy together with FFT yields the following more well-known lemma for listing witnesses of Boolean Convolution:
\begin{lemma}
\label{lem:listingboolean}
   Given two sequences $A,B\in \{0,1\}^{[0,n]}$ and a threshold $t\ge 1$, one can compute their Boolean Convolution $C\in \{0,1\}^{[0,2n]}$ defined as $C[k]=\bigvee_{i+j=k}(A[i]\land B[j])$, and additionally, for each $k$, list up to $t$ witnesses $i$ such that $A[i]=B[k-i]=1$, in $\tO(tn)$ time.
\end{lemma}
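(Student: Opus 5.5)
The plan is the standard sampling-plus-FFT argument, carried out in $O(\log n)$ rounds. We first compute $C$ itself with one FFT in time $\tO(n)$. For each $k$ let $w_k = |\{i : A[i]=B[k-i]=1\}|$ be the number of witnesses; the counts $w_k$ are exactly the entries of the integer convolution of $A$ and $B$, so they also come from one FFT. The rest of the argument handles large-$w_k$ and small-$w_k$ indices by sampling at geometrically decreasing rates.

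For each scale $s = 2^\ell$, $\ell = 0, 1, \dots, \lceil \log n \rceil$, we repeat $R = \Theta(t \log n)$ times the following step. Sample a random subset $S \subseteq [0,n]$ by keeping each index independently with probability $1/s$. Let $A_S$ be $A$ restricted to $S$. Compute two convolutions with FFT:
\begin{itemize}
\item the count convolution $N_S[k] = \sum_i A_S[i]B[k-i]$;
\item the index-weighted convolution $M_S[k] = \sum_i i \cdot A_S[i] B[k-i]$.
\end{itemize}
Whenever $N_S[k] = 1$, the unique surviving witness for $k$ is $i = M_S[k]$, and we add it to the list for $k$. Each such step costs $\tO(n)$, so the total time is $\tO(n \cdot t \log n \cdot \log n) = \tO(tn)$.

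For correctness, fix $k$ and choose the scale $s$ with $s/2 < w_k \le s$, where $w_k \ge 1$. In one sampled round, the probability that exactly one witness survives is $w_k (1/s)(1-1/s)^{w_k-1} \ge \tfrac{1}{2e}$, so a round isolates a witness with constant probability. Conditioned on isolation, the isolated witness is uniform among the $w_k$ witnesses, by symmetry. Hence, after $R$ rounds, the isolated witnesses are $R$ i.i.d.\ uniform draws, with $\Omega(R)$ draws occurring with high probability.

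It remains to show that these draws cover $\min(t, w_k)$ distinct witnesses. This follows from a coupon-collector bound: with $\Theta(t\log n)$ uniform draws from a set of size $w_k$, we see at least $\min(t,w_k)$ distinct elements with high probability. When $w_k \le 2t$ this is the full coupon collector, needing $O(w_k \log n) = O(t\log n)$ draws. When $w_k > 2t$, each new draw is a fresh witness with probability at least $1/2$ until $t$ distinct witnesses have been found. A union bound over all $k$ finishes the argument. We keep only the distinct witnesses and truncate each list to $t$; deduplication uses per-$k$ hash sets, or sorting, in time linear in the output.

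The main obstacle is making sure that the isolated witnesses are diverse rather than a repeated few. The symmetry argument, where each isolated witness is uniform over all witnesses, is what resolves it. An alternative that avoids the need for symmetry is to use, at each scale, $t$ disjoint random color classes of indices. With constant probability a class contains exactly one witness of $k$, and distinct classes automatically give distinct witnesses. The costs are the same.
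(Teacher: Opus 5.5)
Your proposal is correct and takes the standard route the paper has in mind. The paper gives no separate proof and just appeals to the usual sampling strategy combined with FFT, which is your argument: isolate a witness by sampling at geometrically decreasing rates, then read off the unique survivor from the index-weighted convolution (your symmetry and coupon-collector analysis fills in why $\min(t, w_k)$ distinct witnesses are found with high probability).
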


\section{Popular \texorpdfstring{\boldmath$k$}{k}-Sum Structure} \label{sec:pop-sums}
The goal of this section is to prove \cref{thm:k-sum-pop}. We start with some basic definitions and facts.

\begin{definition}[Popular $k$-Sums]
Let $A_1, \dots, A_k \subseteq \Int$, and $p>0$. The set $\Pop(A_1, \dots, A_k; p)$ consists of all integers~$x$ that can be expressed in at least $p$ distinct ways as
\begin{equation*}
    x = a_1 + \dots + a_k
\end{equation*}
for $(a_1, \dots, a_k) \in A_1 \times \dots \times A_k$; we call these the \emph{$p$-popular $k$-sums}. When $A = A_1 = \dots = A_k$, we simply write $\Pop^k(A; p)$.
\end{definition}

The following simple observation will be useful later.

\begin{observation} \label{obs:pop-remove}
If $\Pop(A_1, \dots, A_k; p) \neq \emptyset$ then $\Pop(A_2, \dots, A_k; p / |A_1|) \neq \emptyset$.
\end{observation}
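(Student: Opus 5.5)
The plan is to pick a single witness and marginalize over the first coordinate. Fix some $x \in \Pop(A_1, \dots, A_k; p)$, which exists by assumption. By definition, the set
\begin{equation*}
    S = \set{(a_1, \dots, a_k) \in A_1 \times \dots \times A_k : a_1 + \dots + a_k = x}
\end{equation*}
has size $|S| \geq p$. Partition $S$ according to the value of the first coordinate $a_1 \in A_1$:
\begin{equation*}
    S = \bigcup_{a \in A_1} S_a, \qquad S_a = \set{(a_1, \dots, a_k) \in S : a_1 = a}.
\end{equation*}
There are at most $|A_1|$ parts, so by pigeonhole some $a^* \in A_1$ satisfies $|S_{a^*}| \geq |S| / |A_1| \geq p / |A_1|$.

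Next, project $S_{a^*}$ onto the last $k-1$ coordinates via $(a^*, a_2, \dots, a_k) \mapsto (a_2, \dots, a_k)$. This map is injective, because the first coordinate is fixed. Its image consists of tuples in $A_2 \times \dots \times A_k$ satisfying $a_2 + \dots + a_k = x - a^*$. Hence $x - a^*$ has at least $p / |A_1|$ representations as a $(k-1)$-sum from $A_2, \dots, A_k$. In other words, $x - a^* \in \Pop(A_2, \dots, A_k; p/|A_1|)$, so this set is nonempty.

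There is no real obstacle here; the only point needing care is to keep the counting over ordered tuples, consistent with the definition of popularity. The degenerate case $A_1 = \emptyset$ cannot occur, since then $S$ would be empty, contradicting $|S| \geq p > 0$.
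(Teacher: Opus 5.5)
Your proof is correct and takes the same approach as the paper: fix a $p$-popular sum $x$, pigeonhole its representations by the first coordinate, and conclude that $x - a^*$ has at least $p/|A_1|$ representations in $A_2 \times \dots \times A_k$. The paper states this as a ``simple observation'' without writing out a proof, and this is the reasoning it leaves implicit; your note that $A_1 \neq \emptyset$ follows from $p > 0$ covers the only degenerate case.
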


\thmksumpop*

The following lemma is based on simple random sampling. Its version for $k = 2$ has been considered before (e.g., \cite{JinX23}). 

\begin{lemma}[Approximating Popular Sums] \label{lem:approx-pop-sums}
There is a randomized algorithm that, given sets $A_1, \dots, A_k \subseteq \Int$ and an integer threshold $p \geq 1$, computes a set $P$ such that
\begin{equation*}
    \Pop(A_1, \dots, A_k; 2p) \subseteq P \subseteq \Pop(A_1, \dots, A_k; p)
\end{equation*}
in randomized time
\begin{equation*}
    \tO\parens*{\frac{|A_1| \, \dots \, |A_k|}{p}}.
\end{equation*}
\end{lemma}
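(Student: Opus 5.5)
The plan is uniform random sampling of $k$-tuples followed by counting via sparse convolution. Let $N = |A_1|\cdots|A_k|$. We sample each tuple $(a_1,\dots,a_k)\in A_1\times\dots\times A_k$ independently with probability $q = \min\{1, c\log n / p\}$ for a large constant $c$. For each resulting sum $x = a_1+\dots+a_k$ we record the number $s(x)$ of sampled tuples hitting $x$. We output $P = \{x : s(x) \ge \tfrac{3}{2} q p\}$. For a fixed $x$ with multiplicity $r(x)$, the count $s(x)$ is binomial with mean $q\,r(x)$. Chernoff bounds then give two facts with probability $1 - n^{-\Omega(c)}$. If $r(x)\ge 2p$, then $s(x) \ge \tfrac{3}{2}qp$. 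If $r(x) < p$, then $s(x) < \tfrac32 qp$. When $r(x) < p$ is tiny, we use the upper-tail bound for a threshold much larger than the mean, which is still fine since the threshold is $\Theta(c\log n)$. A union bound over all $x$ with $s(x)\ge 1$ gives correctness. There are at most (number of sampled tuples) such $x$, which is polynomial with high probability.

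The main obstacle is efficiency. We cannot iterate over all $N$ tuples to decide which to sample, so we must generate the sample directly in time proportional to its size, which is $\tO(N/p)$ in expectation. To do this we first draw the total number of sampled tuples $M \sim \mathrm{Bin}(N,q)$. Alternatively we can use geometric skips over the lexicographic enumeration of $A_1\times\dots\times A_k$. Next we decode each chosen linear index into its mixed-radix coordinates $(a_1,\dots,a_k)$ in $O(k)$ time.

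A simpler alternative is sampling with replacement. We draw $M = \Theta(N\log n/p)$ uniformly random tuples, each by picking independent uniform $a_i\in A_i$. The count $s(x)$ is then binomial with success probability $r(x)/N$, and the same Chernoff analysis goes through. I would adopt this version because it avoids sampling a binomial variable.

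Finally, we aggregate the sampled sums by hashing or by sorting, in time $\tO(M)$. This yields total time $\tO(N/p)$, plus the time to read the input. The input-reading term is absorbed whenever $p \le N/\max_i |A_i|$. Otherwise $\Pop(A_1,\dots,A_k;p)=\emptyset$ trivially, since $r(x)\le N/\max_i|A_i|$, and we output $P=\emptyset$. If $p \le c\log n$, we simply take $q=1$ and compute all sums exactly in time $O(N) = \tO(N/p)$. The result is a Monte-Carlo algorithm with high-probability correctness, as stated.
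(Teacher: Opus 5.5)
Your proposal is correct and, in the with-replacement version you adopt, essentially identical to the paper's proof: draw $\Theta(N\log N/p)$ uniform tuples, keep the sums whose empirical count is at least $\tfrac32 p\cdot M/N$, and conclude by Chernoff bounds plus a union bound. The one step to tighten is that union bound, which should run over the deterministic set $A_1+\dots+A_k$ (at most $N$ elements, hence the $\log N$ in the sample size) rather than over the random set of sums with $s(x)\ge 1$; your extra remarks on input reading and on the case $p > N/\max_i|A_i|$ are valid refinements that the paper omits.
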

\begin{proof}
Denote $A = A_1\times \dots \times A_k$.
For a large constant $C$, we uniformly independently sample $m = \lceil \frac{C|A|\log |A|}{p}\rceil$  $k$-tuples $t_1,t_2,\dots,t_m\in A$. Let $s_i$ denote the sum of the $k$ components in the tuple $t_i$.
    Let $\tilde f_x$ denote the number of times the integer $x$ occurs in the list of sums $(s_1,\dots,s_m)$.
    Let $P$ be the set of all integers $x$ with  $\tilde f_x \ge 1.5p\cdot \frac{m}{|A|}$. We compute and return $P$ in $\tO(m) = \tO(\frac{|A_1|\, \dots \, |A_k|}{p})$ time. 
    
    We now show that $P$ satisfies the claimed property with high probability. For every integer $x$, if $x$ can be expressed in $f_x$ ways as $x=a_1+\dots+a_k$ for $(a_1,\dots,a_k)\in A$, then $\Pr[s_i=x]=f_x/|A|$ for every $1\le i\le m$. 
    Hence, if $f_x\ge 2p$, then by Chernoff bound, $x\notin P$ happens with probability $\Pr[\tilde f_x<1.5p\cdot \frac{m}{|A|}]\le \exp(-\frac{pm}{16|A|})\le |A|^{-C/16}$.
    If $f_x< p$, then by Chernoff bound,  $x\in P$ happens with probability
    $\Pr[\tilde f_x\ge 1.5p\cdot \frac{m}{|A|}]\le \exp(-\frac{pm}{12|A|})\le |A|^{-C/12}$. By a union bound over all $x\in A_1+ \dots + A_k$, we conclude that $P$ is correct with high probability in $|A|$.
\end{proof}

\subsection{Qualitative Optimality}
Let us briefly comment that \cref{thm:k-sum-pop} is (qualitatively) optimal. For every $c > 0$ and every integer $k \ge 2$, there is a small enough $\eps \in (0, 1)$ that would violate the theorem statement. More specifically, there is a family of size-$n$ sets $A$ such that $\Pop^k(A; p) \neq \emptyset$ for some $p = \Omega(|A|^{k-1-c})$, and so that any set $H$ with $|A \cap H| \ge n^{1-\eps}$ and $|H| \le n^{1+c+\eps}$ would have doubling constant much larger than $n^\eps$. The same proof would also show for every $\eps \in (0, 1)$ and every $k \ge 2$, there exists $c > 0$ that violates the theorem statement. 
This shows that $k$ in the theorem necessarily depends on $c$ and $\epsilon$ (though the dependence could possibly be improved to linear in $c/\epsilon$). 

Indeed, take $A = I + R$ where $I$ is an interval (or any other low-doubling set) of size $n^{1-c/(k-1)}$, and where $R$ is a random set of size $n^{c/(k-1)}$ that is not correlated with $H$. By taking $R$ from a large enough universe, we can assume that, for every $\ell \le (c+2) \log n$, there is no $\ell$-term additive relation 
\begin{equation*}
    (i_1 + r_1) + \cdots + (i_\ell + r_\ell) = (i_{\ell+1} + r_{\ell+1}) + \cdots + (i_{2\ell} + r_{2\ell}),
\end{equation*}
where $r_1, \ldots, r_{2\ell}$ are distinct, while $i_1, \ldots, i_{2\ell}$ might not. 
Then $A$ has size roughly $|I| \, |R| = n$. Moreover, $\Pop^k(A; |A|^{k-1-c})$ is non-empty as each sum $r_1 + \cdots + r_k \in k R$ can be extended in about $|I|^{k-1} = n^{k-1-c}$ ways to a common popular sum in $k I + kR$. 
Take $H$ to be any set where $|H| \le n^{1+c+\eps}$ and $|A \cap H| \ge n^{1-\eps}$. By pigeonhole principle, $A \cap H$ must contain $p_t = i_t + r_t$ for $i_t \in I, r_t \in R$ where the values $r_t$ are distinct, for some $t \ge n^{1-\eps}/n^{1-c/(k-1)} = n^{c/(k-1)-\eps}$. By our assumption on $R$, there is no $\ell$-term additive relation for numbers $P = \{p_1, \ldots, p_t\}$ for every $\ell \le (c+2) \log n$.  

Now consider the additive energy $E(P, H)$. For every nontrivial solution, $p_1 + h_1 = p_2 + h_2$ for $p_1, p_2 \in P$ and $h_1, h_2 \in H$, we consider adding an edge between $h_1$ and $h_2$. Note  that an $\ell$-cycle in this graph induces an $\ell$-term additive relation for numbers $P$. Hence, the girth of this graph is $>(c+2) \log n$, which implies that the number of edges is $O(|H|)$ by the graph Moore bound \cite{AlonHL02}. Hence, $E(P, H)$ is dominated by the trivial solutions, so $E(P, H) = O(|P||H|)$. By Cauchy--Schwarz inequality, 
\[
|P+H| = \sum_{z} [r_{P+H}(z) > 0] \ge \frac{(\sum_z r_{P+H}(z))^2}{\sum_z r_{P+H}(z)^2} \ge \frac{|P|^2|H|^2}{O(|P||H|)} \ge \Omega(|P||H|). 
\]
As $|H + H| \ge |P+H|$,  the doubling constant of $H$ is at least $\Omega(|P|) = \Omega(n^{c/(k-1)})$, which is greater than $n^\eps$ when $k \ll \frac{c}{\eps}$. 

\subsection{Existential Proof}
It is more convenient to re-express having popular sums in terms of the higher-order energy $T_k(A)$. 
\cite{Shkredov14}.
The proof relies on the following lemma:

\begin{lemma} \label{lem:higher-energy}
For every $\epsilon, c > 0$, there is some $k = k(c, \epsilon)$ such that the following holds. Let $A \subseteq \Int$ be such that $T_k(A) \geq \Omega(|A|^{2k-1-c})$. Then there exists a set $H \subseteq \Int$ and some $K = O_{c, \epsilon}(|A|^\epsilon)$ such that:
\begin{itemize}
    \item $|H| \leq K |A|^{1+c}$,
    \item $|H + H| \leq K |H|$,
    \item $|A \cap H| \geq |A| / K$.
\end{itemize}
\end{lemma}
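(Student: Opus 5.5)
We will prove the lemma by a pigeonhole argument over the normalized higher energies, followed by one application of the asymmetric BSG theorem (\cref{thm:asym-bsg}). Write $n = |A|$ and define $\tau_\ell$ through $T_\ell(A) = n^{2\ell - 1 - \tau_\ell}$. Trivially $T_\ell(A) \le n^{2\ell-1}$, so $\tau_\ell \ge 0$. Moreover, $T_{\ell+1}(A) \le n^2 T_\ell(A)$: fix $a_{\ell+1}$ and $b_{\ell+1}$, and the remaining equation has at most $T_\ell(A)$ solutions, since $\sum_z r(z) r(z+t) \le \sum_z r(z)^2$. Hence $\tau_\ell$ is nondecreasing in $\ell$. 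Also $\tau_1 = 0$, and the hypothesis gives $\tau_k \le c + o(1)$.

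Consequently, among $\ell = 1, \dots, k-1$ there is some $\ell$ with $\tau_{\ell+1} - \tau_\ell \le c/(k-1) =: \eta$. Choosing $k \approx c/\eta$ makes $\eta$ as small as we like; we will set $\eta$ to be a small multiple of $\epsilon$.

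Fix this $\ell$. Let $S$ be the multiset of $\ell$-sums of $A$ with representation function $r_\ell = r_{\ell A}$, so that $T_\ell = \sum_z r_\ell(z)^2$. Then
\[
T_{\ell+1} = \sum_z r_{\ell+1}(z)^2 = \sum_z \Bigl(\sum_{a\in A} r_\ell(z-a)\Bigr)^2 .
\]
The relation $\tau_{\ell+1} \le \tau_\ell + \eta$ reads $T_{\ell+1} \ge n^{2-\eta}\, T_\ell$, which is nearly the maximum $n^2 T_\ell$. Intuitively, adding one copy of $A$ almost never spreads the weighted set of $\ell$-sums.

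To make this usable, dyadically decompose $\ell A$ by multiplicity. Let $B = \{z : r_\ell(z) \in [\lambda, 2\lambda)\}$ be the level set capturing, up to an $O(\log n)$ loss, the bulk of the contribution to $T_{\ell+1}$. Via Cauchy--Schwarz, the near-maximality should translate into
\[
E(A, B) \ge n^{-O(\eta)} |A|^2 |B|,
\]
that is, $\alpha = n^{-O(\eta)}$. This conversion from weighted to unweighted energy, with only $n^{O(\eta)}$ loss, is the step I expect to be the main obstacle.

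The natural route is to expand $T_{\ell+1}$ as $\sum_{a,a'} \sum_z r_\ell(z-a)\, r_\ell(z-a')$ and compare it to $\sum_{a,a'} T_\ell$. Near-equality in Cauchy--Schwarz forces, for most pairs $(a, a')$, the shifted functions $r_\ell(\cdot - a)$ and $r_\ell(\cdot - a')$ to correlate almost perfectly. Restricting to the dominant level set (pigeonholing on both $\lambda$ and the set of good pairs) then yields the energy bound. If this lossy step fails, the fallback is an $L^2$-to-popular-set argument in the style of Shkredov and Bateman--Katz, using popular sums $P = \Pop^\ell(A;\lambda)$ in place of level sets.

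We also need $|B| \le n^{c + O(\eta)} |A|$. This follows because
\[
|B| \lambda^2 \le T_\ell = n^{2\ell - 1 - \tau_\ell},
\]
while $B$ carries about $n^\ell$ mass up to $n^{O(\eta)}$ factors, so $\lambda |B| \approx n^\ell$. Together these give $|B| \lesssim n^{1+\tau_\ell} \le n^{1+c}$.

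Applying \cref{thm:asym-bsg} to $(A, B)$ with parameter $\epsilon' \approx \epsilon / (c+1)$, we obtain $H$ and $\Lambda$ satisfying
\[
K \le \alpha^{-O(1)} (1 + |B|/|A|)^{\epsilon'} \le n^{O(\eta) + \epsilon' c} \le n^{\epsilon}
\]
after adjusting constants, together with
\[
|H+H| \le K|H|, \qquad |A \cap H| \ge |A|/K, \qquad |H| \le |H|\,|\Lambda| \le K|B| \le K n^{1+c+O(\eta)}.
\]
Absorbing the $n^{O(\eta)}$ factor into $K$ gives exactly the three claimed properties.
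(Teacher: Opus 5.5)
Your architecture matches the paper's. You locate an $\ell$ with $T_{\ell+1}(A)\ge n^{2-\eta}T_\ell(A)$; your monotone-exponent pigeonhole is a cleaner form of the paper's recursion on $k$ with $\delta=2c/k$. You then split $\ell A$ dyadically by multiplicity and apply \cref{thm:asym-bsg} once to $A$ and one level set. However, the step you flag as the main obstacle is the heart of the proof, and you leave it at ``should translate''. The near-equality-in-Cauchy--Schwarz route over pairs $(a,a')$ that you sketch is not needed.

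The missing idea is to choose the level by maximizing $\lambda^2E(A,B_\lambda)$, where $B_\lambda=\{z : r_\ell(z)\in[\lambda,2\lambda)\}$. Since $r_{\ell+1}(z)=\sum_{a\in A}r_\ell(z-a)\le\sum_\lambda 2\lambda\, r_{A+B_\lambda}(z)$, expanding the square and applying Cauchy--Schwarz in $z$ to each cross term gives
\[
T_{\ell+1}(A)\le\sum_{\lambda,\lambda'}\bigl(4\lambda^2E(A,B_\lambda)\bigr)^{1/2}\bigl(4\lambda'^2E(A,B_{\lambda'})\bigr)^{1/2}\le O(\ell^2\log^2 n)\cdot\max_\lambda \lambda^2E(A,B_\lambda).
\]
Take the maximizing $\lambda$ and $B=B_\lambda$. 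The bound $\lambda^2|B|\le T_\ell(A)$, which you already wrote down, then gives $\lambda^2E(A,B)\ge n^{2-\eta-o(1)}T_\ell(A)\ge n^{2-\eta-o(1)}\lambda^2|B|$. That is, $E(A,B)\ge n^{-\eta-o(1)}|A|^2|B|$ directly.

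Your size bound is also argued incorrectly. The claim that $B$ carries $n^{\ell-O(\eta)}$ of the total mass $n^\ell$ is unjustified. Take $A$ to be an interval of size $n^{1-\gamma}$ together with a generic set of size $n$, with $\gamma>0$ small. Then $T_{\ell+1}(A)\approx n^{2-2\gamma}T_\ell(A)$ for all $\ell\ge2$, yet the dominant level carries only about $n^{\ell(1-\gamma)}$, a loss of $n^{\ell\gamma}$ rather than $n^{O(\gamma)}$.

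Moreover, even granting $\lambda|B|\approx n^\ell$, combining it with $\lambda^2|B|\le T_\ell(A)$ yields the \emph{lower} bound $|B|\ge n^{1+\tau_\ell-O(\eta)}$, not an upper bound. The correct argument uses the trivial bounds $E(A,B)\le|A|^2|B|$ and $\lambda|B|\le n^\ell$. From $n^2\lambda^2|B|\ge\lambda^2E(A,B)\ge T_{\ell+1}(A)/O(\log^2 n)\ge n^{2\ell+1-c-o(1)}$ you get $\lambda\ge n^{\ell-1-c-o(1)}$, hence $|B|\le n^\ell/\lambda\le n^{1+c+o(1)}$.

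Finally, in \cref{thm:asym-bsg} the exponent of $\alpha^{-1}$ depends on $\epsilon'$. So $k$ (equivalently $\eta$) must be chosen after $\epsilon'$, small relative to $\epsilon$ divided by that exponent, not merely a small multiple of $\epsilon$. With that order of quantifiers the rest of your bookkeeping goes through, as in the paper.
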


Let us postpone the proof of \cref{lem:higher-energy} for the moment, and first see how \cref{thm:k-sum-pop} can be concluded from the lemma.

\begin{proof}[Proof of \cref{thm:k-sum-pop}]
Let $k = k(\epsilon, c)$ be chosen as in \cref{lem:higher-energy}. Let $A$ be a set with a $p$\=/popular $2k$-sum $z$ for some $p = \Omega(|A|^{2k-1-c})$. Let $A^* = \pm (A \cup \set{z})$, then $T_{k}(A^*) \geq p$ (as each solution $a_1 + \dots + a_{2k} = z$ in $A$ can be mapped to a unique solution $a_1 + \dots + a_k = b_1 + \dots + b_k$ in~$A^*$). Apply \cref{lem:higher-energy} to $A^*$. This yields a set $H^*$ satisfying the first two properties for some $K = O_{c, \epsilon}(|A|^\epsilon)$ and that $|A^* \cap H^*| \geq |A^*| / K$. As $A^*$ can be seen of the union of four copies of $A$, possibly translated or negated, one such copy is covered by $H^*$ in at least $|A| / 4K$ elements. Thus, by possibly translating and negating $H^*$ (leaving the first two properties unaffected), one obtains the claimed set $H$.
\end{proof}

\begin{proof}[Proof of \cref{lem:higher-energy}]
We determine $k = k(c, \epsilon)$ later. If $T_{k-1}(A) \geq |A|^{2(k-1)-1-c+\delta}$ for $\delta := 2c/k$, then we instead apply this lemma inductively for $k-1$. This recursive process continues for less than $k$ steps before reaching a contradiction, and thus never reaches a base case. Instead, at some inductive call we must be in the opposite case that $T_{k-1}(A) \leq |A|^{2(k-1)-1-c+\delta}$.

For $D$ a power of $2$, let $P_D \subseteq (k - 1) A$ be the subset of $(k-1)$-sums with multiplicity in the range $[D, 2D)$, i.e.,
\begin{equation*}
    P_D = \set{ x \in \Int : D \leq \abs{\set{(a_1, \dots, a_k) \in A^k : a_1 + \dots + a_k = z}} < 2D}.
\end{equation*}
Then:
\begin{align*}
    T_k(A)
    &= \sum_{z \in \Int} \abs{\set{(a_1, \dots, a_k) \in A^k : a_1 + \dots + a_k = z}}^2 \\
    &\leq \sum_{z \in \Int} \parens*{\sum_D 2D \cdot \abs{\set{(a, x) \in A \times P_D : a + x = z}}}^2 \\
    &= \sum_{D, D'} \sum_{z \in \Int} 2D \cdot \abs{\set{(a, x) \in A \times P_D : a + x = z}} \cdot 2D' \cdot \abs{\set{(a', x') \in A \times P_{D'} : a' + x' = z}} \vphantom{\Bigg(^2}\\
    &\leq \sum_{D, D'} \parens*{\sum_{z \in \Int} 4D^2 \cdot \abs{\set{(a, x) \in A \times P_D : a + x = z}}^2}^{1/2} \\
    &\qquad\qquad\cdot \parens*{\sum_{z \in \Int} 4(D')^2 \cdot \abs{\set{(a', x') \in A \times P_{D'} : a' + x' = z}}^2}^{1/2} \\
    &= \sum_{D, D'} \parens*{4D^2 \cdot E(A, P_D)}^{1/2} \parens*{4(D')^2 \cdot E(A, P_{D'})}^{1/2}.
\end{align*}
We apply the dyadic pigeonhole principle: let $D$ denote the power of 2 maximizing $D^2 \cdot E(A, P_D)$. Then:
\begin{equation*}
    T_k(A) \leq O((k \log|A|)^2) \cdot D^2 \cdot E(A, P_D).
\end{equation*}
Recall that $T_k(A) \geq \Omega(|A|^{2k-1-c})$ by our initial assumption, and thus
\begin{equation} \label{lem:higher-energy:eq:lower}
    D^2 \cdot E(A, P_D) \geq |A|^{2k-1-c-o(1)}.
\end{equation}
Combined with the trivial bounds $E(A, P_D) \leq |A|^2 |P_D|$ and $|P_D| \leq |A|^{k-1} / D$, it further follows that $D \geq |A|^{k-2-c-o(1)}$ and thus $|P_D| \leq |A|^{1+c+o(1)}$. Moreover, we have that
\begin{equation} \label{lem:higher-energy:eq:upper}
    |P_D| \cdot D^2 \leq T_{k-1}(A) \leq |A|^{2(k-1)-1-c+\delta}.
\end{equation}
Combining~\eqref{lem:higher-energy:eq:lower} and~\eqref{lem:higher-energy:eq:upper}, it follows that
\begin{equation*}
    E(A, P_D) \geq \frac{|A|^2 |P_D|}{|A|^{\delta+o(1)}}. 
\end{equation*}

In this situation the asymmetric BSG theorem applies. Specifically, we apply \cref{thm:asym-bsg} to the sets $A$ and~$P_D$ with parameters $\epsilon' = \epsilon/(4c)$ and~$\alpha = |A|^{-\delta-o(1)}$. It yields sets $H, \Lambda \subseteq \Int$ such that
\begin{equation} \label{lem:higher-energy:eq:H-doubling}
    |H + H| \leq K |H|,
\end{equation}
and
\begin{equation} \label{lem:higher-energy:eq:H-size}
    |H| \leq |H| \, |\Lambda| \leq K |P_D| \leq (K |A|^{o(1)}) \cdot |A|^{1+c},
\end{equation}
and
\begin{equation} \label{lem:higher-energy:eq:H-cover}
    |A \cap H| \geq \frac{|A|}{K},
\end{equation}
for some
\begin{equation*}
    K \leq O_{\epsilon,c}\parens*{\alpha^{-O_{\epsilon}(1)} \parens*{1 + \frac{|P_D|}{|A|}}^{\epsilon/(4c)}}.
\end{equation*}
From the previous bounds on $K$ and~$|P_D| / |A|$, we can bound
\begin{equation*}
    K \leq O_{\epsilon,c}(|A|^{O_{\epsilon,c}(\delta)+\epsilon/4 + o(1)}).
\end{equation*}
Recall that $\delta = 2c / k$. Choose $k = k(c, \epsilon)$ sufficiently large so that this bound becomes $K \leq O_{\epsilon,c}(|A|^{\epsilon/2})$. Then the right-hand side of~\eqref{lem:higher-energy:eq:H-size} can be bounded by $O_{\epsilon,c}(|A|^{\epsilon}) \cdot |A|^{1+c}$. Hence, the set~$H$ is as claimed by \cref{lem:higher-energy:eq:H-size,lem:higher-energy:eq:H-doubling,lem:higher-energy:eq:H-cover}.
\end{proof}

\subsection{Efficient Algorithm}
We finally revisit the previous proof to demonstrate that all steps can be implemented efficiently, leading to the following algorithmic theorem:

\begin{theorem} \label{thm:k-sum-pop-algo}
For every $c > 0$ and $\epsilon > 0$, there is some $k = k(c, \epsilon)$ such that the following holds. There is a randomized algorithm that, given $A \subseteq \Int$ and some parameter $p = \Omega(|A|^{k-1-c})$, either certifies that $\Pop^k(A; p) = \emptyset$, or otherwise computes a set $H \subseteq \Int$ such that the following properties hold for some $K \leq O(|A|^\epsilon)$:
\begin{itemize}
    \item $|H| \leq K |A|^{1+c}$,
    \item $|H + H| \leq K |H|$,
    \item $|A \cap H| \geq |A| / K$.
\end{itemize}
The algorithm runs in time $\tO(|A|^{1+c+\epsilon})$.
\end{theorem}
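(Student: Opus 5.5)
The plan is to make the existential argument algorithmic step by step, following the same route as the proofs of \cref{thm:k-sum-pop} and \cref{lem:higher-energy}. First, set $k$ to be even, say $k = 2k'$ with $k' = k(c,\epsilon)$ from \cref{lem:higher-energy}. Run \cref{lem:approx-pop-sums} on $A^k$ with threshold $p/2$. Its cost is $\tO(|A|^k/p) = \tO(|A|^{1+c})$, so it either certifies emptiness of $\Pop^k(A;p)$ or returns a $p/2$-popular sum $z$. In the latter case we pass to $A^* = \pm(A\cup\set{z})$, which satisfies $T_{k'}(A^*) \geq p/2$. At the end we translate and negate $H$, exactly as in the proof of \cref{thm:k-sum-pop}, which costs only a constant factor.

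Second, we need the right level $\ell \leq k'$ and dyadic scale $D$ at which the existential proof applied the asymmetric BSG theorem. Rather than computing $T_\ell$ exactly, we iterate over all $\ell \le k'$ and all $O(k\log|A|)$ dyadic values $D$. For each pair we compute a set $\tilde P_D$ that is sandwiched between level sets of the multiplicity function of $(\ell-1)A^*$. Multiplicity $D \geq |A|^{\ell-2-c-o(1)}$ is the only relevant range, by the calculation in the proof. In that range, \cref{lem:approx-pop-sums} with threshold $D$ lists all sums of multiplicity $\geq D$ in time $\tO(|A|^{\ell-1}/D) = \tO(|A|^{1+c+o(1)})$. \cref{lem:approx-mult}, applied iteratively or via sampling, then estimates each listed element's multiplicity up to a constant factor. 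We drop the listed elements of multiplicity $\ge 4D$, which gives $\tilde P_D$ with $|\tilde P_D| \le |A|^{1+c+o(1)}$. Next, estimate $E(A^*,\tilde P_D)$ by sampling pairs and checking sums with \cref{lem:approx-mult}. Pick the first pair $(\ell,D)$ with $E(A^*,\tilde P_D)\ge |A|^2|\tilde P_D|/|A|^{\delta+o(1)}$. Existence of such a pair follows from the same pigeonhole as in \cref{lem:higher-energy}: take the minimal $\ell$ for which the energy hypothesis still holds, so that $T_{\ell-1}$ is small. The constant-factor fuzz in the level sets only costs constants in the chain of inequalities.

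Third, apply the algorithmic asymmetric BSG theorem (\cref{thm:asym-bsg}) to $(A^*,\tilde P_D)$ with $\epsilon' = \epsilon/(4c)$ and $\alpha = |A|^{-\delta-o(1)}$. This outputs $H$ in time $\tO((|A|+|\tilde P_D|)K) = \tO(|A|^{1+c+\epsilon})$. It has the three claimed properties by the same computation as in the existential proof. If the computed parameters fail, for instance because the coverage or doubling of $H$ does not verify, we repeat to boost the success probability. Doubling can be checked by sparse convolution (\cref{lem:sparse-conv}) in time $\tO(K|H|)$. The total time is dominated by the $O_{c,\epsilon}(\log|A|)$ calls, each costing $\tO(|A|^{1+c+\epsilon})$.

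The main obstacle is the second step: extracting a usable proxy for the exact level set $P_D$ efficiently. The existential proof uses the exact dyadic level sets and an exact value of $T_{\ell-1}$, neither of which can be computed in time $|A|^{1+c+\epsilon}$. I would try to make the argument robust by working with fuzzy level sets throughout. Concretely, I would re-derive \eqref{lem:higher-energy:eq:lower} and \eqref{lem:higher-energy:eq:upper} for sets sandwiched between multiplicity thresholds $D/2$ and $4D$. I would also verify the energy condition only by sampling, not by computing $T_{\ell-1}$. Since the argument only ever needs to find some pair $(\ell,D)$ that satisfies the energy condition, this trial-and-test search suffices.
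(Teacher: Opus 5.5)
Your proposal is correct and follows essentially the same route as the paper: find a popular $k$-sum $z$ with the sampling lemma, pass to $A^* = \pm(A\cup\{z\})$, build fuzzy dyadic level sets of $(\ell-1)$-fold sums by sampling, and feed them to the algorithmic asymmetric BSG theorem while trying all $O(\log|A|)$ scales $D$. The differences are cosmetic. You loop explicitly over levels $\ell$ instead of recursing on $k$, and you select $(\ell,D)$ with a sampled energy test instead of running BSG on every candidate and verifying the output. Both work, because any pair whose level set has $|\tilde P_D|\le |A|^{1+c+o(1)}$ and satisfies the energy bound already yields a valid $H$.
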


The following lemma is analogous to \cref{lem:higher-energy}.

\begin{lemma} \label{lem:higher-energy-algo}
For every $\epsilon, c > 0$, there is some $k = k(c, \epsilon)$ such that the following holds. There is a randomized algorithm that, given $A \subseteq \Int$ such that $T_k(A) \geq \Omega(|A|^{2k-1-c})$ computes a set $H \subseteq \Int$ such that the following properties hold for some $K \leq O(|A|^\epsilon)$:
\begin{itemize}
    \item $|H| \leq K |A|^{1+c}$,
    \item $|H + H| \leq K |H|$,
    \item $|A \cap H| \geq |A| / K$.
\end{itemize}
The algorithm runs in time $\tO(|A|^{1+c+\epsilon})$.
\end{lemma}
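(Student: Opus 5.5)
We will make the existential proof of \cref{lem:higher-energy} algorithmic step by step. Its non-constructive ingredients are four: the choice of the level $k' \le k$ at which the recursion stops, the dyadic level $D$, the set $P_D$, and the application of the asymmetric BSG theorem. For the last one we already have an algorithm, \cref{thm:asym-bsg}. So the task is to produce a good level $k'$, a good $D$, and a usable stand-in $\tilde P$ for $P_D$, and then run \cref{thm:asym-bsg} on $(A,\tilde P)$.

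\textbf{Choosing $k'$ and $D$.} We cannot compute $T_{k'}(A)$ exactly, so we enumerate all candidate pairs $(k', D)$. There are $k$ levels $k' \le k$ and $O(k\log|A|)$ powers of two $D$, hence only polylogarithmically many pairs. For each pair we do the following.
\begin{itemize}
\item Compute, via \cref{lem:approx-pop-sums} applied to $(k'-1)$ copies of $A$ with threshold about $D$, a set $\tilde P$ sandwiched between $\Pop^{k'-1}(A;2D)$ and $\Pop^{k'-1}(A;D)$. We only run this when $D \ge |A|^{k'-2-c-\epsilon}$, so the running time $\tO(|A|^{k'-1}/D) \le \tO(|A|^{1+c+\epsilon})$ is affordable. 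Smaller $D$ never matters: the existential proof shows that the maximizing $D$ satisfies $D \ge |A|^{k'-2-c-o(1)}$.
\item Apply \cref{thm:asym-bsg} to $(A,\tilde P)$ with $\epsilon' = \epsilon/(4c)$ and $\alpha = |A|^{-\delta - o(1)}$. This costs $\tO((|A|+|\tilde P|)K)$. Since $|\tilde P| \le |A|^{k'-1}/D \le |A|^{1+c+\epsilon}$, this is within budget after rescaling $\epsilon$.
\item Check the candidate $H$. We test $|A\cap H|$ directly. We estimate $|H+H|$ by sparse convolution (\cref{lem:sparse-conv}), aborting once the output exceeds $K|H|$. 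We output the first $H$ that passes all three checks.
\end{itemize}
Correctness then reduces to showing that at least one pair $(k',D)$ yields a valid $H$ with good probability. Random failures of \cref{thm:asym-bsg} can be repeated away, since every output is verified.

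\textbf{Main obstacle: $P_D$ versus $\tilde P$.} In the existential argument $P_D$ is an exact dyadic shell $[D,2D)$, whereas \cref{lem:approx-pop-sums} only gives the fuzzy set $\tilde P$, which contains all of $\Pop(\cdot;2D)$ and some elements of multiplicity in $[D,2D)$. I would redo the Cauchy--Schwarz/dyadic computation with the sets $Q_D = \Pop^{k'-1}(A;D)$ in place of the shells. Any $\tilde P$ lies between $Q_{2D}$ and $Q_D$. We need two facts.
\begin{itemize}
\item \emph{Energy lower bound.} Since $P_{2D} \subseteq Q_{2D} \subseteq \tilde P$, we get $E(A,\tilde P) \ge E(A,P_{2D})$. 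The pigeonhole step can be run with the maximizing index shifted by one, losing only constant factors; this gives \eqref{lem:higher-energy:eq:lower} for $\tilde P$ with $D$ replaced by $2D$.
\item \emph{Size upper bound.} We need $|\tilde P|\,D^2 \le |Q_D|\,D^2 \le T_{k'-1}(A)$, which holds because every element of $Q_D$ has multiplicity at least $D$.
\end{itemize}
Hence $E(A,\tilde P) \ge |A|^2|\tilde P|/|A|^{\delta+o(1)}$ still holds at the good pair. The remaining bounds on $|H|$ and $K$ then follow verbatim from the existential proof.

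The other subtle point is the stopping level $k'$. We cannot test the condition $T_{k'-1}(A) \le |A|^{2(k'-1)-1-c+\delta}$, but enumerating all $k'$ sidesteps this. At the true stopping level, the argument above guarantees success. At any other level, the verification step rejects bad outputs. The total running time is polylogarithmically many iterations of $\tO(|A|^{1+c+\epsilon})$.
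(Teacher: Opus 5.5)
Your proposal is correct and takes essentially the same route as the paper. The paper tries every level through its recursive call, while you enumerate the levels explicitly and verify each candidate; both try all dyadic $D$, approximate the relevant popular $(k'-1)$-sums via \cref{lem:approx-pop-sums}, and feed the result into the algorithmic asymmetric BSG theorem. Your one deviation is to use the approximate superlevel set $\Pop^{k'-1}(A;D)$ instead of an approximate dyadic shell, justified by monotonicity of $E(A,\cdot)$ together with $|\Pop^{k'-1}(A;D)|\,D^2 \le T_{k'-1}(A)$; this is valid and slightly cleaner than the paper's ``by the same proof'' step.
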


\begin{proof}[Proof of \cref{thm:k-sum-pop-algo}]
We choose $k = k(c, \epsilon)$ as in \cref{lem:higher-energy-algo}. Given $A$, we first apply \cref{lem:approx-pop-sums} to compute a set $P$ satisfying (with high probability) that
\begin{equation*}
    \Pop^k(A; p) \subseteq P \subseteq \Pop^k(A; p/2).
\end{equation*}
If $P = \emptyset$, then we have certified that $\Pop^k(A; p) = \emptyset$. So suppose otherwise that $P$ contains some element $z$. We are guaranteed that $z$ is a $p/2$-popular $k$-sum in $A$. We call \cref{lem:higher-energy-algo} on the set $A^* = \pm (A \cup \set{z})$ with parameter $p/2$. As before, it holds that $T_k(A^*) \geq p/2$ and thus \cref{lem:higher-energy-algo} computes a set $H \subseteq \Int$ which is as claimed, except that instead of the third condition we have $|A^* \cap H| \geq |A| / K$. However, as $A^*$ is a union of four copies of $A$, possibly translated and negated, one of the corresponding translations and/or negations of $H$ must satisfy the claim. We test all four and verify each in linear time. The total time is dominated by the calls to \cref{lem:approx-pop-sums,lem:higher-energy-algo}, running in time $\tilde O(|A|^k / p) = \tilde O(|A|^{1+c})$ and $O(|A|^{1+c+\epsilon})$, respectively.
\end{proof}

\begin{proof}[Proof of \cref{lem:higher-energy-algo}]
Following the proof of \cref{lem:higher-energy}, we design a recursive algorithm. If $T_k(A)$ is sufficiently large as in the lemma statement, then the algorithm returns $H$ satisfying the claimed properties with high probability. If the bound on $T_k(A)$ does not hold, then the algorithm may simply fail.

In the first case of the proof of \cref{lem:higher-energy}, the set $A$ satisfies that $T_{k-1}(A) \geq |A|^{2(k-1)-1-c+\delta}$, for some constant $\delta$. Without knowing if this previous condition is actually satisfied, we regardless call the algorithm recursively for $k-1$. If it reports $H$ as claimed, then we are done. Otherwise, we can infer that we instead in the second case.

In that second case, we construct sets $Q_D$ approximating the sets $P_D$ from the proof of \cref{lem:higher-energy} as follows. Here, $D$ ranges of all powers of $2$ at least $|A|^{k-1-c-o(1)}$. Each set $Q_D$ is required to satisfy that (i) each $x \in Q_D$ has multiplicity at least $D/2$ and less than $4D$ as a $k$-sum in $A$ and (ii) each $k$-sum in $A$ with multiplicity at least $D$ and less than $2D$ is contained in $Q$. Each such set~$Q_D$ can be computed in time $\tilde O(|A|^{1+c})$ by a constant number of calls to \cref{lem:approx-pop-sums}. Then, by the same proof as in \cref{lem:higher-energy} we have that
\begin{equation*}
    T_k(A) \leq O((k \log|A|)^2) \cdot \max_D(D^2 \cdot E(A, Q_D)),
\end{equation*}
and thus
\begin{equation*}
    E(A, Q_D) \geq \frac{|A|^2 |Q_D|}{|A|^{\delta+o(1)}}
\end{equation*}
for some choice of $D$. In this case we apply the asymmetric BSG theorem on $A$ and $Q_D$. In the algorithm, we do not know the optimal value of $D$, so we simply try all $O(\log |A|)$ options. The algorithmic version of the asymmetric BSG theorem (\cref{thm:asym-bsg}) then returns the claimed set $H$ in time $\tilde O((|A| + |Q_D|) K) = O(|A|^{1+c+\epsilon+o(1)})$. (The $o(1)$ overhead can be removed by reducing $\epsilon$ further.)
\end{proof}
\section{Subquadratic-Time Algorithm for Low-Rank 3SUM} \label{sec:3sum-low-rank}
In this section we design a subquadratic-time algorithm for low-rank 3SUM.

\begin{theorem}[Low-Rank 3SUM] \label{thm:3sum-low-rank}
For every $0<\rho \le 1/2$ there is some $\delta = \delta(\rho) > 0$ such that the following holds. There is a randomized algorithm that solves a given All-Integer 3SUM instance~$(A, B, C)$, when additionally provided with a set $X \subseteq \Int$ of size $|X| \leq n^{1-\rho}$ satisfying $A \subseteq X + X$, and runs in time
\begin{equation*}
    \ThreeSUM(n \mid r \leq n^{1-\rho}) = O(n^{2-\delta}).
\end{equation*}
\end{theorem}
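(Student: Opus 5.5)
We follow the win-win paradigm sketched in the introduction, with the roles arranged so that $A \subseteq X+X$ is the low-rank set. Throughout, $\delta_0 > 0$ is a tiny constant fixed later in terms of $\rho$, and $k = k(c,\epsilon)$ is the constant from \cref{thm:k-sum-pop-algo} for suitable $c,\epsilon$. We maintain an \emph{amplification chain} of alternating sets. In the base case the structural information is $A \subseteq X + X$.

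\textbf{Brute-force step.} Enumerate all solutions of $x_1 + x_2 + b = c$ with $x_1,x_2 \in X$, $b \in B$, $c \in C$ via the meet-in-the-middle $x_1 + x_2 + b = c \iff x_1 + b = c - x_2$. We use \cref{lem:sparse-conv} or hashing on $X+B$ and $C-X$, which takes time $O(n^{2-\rho})$ plus the output size. For each listed solution we check whether $c - b \in A$. If the number of solutions $N$ is at most $n^{2-\delta_0}$, this settles All-Integer 3SUM. We do not know $N$ in advance, so we first estimate, for each $b \in B$ (symmetrically for each $c$), its multiplicity in $C - X - X$ by sampling (\cref{lem:approx-pop-sums}, \cref{lem:approx-mult}). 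Elements of low multiplicity are handled by brute force, and only the popular ones are kept. By dyadic pigeonholing we obtain a subinstance $B' \subseteq B$ of size $n^{1-o(1)}$ in which every $b \in B'$ is a $n^{1-\delta_0}$-popular element of $C - X - X$. Elements discarded in this way have few solutions and are fully processed. Repeating this peel-off $O(\log n)$ times (or $n^{o(1)}$ times) covers all of $B$.

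\textbf{Amplification.} We iterate, exchanging the roles of the sets. Knowing that $B$ is covered by popular sums of $C - 2X$, we list solutions $a + b = c$ with $b$ expanded as $c' - x_1 - x_2$. Either few such solutions exist, and brute force over $A - C$ versus $-C' + 2X$ (sizes controlled, time $n^{2-\Omega(\delta_0)}$) resolves those elements, or $C$ becomes covered by popular sums of $A - C + 2X$, and so on. After $2k$ rounds the conclusion is that some surviving set, say $C$, is covered by $n^{2k(1-\delta_0)}$-popular sums of $2X + kA - kA$. Equivalently, since $A \subseteq X+X$, this is a popular $(2k+2)$-sum over the set $\pm X \cup \pm A$ up to translation. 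Counting gives multiplicity at least $|S|^{m-1-c}$ with $c \approx 1 + O(k\delta_0)$ relative to $|X|$. We then apply \cref{thm:k-sum-pop-algo} to $S = X \cup (-A)$ (or a suitable copy). This yields a low-doubling $H$ with $|H| \le n^{2-2\rho + O(k\delta_0)+\epsilon}$, doubling $n^{\epsilon}$, and covering an $n^{-\epsilon}$ fraction of $A$, or of $B$ in the alternate parity. The main obstacle is keeping the exponents consistent through this chain, so we choose $\epsilon, \delta_0 \ll \rho/k$ after $k$ is determined from $c,\epsilon$. This is circular unless $c$ is fixed first: fix $c$ from the round count, then $\epsilon$, then $k$, then $\delta_0$.

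\textbf{Small container.} We remove the covered part $A \cap H$ (respectively $B \cap H$) and recurse on the remainder, shrinking by a factor $1 - n^{-\epsilon}$. This requires $n^{\epsilon}\log n$ repetitions, which is affordable. For the covered part we solve 3SUM with $A' \subseteq H$, $|H| \le n^{2-\rho'}$, and $|H+H| \le n^{\epsilon}|H|$, following \cite{JinX23}. We cover $C$ and $B$ by translates of $H$ using Ruzsa covering (\cref{lem:plunnecke-ruzsa}). Only pairs of translates $(B\cap(H+t_1), C\cap(H+t_1+H))$ interact, and each such pair is solved by FFT or sparse convolution (\cref{lem:sparse-conv}) in time $\tilde O(|H| n^{O(\epsilon)})$ or by brute force when sparse. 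Balancing the number of translates against $|H|$ gives $n^{2-\Omega(\rho')}$. Summing all branches yields $O(n^{2-\delta})$ with $\delta$ depending only on $\rho$.
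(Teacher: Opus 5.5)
Your ingredients match the paper's: popularity-thresholded meet-in-the-middle, an alternating amplification chain, \cref{thm:k-sum-pop-algo}, and a Jin--Xu container step. However, two steps of your chain fail as written.

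\textbf{The listing step from round~2 on is not subquadratic.} $A-C$ can have $n^2$ elements and $C'+2X$ up to $n^{3-2\rho}$, so matching them costs at least quadratic time. The missing idea is \cref{lem:3sum-low-rank-pop-cover}: \emph{subsample the representations}. Suppose every covered element has at least $p$ representations $x+y_1+\dots+y_\ell$. Keep one copy of $X$ intact, and sample the tuples $(y_1,\dots,y_\ell)$ at rate $\Theta(\log n/p)$ into a set $Y$ of size $\tilde O(|X|\,s^{\ell})$. Then match the hashed differences $a-x$ (at most $n^{2-\rho}$ of them) against $y-b$, for the unpopular $b$ only. With high probability some representation of every covered $c$ survives. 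Meanwhile each unpopular $b$, having fewer than $2p|A|/s$ representations, yields only $\tilde O(|A|/s)$ hits.

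\textbf{The structure theorem is applied to the wrong set.} Applying \cref{thm:k-sum-pop-algo} to $S=X\cup(-A)$ with $c\approx 1+O(k\delta_0)$, as you state, gives only $|H|\le K|S|^{1+c}\approx n^{2+O(k\delta_0)+\epsilon}$. That is useless, since the container step costs at least $|H|$. To get the claimed $|H|\approx n^{2-2\rho}$ you must first strip the two $X$-summands via \cref{obs:pop-remove}, paying $|X|^2$ in popularity. You must also strip the copy of $C$ that your own round-2 expansion $c=a+c'-x_1-x_2$ carries along (contrary to your stated $2X+kA-kA$), paying a further factor $|C|$. Only then apply the theorem to $\pm A$ alone, with $c=1-2\rho+O(k\delta_0)$. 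This also needs $|A|\ge n^{1-O(\epsilon)}$, which is why the paper brute-forces once $A$ has become too small.

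\textbf{Your restart accounting is off.} The chain saves only a factor $s=n^{\delta_0}$, since its output term is $|A||B|/s$. You choose $\delta_0$ after $\epsilon$ with $\delta_0\ll\rho/k(c,\epsilon)$, so nothing ensures $\delta_0>\epsilon$. In the paper's parametrization the saving is $n^{\epsilon/k}$ against $n^{\epsilon}$ pieces. Rerunning the chain for each of your $n^{\epsilon}\log n$ peeled pieces therefore costs $n^{2+\epsilon-\Theta(\delta_0)}$.

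You would need an amortization argument (each discarded element of $B$ or $C$ is listed only once). Alternatively, reverse the order as the paper does:
\begin{itemize}
\item First peel $A$ by repeated calls to \cref{thm:k-sum-pop-algo} on $\pm A$, each costing only $\tilde O(n^{2-2\rho+O(\epsilon)})$, until it certifies $\Pop^k(\pm A;|A|^{k-1-c})=\emptyset$.
\item Then run the chain once. If $B$ and $C$ were both nonempty after $k$ rounds, \cref{obs:pop-remove} would yield an $|A|^k/(|X|^2s^k)\ge|A|^{k-1-c}$-popular $k$-sum, a contradiction. So the chain lists every remaining solution.
\end{itemize}

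\textbf{The container step needs a different covering lemma.} Ruzsa covering does not control it, because $B$ is unrelated to $H$ and may need $\Theta(n)$ translates. What you need is \cref{lem:3sum-cover-low-doubling}, whose key guarantee is $\sum_i|C_i|=O(K|C|\log N)$. With it, the parts with $|B_i|\le n^{1-\epsilon}$ are brute-forced in total time $\tilde O(Kn^{2-\epsilon})$. At most $n^{\epsilon}$ large parts remain, and those are handled by sparse convolution.
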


We will prove \cref{thm:3sum-low-rank} by a sequence of two reduction steps to reduce from any given low-rank instance to few instances exhibiting even more structure. That structure is that the sets~$A, B, C$ are all contained in some larger set $H$ with small doubling. For ease of notation we will typically refer to $H$ as a \emph{low-doubling container}, formally defined as follows.

\begin{definition}[Low-Doubling Container]
We say that $H$ is a \emph{$(K, N)$-container} of some set $A$ if
\begin{itemize}
    \item $A \subseteq H$,
    \item $|H + H| \leq K |H|$, and
    \item $|H| \leq N$.
\end{itemize}
\end{definition}

Our reduction can then be phrased as follows:

\begin{lemma}[From Low Rank to Low-Doubling Containment] \label{lem:3sum-low-rank-to-low-doubling-contained}
For every $ \epsilon > 0$ and $0<\rho\le 1/2$ there is some $\delta = \delta(\rho, \epsilon) > 0$ such that the following holds. There is a randomized algorithm that, given a 3SUM instance $(A, B, C)$ and a set $X$ of size $|X|\le n^{1-\rho}$ such that $C \subseteq X + X$, computes
\begin{itemize}
    \item $O(n^\epsilon)$ subinstances $(A' \subseteq A, B' \subseteq B, C' \subseteq C)$, each provided with a $(O(n^\epsilon), O(n^{2-2\rho}))$\=/container of $A' \cup B' \cup C'$, and
    \item a set of 3SUM solutions $\mathcal T \subseteq A \times B \times C$,
\end{itemize}
so that each solution is either part of exactly one subinstance or part of $\mathcal T$. The algorithm runs in time $O(n^{2-\delta})$.
\end{lemma}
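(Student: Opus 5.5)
We will implement the win--win ``structure amplification'' sketched in the proof ideas. Fix a small constant $\delta_0 > 0$ and a large constant $k = k(c, \epsilon')$ from \cref{thm:k-sum-pop-algo}, and set $\delta$ much smaller than $\delta_0$.

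The basic step is a brute-force routine. Given $(A, B, C)$ and the structural information that $C$ is covered by popular sums of some multiset sum $S = X + X + \ell(A - A)$ (initially $\ell=0$), we list all solutions to $a + b = s$ with $s$ ranging over representations in $S$. We do this by meet-in-the-middle on $a - (\text{half of } S) = (\text{other half}) - b$, using sparse convolution (\cref{lem:sparse-conv}) and sampling (\cref{lem:approx-mult}) to estimate, for each $b \in B$, its number $N_b$ of such representations.

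The elements $b$ with $N_b \le n^{1-\delta_0}$ are cheap. For these we list all their representations explicitly in total time $O(n^{2-\delta_0})$, test membership in $C$, and put every solution found into $\mathcal T$. The remaining $b$ are, by definition, $n^{1-\delta_0}$-popular sums of $S - A$. We therefore swap the roles of $B$ and $C$ and repeat the step.

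Since the popularity thresholds multiply, after $2k$ rounds every surviving element of $C$ (or $B$) is an $n^{2k(1-\delta_0)}$-popular sum of $X + X + k(A-A)$. By \cref{obs:pop-remove}, peeling off the $X + X$ part (of size $n^{2-2\rho}$) yields a $p$-popular $2k$-sum of $\pm A$ with $p \ge n^{2k-1-c}$, where $n^{1+c} \approx n^{2-2\rho+2k\delta_0}$.

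We then apply \cref{thm:k-sum-pop-algo} to obtain a $(n^{\epsilon'}, n^{2-2\rho+O(k\delta_0)})$-container $H$ covering an $n^{-\epsilon'}$ fraction of $A$. Iterating on $A \setminus H$ for $n^{\epsilon'}$ rounds covers all of $A$ by few containers $H_1,\dots,H_m$; elements of $A$ that appear in no container are handled by the brute-force branch. Choosing $\delta_0 \ll \epsilon/k$ makes $O(k\delta_0)$ negligible.

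Next we must cover $B$ and $C$ as well. For each $H_i$ we restrict $A$ to $A_i = A\cap H_i$. Using that $C \subseteq X+X$ together with the popularity relations, we run the same argument with $A_i$ replaced by $C$ to find containers for $B$ and $C$. Alternatively, we take unions of containers, since $H + H'$ has small doubling by Plünnecke--Ruzsa (\cref{lem:plunnecke-ruzsa}), provided $|H+H'|$ stays $O(n^{2-2\rho+o(1)})$.

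Each subinstance $(A_i, B_j, C_\ell)$ is then equipped with the container $H_i \cup (H_j' \text{ or translates})$. To ensure each solution lies in exactly one subinstance, we make the partition disjoint (assign each element to its first container) and let $\mathcal T$ collect the brute-forced solutions. We expect the main obstacle to be this last step: controlling the size and doubling of a \emph{common} container for $A' \cup B' \cup C'$ by $O(n^{2-2\rho})$ rather than a larger power. This likely requires exploiting that $B \approx$ popular part of $C - A$ and applying the asymmetric BSG theorem (\cref{thm:asym-bsg}) to pull $B, C$ into translates of $H$ via a small $\Lambda$.
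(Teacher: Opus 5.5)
There is a genuine gap, and it is the step you flag yourself: you never obtain a \emph{common} container for $A'\cup B'\cup C'$. Containers for $B$ or $C$ found by rerunning the amplification with permuted roles are produced independently of $H_i$. So nothing bounds $|H_i+H'_j|$, which is exactly what the Pl\"unnecke--Ruzsa route needs. The asymmetric BSG theorem would only place a $1/K$-fraction of $B$ inside $H+\Lambda$; it neither partitions $B$ nor controls $C$.

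The missing idea is that once $A'\subseteq H$ with $|H+H|\le K|H|$, no structure in $B$ or $C$ has to be discovered at all. The Jin--Xu covering lemma (\cref{lem:3sum-cover-low-doubling}) partitions $B$ into pieces $B_i\subseteq s_i+H$. Since $A'\subseteq H$, every solution with $b\in B_i$ has $c\in C_i=(s_i+H+H)\cap C$, and $\sum_i|C_i|=O(K|C|\log N)$. Pieces with $|B_i|\le|B|^{1-\epsilon}$ are brute-forced in total time $|B|^{1-\epsilon}\sum_i|C_i|=\tO(Kn^{2-\epsilon})$, leaving at most $|B|^{\epsilon}$ heavy pieces. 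For each of them, $H\cup B_i\cup C_i$ is a container of size $O(N)$ and doubling $O(K^4)$, because its sumset is covered by $O(1)$ translates of $H+H+H+H$. This is \cref{lem:3sum-low-doubling-contained-one-to-all} plus Pl\"unnecke--Ruzsa.

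Your first half also fails on running time as ordered. A single amplification pass saves only a factor $n^{\delta_0}$, with $\delta_0=O(\epsilon/k)$ and $k=k(c,\epsilon')$ large, so in general $\delta_0\ll\epsilon'$. But each extracted container covers only a $1/K=n^{-\epsilon'}$ fraction of $A$, so up to $\tO(n^{\epsilon'})$ extractions may be needed. Moreover, after deleting $A\cap H$, the popularity of the surviving $B,C$ (certified relative to the full $A$) is lost. Your loop ``amplify, extract one $H$, iterate on $A\setminus H$'' therefore reruns the amplification $n^{\epsilon'}$ times, for a total of $n^{2+\epsilon'-\delta_0}$.

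The paper reverses the order.
\begin{itemize}
\item It first peels off structured parts of $A$ using only \cref{thm:k-sum-pop-algo} on $\pm A$, until it certifies $\Pop^k(\pm A;p)=\emptyset$. Each call costs $\tO(n^{2-2\rho+O(\epsilon)})$ and never touches $B,C$; containers are cut down to size $O(n^{2-2\rho})$ by \cref{lem:3sum-shrink-container}.
\item It disposes of the case $|A|<n^{1/(1+\epsilon)}$ by brute force.
\item It then runs a \emph{single} $k$-round amplification. This must end with $B$ or $C$ empty, since otherwise \cref{obs:pop-remove} would yield a $p$-popular $k$-sum of $\pm A$. Hence all remaining solutions land in $\mathcal T$.
\end{itemize}

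Finally, from the second round on you cannot list all representations of the cheap elements. As in \cref{lem:3sum-low-rank-pop-cover}, the tuples $(y_1,\dots,y_\ell)$ must be subsampled at rate about $1/p_\ell$, since one surviving representation per element suffices. Otherwise the listing cost grows with the accumulated popularity, and the thresholds cannot multiply.
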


We devote the following two subsections to the proof of \cref{lem:3sum-low-rank-to-low-doubling-contained}. Once established, the proof of \cref{thm:3sum-low-rank} is almost immediate by the following simple lemma.

\begin{lemma}[3SUM for Small Sumsets] \label{lem:3sum-small-sumset}
Any All-Integer 3SUM instance $(A, B, C)$ with universe $[u]$ can be solved in deterministic time $(|A + B| + |B - C| + |A - C|) (\log u)^{O(1)}$.
\end{lemma}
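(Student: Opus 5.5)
The plan is to compute the three sumsets with sparse convolution, along with their multiplicities, and then read off from the sumsets which elements take part in a solution. The key observation is that $a+b=c$ can be rearranged in three ways: $c \in A+B$, $a = c-b \in C-B$, and $b = c-a \in C-A$. Since $|C-B|=|B-C|$ and $|C-A|=|A-C|$, all three sumsets that appear are of the sizes allowed in the running-time bound. The universe of these sumsets is contained in $[-2u,2u]$, so the polylogarithmic factor of \cref{lem:sparse-conv} is $(\log u)^{O(1)}$.

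The steps are as follows. First, apply \cref{lem:sparse-conv} to $A$ and $B$ to obtain the set $A+B$ in time $|A+B|(\log u)^{O(1)}$. Store it in a sorted array, or a deterministic dictionary built by sorting, and for each $c\in C$ test membership by binary search; then $c$ lies in a solution iff $c\in A+B$. Second, apply \cref{lem:sparse-conv} to $C$ and $-B$ to get $C-B$. An element $a\in A$ lies in a solution iff $a\in C-B$, which we test the same way. Third, symmetrically, compute $C-A$ via $C$ and $-A$, and report $b\in B$ iff $b\in C-A$.

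The total time is the sum of the three sparse-convolution calls plus $O((|A|+|B|+|C|)\log u)$ for the lookups. Nontrivial sets satisfy $|A|+|B|+|C| \le |A+B|+|B-C|+|A-C|$, since for instance $|A+B|\ge |A|$ and $|A+B|\ge|B|$ whenever $B,A$ are nonempty; the degenerate case of an empty set is trivial. Everything used is deterministic, since \cref{lem:sparse-conv} is deterministic and membership testing is done by sorting and binary search.

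There is essentially no obstacle here. The only points needing care are that sparse convolution is output-sensitive in the sumset size, which is exactly what the bound permits, and that the membership tests are charged against the sumset sizes.
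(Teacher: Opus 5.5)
Your proof is correct and follows the paper's approach: compute $A+B$, $C-B$ and $C-A$ with the deterministic sparse convolution of \cref{lem:sparse-conv}, then report the elements of $C$, $A$ and $B$ that lie in the corresponding sumset. Your extra accounting for the membership tests and for empty input sets covers details the paper leaves implicit.
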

\begin{proof}
To detect all elements in $C$ that are part of a solution, we compute the sumset $A + B$ by a sparse convolution in time $|A + B| (\log u)^{O(1)}$ by \cref{lem:sparse-conv}, and report all elements in $C \cap (A + B)$. To detect all elements in $A$ and $B$ that are part of a solution, we proceed similarly with $A$ and $B$ in place of $C$.
\end{proof}

\begin{proof}[Proof of \cref{thm:3sum-low-rank}]
Let $\epsilon > 0$ be a parameter to be determined. We apply \cref{lem:3sum-low-rank-to-low-doubling-contained} to the given instance $(A, B, C)$ (by relabeling the input sets, we may assume $C$ has a low-doubling container as required by \cref{lem:3sum-low-rank-to-low-doubling-contained}), and obtain $O(n^\epsilon)$ subinstances $(A', B', C')$ and some set of 3SUM solutions~$\mathcal T$, with the promise that each solution in $(A, B, C)$ either appears in a subinstance or in~$\mathcal T$. We report all items present in $\mathcal T$. Then we solve all subinstances by \cref{lem:3sum-small-sumset}, and further report all items that appear in a solution in at least one subinstance. The correctness is clear.

It remains to analyze the running time. The call to \cref{lem:3sum-low-rank-to-low-doubling-contained} takes time $O(n^{2-\delta})$ for some constant $\delta = \delta(\rho, \epsilon) > 0$. We can also charge the scan through $\mathcal T$ to that running time. By \cref{lem:3sum-low-rank-to-low-doubling-contained}, each constructed subinstance $(A', B', C')$ admits an $(K =O(n^\epsilon), N = O(n^{2-2\rho}))$\=/container $H$ of~\makebox{$A' \cup B' \cup C'$}. It follows that
\begin{equation*}
    |A' + B'| \leq |H + H| \leq K |H| \leq O(n^{2-2\rho+\epsilon}).
\end{equation*}
Similarly, by an application of the Plünnecke--Ruzsa inequality (\cref{lem:plunnecke-ruzsa}),
\begin{equation*}
    |B' - C'| \leq |H - H| \leq K^2 |H| \leq O(n^{2-2\rho+2\epsilon}).
\end{equation*}
Therefore, the $O(n^\epsilon)$ calls to \cref{lem:3sum-small-sumset} run in total time $O(n^{2-2\rho+3\epsilon})$. The claimed running time follows, picking any constant $0 < \epsilon < 2\rho/3$.
\end{proof}

\subsection{Low-Doubling Containment for One Set} \label{sec:3sum-low-rank:sec:low-doubling-contained-one}
The proof of \cref{lem:3sum-low-rank-to-low-doubling-contained} involves two steps: (1) to reduce from any low-rank instance to subinstances where only one of the three sets admits a low-doubling container, and (2) to extend that one low-doubling container to all three sets. We start with step (1):

\begin{lemma}[Low-Doubling Containment for One Set] \label{lem:3sum-low-rank-to-low-doubling-contained-one}
For every $\epsilon > 0$ and $0<\rho\le 1/2$ there is some $\delta = \delta(\rho, \epsilon) > 0$ such that the following holds. There is a randomized algorithm that, given a 3SUM instance $(A, B, C)$ and a set $X$ of size $|X|\le n^{1-\rho}$ such that $C \subseteq X + X$, computes
\begin{itemize}
    \item $O(n^\epsilon)$ subinstances $(A' \subseteq A, B, C)$ with $O(n^\epsilon)$-doubling $O(n^{2-2\rho})$-containers of $A'$, and
    \item a set of 3SUM solutions $\mathcal T \subseteq A \times B \times C$,
\end{itemize}
so that each solution is either part of exactly one subinstance or part of $\mathcal T$. The algorithm runs in time $O(n^{2-\delta})$.
\end{lemma}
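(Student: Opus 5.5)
We follow the win-win iteration from the proof overview: either enumerate few extra solutions by meet-in-the-middle, or amplify structure until \cref{thm:k-sum-pop-algo} yields a low-doubling container for part of $A$. Fix $k = k(c,\epsilon')$ from \cref{thm:k-sum-pop-algo}, where $c$ is chosen so that $|A|^{1+c}$ matches $|X|^2 n^{O(k\delta)}$. Also fix a tiny $\delta$ with $k\delta \ll \epsilon$. Define a sequence of ``pattern'' multisets by $S_0 = X + X$ (as a $2$-fold sum of $X$), $S_1 = S_0 - A$ (which covers $B$), $S_2 = S_1 + A$ (which covers $C$), and so on alternating. In general $S_{2j} = 2X + jA - jA$ and $S_{2j+1} = 2X + jA - (j+1)A$.

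At stage $t$ we maintain the invariant that every surviving element of the current target set ($C$ for even $t$, $B$ for odd $t$) is a $p_t$-popular element of $S_t$, where $p_t \approx n^{t(1-\delta)}$. We also maintain that all 3SUM solutions through elements discarded so far are already in $\mathcal T$. To pass from stage $t$ to $t+1$, say from $C$ to $B$, we use meet-in-the-middle as in~\eqref{eq:meet-in-the-middle}. We list all solutions to $a + b = s$ with $a \in A$, $b \in B$ and $s$ a representation in $S_t$ of some $c \in C$. Concretely, we use \cref{lem:approx-mult} or \cref{lem:approx-pop-sums} to estimate, for each $b \in B$, its multiplicity in $S_t - A$ restricted to surviving elements of $C$. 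Elements $b$ with multiplicity below $n^{-\delta}$ times the average, more precisely below $p_t n^{1-\delta}$, are low-degree. For these $b$ we enumerate all solutions $a+b=c$ directly, at cost bounded by their degree, and put them into $\mathcal T$; then we delete them. Every remaining $b$ is then $p_{t+1}$-popular in $S_{t+1}$, so the invariant holds with roles swapped. The preprocessing uses only $|A||X|$-type sets, and all listing charges against a total $O(n^{2-\delta})$ by the choice of thresholds. We must check the accounting: the listed representations of $c$ in $S_t - A$ each give a genuine candidate, and low-degree elements contribute at most $n \cdot p_t n^{1-\delta}/p_t$ solutions per unit of $S_t$-popularity.

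After $2k$ stages some surviving $c$ is $n^{2k(1-\delta)}$-popular in $2X + kA - kA$. If no element survives, all solutions are in $\mathcal T$ and we are done. Otherwise we symmetrise as in the proof of \cref{thm:k-sum-pop}: apply \cref{thm:k-sum-pop-algo} to the set $A^\ast = X \cup \pm A \cup \{c\}$ with $|A^\ast| = \Theta(n)$. This is a $(2k+3)$-sum whose popularity is at least $n^{2k(1-\delta)}$. Compared with the trivial maximum $|A^\ast|^{2k+2}$, this is $|A^\ast|^{(2k+3)-1-c}$ with $c \approx 2 + 2k\delta$. Since $|X|^2 \le n^{2-2\rho}$, we should rather count $X$-terms relative to $|X|$, giving $|H| \le n^{\epsilon}|X|^2 n^{2k\delta}$. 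I expect this step to be the main obstacle: getting the container size to be $O(n^{2-2\rho})$ rather than $n^{3}$ requires \cref{lem:higher-energy} to be rerun with the asymmetric sizes $|X|$ versus $|A|$. The natural approach is to apply \cref{obs:pop-remove} to peel off the $X$-coordinates into the container size bound via \cref{thm:asym-bsg}. We also need the cover guarantee to land on $A$ rather than on $X$, which will be handled by choosing the copy (translation or negation) as in the proof of \cref{thm:k-sum-pop-algo}.

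The resulting $H$ covers at least $|A|/n^{\epsilon}$ elements of $A$. We output the subinstance $(A \cap H, B, C)$ with container $H$, remove $A \cap H$ from $A$, and repeat the whole procedure. After $O(n^{\epsilon}\log n)$ rounds $A$ is exhausted, or the procedure certifies no popular sum, in which case all remaining solutions lie in $\mathcal T$. Since the $A'$ are disjoint, each solution lies in exactly one subinstance or in $\mathcal T$. Finally, we dedupe $\mathcal T$ against the subinstances by deleting solutions whose $a$ has already been assigned. The total time is $O(n^{2-\delta+\epsilon})$, and we set $\epsilon < \delta/2$ afterwards by renaming constants.
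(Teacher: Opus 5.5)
The ingredients are the right ones: meet-in-the-middle amplification alternating between $B$ and $C$, followed by \cref{thm:k-sum-pop-algo}. The problem is the order in which you combine them. It breaks the running time, and your closing step ``set $\epsilon<\delta/2$ by renaming constants'' cannot be carried out.

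\textbf{The circular running-time bound.} In your scheme a container is extracted only at the end of a full $2k$-stage amplification run; you then remove $A\cap H$ and repeat everything. Each $H$ captures only a $1/K$ fraction of $A$, and the popularity invariants for $B$ and $C$ depend on the current $A$. So there are about $n^{\epsilon}$ runs, each paying the roughly $n^{2-\delta}$ listing cost that the per-stage gap $\delta$ buys. But you require $k\delta\ll\epsilon$ with $k=k(c,\epsilon)$, so $\delta<\epsilon$, and $n^{2-\delta+\epsilon}$ is superquadratic. Decoupling the constants does not help either. With $n^{\epsilon'}$ rounds you need $\delta>\epsilon'$, but $k(c,\epsilon')$ must grow like $c/\epsilon'$ (cf.\ the qualitative-optimality example for \cref{thm:k-sum-pop}). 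That forces $\delta$ below the exponent $\epsilon'$ that governs the number of rounds.

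\textbf{How the paper avoids it.} The paper reverses the order.
\begin{itemize}
\item It first calls \cref{thm:k-sum-pop-algo} directly on $\pm A$ with $p=|A|^{k-1-c}$ and $c=1-2\rho+3\epsilon$. Each time it splits off $A'$ with its container, until the algorithm certifies $\Pop^k(\pm A;p)=\emptyset$. Each call costs only $\tO(|A|^{2-2\rho+4\epsilon})$, so the $\tO(n^\epsilon)$ calls are cheap.
\item It then brute-forces the instance unless $n\le|A|^{1+\epsilon}$.
\item Only then does it run the amplification, once, with $s=|A|^{\epsilon/k}$. If $B$ and $C$ both survived all $k$ iterations, one would get $\Pop^k(\pm A;|A|^k/(|X|^2s^k))\neq\emptyset$. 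This threshold is at least $p$, contradicting the certificate.
\end{itemize}
So the amplification never has to produce a container, and its $n^{2-\epsilon/k}$ cost is paid only once.

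Your order could alternatively be rescued by amortization. Keep the pruned $B,C$ across rounds, while the subinstances use the original $B,C$. Then every element is declared unpopular and listed at most once overall, so only the $n^{2-\rho+O(\epsilon)}$ per-round terms get multiplied by $n^{\epsilon}$. Your accounting does not do this.

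\textbf{The ``main obstacle'' is not one.} It needs no asymmetric rerun of \cref{lem:higher-energy} or \cref{thm:asym-bsg}; simply do not put $X$ (or the survivor) into the set. Two applications of \cref{obs:pop-remove} strip the two $X$-coordinates at a loss of $|X|^2\le n^{2-2\rho}$ in popularity. Then \cref{thm:k-sum-pop-algo} applied to $\pm A$ alone gives a container of size $K|A|^{1+c}=n^{2-2\rho+O(\epsilon)}$, using $n\le|A|^{1+\epsilon}$ (secured by brute force otherwise). \cref{lem:3sum-shrink-container} then trims this to $O(n^{2-2\rho})$.

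\textbf{The stage step as written is too expensive.} Enumerating the solutions of a low-degree $b$ ``at cost bounded by its degree'' can cost up to $p_tn^{1-\delta}$ per element. \cref{lem:3sum-low-rank-pop-cover} instead subsamples the tuples $(y_1,\dots,y_\ell)$ at rate $\Theta(\log n)/p_\ell$. Then every $c\in C$ keeps one representation w.h.p., while each unpopular $b$ produces only $\tO(|A|/s)$ candidate tuples. In addition, the popularity test must use the unrestricted multiplicity in $\Pop(X,Y_1,\dots,Y_\ell,-A;\cdot)$, not a multiplicity restricted to surviving elements of $C$. The unrestricted count is what the Fredman-style meet-in-the-middle actually enumerates and pays for.
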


The proof of \cref{lem:3sum-low-rank-to-low-doubling-contained-one} requires some setup. We start with the following simple lemma that allows to trade off the size of the container with its doubling.

\begin{lemma} \label{lem:3sum-shrink-container}
Let $\epsilon, \gamma > 0$ and $0<\rho\le 1/2$. There is a deterministic algorithm to reduce any 3SUM instance $(A, B, C)$ provided with a $(O(n^\epsilon), O(n^{2-2\rho+\gamma}))$-container of $A$ to $O(n^{\gamma})$ subinstances $(A' \subseteq A, B, C)$ provided with a $(O(n^{\epsilon+\gamma}), O(n^{2-2\rho}))$-container of $A'$. It runs in linear time $O(n^{1+\gamma})$.
\end{lemma}
\begin{proof}
Let $H$ denote the $K$-doubling container of $A$ with $K=O(n^{\epsilon})$. Arbitrarily partition $H$ into $t = O(n^\gamma)$ pieces $H_1, \dots, H_t$ of size at most $\ceil{|H| / t} = O(n^{2-2\rho})$, and let $A_1, \dots, A_t$ denote the subsets $A_i = A \cap H_i$. Then the 3SUM instance $(A, B, C)$ is equivalent to the instances $(A_1, B, C), \dots, (A_t, B, C)$ (in the sense that their solutions stand in one-to-one correspondence). Moreover, each $A_i$ is contained in $H_i$, and as
\begin{equation*}
    |H_i + H_i| \leq |H + H| \leq K |H| \leq O(K t) |H_i| = O(n^{\epsilon+\gamma}) |H_i|,
\end{equation*}
$H_i$ is a $(O(n^{\epsilon+\gamma}), O(n^{2-2\rho}))$-container of $A_i$.
\end{proof}

The more interesting ingredient to \cref{lem:3sum-low-rank-to-low-doubling-contained-one} is the following lemma. Intuitively, it states that certain structure can be \emph{amplified}. Namely, any 3SUM instance $(A, B, C)$ where $C$ is a subset of $(k+1)$-popular sums can be replaced by a subinstance that is a subset of $(k+2)$-popular sums. Curiously, the amplified structure does not show up in $C$, but in $B$.

\begin{lemma} \label{lem:3sum-low-rank-pop-cover}
There is a randomized algorithm that, given a 3SUM instance $(A, B, C)$ and integer parameters $p, s \geq 1$ and sets $X, Y_1, \dots, Y_k \subseteq \Int$ satisfying $C \subseteq \Pop(X, Y_1, \dots, Y_k; p)$, computes
\begin{itemize}
    \item a subinstance $(A, B' \subseteq B, C)$ so that $B' \subseteq \Pop(X, Y_1, \dots, Y_k, -A; p')$ for $p' = p |A| / s$, and
    \item a set of 3SUM solutions $\mathcal T \subseteq A \times B \times C$,
\end{itemize}
so that each solution is either part of the subinstance or part of $\mathcal T$. The algorithm runs in time
\begin{equation*}
    \tO\parens*{\frac{s |X| \, |Y_1| \dots |Y_k|}{p} + |A| \, |X| + \frac{|B| \, |Y_1| \dots |Y_k|}{p} + \frac{|A| \, |B|}{s}}.
\end{equation*}
\end{lemma}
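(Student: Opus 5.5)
The approach is a heavy/light split of $B$ by the number of ``extended'' representations. For $b \in B$, let $m(b)$ be the number of tuples $(x, y_1, \dots, y_k, a) \in X \times Y_1 \times \dots \times Y_k \times A$ with $x + y_1 + \dots + y_k - a = b$. Then $B$ is partitioned into heavy elements ($m(b) \ge p'$) and light elements. Heavy elements form $B'$, which lies in $\Pop(X, Y_1, \dots, Y_k, -A; p')$ by definition. For light elements we list all 3SUM solutions through them by brute-force enumeration of the extended representations. Since the representations of a single $b$ are few, this is cheap. Every solution $(a,b,c)$ has $c = a+b \in C \subseteq \Pop(X,Y_1,\dots,Y_k;p)$, so $c$ has at least $p$ representations $x + y_1 + \dots + y_k$. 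Each of these gives a representation $b = x + \dots + y_k - a$. Hence a solution through a light $b$ is found by enumerating the tuples counted by $m(b)$.

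The first step computes the popular sums $P = \Pop(X, Y_1, \dots, Y_k; p/2)$ approximately. By \cref{lem:approx-pop-sums} we obtain $P$ with $C \subseteq \Pop(\cdot; p) \subseteq P \subseteq \Pop(\cdot; p/2)$ in time $\tO(|X|\,|Y_1|\cdots|Y_k|/p)$. Note $|P| \le 2|X|\prod|Y_i|/p$. We may replace $C$ by $C \cap P$ (no change) and work with $P$ instead. We then approximately classify $B$. By \cref{lem:approx-mult} applied to the sets $P$, $-A$, and $B$, we estimate $r_{P-A}(b)$ up to additive error $\epsilon|A|$, or use sampling over $A$ directly. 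Actually a cleaner route is to sample: for each $b$, sample $\tO(s)$ elements $a \in A$ and test $a + b \in P$ by hashing. This costs $\tO(|B| s)$, which might be too much. Instead use \cref{lem:approx-mult} with the roles $(A, B, C) := (-A, P, B)$ and error $\epsilon|A|$ with $\epsilon = 1/s$. This estimates $|\{a : a+b \in P\}|$ within $|A|/s$, in time $\tO(|A| + s|P| + |B|)$, which is within the budget $\tO(s|X|\prod|Y_i|/p + |A||X|)$. The set $B'$ consists of those $b$ with estimate at least $2|A|/s$. Then every $b \in B'$ has at least $|A|/s$ elements $a$ with $a + b \in P$. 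Each such $a+b$ has at least $p/2$ representations, so $m(b) \ge p|A|/(2s)$, which is $p'$ up to a constant. That constant is absorbed by adjusting thresholds, e.g.\ calling \cref{lem:approx-pop-sums} at $2p$ and arguing with $p$.

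For light $b \notin B'$ we have $|\{a \in A : a + b \in P\}| \le 3|A|/s$. Listing all $a$ with $a + b \in P \cap C$ over all light $b$ would output $\le 3|A||B|/s$ pairs. We cannot afford to scan $A$ per $b$, so we need a meet-in-the-middle scheme. Each relevant $a+b \in P$ has at most about $|X|\prod|Y_i|$ representations, which is too many to enumerate per element. Instead, precompute the set $A - X$ in time $O(|A||X|)$ as a hash table mapping $a - x$ to the list of pairs. For each light $b$ enumerate $(y_1,\dots,y_k)$ and look up $b - y_1 - \dots - y_k$ in $A - X$ (since $a + b = x + \sum y_i$ means $a - x = \sum y_i - b$, with a sign fix). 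The enumeration of $Y$-tuples per $b$ costs $\prod|Y_i|$, exceeding the budget $|B|\prod|Y_i|/p$ by a factor $p$. The fix is to sample: each valid $c$ has $\ge p$ representations, so a random subset of $\tO(\prod|Y_i|/p)$ tuples of $Y_1\times\dots\times Y_k$ hits one representation of every $c \in C$ w.h.p.\ (union bound over $c$; or sample each $Y$-tuple independently, which is what makes the cost $|B|\prod|Y_i|/p$). Each hit yields candidate $a$'s. The number of reported (candidate, $a$) pairs is controlled by representations $m(b)$ restricted to sampled tuples, expected $m(b) \cdot \tO(1/p) \le \tO(|A|/s)$ for light $b$. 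This contributes $\tO(|A||B|/s)$. We check each candidate with $a+b \in C$, and add the solutions to $\mathcal T$.

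The main obstacle is this last accounting. Light must be defined via $m(b)$ (extended representations) rather than merely the count of $a$ with $a+b\in P$. Otherwise the output of the $A-X$ lookups is not bounded. I would therefore define $B'$ by thresholding an estimate of $m(b)$ itself, i.e.\ estimate $r_{(A-X) - \text{tuple sums}}$. This is doable with \cref{lem:approx-pop-sums}-style sampling on $(X,Y_1,\dots,Y_k,-A)$ with threshold $p'$, in time $\tO(|X|\prod|Y_i||A|/p') = \tO(s|X|\prod|Y_i|/p)$, matching the first term of the budget exactly. With that classification the enumeration bound $m(b)/p \le |A|/s$ per light $b$ is immediate, and the run time sums to the stated bound.
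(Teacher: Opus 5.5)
Your final plan is correct and is the paper's proof. You classify $b$ by approximating $m(b)$ via \cref{lem:approx-pop-sums} on $(X, Y_1, \dots, Y_k, -A)$ with threshold $p'$. For the unpopular $b$ you sample $Y$-tuples at rate $\Theta(\log n)/p$, apply Fredman's trick against a hash table of the values $a - x$, and bound the listed quadruples by $m(b) \cdot \tO(1/p) = \tO(|A|/s)$ per unpopular $b$. You can drop the earlier detour that uses \cref{lem:approx-mult} to count the $a$ with $a+b \in \Pop(X, Y_1, \dots, Y_k; p/2)$; as you noticed yourself, it does not control the output of the lookups. One detail worth stating explicitly: distinct representations $c = x + y_1 + \dots + y_k$ have distinct $Y$-tuples, because $x$ is determined by $c$ and the tuple, and this is what guarantees the tuple sampling hits every $c$ with high probability.
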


\begin{proof}
Let $p' = p |A| / s$. We run \cref{lem:approx-pop-sums} to compute a set $P$ satisfying
\begin{equation*}
    \Pop(X, Y_1, \dots, Y_k, -A; 2p') \subseteq P \subseteq \Pop(X, Y_1, \dots, Y_k, -A; p').
\end{equation*}
We call an item $b \in B$ \emph{popular} if $b \in P$ and \emph{unpopular} otherwise, and deal with the unpopular and popular items separately.
\begin{itemize}
    \setlength\parskip{0pt plus 1pt}
    \item \emph{(Unpopular Items)} Our strategy is to list all solutions $a + b = c$ involving an unpopular item~$b$. Let $Y \subseteq Y_1 + \dots + Y_k$ be the set obtained by sampling some tuples $(y_1, \dots, y_k) \in Y_1 \times \dots \times Y_k$ uniformly at random with rate $\min(1,R / p)$ for some $R = \Theta(\log n)$; for each such tuple we insert the sum $y_1 + \dots + y_k$ into~$Y$. Then we run a meet-in-the-middle algorithm to explicitly list all solutions
    \begin{equation*}
        a + b = x + y
    \end{equation*}
    with $(a, b, x, y) \in A \times (B \setminus P) \times X \times Y$. Note that this condition is equivalent to
    \begin{equation*}
        a - x = y - b,
    \end{equation*}
    an application of ``Fredman's trick''~\cite{Fredman76}. More specifically, enumerate all pairs $(a, x) \in A \times X$, and store each pair in a hash table (with collision lists) indexed by~\makebox{$a - x$}. Then we enumerate all unpopular $b \in B \setminus P$ and all $y \in Y$, and further list all entries $(a, x)$ in the hash table at index $y - b$. For each 4-tuple $(a, b, x, y)$ listed in this way we test if $a + b \in C$. We will show that this algorithm indeed lists any 3SUM solution involving an unpopular item.
    
    \item \emph{(Popular Items)} The only 3SUM solutions not listed in the previous step are those involving a popular item $b$, i.e., the solutions of the subinstance $(A, B', C)$ for $B' = B \cap P$. Recall that also $B' \subseteq \Pop(X, Y_1, \dots, Y_k, -A; p')$ by construction.
\end{itemize}

\paragraph{Correctness}
We show that the first case of the algorithm indeed lists all solutions $a + b = c$ involving an unpopular item $b$. From the initial assumption $C \subseteq \Pop(X, Y_1, \dots, Y_k; p)$ we get that~$c$ can be expressed in at least $p$ ways as
\begin{equation*}
    c = x + y_1 + \dots + y_k,
\end{equation*}
for $x \in X$ and $y_i \in Y_i$. Each such representation corresponds to a unique tuple $(y_1, \dots, y_k)$ (as each such tuple uniquely determines $x$ via $x = c - y_1 - \dots - y_k$). By construction of $Y$, with high probability $1 - \exp(-O(R)) = 1 - 1/n^{\Omega(1)}$, at least one such tuple $(y_1, \dots, y_k)$ is present in $Y$ in the form $y = y_1 + \dots + y_k$. In particular, $c$ can be expressed as $c = x + y$ for some $x \in X$ and $y \in Y$, and therefore there is a solution $a + b = x + y$. This solution is detected by the algorithm, and the correctness follows.

\paragraph{Running Time}
The initial call to \cref{lem:approx-pop-sums} takes time $\tO(|X| \, |Y_1| \dots |Y_k| \, |A| / p')$. In the unpopular case, the construction of $Y$ takes time $O(|Y|) = O(|Y_1| \dots |Y_k| \cdot R / p)$ (which will turn out to be negligible). The meet-in-the-middle algorithm takes time $O(|A| \, |X|)$ to prepare the hash table, plus time $O(|B| \, |Y|)$ to enumerate all pairs $(a, x)$. To bound the total time of listing all 4-tuples $(a, b, x, y)$, note that each unpopular item $b$ can be represented in at most~$2p'$ ways as
\begin{equation*}
    b = x + y_1 + \dots + y_k - a.
\end{equation*}
For each such representation, the tuple $(y_1, \dots, y_k)$ is part of $Y$ only with probability $R / p$, and thus the total number of solutions $a = x + y - b$ is at most $O(p' \cdot R / p) = O(|A| / s \cdot R)$ (in expectation and with high probability) for any unpopular $b$. In particular, the algorithm lists at most $O(|B| \, |A| / s \cdot R)$ such solutions overall. The popular case runs in linear time.
\end{proof}

We are finally ready to prove
\cref{lem:3sum-low-rank-to-low-doubling-contained-one}.

\begin{proof}[Proof of
\cref{lem:3sum-low-rank-to-low-doubling-contained-one}]
Let $c = 1 - 2\rho + 3\epsilon$ and let $k = k(c, \epsilon)$ be the constant from \cref{thm:k-sum-pop-algo}. Note that the statement of \cref{lem:3sum-low-rank-to-low-doubling-contained} allows us to arbitrarily decrease $\epsilon$; all parts of the lemma benefit from smaller $\epsilon$, except for the non-explicit dependence of $\delta$. For convenience we will therefore decrease $\epsilon$ whenever convenient.

\paragraph{Step 1: Removing Popular \boldmath$k$-Sums}
We will first repeatedly remove elements from $A$ until it contains no more popular $k$-sums/differences. More precisely, we will ensure that $\Pop^k(\pm A; p) = \emptyset$ for $p := |A|^{k-1-c}$.

Run \cref{thm:k-sum-pop-algo} on $\pm A$ with parameter $p$, leading to one of two possible outcomes. If it certifies that $\Pop^k(\pm A; p) = \emptyset$, then step 1 is complete and we move on to step 2. Otherwise, \cref{thm:k-sum-pop-algo} identifies a subset $A' \subseteq A$ of size $|A'| \geq |A| / K$ along with a $K$-doubling $K |A|^{2-2\rho+3\epsilon}$-container of~$A'$, for some $K = O(|A|^\epsilon)$. In this event we take $(A', B, C)$ as one of the constructed subinstances (the container is not quite as claimed; we deal with this issue in a moment). We then remove~$A'$ from~$A$, and repeat the execution of step 1.

All in all, there can be at most $\tO(|A|^\epsilon)$ repetitions as each repetition removes $\Omega(|A|^{1-\epsilon})$ elements from $A$. In particular, the number of constructed subinstances is $\tO(|A|^\epsilon)$, and each subinstance is $O(|A|^\epsilon)$-doubling $O(|A|^{2-2\rho+4\epsilon})$-contained. By \cref{lem:3sum-shrink-container} (applied with $t = |A|^{4\epsilon}$) we can further refine these subinstances to $O(|A|^{5\epsilon})$ instances that are $\tO(|A|^{5\epsilon})$-doubling $O(|A|^{2-2\rho})$-contained. This is as claimed when reducing $\epsilon$ by a constant.

\paragraph{Step 2: Small Sets}
Next, we easily assert that $n \leq |A|^{1+\epsilon}$. Otherwise we can solve the whole instance trivially in time $n^{2-\Omega(\epsilon)}$, and report the set of solutions $\mathcal T$.

\paragraph{Step 3: Popular \boldmath$k$-Sum Covers}
Once we reach this step, we will exhaustively list all 3SUM solutions. Let $\mathcal T$ be initially empty. We will successively remove elements from $B$ and $C$, while maintaining the invariant that each 3SUM solution that is eliminated in this way has already been listed in $\mathcal T$. The algorithm runs in iterations $\ell = 1, \dots, k$. We also maintain the second invariant that, before the $\ell$-th iteration, 
\begin{equation*}
    \begin{cases}
        B \subseteq \Pop(X, Y_1, \dots, Y_\ell; p_\ell) &\text{if $\ell$ is even,} \\
        C \subseteq \Pop(X, Y_1, \dots, Y_\ell; p_\ell) &\text{if $\ell$ is odd.}
    \end{cases}
\end{equation*}
To satisfy this invariant initially (for $\ell = 1$) we pick $Y_1 = X$ and $p_1 = 1$; indeed, then the invariant follows from the assumption $C \subseteq X + X$. Let $s = |A|^{\epsilon/k}$. Each iteration depends on the parity of $\ell$:
\begin{itemize}
    \item \emph{If $\ell$ is odd:} Apply \cref{lem:3sum-low-rank-pop-cover} on the 3SUM instance $(A, B, C)$, with parameters $p_\ell$ and $s$ and sets $X, Y_1, \dots, Y_\ell$; the invariant makes sure that the premise $C \subseteq \Pop(X, Y_1, \dots, Y_\ell; p_\ell)$ is fulfilled. This yields a set of solutions $\mathcal T'$ and a subinstance $(A, B' \subseteq B, C)$. We update:
    \begin{align*}
        \mathcal T &:= \mathcal T \cup \mathcal T', \\[1ex]
        B &:= B', \\[1ex]
        Y_{\ell+1} &:= -A, \\
        p_{\ell+1} &:= \frac{p_\ell |A|}{s}.
    \end{align*}
    \cref{lem:3sum-low-rank-pop-cover} states that each 3SUM solution is either in $\mathcal T'$ or part of the instance $(A, B', C)$. Thus, after the update each solution is either in $\mathcal T$ or part of the instance $(A, B, C)$, and therefore the first invariant is maintained.
    
    \cref{lem:3sum-low-rank-pop-cover} also guarantees that $B' \subseteq \Pop(X, Y_1, \dots, Y_\ell, -A; p_\ell |A| / s)$. After the update, this precisely maintains the second invariant.
    \item \emph{If $\ell$ is even:} Observe that our 3SUM instance $(A, B, C)$ is equivalent to the appropriately transposed instance $(A, -C, -B)$, i.e., their solutions stand in one-to-one correspondence. Apply \cref{lem:3sum-low-rank-pop-cover} on $(A, -C, -B)$ with parameters $p_\ell$ and $s$ and sets $-X, -Y_1, \dots, -Y_\ell$; the premise $-B \subseteq \Pop(-X, -Y_1, \dots, -Y_\ell; p_\ell)$ is satisfied by our invariant. This yields a set of solutions $\mathcal T'$ and a subinstance \makebox{$(A, C' \subseteq -C, -B)$}. We update:
    \begin{align*}
        \mathcal T &:= \mathcal T \cup \set{(a, b, c) : (a, -c, -b) \in \mathcal T'}, \\[1ex]
        C &:= -C', \\[1ex]
        Y_{\ell+1} &:= A, \\
        p_{\ell+1} &:= \frac{p_\ell |A|}{s}.
    \end{align*}
    Similarly to before, \cref{lem:3sum-low-rank-pop-cover} states that each transposed 3SUM solution is either in $\mathcal T'$ or part of the instance $(A, C', -B)$. Thus, after the update each original solution is either in $\mathcal T$ or part of the instance $(A, B, C)$, and therefore the first invariant is maintained.

    \cref{lem:3sum-low-rank-pop-cover} also guarantees that $C' \subseteq \Pop(-X, -Y_1, \dots, -Y_\ell, -A; p_\ell |A| / s)$. After the update, this precisely maintains the second invariant.
\end{itemize}
Once the final $k$-th iteration is complete, we report the set $\mathcal T$ and the subinstances constructed in step 1. This completes the description of the algorithm.

\paragraph{Correctness}
The correctness of steps 1 and 2 is fairly obvious; the interesting part is to argue that in step 3 we exhaustively list all solutions of $(A, B, C)$. The key correctness claim is that once the $k$-th iteration terminates, $B$ or $C$ must be empty. Suppose otherwise that both are non-empty. Then our invariant implies that
\begin{equation*}
    \Pop(X, Y_1, \dots, Y_{k+1}; p_{k+1}) \neq \emptyset.
\end{equation*}
Recall that we picked $p_1 = 1$ and $p_{\ell+1} = p_\ell |A| / s$, and thus
\begin{equation*}
    p_{k+1} = \frac{|A|^k}{s^k}.
\end{equation*}
Moreover, we picked $Y_1 = X$ and $Y_2, \dots, Y_{k+1} \subseteq \pm A$. In combination, we get
\begin{equation*}
    \Pop\bigg(X, X, \underbrace{\pm A, \dots, \pm A}_{\text{$k$ times}}; \frac{|A|^k}{s^k}\bigg) \neq \emptyset.
\end{equation*}
By two applications of \cref{obs:pop-remove} it further follows that
\begin{equation*}
    \Pop^k\parens*{\pm A; \frac{|A|^k}{|X|^2 s^k}} \neq \emptyset.    
\end{equation*}
But note that
\begin{equation*}
    \frac{|A|^k}{|X|^2 s^k} \geq \frac{|A|^k}{n^{2-2\rho} |A|^{\epsilon}} \geq \frac{|A|^k}{|A|^{2-2\rho+2\epsilon} |A|^\epsilon} = |A|^{k-2+2\rho-3\epsilon} = |A|^{k-1-c} = p,
\end{equation*}
where we used that $|X| \leq n^{1-\rho}$ and that $n \leq |A|^{1+\epsilon}$ as asserted in step 2. Hence,
\begin{equation*}
    \Pop^k(\pm A; p) \neq \emptyset,
\end{equation*}
which contradicts step 1.

\paragraph{Running Time}
The running time of step 1 is dominated by the calls to \cref{thm:k-sum-pop-algo}, running in time $\tO(|A|^{1+c+\epsilon}) = \tO(|A|^{2-2\rho+4\epsilon})$ for each call, and in total $\tO(|A|^{2-2\rho+5\epsilon})$ over all $\tO(|A|^{\eps})$ calls. Step 2 runs in time $O(n^{2-\delta_0})$ for some constant $\delta_0 = \delta_0(\epsilon) > 0$. Finally, focus on step 3. For $1 \le \ell \le k$, the $\ell$-th iteration is dominated by the call to \cref{lem:3sum-low-rank-pop-cover}. In the odd case, this takes time
\begin{align*}
    &\tO\parens*{\frac{s |X| \, |Y_1| \dots |Y_\ell|}{p_\ell} + |A| \, |X| + \frac{|B| \, |Y_1| \dots |Y_\ell|}{p_\ell} + \frac{|A| \, |B|}{s}} \\
    &\qquad{} = \tO\parens*{\frac{s^\ell |X|^2 |A|^{\ell-1}}{|A|^{\ell-1}} + |A| \, |X| + \frac{s^{\ell-1} |B| \, |X| \, |A|^{\ell-1}}{|A|^{\ell-1}} + \frac{|A| \, |B|}{s}} \\[1ex]
    &\qquad{} = \tO\parens*{n^{2-2\rho+\epsilon \ell/k} + n^{2-\rho} + n^{2-\rho+\epsilon(\ell-1)/k} + n^{2-\epsilon/k}},
\end{align*}
and the running time of the even case can be bounded analogously. Note that the fourth term can be bounded by $O(n^{2-\delta_1})$ for some constant $\delta_1 = \delta_1(\rho, \epsilon) > 0$. This leaves the first three terms, and the time complexity of step 1, which are crudely bounded by $\tO(n^{2-\rho+5\epsilon})$. If necessary, decrease $\epsilon$ further so that $\epsilon < \rho/10$. Then let $\delta = \min(\delta_0, \delta_1, \rho/2)$, and the proof is complete.
\end{proof}

\subsection{Low-Doubling Containment for All Sets} \label{sec:3sum-low-rank:sec:low-doubling-contained-all}
Next, we demonstrate that low-doubling containment for one set can be extended to low-doubling containment for all three sets:

\begin{lemma}[Low-Doubling Containment for All Sets] \label{lem:3sum-low-doubling-contained-one-to-all}
Let $\epsilon > 0$. There is a randomized algorithm that, given a 3SUM instance $(A, B, C)$ along with a $(K, N)$-container $H$ of $A$, computes
\begin{itemize}
    \item $O(n^\epsilon)$ subinstances $(A, B' \subseteq B, C' \subseteq C)$ so that $B'$ is contained in a translate of $H$ and $C'$ is contained in a translate of $H + H$, and
    \item a set of 3SUM solutions $\mathcal T \subseteq A \times B \times C$,
\end{itemize}
so that each solution is either part of exactly one subinstance or part of $\mathcal T$. The algorithm runs in time $\tO(K N n^{\epsilon} + K n^{2-\epsilon})$.
\end{lemma}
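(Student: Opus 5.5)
We follow the popular/unpopular template of \cref{lem:3sum-low-rank-pop-cover}, now with translates of $H$ in place of popular sums. The key quantity is how many elements of $A \subseteq H$ an element $b \in B$ can reach by a solution. For $b \in B$ consider the translates $b + H$. A solution $a + b = c$ has $c \in H + b$, so $C$ only matters through its intersections with such translates. We partition the candidate shifts by popularity. Call a shift $x$ \emph{heavy} if $|B \cap (x - H)| \geq n^{1-\epsilon}$, i.e., $x$ lies in $\Pop(B, H; n^{1-\epsilon})$. We compute an approximation $P$ of the heavy shifts by \cref{lem:approx-pop-sums}, in time $\tO(|B|\,|H| / n^{1-\epsilon}) = \tO(N n^{\epsilon})$.

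The number of heavy shifts is at most $|B|\,|H|/n^{1-\epsilon} \leq N n^{\epsilon}$. We also expect to need only $O(n^{\epsilon})$ of them after a greedy covering: repeatedly pick the shift $x$ whose translate $x - H$ covers the most remaining elements of $B$, as long as it covers at least $n^{1-\epsilon}$ of them. Each such step removes $n^{1-\epsilon}$ elements, so there are at most $n^{\epsilon}$ steps. Each step is implemented via sparse convolution (\cref{lem:sparse-conv}) of the indicator of the remaining $B$ with $-H$, in time $\tO(|B - H|) \leq \tO(KN)$. Here we need $|B - H|$ small; we ensure this by first restricting to the $b$ that participate in some solution with $a\in A\subseteq H$.

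For each greedy shift $x$, set $B_x = B \cap (x - H)$, which lies in a translate of $-H$. Let $C_x = C \cap (A + B_x) \subseteq C \cap (H + x - H)$. Here I would strengthen the container by replacing $H$ with $H - H$ or $H \cup -H$, which keeps the doubling polynomial in $K$ by \cref{lem:plunnecke-ruzsa}. That gives $B_x$ in a translate of the container and $C_x$ in a translate of its double. Exactly-one uniqueness follows because the $B_x$ are disjoint, as each $b$ is removed once covered.

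The remaining elements $b$ are light: every translate $x - H$ through $b$ that meets $B$ heavily has been exhausted. For these we list all solutions directly. Enumerate $c \in C$ and $a \in A$ with $c - a \in B_{\text{light}}$, charging each found pair to the translate $c - H$. Concretely, for each $c$ the relevant $b$ lie in $B_{\text{light}} \cap (c - H)$, which has fewer than $n^{1-\epsilon}$ elements since no heavy shift remains. So for each $c$ we list these $b$ and test $c - b \in A$. To list $B_{\text{light}} \cap (c - H)$ for all $c$ simultaneously, we use \cref{lem:min-plus-conv-listing}/\cref{lem:listingboolean}-style witness listing over the universe $B - H$ or $C-H$ of size $\lesssim KN$. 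This takes time $\tO(KN + n \cdot n^{1-\epsilon}\cdot K)$, matching the claimed $\tO(KNn^{\epsilon} + Kn^{2-\epsilon})$.

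The main obstacle is that witness-listing step, where the hashing has to respect the universe size. This requires first translating so that everything lives in a universe of size about $|H + H|$, which the sparse convolution handles. A second obstacle is confirming that $C'$ lands in a translate of $H+H$ rather than of $H-H$; if needed, we handle this by symmetrizing $H$ as above.
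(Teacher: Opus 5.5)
\textbf{The gap is in your light step.} It rests on a bound that is both circular and false.

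You claim the relevant universe $B-H$ (or $C-H$) has size $\tO(KN)$ after ``restricting to the $b$ that participate in some solution''. But deciding which $b$ participate is the 3SUM problem itself. Even for that subset nothing bounds $|B-H|$: $B$ and $C$ are arbitrary, so $|B_{\mathrm{light}}+H|$ can be as large as $|B|\,|H|\approx nN$. For example, take $B$ random in a huge universe; then every $b$ is light.

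Listing $B_{\mathrm{light}}\cap(c-H)$ for all $c\in C$ is a 3SUM-listing problem between $B_{\mathrm{light}}$, $H$ and $C$. The light property bounds the output size by $n^{2-\epsilon}$, but it gives no way to find that output. Suppose you hash to a universe of size $M$. To keep the false collisions per $c$ at $O(n^{1-\epsilon})$ you need $M\gtrsim |B|\,|H|/n^{1-\epsilon}=Nn^{\epsilon}$. Sampling-based listing (\cref{lem:listingboolean}) with threshold $n^{1-\epsilon}$ then costs $\tO(n^{1-\epsilon}\cdot Nn^{\epsilon})=\tO(nN)$, which is not subquadratic when $N\approx n^{2-2\rho}$.

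The same false bound appears in your greedy step: each sparse convolution costs $|B_{\mathrm{rem}}-H|$, not $\tO(KN)$. That part can be patched, at the cost of an extra $n^{\epsilon}$ factor, via \cref{lem:approx-mult} restricted to the candidate set $P$. The light step cannot be patched this way.

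\textbf{The missing idea} is the covering lemma of Jin and Xu (\cref{lem:3sum-cover-low-doubling}). It is exactly what ``moves everything into a universe of size $|H+H|$''. It partitions $B$ into pieces $B_i\subseteq s_i+H$ such that the sets $C_i=(s_i+H+H)\cap C$ have total size $O(K|C|\log N)$. Intuitively, a random family of shifts covering $B$ places each $c$ in only $O(K\log N)$ of the $C_i$, because $c\in s+H+H$ forces $s\in c-(H+H)$, a set of size at most $K|H|$.

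Because $A\subseteq H$, every solution with $b\in B_i$ has $c\in C_i$, so each solution lies in exactly one subinstance $(A,B_i,C_i)$. The paper then brute-forces the pieces with $|B_i|\le n^{1-\epsilon}$ in total time $\sum_i|B_i|\,|C_i|=\tO(Kn^{2-\epsilon})$, and outputs the at most $n^{\epsilon}$ remaining pieces as subinstances.

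This route needs no popularity or greedy phase. It also lands the pieces in translates of $H$ and $H+H$ directly. Your construction only gives translates of $-H$ and $H-H$, i.e.\ the lemma for a symmetrized container rather than for the given $H$.

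Your light step could be rescued by the same random-covering argument with translates of $-H$, where all pieces are automatically small. At that point, however, your heavy phase becomes superfluous.
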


The proof closely follows Jin and Xu's algorithm for low-doubling 3SUM~\cite[Section~3.3]{JinX23}. The key ingredient is the following covering-type lemma~\cite{JinX23}.

\begin{lemma}[{\cite[Lemma~3.6]{JinX23}}] \label{lem:3sum-cover-low-doubling}
There is a randomized algorithm that, given integer sets $H, B, C$ of size at most~$N$ with $|H + H| \leq K |H|$, computes subsets \makebox{$B_1, \dots, B_m \subseteq B$} and $C_1, \dots, C_m \subseteq C$ such that:
\begin{enumerate}[label=(\roman*)]
    \item\label{lem:3sum-cover-low-doubling:itm:B-partition} $B_1, \dots, B_m$ form a partition of $B$, 
    \item\label{lem:3sum-cover-low-doubling:itm:C-almost-partition} $\sum_i |C_i| \leq O(|C| K \log N)$, 
    \item\label{lem:3sum-cover-low-doubling:itm:translates} for all $1 \leq i \leq m$, there is a translate $s_i$ so that $B_i \subseteq (s_i + H) \cap B$ and $C_i = (s_i + H + H) \cap C$. 
\end{enumerate}
The algorithm runs in time $\tO(N K \log u)$ whenever $H, B, C \subseteq [-u, u]$.
\end{lemma}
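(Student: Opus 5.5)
The statement to prove is \cref{lem:3sum-cover-low-doubling}, which is quoted from \cite[Lemma~3.6]{JinX23}. My plan is to cover $B$ by translates of $H$ via a greedy, set-cover-type argument that uses random sampling. Then, for each translate $s_i + H$, I define $C_i = (s_i + H + H)\cap C$. Any solution $a + b = c$ with $a\in H$ and $b\in s_i+H$ automatically has $c\in s_i+H+H$, but that is not needed here; the lemma only asks for the structural properties \ref{lem:3sum-cover-low-doubling:itm:B-partition}--\ref{lem:3sum-cover-low-doubling:itm:translates}. Properties \ref{lem:3sum-cover-low-doubling:itm:B-partition} and \ref{lem:3sum-cover-low-doubling:itm:translates} come for free once the translates $s_i$ are chosen: assign each $b$ to the first translate containing it. So everything hinges on choosing translates $s_1,\dots,s_m$ that cover $B$ while keeping $\sum_i |(s_i+H+H)\cap C|$ small.

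The key counting fact is a double count. The number of pairs $(s,c)$ with $s\in C-H-H$ and $c\in (s+H+H)\cap C$ is $|C|\cdot|H+H|\le K|H|\,|C|$ (each $c$ lies in $s+H+H$ for exactly $|H+H|$ shifts $s$). Likewise, each $b$ lies in $s+H$ for exactly $|H|$ shifts $s\in B-H$. So I will pick each shift $s\in B-H$ independently with probability $q=\Theta(\log N/|H|)$. A fixed $b$ is then covered with probability $1-(1-q)^{|H|}\ge 1-N^{-\Omega(1)}$, and a union bound over $B$ shows that all of $B$ is covered with high probability. The expected total $\sum_i |C_i|$ is at most $q\cdot K|H|\,|C| = O(K|C|\log N)$, and Markov's inequality plus repetition makes this hold with good probability. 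This gives \ref{lem:3sum-cover-low-doubling:itm:C-almost-partition}.

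The main obstacle is the running time. Enumerating $B-H$ naively costs $|B|\,|H|$, which may be as large as $N^2$. My first attempt is to sample shifts without materializing $B-H$: sample each pair $(b,h)\in B\times H$ with probability $q$, giving $\tilde O(|B|)$ expected pairs, and take the resulting shifts $s=b-h$ (deduplicated). A shift $s$ is then selected with probability at most $q\cdot r_{B-H}(s)$. The coverage argument still works, since for each $b$ we independently sample about $\log N$ of its pairs $(b,h)$. The cost bound also survives, because pairs are charged per sampled pair: the expected number of charged pairs $(s,c)$ is $q\sum_{b,h}|(b-h+H+H)\cap C|\le q|B|\cdot|C-H-H|$. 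This is bounded using Plünnecke--Ruzsa, giving $|C-H-H|\le$ something like $K^{O(1)}$ times sizes, which is slightly worse than claimed. To get the exact $O(|C|K\log N)$ bound I would instead use the per-$c$ count: each $c$ is charged by pair $(b,h)$ only if $b-h\in c-H-H$, and I expect to argue that the expected number of such sampled pairs is at most about $K\log N$ times a constant, after restricting $B$ to its portion relevant to $C$.

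Once the shifts are known, the remaining steps are computing $(s_i+H+H)\cap C$ for all $i$ and the partition of $B$. For these I would compute $H+H$ by sparse convolution (\cref{lem:sparse-conv}) in $\tilde O(K|H|\log u)$ time. For each $c$, I would then enumerate the shifts $s_i$ with $c-s_i\in H+H$ via hashing, which is output-sensitive and matches the bound in \ref{lem:3sum-cover-low-doubling:itm:C-almost-partition}. The total time is $\tilde O(NK\log u)$.
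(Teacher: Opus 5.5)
Your idealized analysis is correct. If every integer $s\in B-H$ were selected independently with probability $q=\Theta(\log N/|H|)$, each $b$ would be covered with high probability, and the double count $\sum_s|(s+H+H)\cap C|=|C|\,|H+H|$ gives $\mathbf{E}\sum_i|C_i|=O(K|C|\log N)$. The paper imports this lemma from \cite{JinX23} without proof, so I judge your argument on its own.

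\textbf{First gap: pair sampling breaks the bound.} The whole bound needs every \emph{distinct} shift to be chosen with probability $O(q)$. Sampling pairs $(b,h)$ at rate $q$ instead chooses $s$ with probability $1-(1-q)^{r_{B-H}(s)}$, which is essentially $1$ for popular differences. Take $B=H=\{1,\dots,N\}$ and $C=\{2,\dots,2N\}$, so $K\le 2$.
\begin{itemize}
\item Every $s$ with $|s|\le N/2$ has $r_{B-H}(s)\ge N/2$, so it is chosen with probability $1-N^{-\Omega(1)}$.
\item Each such shift has $|(s+H+H)\cap C|\ge N$, so $\sum_i|C_i|=\Omega(N^2)$. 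The lemma demands $O(N\log N)$, and the ideal scheme picks only $O(\log N)$ shifts here.
\item Your inequality $\sum_{b,h}|(b-h+H+H)\cap C|\le |B|\,|C-H-H|$ is false here: the left side is $\Theta(N^3)$, the right side $\Theta(N^2)$.
\item Pl\"unnecke--Ruzsa cannot bound $|C-H-H|$, since $C$ is unrelated to $H$.
\item The hoped-for per-$c$ argument fails too: for $c=N+1$, all $N^2$ pairs satisfy $b-h\in c-H-H$, and all of $B$ is relevant.
\item Dropping shifts that receive no element of $B$ does not help without a new argument. If the chosen shifts are processed in increasing order, each $s\in[-N/2,0]$ still captures the new element $N+s$.
\end{itemize}

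\textbf{Second gap: the enumeration of the $C_i$.} For an unstructured shift set $S$, listing all $(s,c)\in S\times C$ with $c-s\in H+H$ is a 3SUM-type listing problem. A hash table only supports membership tests, so it costs $\min(|S|,|H+H|)$ per $c$, which can be quadratic overall.

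\textbf{What is missing.} You need a random shift set $R$ that is simultaneously
\begin{itemize}
\item thin on every single integer, $\Pr[s\in R]=\Theta(1/|H|)$;
\item independent enough that $R\cap(b-H)\neq\emptyset$ with constant probability;
\item structured enough that $R\cap(b-H)$ and $R\cap(c-(H+H))$ can be reported in output-sensitive time.
\end{itemize}
A linear hash provides all three. Fix a prime $p>8u$, draw $\alpha,\beta\in\{0,\dots,p-1\}$ uniformly, and let $R=\{s:(\alpha s+\beta)\bmod p<\lceil 2p/|H|\rceil\}$.
\begin{itemize}
\item $b-h\in R$ iff $\alpha h\bmod p$ lies in a cyclic interval determined by $b$. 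After sorting $\{\alpha h\bmod p:h\in H\}$, each $b$ costs one binary search plus $O(1)$ expected output.
\item Pairwise independence and Chebyshev give coverage probability at least $1/2$ per round. So $O(\log N)$ independent rounds, assigning each $b$ to the first shift found, cover $B$ with high probability.
\item The expected total charge is $O(\log N)\cdot O(|C|\,|H+H|/|H|)=O(K|C|\log N)$.
\item Symmetrically, compute $H+H$ by \cref{lem:sparse-conv} and sort $\{\alpha x\bmod p:x\in H+H\}$. Each $c$ then reports $R\cap(c-(H+H))$ by binary search, with $O(K)$ expected output per round, and you keep only the shifts actually used.
\end{itemize}
This achieves the claimed $\tilde O(NK\log u)$ time.
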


\begin{proof}[Proof of \cref{lem:3sum-low-doubling-contained-one-to-all}]
The sets $H, B, C$ satisfy the premise of \cref{lem:3sum-cover-low-doubling}, yielding sets $B_1, \dots, B_m$ and $C_1, \dots, C_m$. By Properties~\ref{lem:3sum-cover-low-doubling:itm:B-partition} and~\ref{lem:3sum-cover-low-doubling:itm:translates} and the fact that $A\subseteq H$, each 3SUM solution from $(A, B, C)$ must be contained in a subinstance $(A, B_i, C_i)$.

Consider all such subinstances, distinguishing two cases. If $|B_i| \leq |B|^{1-\epsilon}$, we simply solve the subinstance by brute-force in time $O(|B_i| \, |C_i|) = O(|B|^{1-\epsilon} \, |C_i|)$. Otherwise, we keep it as one of the designated subinstances. Property~\ref{lem:3sum-cover-low-doubling:itm:translates} guarantees that $B_i$ is contained in a translate of $H$ and~$C_i$ is a translate of $H + H$, as claimed. Moreover, we designate at most $|B|^\epsilon$ subinstances in total, recalling that the sets $B_1, \dots, B_m$ form a partition of $B$ by Property~\ref{lem:3sum-cover-low-doubling:itm:B-partition}. This completes the description of the algorithm. The correctness is immediate.

We focus on the running time. The initial call to \cref{lem:3sum-cover-low-doubling} takes time $\tO(N K \log u)$ where $N$ is the maximum size of $H, B, C$. Solving the subinstance $(A, B_i, C_i)$ takes time $\tO(|B|^{1-\epsilon} \, |C_i|)$ by brute-force, or linear time to copy the instance to the set of reported instances. Therefore, the total time across all subinstances is
\begin{equation*}
    O\parens*{\sum_{i=1}^m |B|^{1-\epsilon} \, |C_i| + n^{1+\epsilon}} = \tO\parens{|B|^{1-\epsilon} \cdot K |C| + n^{1+\epsilon}} = \tO(K n^{2-\epsilon} + n^{1+\epsilon}),
\end{equation*}
by Property~\ref{lem:3sum-cover-low-doubling:itm:C-almost-partition}.
Hence, the total time of the algorithm is 
\[\tO(NK + K n^{2-\epsilon} + n^{1+\epsilon}) = \tO(KNn^{\epsilon}+Kn^{2-\epsilon}) \]
as claimed.
\end{proof}

Finally, the proof of \cref{lem:3sum-low-rank-to-low-doubling-contained} is now a simple combination of \cref{lem:3sum-low-rank-to-low-doubling-contained-one,lem:3sum-low-doubling-contained-one-to-all}.

\begin{proof}[Proof of \cref{lem:3sum-low-rank-to-low-doubling-contained}]
We start from a 3SUM instance $(A, B, C)$ where $C$ has rank at most $n^{1-\rho}$. Let~$\epsilon_1 = \frac{\epsilon}{4}$ and $\epsilon_2 = \frac{3\epsilon}{4}$. \cref{lem:3sum-low-rank-to-low-doubling-contained-one} reduces this instance to~$O(n^{\epsilon_1})$ subinstances~\makebox{$(A' \subseteq A, B, C)$}, each provided with a $(K = O(n^{\epsilon_1}), N = O(n^{2-2\rho}))$-container~$H$ of~$A'$. We pass each such instance to \cref{lem:3sum-low-doubling-contained-one-to-all}, obtaining $O(n^{\epsilon_2})$ subinstances $(A', B' \subseteq B, C' \subseteq C)$ where $B'$ is covered by a translate of $H$ and $C'$ is covered by a translate of $H + H$. Pick
\begin{equation*}
    H' = H \cup B' \cup C';
\end{equation*}
we claim that this is a $(O(K^4), O(N))$-container of $A' \cup B' \cup C'$. The containment condition is clear, and the size of $H'$ is bounded by $|H| + |B| + |C| \leq N + 2n \leq O(N)$. Regarding the doubling of~$H'$, observe that $H' + H'$ can be covered by $O(1)$ translates of $H + H + H + H$. Thus, by the Plünnecke--Ruzsa inequality (\cref{lem:plunnecke-ruzsa}),
\begin{equation*}
    |H' + H'| \leq O(|H + H + H + H|) \leq O(K^4) |H| \leq O(K^4) |H'|.
\end{equation*}
In summary, it follows that $H'$ is indeed a $(O(n^{4\epsilon_1}) = O(n^\epsilon), O(n^{2-2\rho}))$-container of $A' \cup B' \cup C'$. Moreover, the total number of instances is $O(n^{\epsilon_1 + \epsilon_2}) = O(n^\epsilon)$.

Let us finally analyze the running time.  The reduction in \cref{lem:3sum-low-rank-to-low-doubling-contained-one} takes time $O(n^{2-\delta_1})$ for some constant $\delta_1 = \delta_1(\rho, \epsilon_1) > 0$. Each of the $O(n^{\epsilon_1})$ calls to \cref{lem:3sum-low-doubling-contained-one-to-all} takes time $\tO(K N n^{\epsilon_2} + K n^{2-\epsilon_2})$, amounting to a total time of
\begin{equation*}
    \tO(n^{\epsilon_1} \cdot (K N n^{\epsilon_2} + K n^{2-\epsilon_2})) = \tO(n^{2-2\rho+2\epsilon_1+\epsilon_2} + n^{2+2\epsilon_1-\epsilon_2}) = \tO(n^{2-2\rho+5\epsilon/4} + n^{2-\epsilon/4}).
\end{equation*}
If necessary, decrease $\epsilon$ to $4\rho/5$ so that the first term becomes $O(n^{2-\rho})$. Then the running time is as claimed for any constant $0 < \delta < \min(\delta_1, \rho, \frac{\epsilon}{4})$.
\end{proof}
\section{Universe Reductions for Min-Plus Convolution} \label{sec:univ-reduct}
In this section we design our universe reductions for Min-Plus Convolution. These work somewhat differently in two regimes, and thus we design a ``small-universe reduction'' from $[n]$ to $[n^\gamma]$, for~\makebox{$\gamma < 1$}, and a separate ``large-universe reduction'' from~$[n^c]$ to $[n]$, for $c > 1$. We start with the simpler small-universe reduction in \cref{sec:univ-reduct:sec:sm}. Then, to prepare the more involved large-universe reduction, in \cref{sec:univ-reduct:sec:doubling} we first design a ``doubling reduction'', proving that the index sets of the given sequences can be assumed to have low-doubling containers. In \cref{sec:univ-reduct:sec:mono} we then design a universe reduction for the \emph{Monotone} Min-Plus Convolution problem. Finally, in \cref{sec:univ-reduct:sec:lg} we put these two pieces together to obtain the large-universe reduction, conditioned on \cref{hypo:sum-preserving}.

\subsection{Small-Universe Reduction} \label{sec:univ-reduct:sec:sm}
The small-universe reduction can be stated as follows:

\begin{lemma}[Small-Universe Reduction]
\label{lem:small-universe-listing}
For any $0 \leq \gamma \leq 1$ and $\epsilon > 0$, 
\begin{equation} \label{eq:small-universe-reduction}
    \MinPlusConv(n \mid u \leq n) = \tO\parens*{n^\epsilon \cdot \MinPlusConvListing(n, n^{1-\gamma} \mid u \le n^\gamma) + \ThreeSUM(n \mid r \le n^{1-\epsilon/2})}. 
\end{equation}
\end{lemma}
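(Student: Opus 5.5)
We follow the outline from the proof-ideas section, generalized from $\gamma=1/2$ to arbitrary $\gamma$. Let $A,B\in[n]^{[0,n]}$ and set $q=\lceil n^{1-\gamma}\rceil$. Write $A=A_1q+A_0$ and $B=B_1q+B_0$ with $A_1,B_1$ over $[n^\gamma]$ and $A_0,B_0\in[0,q)$. To make every witness a pseudo-witness, we use the standard trick: split each of $A_0,B_0$ according to whether the value is below $q/2$, giving $O(1)$ combinations of subinstances, each with a shifted offset. In each combination, $A[i]+B[j]$ differs from $q(A_1[i]+B_1[j])$ by less than $q$. It follows that any true witness $i$ of $k$ satisfies $A_1[i]+B_1[k-i]\le (A_1*B_1)[k]+1$. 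This means we track pseudo-witnesses of value $(A_1*B_1)[k]$ or $(A_1*B_1)[k]+1$, which can be handled by also running with $A_1$ rounded differently, or by listing for two thresholds. We call $k$ \emph{light} if it has at most $t=n^{1-\gamma+\epsilon}$ pseudo-witnesses and \emph{heavy} otherwise.

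\textbf{Light indices.} Using \cref{lem:min-plus-conv-listing} with $t_1=n^{\epsilon}$ and $t_2=n^{1-\gamma}$, we list up to $t$ pseudo-witnesses of every $k$. This costs $\tO(n^\epsilon\cdot \MinPlusConvListing(n,n^{1-\gamma}\mid u\le n^\gamma))$. For each $k$ where fewer than $t$ pseudo-witnesses are reported, the list is complete, so we evaluate $A[i]+B[k-i]$ on all of them in total time $O(nt)$, which is $O(n^{2-\gamma+\epsilon})$. This term is at most the listing term, because $\MinPlusConvListing(n,n^{1-\gamma}\mid\cdot)\ge n\cdot n^{1-\gamma}$ (the output size). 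For the remaining (heavy) indices we know $(A_1*B_1)[k]$ from the listing output.

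\textbf{Heavy indices via low rank.} For the heavy $k$, we first compute a constant-additive approximation $\tilde C[k]$ of $(A*B)[k]$; we know it lies in $q(A_1*B_1)[k]+[0,2q)$. Then we resolve the exact value by 3SUM on index sets, using the standard reduction: for each of the $O(1)$ candidate offsets, or by binary-search-style parallel queries, test whether $(k,v)\in\mathcal I(A)+\mathcal I(B)$. Concretely, we take $\mathcal C$ to be the set of candidate pairs $(k,q(A_1*B_1)[k]+d)$ for heavy $k$ and $d\in[0,2q)$. This set is too large, so instead we run the Min-Plus$\to$3SUM reduction that scales values down geometrically. Each round keeps $O(1)$ candidates per $k$, and its $C$-sets are contained in $\mathcal C_1+\mathcal J$, where $\mathcal C_1=\{(k,q(A_1*B_1)[k])\}$ and $\mathcal J=\{0\}\times[0,2q)$. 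All of these are solved by All-Integer 3SUM with a rank certificate, and the certificate must be produced explicitly.

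Building the certificate: sample $X\subseteq[0,n]$ with rate $\tilde\Theta(n^{-(1-\gamma+\epsilon)/2})$. A fixed heavy $k$ has more than $t$ pseudo-witnesses $i$, and each survives with both $i$ and $k-i$ in $X$ with probability about $n^{-(1-\gamma+\epsilon)}$, so with high probability some witness survives for every heavy $k$. Hence $\mathcal C_1\subseteq\mathcal X_1+\mathcal X_1$, where $\mathcal X_1=\{(i,qA_1[i]),(i,qB_1[i]):i\in X\}$ and $|\mathcal X_1|=\tilde O(n^{(1+\gamma-\epsilon)/2})$. Next we absorb $\mathcal J$ by adding $\{0\}\times[0,\sqrt{2q}]$ and $\{0\}\times\sqrt{2q}\,[0,\sqrt{2q}]$, which adds $O(n^{(1-\gamma)/2})$ elements. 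Because $\mathcal X_1+\mathcal X_1+\mathcal J$ needs a three-fold sum, we set $\mathcal X=(\mathcal X_1+\{0\}\times[0,\sqrt{2q}])\cup\{0\}\times\sqrt{2q}[0,\sqrt{2q}]$ (augmented with $0$), so that $\mathcal C\subseteq\mathcal X+\mathcal X$. Its size is $\tilde O(n^{(1+\gamma-\epsilon)/2}\cdot n^{(1-\gamma)/2})=\tilde O(n^{1-\epsilon/2})$. After flattening to integers, this gives $\ThreeSUM(n\mid r\le n^{1-\epsilon/2})$ up to polylog factors.

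The main obstacle is the bookkeeping in the last step. The rank certificate must cover exactly the $C$-sets that the Min-Plus$\to$3SUM reduction generates, with the translates $\mathcal J$ absorbed without breaking the $n^{1-\epsilon/2}$ budget, and the constant-offset ambiguity in pseudo-witnesses must be handled so that the listing step sees all witnesses of light $k$.
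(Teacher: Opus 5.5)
Your architecture (base-$q$ split, listing pseudo-witnesses for light $k$, sampling $X$ to certify low rank for heavy $k$, then All-Integer 3SUM with $O(1)$ candidates per $k$) is the paper's. However, the step that carries the lemma, the rank certificate, fails as written. Take $\mathcal X=(\mathcal X_1+\{0\}\times[0,\sqrt{2q}])\cup(\{0\}\times\sqrt{2q}\,[0,\sqrt{2q}])\cup\{0\}$. A sum of two elements of this $\mathcal X$ has one of two forms:
\begin{itemize}
\item it uses two elements of $\mathcal X_1$, and then carries only a fine offset $a+a'\le 2\sqrt{2q}$; or
\item it uses at most one element of $\mathcal X_1$.
\end{itemize}
Write $C_1=A_1*B_1$. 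A candidate $(k,qC_1[k]+d)=(i,qA_1[i])+(k-i,qB_1[k-i])+(0,d)$ therefore need not lie in $\mathcal X+\mathcal X$ once $d>2\sqrt{2q}$. So the instance you hand to the low-rank 3SUM algorithm comes without a valid certificate.

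The fix is to let the coarse offsets also ride on an element of $\mathcal X_1$. With $s=\lceil\sqrt{2q}\rceil$, take $\{(i,qA_1[i]+a): i\in X,\ 0\le a<s\}\cup\{(j,qB_1[j]+bs): j\in X,\ 0\le b<s\}$. Whenever $X$ contains a pseudo-witness pair $i,k-i$, the sum of its two parts contains $(k,qC_1[k]+d)$ for all $0\le d<s^2$. Its size is still $O(|X|\sqrt q)=\tilde O(n^{1-\epsilon/2})$. This asymmetric placement, with fine offsets on the $A$-copies and coarse ones on the $B$-copies, is exactly what the paper does, using $s=\lceil\sqrt q\rceil$.

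Second, the ``Min-Plus$\to$3SUM reduction that scales values down geometrically'' yields $C$-sets inside $\mathcal C_1+\mathcal J$ only if you halve the low-order parts alone. That means recursing on $(qA_1+\lfloor A_0/2\rfloor,\,qB_1+\lfloor B_0/2\rfloor)$ with $qA_1,qB_1$ frozen, which is the paper's recursion. Then the pseudo-witnesses, and hence the heavy/light split and the certificate, are the same in every round. Each round's candidates are $(k,qC_1[k]+2C_0'[k]+z)$ with $z\in\{0,1,2\}$, where $C_0'$ is the low-order answer of the next-coarser round, and these lie in $\mathcal C_1+\mathcal J$. This recursion is also the source of the constant-additive approximation you never construct. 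If instead you halve the full values as in the textbook reduction, the round-$\ell$ candidates sit near $C[k]/2^\ell$ and are not in $\mathcal C_1+\mathcal J$, so your certificate does not cover those rounds.

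Finally, the ``$+1$'' ambiguity is spurious. After the split and subtracting $q/2$ from the upper halves, $A_0,B_0<q/2$, so $A[i]+B[j]-q(A_1[i]+B_1[j])\in[0,q)$. Hence the witnesses of $k$ are exactly the pseudo-witnesses with $A_1[i]+B_1[k-i]=C_1[k]$, and a single listing call with threshold $n^{1-\gamma+\epsilon}$ suffices for light $k$.
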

\begin{proof}
Let $(A, B)$ be the given Min-Plus Convolution instance with universe $[0, n]$. Let $q = \ceil{n^{1-\gamma}}$, and rewrite these sequences as 
\begin{align*}
    A &= A_1 \cdot q + A_0, \\
    B &= B_1 \cdot q + B_0, 
\end{align*}
where $A_1, B_1$ have entries less than $n / q \leq n^{\gamma}$ (or $\bot$), and $A_0, B_0$ have entries less than $q$ (or~$\bot$). In fact, we may assume without loss of generality that $A_0, B_0$ have entries bounded by $q / 2$ (or~$\bot$). Indeed, simply split $A_0$ into two subsequences, retaining in one the entries less than $q/2$ and in the other the remaining entries. From the second one subtract $q / 2$ globally. This leaves two subproblems for which $A_0$ is as claimed. Proceeding similarly with $B_0$ yields at most four subinstances for which~$A_0, B_0$ have entries less than $q/2$.

We compute $C_1 = A_1 * B_1$ in time $O(\MinPlusConv(n \mid u \le n^\gamma))$ (which is dominated by the first term in \cref{eq:small-universe-reduction}). As $A_0, B_0$ have entries less than $q/2$, we know that
\begin{equation*}
    C[k] = C_1[k] \cdot q + C_0[k], 
\end{equation*}
where
\begin{equation*}
    C_0[k] = \min\set{A_0[i] + B_0[k-i] : i \in [0, n],\, A_1[i] + B_1[k - i] = C_1[k]},
\end{equation*}
so the remaining task is to compute $C_0[k]$.

In fact, we can learn a close approximation of $C_0[k]$ by a simple scaling trick. Let $A_0' = \floor{A_0 / 2}$ and $B_0' = \floor{B_0 / 2}$ (applied coordinate-wise), and \emph{recursively} compute
\begin{equation*}
    C_0'[k] = \min\set{A_0'[i] + B_0'[k - i] : i \in [0, n],\, A_1[i] + B_1[k - i] = C_1[k]}.
\end{equation*}
It is easy to check that $2C_0'[k] \leq C_0[k] \leq 2 C_0'[k] + 2$, leaving only three candidates for each $C[k]$. In what follows we will determine the correct choice among these three candidates.

We call an index $i$ satisfying $A_1[i] + B_1[k - i] = C_1[k]$ a \emph{pseudo-witness} of $k$. We say that $k$ is \emph{popular} if it has at least $n^{1-\gamma+\epsilon}$ pseudo-witnesses, and \emph{unpopular} otherwise. We will compute the value of~$C[k]$ by one of the two following algorithms, depending on whether $k$ is popular or unpopular. 
\begin{itemize}
    \item \emph{(Unpopular)} We first show how to compute $(A * B)[k]$ for all unpopular indices $k$. For any such~$k$, we can afford to list all $n^{1-\gamma+\epsilon}$ pseudo-witnesses, then compute $C_0[k]$ and $C[k]$ via the previous identities. This takes time
    \begin{equation*}
        O\parens{\MinPlusConvListing(n, n^{1-\gamma+\epsilon} \mid u \le n^\gamma)} = O\parens{n^\epsilon \cdot \MinPlusConvListing(n, n^{1-\gamma} \mid u \le n^\gamma)},
    \end{equation*}
    by \cref{lem:min-plus-conv-listing}, which constitutes the first term in \cref{eq:small-universe-reduction}.
    
    \item\emph{(Popular)} It remains to compute $C[k]$ for all popular indices $k$. Let $X \subseteq [0, n]$ be a random subset including each index independently with probability $p = \sqrt{(100 \log n) / n^{1-\gamma+\epsilon}}$. Then let
    \begin{align*}
        \mathcal X_1 &= \set{(i, A_1[i] \cdot q + a) : i \in X,\, 0 \leq a < \ceil{\sqrt{q}}}, \\
        \mathcal X_2 &= \set{(j, B_1[j] \cdot q + b \cdot \ceil{\sqrt{q}}) : j \in X,\, 0 \leq b < \ceil{\sqrt{q}}}, \\
        \mathcal X &= \mathcal X_1 \cup \mathcal X_2.
    \end{align*}
    Consider the All-Integer 3SUM instance $(\mathcal A, \mathcal B, \mathcal C)$ on the following index sets 
    \begin{align*}
        \mathcal A &= \set{(i, A[i]): i \in [0, n]}, \\
        \mathcal B &= \set{(j, B[j]): j \in [0, n]}, \\
        \mathcal C & = \set{(k, C_1[k] \cdot q + 2 C_0'[k] + z) : k \in [0, 2n],\ z \in \set{0, 1, 2}} \cap (\mathcal X_1 + \mathcal X_2).
    \end{align*}
    By construction, $\mathcal C \subseteq \mathcal X_1 + \mathcal X_2$ and thus $\mathcal C$ has rank at most
    \begin{equation*}
        r := \max\set{|\mathcal X_1|, |\mathcal X_2|} \leq O(|X| \sqrt q) \leq O(p n \cdot \sqrt{q}) = \tO(n^{1-\epsilon/2}),
    \end{equation*}
    where $|X| \leq 2 p n$ follows from Chernoff's bound with high probability. We solve the All-Integer 3SUM instance in time
    \begin{equation*}
        \tO(\ThreeSUM(n \mid r \leq n^{1-\epsilon/2})),
    \end{equation*}
    constituting the second term in the claimed running time. For each $k$, choose the smallest of the three items $(k, c) \in \mathcal C$ that is involved in a 3SUM solution, and set $C[k] = c$.
    
    We argue that we thereby correctly assign $C[k]$ for all popular indices $k$. By definition, a popular index has at least $n^{1-\gamma+\epsilon}$ pseudo-witnesses $i$. Thus, with good probability $X$ contains one such pseudo-witness $i$ and simultaneously $k - i$; the probability of missing all such pairs is
    \begin{equation*}
        (1 - p^2)^{n^{1-\gamma+\epsilon}} \leq \exp(-p^2 n^{1-\gamma+\epsilon}) \leq n^{-100}.
    \end{equation*}
    In the successful event, we have $(i, A_1[i] \cdot q  + a) \in \mathcal X_1$ and $(j, B_1[j] \cdot q + b  \cdot \ceil{\sqrt q}) \in \mathcal X_2$ for all $0 \leq a, b < \sqrt{q}$, and thus $(k, C_1[k] \cdot q + a + b \cdot \ceil{\sqrt q}) \in \mathcal X_1 + \mathcal X_2$. For an appropriate choice of $0 \leq a, b < \sqrt{q}$, this becomes exactly $(k, C_1[k] \cdot q + C_0[k]) = (k, C[k])$. In this event, $(k, C[k])$ is part of the All-Integer 3SUM instance and we correctly identify $C[k]$. \qedhere
\end{itemize}
\end{proof}

\begin{corollary}[Small-Universe Reduction for Min-Plus Convolution Listing]
    \label{cor:small-universe-listing}
    Let $\gamma \in [0, 1]$. Listing $n^{1-\gamma}$ witnesses in Min-Plus Convolution instances with universe $[n^\gamma]$ requires $n^{2-o(1)}$ time, under the strong Min-Plus Convolution Hypothesis.
\end{corollary}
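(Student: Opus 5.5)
We argue by contradiction and combine \cref{lem:small-universe-listing} with \cref{thm:3sum-low-rank}. Suppose that for some $\gamma \in [0,1]$ and some constant $\eta > 0$ we could list $n^{1-\gamma}$ witnesses in Min-Plus Convolution instances with universe $[n^\gamma]$ in time $O(n^{2-\eta})$. That is, $\MinPlusConvListing(n, n^{1-\gamma} \mid u \le n^\gamma) = O(n^{2-\eta})$. We aim to show that $\MinPlusConv(n \mid u \le n) = O(n^{2-\eta'})$ for some constant $\eta' > 0$. This would contradict the Strong Min-Plus Convolution Hypothesis.

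To do so, apply \cref{lem:small-universe-listing} with the same $\gamma$ and with a constant $\epsilon$ to be fixed. It bounds $\MinPlusConv(n \mid u \le n)$ by two terms:
\begin{itemize}
\item The first term is $\tO(n^{\epsilon} \cdot n^{2-\eta})$. This is $O(n^{2-\eta/2})$ once we choose $\epsilon < \eta/2$.
\item The second term is $\ThreeSUM(n \mid r \le n^{1-\epsilon/2})$.
\end{itemize}

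For the second term, invoke \cref{thm:3sum-low-rank} with $\rho = \min(\epsilon/2, 1/2)$. There are two points to check.
\begin{itemize}
\item The 3SUM instance produced in the proof of \cref{lem:small-universe-listing} is an All-Integer instance on two-dimensional index sets. We flatten it to integers via the standard embedding in a large enough base, which preserves sums and hence the rank decomposition.
\item The low-rank set there is $\mathcal C$, covered by $\mathcal X_1 + \mathcal X_2$, rather than by $X+X$ for a single set $X$. We take $\mathcal X = \mathcal X_1 \cup \mathcal X_2$, so that $\mathcal C \subseteq \mathcal X + \mathcal X$ with $|\mathcal X| = \tO(n^{1-\epsilon/2})$. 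The polylogarithmic factor is absorbed by slightly shrinking $\rho$. The role of $A$ in \cref{thm:3sum-low-rank} is played by $\mathcal C$ after relabeling the three sets, exactly as done in the proof of \cref{thm:3sum-low-rank}.
\end{itemize}
\cref{thm:3sum-low-rank} then gives time $O(n^{2-\delta(\rho)})$ for a constant $\delta(\rho) > 0$.

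Combining both terms, $\MinPlusConv(n \mid u \le n) = O(n^{2-\min(\eta/2,\, \delta(\rho))})$, which is truly subquadratic. This contradicts the hypothesis. The argument is essentially bookkeeping. The only delicate point is making sure the instance handed to \cref{thm:3sum-low-rank} matches its interface: a one-dimensional All-Integer instance together with an explicit set $X$ satisfying the containment. The flattening and the union $\mathcal X_1 \cup \mathcal X_2$ address this.
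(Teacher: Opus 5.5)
Your proposal is correct and follows the paper's proof. The paper likewise assumes an $O(n^{2-\eta})$ listing algorithm, applies \cref{lem:small-universe-listing} with $\epsilon$ of order $\eta/2$, and bounds the low-rank 3SUM term by \cref{thm:3sum-low-rank} to get a truly subquadratic algorithm for Min-Plus Convolution over $[n]$; your added bookkeeping (flattening $\Int^2$ to $\Int$, taking $\mathcal X_1 \cup \mathcal X_2$ as the rank witness, absorbing the polylogarithmic factor into $\rho$, and relabeling the sets) is exactly what the paper's two-line proof leaves implicit.
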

\begin{proof}
Suppose there is such an algorithm with running time $O(n^{2-\eps'})$. We invoke \cref{lem:small-universe-listing} with $\gamma := \gamma$ and $\epsilon = \eps' / 2$. Combined with \cref{thm:3sum-low-rank}, this gives us a truly subquadratic time algorithm for $\MinPlusConv(n \mid u \le n)$, contradicting the strong Min-Plus Convolution Hypothesis. 
\end{proof}

Combined with the listing trick from \cref{lem:min-plus-conv-listing}, we directly obtain \cref{thm:small-universe-min-plus}: 

\smalluniverseminplus*

\subsection{Doubling Reduction} \label{sec:univ-reduct:sec:doubling}
The purpose of this subsection is to prove the following lemma. Intuitively, it states that any Min-Plus Convolution instance $(A, B)$ can be assumed to have a low-doubling container for its joint index set $\mathcal I(A) \cup \mathcal I(B)$ (\cref{defn:indexset}). The proof is close to~\cite[Lemma~5.7]{Fischer26}.

\begin{lemma}[Doubling Reduction] \label{lem:doubling-reduct}
For any $\epsilon > 0$ and $0<\rho\le 1/2$ there is some $\delta = \delta(\rho,\epsilon) > 0$ such that the following holds. There is a randomized reduction from any Min-Plus Convolution instance $(A, B)$ over the universe $[u]$ to $O(n^\epsilon \log u)$ Min-Plus Convolution Listing instances $(A', B')$ with the same size $n$, over the same universe $[u]$, and with listing threshold $O(n^{2\rho})$, each equipped with a $(O(n^\epsilon), O(n^{2-2\rho}))$-container of the joint index set $\mathcal I(A') \cup \mathcal I(B')$. The reduction runs in time $O((n^{1+2\rho+\epsilon} + n^{2-\delta}) \log u)$.
\end{lemma}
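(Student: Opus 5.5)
We mirror the structure-versus-randomness scheme of the small-universe reduction (\cref{lem:small-universe-listing}), but iterate it over the $O(\log u)$ bit levels of the universe. We first reduce to sequences whose entries lie in the top half of a scale by a scaling recursion: write $A = 2A' + A_0$, $B = 2B' + B_0$ with $A_0,B_0 \in \{0,1\}$. Recursively we know $C' = A' * B'$ up to the standard additive error. The reduction has $O(\log u)$ levels, and at each level we reuse the same core step. That core step is: given $(A,B)$ together with a candidate approximation $\tilde C$ such that $C[k] \in \tilde C[k] + \{0,1,2\}$, compute $C$ by splitting indices $k$ into popular and unpopular ones.

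\textbf{Pseudo-witnesses and the unpopular case.} Here a pseudo-witness of $k$ is an $i$ with $A[i]+B[k-i] \le \tilde C[k]+2$. Call $k$ unpopular if it has at most $t = n^{2\rho}$ pseudo-witnesses. For unpopular $k$ it suffices to list all pseudo-witnesses, since the true witness is among them. This listing is what the output Min-Plus Convolution Listing instances with threshold $O(n^{2\rho})$ will handle (a constant number of shifted instances handle the slack of $2$).

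\textbf{The popular case.} We proceed exactly as in the popular case of \cref{lem:small-universe-listing}. Sample $X \subseteq [0,n]$ at rate $\tilde\Theta(n^{-\rho})$ and set $\mathcal X = \{(i,A[i]) : i\in X\} \cup \{(j,B[j]) : j \in X\}$, possibly padded with a small interval gadget $\{(0,0),\dots,(0,2)\}$. Then $\mathcal C = \{(k,\tilde C[k]+z)\} \cap (\mathcal X+\mathcal X)$ has rank $\tilde O(n^{1-\rho})$. We would now like to invoke \cref{thm:3sum-low-rank} to finish. However, the lemma's purpose is different: we do not want to solve this case outright, but to output instances with a low-doubling container. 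So instead of \cref{thm:3sum-low-rank} we apply \cref{lem:3sum-low-rank-to-low-doubling-contained} to the 3SUM instance $(\mathcal I(A), \mathcal I(B), \mathcal C)$, with the 2-dimensional index sets flattened to integers. This yields a list $\mathcal T$ of solutions, found in time $O(n^{2-\delta})$, together with $O(n^\epsilon)$ subinstances $(\mathcal A',\mathcal B',\mathcal C')$, each carrying a $(O(n^\epsilon),O(n^{2-2\rho}))$-container of $\mathcal A'\cup\mathcal B'\cup\mathcal C'$.

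\textbf{Converting back to listing instances.} Each subinstance converts back to a Min-Plus Convolution Listing instance $(A',B')$: we set entries outside $\mathcal A',\mathcal B'$ to $\bot$, and the same container covers $\mathcal I(A')\cup\mathcal I(B')$. The main obstacle is this conversion. The subinstances only preserve 3SUM solutions against the candidate values $\mathcal C'$, so we must argue that listing up to $O(n^{2\rho})$ witnesses in $(A',B')$ (or simply computing $A'*B'$ restricted to the relevant $k$) recovers the minimum for every popular $k$. We also need the same container to survive the $\bot$-padding, including the encoding of $\bot$ as $2u+1$ from the footnote to \cref{lem:min-plus-conv-listing}. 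I would handle this by reporting, for each $k$, the minimum over the values obtained from $\mathcal T$ and from all subinstances. Correctness holds because every true solution lies either in $\mathcal T$ or in exactly one subinstance. Overlaps of the subinstances only cost the $n^\epsilon$ factor.

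\textbf{Running time.} Summing over the $O(\log u)$ scaling levels, the total cost is $O(n^{1+2\rho})$ for sampling and bookkeeping of listing outputs of size $n\cdot n^{2\rho}$, plus $O(n^{2-\delta})$ for \cref{lem:3sum-low-rank-to-low-doubling-contained}. Each level produces $O(n^\epsilon)$ instances. This matches the claimed bounds $O(n^\epsilon\log u)$ on the number of instances and $O((n^{1+2\rho+\epsilon}+n^{2-\delta})\log u)$ on the time.
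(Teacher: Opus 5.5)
\textbf{There is a genuine gap: the unpopular indices are never placed into a container-equipped instance.}

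Read literally, your output Listing instances are the ones obtained from the 3SUM subinstances of the popular branch. An unpopular $k$ does not appear in that 3SUM instance at all. Since $\mathcal C$ is intersected with $\mathcal X+\mathcal X$, having $(k,C[k])\in\mathcal C$ requires the sample $X$ to contain some near-optimal pair $(i,k-i)$. If $k$ has, say, a unique witness and all other pairs are far from optimal, this happens only with probability $\tilde\Theta(n^{-2\rho})$, so neither $\mathcal T$ nor any subinstance ever sees $k$.

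The other reading is that you list the pseudo-witnesses of unpopular $k$ on $(A,B)$ itself, or on shifted copies. That produces instances without any $(O(n^\epsilon),O(n^{2-2\rho}))$-container, which the lemma does not allow; otherwise one could simply output $(A,B)$ and the lemma would be trivial. Carrying out that listing inside the reduction is no easier than the original problem.

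Iterating over bit levels does not help, since at every level the indices unpopular at that level face the same obstruction. A symptom of the misplacement is that your popular branch never needs the listing threshold. Once the true witness triple survives in a subinstance, computing $(A'*B')[k]$ already returns $C[k]$.

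\textbf{The missing idea is to couple abundance at one scale with scarcity at the next finer scale, inside the same instance.}

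For every non-exceptional $k$ there is a scale $\ell$ at which $k$ has at most $t$ $2^\ell$-pseudo-witnesses but at least $t$ $2^{\ell+1}$-pseudo-witnesses. The paper then proceeds as follows.
\begin{enumerate}
\item \emph{Order-preserving rounding.} Round to $P=\lfloor A/2^\ell\rfloor$ and $Q=\lfloor B/2^\ell\rfloor$, restricted to the residue classes $A[i]\bmod 2^\ell\in x2^{\ell-1}+[0,2^{\ell-1})$ and $B[j]\bmod 2^\ell\in y2^{\ell-1}+[0,2^{\ell-1})$. Then $P[i]+Q[j]=\lfloor (A[i]+B[j]-(x+y)2^{\ell-1})/2^\ell\rfloor$ is monotone in $A[i]+B[j]$. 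Consequently $i^*$ is a witness of $(P*Q)[k]$, and every witness of $(P*Q)[k]$ is a $2^\ell$-pseudo-witness of $(A*B)[k]$.
\item \emph{Low rank from the coarse scale.} The abundance at scale $2^{\ell+1}$ makes the rate-$n^{-\rho}$ sample catch a good pair. So the brute-force value $R[k]=\min_{i,j\in X,\,i+j=k}(\lfloor A[i]/2^\ell\rfloor+\lfloor B[j]/2^\ell\rfloor)+z$ equals $P[i^*]+Q[j^*]$ for some $z\in[-4,4]$; no recursively computed $\tilde C$ is needed. Moreover $\mathcal I(R)\subseteq\mathcal X+\mathcal X$ has rank $O(n^{1-\rho})$.
\item \emph{Listing at the fine scale.} After \cref{lem:3sum-low-rank-to-low-doubling-contained}, each container-equipped instance $(P',Q')$ has at most $t$ witnesses at $k$, so listing $t$ of them recovers $i^*$. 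This is where the threshold $O(n^{2\rho})$ is actually used.
\item \emph{Boundary cases, handled without output instances.} Indices with at least $t$ exact witnesses are resolved by sampling $\tilde O(n^{2-2\rho})$ random pairs. Indices with at most $t$ non-$\bot$ pairs are resolved by Boolean convolution listing (\cref{lem:listingboolean}) in $\tilde O(n^{1+2\rho})$ time.
\end{enumerate}

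A minor point: your padding gadget $\{(0,0),(0,1),(0,2)\}$ does not place $(k,C[k])$ in $\mathcal X+\mathcal X$, because an element of $\mathcal X+\mathcal X$ has only two summands. One side must carry the offset, as in the paper's $(j,\lfloor B[j]/2^\ell\rfloor+z)$.
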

\begin{proof}
Let $(A, B)$ denote the given Min-Plus Convolution instance; each entry of $A$ and $B$ is either an integer in $[u]$ or $\bot$ (viewed as $+\infty$). We call $i \in [0, n]$ a \emph{witness} of $(A * B)[k]$ if $A[i] + B[k - i] = (A * B)[k]$, and a \emph{$q$-pseudo-witness} if~\makebox{$A[i] + B[k - i] < (A * B)[k] + q$}. Let $L = \ceil{\log (2u)}$ and let $t = 100 \ceil{n^{2\rho}}$. We classify the indices $k$ into three different kinds, based on their number of pseudo-witnesses:
\begin{enumerate}
    \item $k$ is \emph{exceptionally popular} if $(A * B)[k]$ has at least $t$ $1$-pseudo-witnesses (i.e., witnesses),
    \item $k$ is \emph{exceptionally unpopular} if $(A * B)[k]$ has at most $t$ $2^L$-pseudo-witnesses,
    \item and otherwise $k$ is \emph{ordinary}. In this case, there is some scale $0 \leq \ell < L$ such that $(A * B)[k]$ has at most $t$ $2^\ell$-pseudo-witnesses, and simultaneously at least $t$ $2^{\ell+1}$-pseudo-witnesses. We say that $(A * B)[k]$ is \emph{relevant} at that scale $\ell$.
\end{enumerate}
The reduction deals with the exceptionally popular, exceptionally unpopular, and ordinary pairs in three different cases (though we never have to explicitly determine which index falls into which case). Out of these only the third one is interesting; the others are corner cases. Throughout we maintain a sequence $C$ initialized to $\bot$. We will only ever modify $C$ via updates of the form $C[k] \gets \min\set{C[k], A[i] + B[k - i]}$, ensuring that we never underestimate the convolution.
\begin{enumerate}
    \item\label{lem:doubling-reduct:itm:exc-pop}\emph{(Exceptionally Popular)} Repeat the following $N = \ceil{n^{2-2\rho} \cdot 100 \log n}$ times: Randomly sample $i, j \in [0, n]$, and update $C[k] \gets \min\set{C[k], A[i] + B[j]}$ for $k = i + j$. We argue that, afterwards, $C[k]$ is correctly assigned for each exceptionally popular index $k$. Indeed, for any such index the probability of missing all pairs $(i, k - i)$ where $i$ is a witness of $k$ is at most
    \begin{equation*}
        \parens*{1 - \frac{t}{n^2}}^N \leq \exp\parens*{-\frac{t N}{n^2}} \leq \frac{1}{n^{100}}.
    \end{equation*}
    Thus, with high probability we list one such pair and consequently assign $C[k] = (A * B)[k]$.
    \item\label{lem:doubling-reduct:itm:exc-unpop}\emph{(Exceptionally Unpopular)} By definition, a $2^L$-pseudo-witness $i$ of $(A * B)[k]$ is any index satisfying that $A[i] \neq \bot$ and $B[k - i] \neq \bot$. Thus, to list the at most $t$ $2^L$-pseudo-witnesses of the exceptionally unpopular indices $k$, we can simply construct the sequences $A', B' \in \set{0, \bot}^{[0, n]}$ obtained from $A, B$ by replacing any non-$\bot$ entry by $0$,
   and then for each exceptionally unpopular index we can directly list $t$ witnesses of a Boolean convolution (\cref{lem:listingboolean}), running in time $\tO(t n) = \tO(n^{1+2\rho})$.
    The output is a set of witnesses, i.e., triples $(i, j, k)$ with $A'[i] + B'[j] = (A' * B')[k]$, or equivalently, $A[i]\neq \bot$ and $B[j]\neq \bot$. For each such triple update $C[k] \gets \min\set{C[k], A[i] + B[j]}$.
    \item\label{lem:doubling-reduct:itm:ord}\emph{(Ordinary)} For all $0 \leq \ell < L$, all $x, y \in \set{0, 1}$ and all $z \in [-4, 4]$, repeat the following steps $100 \log n$ times:
    \begin{enumerate}
        \item\label{lem:doubling-reduct:itm:ord:itm:construct} Construct the sequences $P, Q \in \Int^{[0, n]}$ by
        \begin{align*}
            P[i] &=
            \begin{cases}
                \displaystyle\mathrlap{\floor*{\frac{A[i]}{2^\ell}}}\phantom{\floor*{\frac{B[j]}{2^\ell}}} &\text{if $(A[i] \bmod 2^\ell) \in x2^{\ell-1} + [0, 2^{\ell-1})$,} \\
                \bot &\text{otherwise,}
            \end{cases} \\
            Q[j] &=
            \begin{cases}
                \displaystyle\floor*{\frac{B[j]}{2^\ell}} &\text{if $(B[j] \bmod 2^\ell) \in y2^{\ell-1} + [0, 2^{\ell-1})$,} \\
                \bot &\text{otherwise.}
            \end{cases}
        \end{align*}
        Sample $X \subseteq [0, n]$ uniformly at random with rate $p = n^{-\rho}$. By Chernoff's bound, this set has size $|X| = O(n^{1-\rho})$ with high probability. Compute the sequence~$R \in \Int^{[0, 2n]}$ defined by
        \begin{equation*}
            R[k] = \min_{\substack{i, j \in X\\i+j=k}} \floor*{\frac{A[i]}{2^\ell}} + \floor*{\frac{B[j]}{2^\ell}} + z
        \end{equation*}
        by brute-force. Moreover, construct the set
        \begin{equation*}
            \mathcal X = \set*{\parens*{i, \floor*{\frac{A[i]}{2^\ell}}} : i \in X} \cup \set*{\parens*{j, \floor*{\frac{B[j]}{2^\ell}} + z} : j \in X},
        \end{equation*}
        and observe that $\mathcal R := \mathcal I(R) \subseteq \mathcal X + \mathcal X$. In particular, the rank of $\mathcal R$ is at most $r = |\mathcal X| \leq 2|X| = O(n^{1-\rho})$.
        
        \item\label{lem:doubling-reduct:itm:ord:itm:3sum} For convenience, write $\mathcal P := \mathcal I(P)$ and $\mathcal Q := \mathcal I(Q)$. Apply \cref{lem:3sum-low-rank-to-low-doubling-contained} to the 3SUM instance $(\mathcal P, \mathcal Q, \mathcal R)$ and the set $\mathcal X$ (which satisfies $\mathcal R \subseteq \mathcal X + \mathcal X$, as required). This is an instance of 3SUM over the group $\Int^2$, which can be easily mapped to plain integers via the standard embedding into disjoint bit-blocks. The output of \cref{lem:3sum-low-rank-to-low-doubling-contained} consists of
        \begin{itemize}
            \item $O(n^\epsilon)$ subinstances $(\mathcal P' \subseteq \mathcal P, \mathcal Q' \subseteq \mathcal Q, \mathcal R' \subseteq \mathcal R)$, each provided with a $(O(n^\epsilon), O(n^{2-2\rho}))$-container $\mathcal H$ of $\mathcal P' \cup \mathcal Q' \cup \mathcal R'$, and
            \item a set of 3SUM solutions $\mathcal T \subseteq \mathcal P \times \mathcal Q \times \mathcal R$. 
        \end{itemize}

        \item\label{lem:doubling-reduct:itm:ord:itm:brute-force} For each triple $((i, a), (j, b), (k, c)) \in \mathcal T$: Update $C[k] \gets \min\set{C[k], A[i] + B[j]}$.

        \item\label{lem:doubling-reduct:itm:ord:itm:reduct} For each subinstance $(\mathcal P', \mathcal Q', \mathcal R')$: Let $P'$ denote the restriction of $P$ so that $\mathcal P' = \mathcal I(P')$. Similarly, let $Q'$ denote the restriction of $Q$ so that $\mathcal Q' = \mathcal I(Q')$. Let $(P', Q')$ be one of the designated Min-Plus Convolution Listing instances with threshold $t$. We solve the instance by the reduction. The output is a set of witnesses, i.e., triples $(i, j, k)$ with $P'[i] + Q'[j] = (P' * Q')[k]$. For each such triple update $C[k] \gets \min\set{C[k], A[i] + B[j]}$.
    \end{enumerate}
\end{enumerate}
In the end, we return the sequence $C$ computed before. This completes the description of the algorithm. It should be clear that cases 1 and 2 correctly assign $C[k] = (A * B)[k]$ for all exceptional indices $k$, but it remains to show that the same holds for the ordinary ones.

Fix any ordinary index $k^*$. Let $0\le \ell<L$ be the scale at which $k^*$ becomes relevant (i.e., $(A * B)[k^*]$ has at most $t$ $2^\ell$-pseudo-witnesses and at least $t$ $2^{\ell+1}$-pseudo-witnesses).  Also, fix any witness $i^*$ of~$(A * B)[k^*]$ and let $j^* = k^* - i^*$. Let $x, y \in \set{0, 1}$ be such that~$(A[i^*] \bmod 2^\ell) \in x2^{\ell-1} + [0, 2^{\ell-1})$ and~$(B[j^*] \bmod 2^\ell) \in y2^{\ell-1} + [0, 2^{\ell-1})$. Now focus on any iteration corresponding to our choice of~$\ell, x, y$ (and some $z$ to be determined soon).

\begin{claim} \label{lem:doubling-reduct:clm:scaling}
In any iteration corresponding to these choices of $\ell, x, y$, it holds that:
\begin{enumerate}[label=(\roman*)]
    \item\label{lem:doubling-reduct:clm:scaling:itm:one-witness} $i^*$ is a witness of $(P * Q)[k^*]$, and $P[i^*]+Q[j^*] \le \lfloor (A*B)[k^*]/2^\ell\rfloor$, and
    \item\label{lem:doubling-reduct:clm:scaling:itm:all-pseudo-witnesses} every witness of $(P * Q)[k^*]$ is a $2^\ell$-pseudo-witness of $(A * B)[k^*]$.
\end{enumerate}
\end{claim}
\begin{proof}
By definition, $P$ is only supported on the indices $i$ with $(A[i] \bmod 2^\ell) \in x2^{\ell-1} + [0, 2^{\ell-1})$ and analogously, $Q$ is only supported on the indices $j$ with $(B[j] \bmod 2^\ell) \in y2^{\ell-1} + [0, 2^{\ell-1})$. Let~$i, j$ be two indices satisfying these requirements. Then:
\begin{equation*}
    P[i] + Q[j] = \floor*{\frac{A[i]}{2^\ell}} + \floor*{\frac{B[j]}{2^\ell}} = \frac{A[i] + B[j] - (A[i] \bmod 2^\ell) - (B[j] \bmod 2^\ell)}{2^\ell}.
\end{equation*}
This can be upper bounded by
\begin{equation*}
    P[i] + Q[j] \leq \frac{A[i] + B[j] - (x + y)2^{\ell-1}}{2^\ell},
\end{equation*}
and correspondingly lower bounded by
\begin{equation*}
    P[i] + Q[j] > \frac{A[i] + B[j] - (x + y)2^{\ell-1} - 2^{\ell}}{2^\ell} = \frac{A[i] + B[j] - (x + y)2^{\ell-1}}{2^\ell} - 1.
\end{equation*}
But as this is an integer, we must have
\begin{equation*}
    P[i] + Q[j] = \floor*{\frac{A[i] + B[j] - (x + y)2^{\ell-1}}{2^\ell}}.
\end{equation*}
Hence, since $x,y,\ell$ are fixed, it is minimized for any witness $(A * B)[i + j]$. In particular, $i^*$ is a witness of $(P * Q)[k^*]$, and 
\[ P[i^*]+Q[j^*] = \floor*{\frac{(A*B)[k^*] - (x + y)2^{\ell-1}}{2^\ell}} \le \floor*{\frac{(A*B)[k^*]}{2^\ell}},\]
proving~\ref{lem:doubling-reduct:clm:scaling:itm:one-witness}. Moreover, for any witness $i$ of $(P*Q)[k^*]$, $A[i] + B[k^*-i]$ cannot differ from $(A * B)[k^*]$ by $2^\ell$ or more, proving~\ref{lem:doubling-reduct:clm:scaling:itm:all-pseudo-witnesses}. 
\end{proof}

\begin{claim} \label{lem:doubling-reduct:clm:offset}
There is some $z \in [-4, 4]$ such that, in any iteration corresponding to these choices of $\ell, x, y, z$, with probability at least $0.9$ it holds that $P[i^*] + Q[j^*] = R[k^*]$.
\end{claim}
\begin{proof}
We derive lower and upper bounds on
\begin{equation*}
    \min_{\substack{i, j \in X\\i+j=k^*}} \floor*{\frac{A[i]}{2^\ell}} + \floor*{\frac{B[j]}{2^\ell}}.
\end{equation*}
On the one hand, a simple lower bound is $\floor{(A * B)[k^*] / 2^\ell} - 2$. On the other hand, $X$ contains some $2^{\ell+1}$-pseudo-witness $i'$ of $(A * B)[k^*]$ with probability at least $1 - (1 - p^2)^{t} \geq 1 - \exp(-p^2 t) = 
1-\exp(-n^{-2\rho}\cdot 100\lceil n^{2\rho}\rceil)
\geq 0.9$. Conditioned on this event, we have
\begin{align*}
    \min_{\substack{i, j \in X\\i+j=k^*}} \floor*{\frac{A[i]}{2^\ell}} + \floor*{\frac{B[j]}{2^\ell}}
    &\leq \floor*{\frac{A[i']}{2^\ell}} + \floor*{\frac{B[k^* - i']}{2^\ell}} \\
    &\leq \frac{A[i'] + B[k^* - i']}{2^\ell} \\
    &< \frac{(A * B)[k^*] + 2^{\ell+1}}{2^\ell} \\
    &< \floor*{\frac{(A * B)[k^*]}{2^\ell}} + 3.
\end{align*}
By definition of $P,Q$ and Property~\ref{lem:doubling-reduct:clm:scaling:itm:one-witness} of \cref{lem:doubling-reduct:clm:scaling}, it holds that
\begin{equation*}
    \floor*{\frac{(A * B)[k^*]}{2^\ell}} - 2 \leq P[i^*] + Q[j^*] \leq \floor*{\frac{(A * B)[k^*]}{2^\ell}}.
\end{equation*}
Putting all these bounds together, there must be some $z \in [-4, 4]$ such that
\begin{equation*}
    R[k^*] = \min_{\substack{i, j \in X\\i+j=k^*}} \floor*{\frac{A[i]}{2^\ell}} + \floor*{\frac{B[j]}{2^\ell}} + z = P[i^*] + Q[j^*],
\end{equation*}
completing the proof of the claim.
\end{proof}

In light of \cref{lem:doubling-reduct:clm:offset}, with high probability over the $100 \log n$ trials there is some iteration~$\ell, x, y, z$ in which we assign $P[i^*] + Q[j^*] = R[k^*]$. We argue that in this iteration we correctly assign $C[k^*]$. In step~\ref{lem:doubling-reduct:itm:ord:itm:3sum} we construct the 3SUM instance $(\mathcal P, \mathcal Q, \mathcal R)$, and by definition it contains the solution $(i^*, P[i^*]) + (j^*, Q[j^*]) = (k^*, R[k^*])$. \cref{lem:3sum-low-rank-to-low-doubling-contained} guarantees that this solution is either listed in $\mathcal T$, in which case we correctly assign $C[k^*]$ in step~\ref{lem:doubling-reduct:itm:ord:itm:brute-force}, or is part of one of the subinstances $(\mathcal P', \mathcal Q', \mathcal R')$. In the latter case, focus on the corresponding iteration of step~\ref{lem:doubling-reduct:itm:ord:itm:reduct}. By construction, we have $P'[i^*] = P[i^*]$ and $Q'[j^*] = Q[j^*]$. Thus, $(P' * Q')[k^*] = (P * Q)[k^*]$, recalling that $i^*$ is a witness of $(P * Q)[k^*]$ by Property~\ref{lem:doubling-reduct:clm:scaling:itm:one-witness} of \cref{lem:doubling-reduct:clm:scaling}. Moreover, Property~\ref{lem:doubling-reduct:clm:scaling:itm:all-pseudo-witnesses} of \cref{lem:doubling-reduct:clm:scaling} guarantees that every witness $i$ of $(P' * Q')[k^*]$ must be a $2^\ell$-pseudo-witness of $(A * B)[k^*]$. However, since $k^*$ is relevant at level $\ell$, there are at most $t$ such pseudo-witnesses, and consequently, $(P' * Q')[k^*]$ has at most $t$ witnesses. We are therefore guaranteed to find $i^*$ by the listing algorithm, and thus update $C[k^*]$ as intended. This completes the correctness analysis.

Finally, it remains to analyze the running time. Step~\ref{lem:doubling-reduct:itm:exc-pop} runs in time $\tO(n^{2-2\rho})$ and step~\ref{lem:doubling-reduct:itm:exc-unpop} runs in time $\tO(n^{1+2\rho})$. Step~\ref{lem:doubling-reduct:itm:ord} runs in $O(\log u)$ iterations. In each iteration, step~\ref{lem:doubling-reduct:itm:ord:itm:construct} runs in time $O(n + n^{2-2\rho})$, step~\ref{lem:doubling-reduct:itm:ord:itm:3sum} is dominated by the call to \cref{lem:3sum-low-rank-to-low-doubling-contained} running in time $O(n^{2-\delta})$ for some constant $\delta = \delta(\rho, \epsilon) > 0$, step~\ref{lem:doubling-reduct:itm:ord:itm:brute-force} can be charged to the running time of that lemma as well, and step~\ref{lem:doubling-reduct:itm:ord:itm:reduct} constructs $O(n^\epsilon)$ Min-Plus Convolution Listing instances, passes them to the reduction, and processes their $O(n^{1+2\rho})$ reported witnesses per instance. The total time is $O((n^{1+2\rho+\epsilon} + n^{2-2\rho} + n^{2-\delta}) \log u)$. This is as claimed, by possibly reducing $\delta$ to account for the second term.
\end{proof}

\subsection{Universe Reduction for Monotone Min-Plus Convolution} \label{sec:univ-reduct:sec:mono}
Recall that a sequence $A \in \Int^{[0, n]}$ is \emph{monotone} if $A[0] \leq A[1] \leq \dots \leq A[n]$. We refer to the Min-Plus Convolution problem with \emph{both} input sequences $A,B$ being monotone as the \emph{Monotone} Min-Plus Convolution problem.  
We emphasize that this convention applies to \cref{sec:univ-reduct} only; in other parts of this paper (notably, 
\cref{subsec:bdmono}) we often consider the variant where only one of the two input sequences is monotone.

It is easy to see that the general Min-Plus Convolution problem is trivially equivalent to Monotone Min-Plus Convolution: On the one hand, any monotone instance can trivially be regarded as non-monotone. And on the other hand, any non-monotone instance~$(A, B)$ over a universe $[u]$ can be converted to the equivalent monotone instance $(A', B')$ for
\begin{align*}
    A'[i] &= i \cdot 10u + A[i], \\
    B'[j] &= j \cdot 10u + B[j].
\end{align*}
However, this transformation increases the universe size from $u$ to (roughly) $u n$. In this section, we show that a converse transformation exists, from monotone to non-monotone, that reduces the universe size by a factor $n$. In particular, this shows that Min-Plus Convolution over the universe~$[u]$ is \emph{equivalent} to Monotone Min-Plus Convolution over the universe $[n u]$.

\begin{lemma}[Universe Reduction for Monotone Min-Plus Convolution] \label{lem:equiv-mono}
If Min-Plus Convolution over the universe $[u]$ can be solved in time $O(n^{2-\epsilon})$ for some constant $\epsilon > 0$, then Monotone Min-Plus Convolution of the universe $[u n]$ can be solved in time $O(n^{2-\epsilon'})$ for some constant $\epsilon' > 0$. Moreover, the resulting algorithm can provide a witness for each output entry of the Monotone Min-Plus Convolution instance.
\end{lemma}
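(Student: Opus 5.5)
The plan is to exploit monotonicity through a block decomposition, so that each piece of the monotone instance has small \emph{local} value range. Those pieces are then handed, after subtracting a base offset, to the assumed algorithm for universe $[u]$. Let $(A,B)$ be monotone over $[un]$ and fix a block length $s=n^{\sigma}$ for a small constant $\sigma>0$, to be tuned against $\epsilon$. For a block $I$ of consecutive indices write $\mathrm{lo}_A(I)=A[\min I]$, $\mathrm{hi}_A(I)=A[\max I]$ and $\mathrm{rg}_A(I)=\mathrm{hi}_A(I)-\mathrm{lo}_A(I)$, and similarly for $B$.

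The first step is to refine blocks adaptively. I cut each sequence into pieces of length at most $s$, and additionally cut whenever the range within the current piece would exceed $u$. Because $A$ is monotone, $\sum_I \mathrm{rg}_A(I)\le un$, so this creates at most $n/s+n/1$ pieces. That bound is too weak, so I will instead cut at range $u\cdot s$ and pair the pieces with a secondary scaling. Write each piece's values as $\mathrm{lo}+q\cdot V_1+V_0$ with $q=s$, so that $V_1$ ranges over $[u]$ and $V_0$ over $[s]$. This gives $O(n/s)$ pieces of length at most $s$ and range at most $us$.

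Next I prune block pairs. For each output block, that is, each anti-diagonal of block pairs $(I,J)$, only pairs with $\mathrm{lo}_A(I)+\mathrm{lo}_B(J)\le\min_{(I',J')}(\mathrm{hi}_A(I')+\mathrm{hi}_B(J'))$ can contribute. This uses the standard monotone lower and upper bounds, as in Chan--Lewenstein and Chi--Duan--Xie--Zhang. Along an anti-diagonal $A$ increases while $B$ decreases, so I expect the surviving pairs to be charged to the total variation. The goal is $O((n/s)\cdot(\text{few}))$ surviving pairs in total, possibly $\tO((n/s)^{2-\Omega(1)})$ after a second dyadic split by range. For each surviving pair I subtract $\mathrm{lo}_A(I)$ and $\mathrm{lo}_B(J)$. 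This gives a non-monotone length-$s$ instance with entries in $[us]$, which I process in two parts. First I compute the high parts $V_1$ via the universe-$[u]$ oracle. Then I resolve the low part $V_0\in[s]$ exactly as in the proof of \cref{lem:small-universe-listing}: for outputs with few pseudo-witnesses I list them with \cref{lem:min-plus-conv-listing}, and for outputs with many pseudo-witnesses I sample an index set $X$ and solve a low-rank All-Integer 3SUM instance by \cref{thm:3sum-low-rank}. Witnesses come for free, because the listing step outputs explicit pairs $(i,j)$ and the 3SUM step can be made to report them as well.

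The main obstacle is the pair-pruning count. Monotone instances over $[un]$ can have many block pairs whose value intervals overlap the true minimum, since ranges up to $us$ are allowed. What I would try first is a dyadic classification of blocks by range, $\mathrm{rg}\in[2^r,2^{r+1})$. There are at most $un/2^r$ blocks of range class $r$. Pairs in a high-range class are then rare and can be brute-forced or recursed on with a smaller $s$, while low-range pairs are cheap per pair. Balancing this against the $s^{2-\epsilon}$ oracle cost should give total time $n^2\cdot s^{-\Omega(\epsilon)}$. If this charging fails, the fallback is to apply the pseudo-witness popularity split globally rather than per block: the popular case goes to \cref{thm:3sum-low-rank}, since monotone index sets with sampled witnesses have low rank, and the unpopular case goes to listing on $\floor{A/n},\floor{B/n}$, which are monotone over $[u]$ and hence directly admissible for the oracle.
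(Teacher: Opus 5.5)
There is a genuine gap. Nothing in your plan achieves the factor-$n$ universe reduction that the lemma is about: monotone over $[un]$ to arbitrary over $[u]$, at the same length.

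\textbf{The blocks do not change the universe-to-length ratio.} Cutting into pieces of length $s$ and range at most $us$ leaves that ratio at $u$. The hypothesis, however, only concerns ratio $u/n$, i.e., length-$n$ instances over $[u]$. So the ``universe-$[u]$ oracle'' you invoke on length-$s$ instances is not available. For $u=n$ (the case needed in \cref{lem:univ-reduct-lg}), a length-$s$ instance over $[n]$ is a large-universe instance about which the hypothesis says nothing. Padding it to length $n$ costs $n^{2-\epsilon}$ per block pair.

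\textbf{The high/low split cannot carry the load instead.} The argument of \cref{lem:small-universe-listing} at length $m$ only peels off a low part of range $q\le m^{1-\Omega(1)}$. The popularity threshold $t$ must exceed $q\cdot m^{\Omega(1)}$ for the rank bound $|X|\sqrt{q}\approx m\sqrt{q/t}$ to be sublinear, yet no output has more than $m+1$ pseudo-witnesses. With $q=s$ at length $s$, or $q=n$ at length $n$ in your fallback, every output is therefore unpopular. Listing all of its up to $m$ pseudo-witnesses via \cref{lem:min-plus-conv-listing} needs roughly $m$ oracle calls, about $m^{3-\epsilon}$ time, which is worse than brute force. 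The low-rank claim in your fallback fails for the same reason.

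The pair-pruning step is also unsupported: for $A[i]=B[i]=ci$ with $c\le u$ there are no range cuts and all $\Theta((n/s)^2)$ block pairs survive. That issue is secondary, though.

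\textbf{The missing idea} is to move the high part of the values into the index coordinate, inverting the trivial monotonization $A'[i]=10ui+A[i]$. The paper proceeds as follows.
\begin{itemize}
\item It writes $A=A_1u+A_0$ with $A_1\in[n]$ and $A_0<u/2$.
\item It brute-forces the $O(n^{1-2\delta})$ anchors (windows of width $n^{\delta}$) on which the monotone $A_1$ rises by at least $n^{2\delta}$.
\item It applies Chan--Lewenstein's BSG covering (\cref{thm:bsg-covering}) to the coarse points $(x,L_A[x])$ and $(y,L_B[y])$, handling the remainder pairs by brute force in $O(n^{2-\delta})$ time.
\item For each of the $O(n^{\delta})$ low-doubling pairs $(\mathcal A_\ell,\mathcal B_\ell)$, it hashes $(i,A_1[i])\mapsto(2ni+A_1[i])\bmod p$ for a random prime $p$ of order $n^{1+O(\delta)}\log u$. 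The resulting cyclic sequences have values only $A_0\in[u]$.
\item The small sumset $\mathcal A_\ell+\mathcal B_\ell$ makes spurious collisions in the target slot unlikely, so any witness found there is a witness of $A*B$.
\end{itemize}
The oracle is then called on $\tilde{O}(n^{O(\delta)})$ length-$n$ instances over $[u]$, which is exactly what the hypothesis provides. Monotonicity is used only to bound the heavy anchors and to keep $\mathcal C$ small enough for the covering theorem, not for pruning block pairs.
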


A major part of the proof of \cref{lem:equiv-mono} follows the ideas of Chan and Lewenstein~\cite{ChanL15}: We impose a coarse-grained grid on the 2-dimensional index set of the given sequences, inducing a partition into cells. For each cell we select some representative point, apply the BSG theorem on these representatives, and may therefore assume that the representatives form a low-doubling set. From then on, our reduction differs, relying on additive hashing rather than sparse convolutions.

Specifically, it is convenient to use on the following covering version of the BSG theorem with an algorithmic proof due to Chan and Lewenstein~\cite{ChanL15}:

\begin{theorem}[BSG Covering~\cite{ChanL15}] \label{thm:bsg-covering}
Let $0<\alpha < 1$, let $t \geq 1$, let $A, B$ be sets of size at most $n$, and let $C$ be a set of size at most $t n$. There are subsets $A_1, \dots, A_m \subseteq A$ and $B_1, \dots, B_m \subseteq B$ and a remainder set $R \subseteq A \times B$ such that:
\begin{enumerate}[label=(\roman*)]
    \item\label{thm:bsg-covering:itm:covering} $\set{(a, b) \in A \times B : a + b \in C} \subseteq R \cup \bigcup_{i=1}^m (A_i \times B_i)$,
    \item\label{thm:bsg-covering:itm:doubling} $m = O(\alpha^{-1})$, and for all $1 \leq i \leq m$, $|A_i + B_i| \leq O(\alpha^{-5} t^3 n)$, and
    \item\label{thm:bsg-covering:itm:remainder} $|R| \leq O(\alpha n^2)$.
\end{enumerate}
Moreover, the sets $A_1, \dots, A_m, B_1, \dots, B_m$ and $R$ can be computed in time $\tO(n^2)$ by a randomized algorithm.
\end{theorem}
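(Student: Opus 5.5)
The plan follows the classical Chan--Lewenstein covering argument. Iteratively peel off dense low-doubling bipartite pieces using the algorithmic BSG theorem (\cref{thm:bsg-symm}, or more precisely its bipartite-graph form from~\cite{ChanL15,AbboudBF23}). Consider the bipartite graph $G \subseteq A \times B$ with edge set $\set{(a,b) : a + b \in C}$. We can compute it explicitly in time $\tO(n^2)$ by enumerating all pairs and testing membership in a hash table for $C$.

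We then run the following loop. While $|G| \geq \alpha n^2$, apply the graph version of BSG to $G$ with density parameter $\alpha$. Since the edges of $G$ map into at most $|C| \leq tn$ distinct sums, BSG yields subsets $A_i \subseteq A$ and $B_i \subseteq B$ with two properties. First, $|A_i + B_i| \leq O(\alpha^{-5} t^3 n)$, which is the standard bound $|A'+B'| \leq O(\alpha^{-5} t^3 n)$ from the Chan--Lewenstein version. Second, the pair captures many edges: $|G \cap (A_i \times B_i)| \geq \Omega(\alpha n^2)$. We record $(A_i,B_i)$, delete all edges of $G \cap (A_i \times B_i)$ from $G$, and repeat.

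Each round deletes $\Omega(\alpha n^2)$ of the at most $n^2$ edges, so there are $m = O(\alpha^{-1})$ rounds, which gives \ref{thm:bsg-covering:itm:doubling}. When the loop stops, we set $R$ to the remaining edges, so $|R| < \alpha n^2$, which is \ref{thm:bsg-covering:itm:remainder}. Every original edge was either deleted while lying in some $A_i \times B_i$ or survives in $R$, which is \ref{thm:bsg-covering:itm:covering}.

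The main obstacle is the running time. A naive approach costs $\tO(n^2)$ per BSG call, giving $\tO(\alpha^{-1} n^2)$ in total, which is only fine if we absorb $\alpha^{-1}$ into the polylog. To get $\tO(n^2)$ overall, I would use the Chan--Lewenstein sampling implementation. BSG is run on a sampled sketch of $G$: neighborhoods and paths of length $3$ are estimated by random sampling of $\alpha^{-O(1)}\log n$ vertices. This makes each call cost about $\tO(n \cdot \alpha^{-O(1)})$ plus the edge-deletion cost. Edge deletions cost $O(|A_i||B_i|)$ per round, which is bounded by $O(n^2)$ summed over rounds only if the pairs are edge-disjoint in work.

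If this is too delicate, I would instead accept the bound $\tO(\alpha^{-O(1)} n^2)$. In the intended application $\alpha$ is polynomially small, $n^{-\epsilon}$, so an $\alpha^{-O(1)}$ factor would be absorbed by decreasing constants. I expect the exact $\tO(n^2)$ claim to follow from~\cite{ChanL15}'s analysis, where each $A_i \times B_i$ contribution is charged to deleted edges at density $\Omega(\alpha)$. Since $|A_i||B_i| \leq O(|G\cap(A_i\times B_i)|/\alpha)$ by the density guarantee of BSG, this charging again gives a total of $O(n^2/\alpha)$; cutting this further relies on the sampling implementation.
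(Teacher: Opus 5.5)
The paper does not reprove this statement. It imports it from Chan--Lewenstein, whose proof is the same greedy peeling with the graph form of BSG that you describe.

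The gap is the per-round guarantee you attribute to BSG. The graph BSG theorem does not give $|G \cap (A_i \times B_i)| \geq \Omega(\alpha n^2)$. It only guarantees $\Omega(\alpha^2 n^2)$ captured edges (Chan--Lewenstein state it as $\Omega(\alpha |A_i|\,|B|)$ with $|A_i| = \Omega(\alpha n)$), and no low-doubling pair can do better in general. Take $G$ to be the union of $1/\alpha$ complete blocks $A_j \times B_j$ with $A_j = B_j = \set{x M^j : 1 \leq x \leq \alpha n}$ for a huge base $M$. Then $|A| = |B| = n$, $|C| < 2n$ and $|G| = \alpha n^2$. Suppose $A' \times B'$ captures $c\alpha n^2$ edges. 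Since each block has only $\alpha n$ elements, this forces $|A'|, |B'| \geq cn$, and all but an $\alpha/c$ fraction of the pairs in $A' \times B'$ straddle two different blocks. Each cross sum $xM^j + yM^{j'}$ arises from at most two pairs, so $|A'+B'| = \Omega(c^2 n^2)$, far above $O(\alpha^{-5} t^3 n)$ once $n$ is large. So with the guarantee BSG actually provides, and the parameter fixed at the target $\alpha$, your count only yields $m = O(\alpha^{-2})$, which does not prove (ii).

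The missing idea is to apply BSG in round $i$ with the \emph{current} density $\alpha_i := |G_{i-1}|/n^2 \geq \alpha$. The sumset bound $O(\alpha_i^{-5} t^3 n) \leq O(\alpha^{-5} t^3 n)$ is unaffected, while round $i$ now removes $\Omega(\alpha_i^2 n^2)$ edges. Hence $\alpha_{i+1} \leq \alpha_i(1 - c\alpha_i)$, and therefore $1/\alpha_{i+1} \geq 1/\alpha_i + c$. Since $1/\alpha_1 \geq 1$ and the loop stops once $1/\alpha_i > 1/\alpha$, there are $O(1/\alpha)$ rounds.

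The same schedule repairs your deletion-cost accounting, which you left at $O(n^2/\alpha)$. Round $i$ removes $\Omega(\alpha_i |A_i|\,|B|) \geq \Omega(\alpha_i |A_i|\,|B_i|)$ edges, so using $1 - x \leq \ln(1/x)$ we get $\sum_i |A_i|\,|B_i| \leq O\bigl(n^2 \sum_i (\alpha_i - \alpha_{i+1})/\alpha_i\bigr) \leq O(n^2 (1 + \ln(1/\alpha)))$.

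Even so, your proposal does not establish the claimed $\tO(n^2)$ running time. Falling back to $\tO(\alpha^{-O(1)} n^2)$ proves a weaker statement, and you still have to show that the BSG calls themselves fit within the budget.
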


\begin{proof}[Proof of \cref{lem:equiv-mono}]
Let $(A, B)$ denote the given Monotone Min-Plus Convolution instance over the universe~$[u n]$ (where both $A,B$ are monotone), and rewrite these two sequences as
\begin{align*}
    A[i] &= A_1[i] \cdot u + A_0[i], \\
    B[j] &= B_1[j] \cdot u + B_0[j],
\end{align*}
where $A_0, B_0$ have entries at most $u$ and $A_1, B_1$ have entries at most $n$. In fact, we may assume without loss of generality that $A_0[i], B_0[j] < u / 2$ (by splitting both sequences into two subsequences, splitting the entries that are less than $u/2$ from the other entries, and computing their four pairwise min-plus convolutions).

Toward computing $A * B$ we maintain a sequence $C$, initialized with $\bot$. Over the course of the algorithm we will modify $C$ by updates of the form $C[k] \gets \min\set{C[k], A[i] + B[j]}$ (so that we never underestimate an output), claiming that, when the algorithm terminates, we have $A * B = C$. This also means that our algorithm can provide a witness $i$ for each entry $C[k]$ in the final output, as desired.

Let $\delta > 0$ be a constant to be determined, let $\Delta_X = \ceil{n^\delta}$ and let $\Delta_Y = \ceil{n^{2\delta}}$. Let $X$ denote the multiples of $\Delta_X$ in the range $[0, n]$, and let $Y$ denote the multiples of $\Delta_Y$ in the range $[0, n]$. We will think of $X \times Y$ as a grid in the plane $[0, n]^2$. The multiples of $\Delta_X$ (i.e., the vertical axes) are called \emph{anchors}, and we say that an integer is \emph{anchored} at the largest anchor smaller than that index. For an anchor $x \in X$, we define
\begin{alignat*}{3}
    L_A[x] &= \min_{x \leq i < x + \Delta_X} A[i], \qquad && L_B[x] &&= \min_{x \leq j < x + \Delta_X} B[j], \\
    U_A[x] &= \max_{x \leq i < x + \Delta_X} A[i], \qquad && U_B[x] &&= \max_{x \leq j < x + \Delta_X} B[j];
\end{alignat*}
i.e., the smallest and largest entries in the respective segments of $A$ and $B$ anchored at $x$. We call an anchor $x$ \emph{$A$-light} if $U_A[x] - L_A[x] < \Delta_Y$ and \emph{$A$-heavy} if $U_A[x] - L_A[x] \geq \Delta_Y$. Similarly define \emph{$B$-light} and \emph{$B$-heavy}. We say that an index $k$ is \emph{heavy} if it has a witness, $A[i] + B[j] = (A * B)[k]$, where $i$ is anchored at an $A$-heavy anchor or $j$ is anchored at a $B$-heavy anchor. Otherwise, we call~$k$ \emph{light}. We will now deal with the heavy and light indices $k$ individually.

\paragraph{Heavy Anchors}
We deal with the heavy indices by brute-force: For all $i \in [0, n]$ anchored at an $A$-heavy anchor, and all $j \in [0, n]$, update $C[i + j] \gets \min\set{C[i + j], A[i] + B[j]}$. Analogously, enumerate all $j \in [0, n]$ anchored at a $B$-heavy anchor and all $i \in [0, n]$ to perform the same update. This clearly correctly assigns $C[k]$ for all heavy $k$.

To bound the running time of this step, recall that $A$ and $B$ (and thus also $A_1$ and $B_1$) are monotone. Each $A$-heavy anchor witnesses a jump from $A[x]$ to $A[x + \Delta_X]$ of at least $\Delta_Y$. Thus, there can be at most $O(n / \Delta_Y)$ $A$-heavy anchors. But each anchor covers only $\Delta_X$ indices, and therefore the running time is $O(n/\Delta_Y \cdot \Delta_X \cdot n) = O(n^{2-\delta})$.

\paragraph{Light Anchors}
To deal with the light indices we rely on the BSG theorem. Let
\begin{align*}
    \mathcal A &= \set{(x, L_A[x]) : \text{$x \in X$ is $A$-light}}, \\
    \mathcal B &= \set{(y, L_B[y]) : \text{$y \in X$ is $B$-light}}.
\intertext{Moreover, compute the following sequence by brute-force, }
L_C[z] & = \min\{L_A[x]+L_B[z-x]: x\in X \text{ is $A$-light and $z-x\in X$ is $B$-light}\}\quad (z\in 2X),
\intertext{and let}
    \mathcal C &= \set{(z, L_C[z] + h) : z \in 2X,\, 0 \leq h < 2\Delta_Y}.
\end{align*}
We apply the BSG Covering Theorem (\cref{thm:bsg-covering}) on $\mathcal A, \mathcal B, \mathcal C$ with parameters $t = 4\Delta_Y$, $\alpha = n^{-\delta}$ and size bound $|X|$,  yielding subsets $\mathcal A_1, \dots, \mathcal A_m \subseteq \mathcal A$ and $\mathcal B_1, \dots, \mathcal B_m \subseteq \mathcal B$ and a remainder set $\mathcal R \subseteq \mathcal A \times \mathcal B$. We enumerate all remainder pairs $((x, L_A[x]), (y, L_B[y])) \in \mathcal R$. For each such pair, list all indices $i$ anchored at $x$, and all indices $j$ anchored at $y$, and update $C[i + j] \gets \min\set{C[i + j], A[i] + B[j]}$.

Then enumerate all pairs $\mathcal A_\ell, \mathcal B_\ell$ for $1 \leq \ell \leq m$. Let $X_\ell$ be the subset of anchors $x$ appearing in~$\mathcal A_\ell$, and similarly let $Y_\ell$ be the subset of anchors $y$ appearing in $\mathcal B_\ell$. Repeat the following steps $100 \log n$ times: Let $\hat p = \Theta(n \alpha^{-5} t^3 \Delta_Y \log u)$ (for a sufficiently large constant), and let $p$ be a random prime in the range $[\hat p, 2\hat p]$. Then construct the cyclic sequences $P, Q \in ([0, u] \cup \set{\bot})^{\Int / p \Int}$ by
\begin{align*}
    P[\hat i] &=
    \begin{cases}
        \mathrlap{A_0[i]}\phantom{B_0[j]} &\text{for some arbitrary $i \in [0, n]$ anchored in $X_\ell$ with $\hat i \equiv i \cdot 2n + A_1[i] \mod p$,} \\
        \bot &\text{if such $i$ does not exist,}
    \end{cases} \\
    Q[\hat j] &=
    \begin{cases}
        B_0[j] &\text{for some arbitrary $j \in [0, n]$ anchored in $Y_\ell$ with $\hat j \equiv j \cdot 2n + B_1[j] \mod p$,} \\
        \bot &\text{if such $j$ does not exist.}
    \end{cases}
\end{align*}
We compute the cyclic min-plus convolution $P * Q$; it is well-known that this reduces to $O(1)$ non-cyclic instances. Moreover, we list some witness of $\hat i$ for each output entry $(P * Q)[\hat k]$ by \cref{lem:min-plus-conv-listing}. For each such pair, letting $\hat j = \hat k - \hat i$, we determine the indices $i$ and $j$ corresponding to $\hat i$ and $\hat j$ that were used in defining $P[\hat i]$ and $Q[\hat j]$ (via $\hat i \equiv i \cdot 2n + A_1[i]$, and analogously for $j$) and update $C[i + j] \gets \min\set{C[i + j], A[i] + B[j]}$.

\paragraph{Correctness}
Let us analyze the correctness of the light step. Our goal is to show that, for each light index $k$, we eventually correctly assign $C[k] = (A * B)[k]$. Fix any such $k$, let $i$ be a witness of $(A * B)[k]$, and let $j = k - i$. Let $x$ be the anchor of $i$, and let $y$ be the anchor of $j$. Since $k$ is light, we know that~$x$ is \emph{$A$-light} and that $y$ is \emph{$B$-light}. Thus, $(x, L_A[x]) \in \mathcal A$ and $(y, L_B[y]) \in \mathcal B$. Let $z = x + y$. We claim that
\begin{equation} \label{lem:equiv-mono:eq:y}
    L_A[x] + L_B[y] < L_C[z] + 2\Delta_Y.
\end{equation}
Suppose otherwise, and let $x' + y' = z$ be such that $L_A[x'] + L_B[y'] = L_C[z]$, $x'\in X$ is $A$-light, and $y'\in X$ is $B$-light. Let $i' = i - x + x'$ and note that $i'$ is anchored at $x'$. Similarly, let $j' = j - y + y'$ and note that $j'$ is anchored at $y'$. Thus,
\begin{align*}
    A[i'] + B[j']
    &\leq U_A[x'] + U_B[y'] \\
    &< L_A[x'] + L_B[y'] + 2\Delta_Y \\
    &= L_C[z] + 2\Delta_Y \\
    &\leq L_A[x] + L_B[y] \\
    &\leq A[i] + B[j],
\end{align*}
which, together with $i' + j' = k$, contradicts the assumption that $i$ is a witness of $(A * B)[k]$. Therefore, \eqref{lem:equiv-mono:eq:y} holds and by definition, $(z, L_A[x] + L_B[y]) \in \mathcal C$.

By Property~\ref{thm:bsg-covering:itm:covering} of \cref{thm:bsg-covering} there are two possible cases: the pair $((x, L_A[x]), (y, L_B[y]))$ appears in the remainder $\mathcal R$ or in some set $\mathcal A_\ell \times \mathcal B_\ell$. In the former case we directly enumerate the pair $(i, j)$ in the algorithm and set $C[k]$ as intended. So focus on the latter case. Focus on the $\ell$-th iteration, and let $p$ be a random prime as in the algorithm. Define
\begin{align*}
    \hat i &= (i \cdot 2n + A_1[i]) \bmod p, \\
    \hat j &= (j \cdot 2n + B_1[j]) \bmod p, \\
    \hat k &= (k \cdot 2n + A_1[i] + B_1[j]) \bmod p.
\end{align*}
We now argue that three success events happen simultaneously with good probability.

First, $P[\hat i] = A_0[i]$, unless there is a collision of the form $i \cdot 2n + A_1[i] \equiv i' \cdot 2n + A_1[i'] \mod p$. By the Prime Number Theorem, for any fixed $i'$ (distinct from $i$) this collision takes place only with probability $O(\hat p^{-1}\log u  )$. This is at most $1/4n$ by choosing the constant in $\hat p$ sufficiently large. Taking a union bound over all $n$ choices of $i'$, it follows that we avoid all collisions with probability at least~$3/4$. Second, we analogously have $Q[\hat j] = B_0[j]$ with probability at least~$3/4$.

The third success event is that the only indices $i', j'$, anchored in $X_\ell$ and $Y_\ell$, respectively, and satisfying
\begin{equation*}
    (i' + j') \cdot 2n + A_1[i'] + B_1[j'] \equiv \hat k \mod p,
\end{equation*}
are those for which $i' + j' = i + j = k$ and $A_1[i'] + B_1[j'] = A_1[i] + B_1[j]$. To analyze this event, note that $(i' + j') \cdot 2n + A_1[i'] + B_1[j']$ takes only $O(|\mathcal A_\ell + \mathcal B_\ell| \Delta_X \Delta_Y)$ distinct values. Each undesired value, i.e., each value other than $(i + j) \cdot 2n + A_1[i] + B_1[j]$, collides with $\hat k$ (modulo $p$) with probability at most $O(\hat p^{-1}\log u )$, and thus the error probability is
\begin{equation*}
    O\parens*{\frac{|\mathcal A_\ell + \mathcal B_\ell| \Delta_X \Delta_Y \log u}{\hat p}} = O\parens*{\frac{\alpha^{-5} t^3 |X| \Delta_X \Delta_Y \log u}{\hat p}} = O\parens*{\frac{ \alpha^{-5} t^3 n \Delta_Y \log u}{\hat p}},
\end{equation*}
by Property~\ref{thm:bsg-covering:itm:doubling} of \cref{thm:bsg-covering}. Choosing the constant in~$\hat p$ sufficiently large, this error probability is at most $1/4$, say.

All three success events happen simultaneously with probability at least $1/4$. In this case we have $(P * Q)[\hat k] \leq P[\hat i] + Q[\hat j] = A_0[i] + B_0[j]$ by the first two events, and
\begin{equation*}
    (P * Q)[\hat k] \geq \min\set{A_0[i'] + B_0[j'] : i' + j' = k,\, A_1[i'] + B_1[j'] = A_1[i] + B_1[j]}
\end{equation*}
by the third event. Put together, it follows that $(P * Q)[\hat k] = A_0[i] + B_0[j]$, and that any witness $\hat i$ corresponds to a witness $i'$ of $k$. Consequently, we correctly set $C[k]$. Over the $100 \log n$ iterations this event happens at least once with high probability.

\paragraph{Running Time}
We finally analyze the running time. Suppose that Min-Plus Convolution over the universe $[u]$ is in time $O(n^{2-\epsilon})$ for some $\epsilon > 0$.

The heavy step runs in time $O(n^{2-\delta})$. The light step first applies the BSG Covering theorem (\cref{thm:bsg-covering}) in time $\tO(|X|^2) = \tO(n^{2-2\delta})$. We then enumerate all pairs in the remainder $\mathcal R$, and for each such pair further enumerate $\Delta_X^2$ pairs of indices amounting to time $O(|\mathcal R| \Delta_X^2) = O(\alpha |X|^2 \Delta_X^2) = O(\alpha n^2) = O(n^{2-\delta})$, by Property~\ref{thm:bsg-covering:itm:remainder} of \cref{thm:bsg-covering}. We then enumerate all $m = O(\alpha^{-1}) = O(n^\delta)$ pairs of sets $\mathcal A_\ell, \mathcal B_\ell$, where the upper bound is due to Property~\ref{thm:bsg-covering:itm:doubling} of \cref{thm:bsg-covering}, and run $O(\log n)$ repetitions of the randomized construction. In each repetition we construct and solve a length-$\Theta(\hat p)$ instance of Min-Plus Convolution. Each such instance can be trivially split into $\ceil{\hat p / n}^2$ subinstances of length at most $n$. The total time to solve these instances is
\begin{equation*}
    \tO\parens*{m \cdot \frac{\hat p^2}{n^2} \cdot n^{2-\epsilon}} = \tO\parens*{\alpha^{-11} t^6 \Delta_Y^2 \cdot n^{2-\epsilon}} = \tO(n^{2+27\delta-\epsilon}).
\end{equation*}
Choosing $\delta = \epsilon/28 > 0$, the total running time becomes $\tO(n^{2-\delta})$. This is truly subquadratic as claimed.
\end{proof}

\subsection{Large-Universe Reduction} \label{sec:univ-reduct:sec:lg}
We are finally in the position to complete the large-universe reduction: 

\begin{lemma}[Large-Universe Reduction for Min-Plus Convolution] \label{lem:univ-reduct-lg}
Assume \cref{hypo:sum-preserving}. If Min-Plus Convolution over the universe $[n]$ can be solved in time $O(n^{2-\epsilon})$ for some constant $\epsilon > 0$, then Min-Plus Convolution over the universe $[u]$ can be solved in time $O(n^{2-\epsilon'} (\log u)^{O(1)})$ for some constant $\epsilon' > 0$.
\end{lemma}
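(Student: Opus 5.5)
The plan chains three earlier results: the Doubling Reduction (\cref{lem:doubling-reduct}), \cref{hypo:sum-preserving}, and the Monotone universe reduction (\cref{lem:equiv-mono}). Throughout, assume Min-Plus Convolution over $[n]$ runs in time $O(n^{2-\epsilon})$.

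\textbf{Step 1: reduce to listing instances with a low-doubling container.} Apply \cref{lem:doubling-reduct} to the given instance $(A,B)$ over $[u]$, with small constants $\rho$ and $\epsilon_0$. This produces $O(n^{\epsilon_0}\log u)$ Min-Plus Convolution Listing instances $(A',B')$ with threshold $O(n^{2\rho})$. Each comes with a $(K,N)=(O(n^{\epsilon_0}),O(n^{2-2\rho}))$-container $\mathcal H$ of $\mathcal I(A')\cup\mathcal I(B')$, flattened to integers via the standard embedding. By the second part of \cref{lem:min-plus-conv-listing}, each listing instance costs $\tO(n^{2\rho})$ plain min-plus convolutions of subinstances, which keep the same container. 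It therefore suffices to solve, in time $O(n^{2-\epsilon''})$, a plain instance whose joint index set lies in a $(K,N)$-container. The overhead $n^{2\rho+\epsilon_0}$ is absorbed by taking $\rho,\epsilon_0$ small relative to $\epsilon''$.

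\textbf{Step 2: compress the container with the hypothesis.} Apply \cref{hypo:sum-preserving} to $\mathcal H$. This gives $\mathcal Y\subseteq\mathcal H$ with $|\mathcal Y|\ge|\mathcal H|/K^c$ and a sum-order-preserving map $h:\mathcal Y\to\{0,\dots,|\mathcal H|\}$. To cover all of $\mathcal H$, remove $\mathcal Y$ and reapply the hypothesis to the remainder. The remainder is still small-doubling relative to its size as long as it has size $\ge|\mathcal H|/K^{O(1)}$, since $|R+R|\le|\mathcal H+\mathcal H|\le K|\mathcal H|$. After $K^{O(1)}\log n$ rounds, $\mathcal H$ is partitioned into pieces $\mathcal Y_1,\dots,\mathcal Y_m$ with maps $h_s$; a leftover of size below $|\mathcal H|/K^{O(1)}$ can be handled by brute force if it is small enough, or the procedure is simply iterated. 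Then split the instance into $m^2=n^{O(\epsilon_0)}$ subinstances: for each pair $(s,s')$, restrict $A'$ to indices with $(i,A'[i])\in\mathcal Y_s$ and $B'$ to indices with $(j,B'[j])\in\mathcal Y_{s'}$.

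The main obstacle is that $h_s$ only preserves the order of sums within a single piece $\mathcal Y_s$. For a pair $s\ne s'$ I would instead use $\mathcal Y_s\cup\mathcal Y_{s'}$, which still has doubling $K^{O(1)}$ relative to its size. I would first try to arrange that every union of two pieces is itself compressed by a single application of the hypothesis. Alternatively, one could apply the hypothesis to $\mathcal H$ once and accept covering only a $K^{-c}$ fraction of the witness pairs. These would then be peeled off by the ``never underestimate'' update strategy used in \cref{lem:doubling-reduct}, where every witness of every $k$ only needs to be caught in some round. I expect this bookkeeping to be the delicate part.

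\textbf{Step 3: translate to a monotone instance and solve it.} Given one map $h$ defined on the relevant index set, the points $(i,A'[i])$ and $(j,B'[j])$ are sent to integers in $\{0,\dots,N\}$ with $N=O(n^{2-2\rho})$. Sum-order preservation says: for a fixed output index $k$, comparing $A'[i]+B'[k-i]$ with $A'[i']+B'[k-i']$ agrees with comparing the corresponding sums of $h$-values. To make the index coordinate dominate, I would prepend the index. Concretely, set $A''[i]=h(i,A'[i])$, and note that order preservation on the embedded pairs forces $h$ to be monotone along the index coordinate once the embedding puts the index in the high bits. Hence $A''$ and $B''$ are monotone sequences over the universe $[N]\subseteq[n\cdot n^{1-2\rho}]$, and a witness of $A''*B''$ at $k$ is a witness of $A'*B'$ at $k$.

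Now invoke \cref{lem:equiv-mono} with $u=n$: Monotone Min-Plus Convolution over $[un]\supseteq[N]$ is solvable in $O(n^{2-\epsilon'})$ and returns witnesses. Mapping the witnesses back gives the true values $A'[i]+B'[k-i]$. Finally, choose $\rho$ and $\epsilon_0$ small so that the total cost $n^{O(\epsilon_0)+2\rho}\cdot n^{2-\epsilon'}\cdot(\log u)^{O(1)}$ remains truly subquadratic.
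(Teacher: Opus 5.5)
Your Step 1 matches the paper, but Step 2 has a genuine gap, and it is exactly the obstacle you flag. Once $\mathcal H$ is split into pieces $\mathcal Y_s$ with separate maps $h_s$, the subinstance for a pair $(s,s')$ needs $x+y<x'+y' \Rightarrow h_s(x)+h_{s'}(y)<h_s(x')+h_{s'}(y')$ for $x,x'\in\mathcal Y_s$ and $y,y'\in\mathcal Y_{s'}$. Two independently obtained sum-order-preserving maps give no such guarantee.

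Neither fallback repairs this. Applying \cref{hypo:sum-preserving} to $\mathcal Y_s\cup\mathcal Y_{s'}$ only returns a subset of relative size $K^{-O(1)}$, not a map on the whole union. Applying it once and ``peeling off'' never catches witness pairs with one endpoint in $\mathcal Y$ and the other outside, and recursing on $\mathcal H\setminus\mathcal Y$ recreates the same cross-pair problem. The leftover step of your partition is also unsupported: the uncovered part of the container can still contain essentially all $\Theta(n)$ input points, so brute force is not available there.

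The missing idea is to cover the container by \emph{translates of a single} compressed set. Write $\phi(i,a)$ for the flattened point, with the index in the high bits.
\begin{itemize}
\item Apply the hypothesis once to $\phi(\mathcal H)$, obtaining $\mathcal Y$ and $h$.
\item Use greedy covering (\cref{lem:sumset-cover}) to find shifts $S$ with $\phi(\mathcal H)\subseteq\mathcal Y+S$ and $|S|\le\frac{|\mathcal Y-\phi(\mathcal H)|}{|\mathcal Y|}\ln|\mathcal H|=O(K^{c+2}\log n)$. This uses $|\mathcal Y-\phi(\mathcal H)|\le|\phi(\mathcal H)-\phi(\mathcal H)|\le K^2|\mathcal H|$, by Pl\"unnecke--Ruzsa.
\item For every pair $(s,t)\in S^2$, set $A_s[i]=h(\phi(i,A[i])-s)$ when $\phi(i,A[i])\in\mathcal Y+s$, and $B_t[j]=h(\phi(j,B[j])-t)$ when $\phi(j,B[j])\in\mathcal Y+t$, with $\bot$ otherwise.
\end{itemize}
Both $A$-side points are shifted by the same $s$ and both $B$-side points by the same $t$. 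So the comparison $x+y<x'+y'$ is literally the comparison $(x-s)+(y-t)<(x'-s)+(y'-t)$ between sums of elements of $\mathcal Y$, which the single $h$ preserves. Hence, for the pair $(s,t)$ covering the two endpoints of a witness $(i^*,j^*)$, every witness of $(A_s*B_t)[k^*]$ is a witness of $(A*B)[k^*]$.

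Step 3 also skips one ingredient. These restricted sequences are monotone only on their non-$\bot$ entries, so you must replace them by suffix minima. You then need to check that the suffix-minimum convolution has the same witnesses at $k^*$. This holds because the index sits in the high bits of $\phi$: any pair with $i'+j'>k^*$ has a strictly larger $h$-sum than $(i^*,j^*)$.

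With these changes you get $n^{O(\epsilon_0)}$ (up to polylogarithmic factors) monotone instances over $[n^2]$ per plain instance, and your parameter choice goes through.
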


In other words, \cref{lem:univ-reduct-lg} states that the Strong Min-Plus Convolution Hypothesis implies the Min-Plus Convolution Hypothesis assuming \cref{hypo:sum-preserving}. This proves \cref{thm:strong-min-plus-conv-equal}.

For the proof we will also rely on the following simple covering lemma, which essentially constitutes an instance of the classic greedy algorithm for Set Cover. For a proof see, e.g.,~\cite[Lemma~5.11]{Fischer26}.

\begin{lemma}[Greedy Covering] \label{lem:sumset-cover}
There is a deterministic algorithm that, given sets $X, Y \subseteq [-u, u]$, computes a set $S \subseteq [-2u, 2u]$ of size at most $|Y - X| / |Y| \cdot \ln |X|$ such that $X \subseteq Y + S$. It runs in time $|Y - X| \, |S| \cdot (\log u)^{O(1)}$.
\end{lemma}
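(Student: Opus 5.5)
The statement just above is \cref{lem:sumset-cover}, the greedy covering lemma; the proof is the textbook greedy Set Cover analysis, and the only real content is an averaging bound plus an implementation trick. The ground set is $X$, and each candidate $s \in X - Y$ contributes the set $(Y + s) \cap X$. Only shifts in $X - Y \subseteq [-2u, 2u]$ ever cover anything, so we restrict to these.

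\textbf{Averaging bound.} Let $X' \subseteq X$ be any nonempty set of uncovered elements. Count the pairs $(x, y) \in X' \times Y$. Each such pair determines the shift $s = x - y$, and for that shift $x \in Y + s$. Hence
\[
\sum_{s \in X - Y} \abs{(Y + s) \cap X'} = |X'|\,|Y|.
\]
The number of distinct shifts is $|X - Y| = |Y - X|$. So some shift covers at least $|X'|\,|Y| / |Y - X|$ elements of $X'$.

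\textbf{Greedy analysis.} The algorithm repeatedly picks a shift maximizing the number of newly covered elements of $X$. By the averaging bound, each step shrinks the uncovered set by a factor of at least $1 - |Y|/|Y - X|$. After $m$ steps at most $|X| \exp(-m|Y|/|Y - X|)$ elements remain uncovered. Taking $m$ slightly above $|Y - X|/|Y| \cdot \ln|X|$ leaves fewer than one, hence zero. Making the bound exactly $\le |Y - X|/|Y| \cdot \ln|X|$ needs the usual care: either the sharper harmonic-number estimate, or handling the last uncovered element, which some shift always covers. This gives $|S| \le |Y - X|/|Y| \cdot \ln|X|$ and $X \subseteq Y + S$.

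\textbf{Implementation.} We maintain the counts $c(s) = \abs{(Y + s) \cap X'}$ for all $s \in X - Y$, where $X'$ is the current uncovered set.
\begin{itemize}
  \item \emph{Initialization.} The function $c = r_{X' + (-Y)}$ is computed by one sparse convolution (\cref{lem:sparse-conv}) in time $|X - Y| (\log u)^{O(1)}$.
  \item \emph{Each greedy step.} We recompute $c$ from scratch with the new $X'$, again by one sparse convolution, and then take the argmax.
\end{itemize}
Each step costs $|X - Y| (\log u)^{O(1)}$, and there are $|S|$ steps, for a total of $|Y - X|\,|S| (\log u)^{O(1)}$, matching the claim.

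The main obstacle is only bookkeeping: fitting the constant in the $\ln|X|$ bound exactly, and noting that recomputation costs $|X' - Y| \le |X - Y|$ per round. Both are routine.
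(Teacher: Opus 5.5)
\textbf{Verdict.} Your approach is the standard greedy Set Cover argument that the paper defers to (via Fischer's Lemma~5.11), and the running-time part is right. The double count $\sum_s |(Y+s)\cap X'| = |X'|\,|Y|$ gives, in every round, a shift covering at least a $|Y|/|Y-X|$ fraction of the uncovered set. Recomputing $r_{X'-Y}$ by one sparse convolution per round then gives exactly the $|Y-X|\,|S|\,(\log u)^{O(1)}$ bound.

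\textbf{The gap: the exact constant.} The step that does not go through is your claim that ``the usual care'' yields exactly $|S|\le \frac{|Y-X|}{|Y|}\ln|X|$. No argument can, because the bound as literally stated is false.
\begin{itemize}
\item For $|X|=1$ it forces $S=\emptyset$.
\item For $X=\{0,1\}$ and $Y=\{0\}$ it reads $|S|\le 2\ln 2<2$. But $X\subseteq Y+S$ requires $S\supseteq\{0,1\}$.
\end{itemize}
The harmonic-number estimate does not rescue this either: it gives $\frac{|Y-X|}{|Y|}\sum_{i\le |X|}1/i$, which is weaker than $\frac{|Y-X|}{|Y|}\ln|X|$, not sharper.

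\textbf{What your argument actually proves.} For $|X|\ge 2$ you have $|Y-X|>|Y|$, since two distinct translates of a finite nonempty set cannot coincide. Hence, with $\beta=|Y|/|Y-X|<1$, after $m\ge 1$ rounds at most $|X|(1-\beta)^m<|X|e^{-\beta m}$ elements remain uncovered. Taking $m=\lceil \ln|X|/\beta\rceil$ leaves none. So greedy stops within $\lceil \frac{|Y-X|}{|Y|}\ln|X|\rceil$ rounds, and the example above shows this rounding is necessary.

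\textbf{Why this is enough.} That rounded bound (or the same bound plus $1$) is all the paper needs. The proof of \cref{lem:univ-reduct-lg} only uses $|S|=O(K^{c+2}\log|H|)$. You should state and prove the lemma in that form rather than assert the unrounded constant.
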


\begin{proof}[Proof of \cref{lem:univ-reduct-lg}]
Let $\rho, \epsilon > 0$ be two constants to be determined. We apply \cref{lem:doubling-reduct} to reduce any given Min-Plus Convolution instance over the universe $[u]$ to $O(n^\epsilon)$ instances of Min-Plus Convolution Listing with listing threshold $t = O(n^{2\rho})$. Each such instance $(A', B')$ is equipped with a $(K = O(n^\epsilon), N = O(n^{2-2\rho}))$-container $H$ for the joint index set $\mathcal I(A') \cup \mathcal I(B')$. Focus on one such instance~$(A', B')$ for now.

Let $\phi : \Int^2 \to \Int$ be the map defined by $\phi(i, a) = i  \cdot 10M + a$, where $M$ is an upper bound on the largest entry in $H$. Recall that $H \subseteq [2n] \times \Int$, and that $H$ has doubling at most $K$. It follows easily that $\phi$, restricted to $H + H$, is an isomorphism, and thus $\phi(H)$ also has doubling at most $K$. Thus, we may apply \cref{hypo:sum-preserving} on the integer set~$\phi(H)$. This yields a subset $Y \subseteq \phi(H)$ of size at least
\begin{equation*}
    |Y| \geq \Omega\parens*{\frac{|H|}{K^c}},
\end{equation*}
and a sum-order-preserving hash function $h : Y \to \set{0, \dots, N}$, where $c$ is the constant from \cref{hypo:sum-preserving}. Next, apply \cref{lem:sumset-cover} to compute a set of shifts $S$ such that $\phi(H) \subseteq Y + S$ and such that
\begin{equation*}
    |S| \leq \frac{|Y - \phi(H)|}{|Y|} \cdot \ln |H| = O\parens*{K^c \cdot \frac{|\phi(H) - \phi(H)|}{|H|} \cdot \log |H|} = O\parens{K^{c+2} \log |H|},
\end{equation*}
where the last step is due to the Plünnecke--Ruzsa inequality (\cref{lem:plunnecke-ruzsa}).

Next, we apply \cref{lem:min-plus-conv-listing} to reduce the Min-Plus Convolution Listing instance $(A', B')$ with listing threshold $t = O(n^{2\rho})$ to $\tO(n^{2\rho})$ plain Min-Plus Convolution subinstances (i.e., for which we do not have to list multiple witnesses). In what follows, fix any such instance, denoted $(A, B)$; we argue that it reduces to few instances of Monotone Min-Plus Convolution over the universe $[n^2]$. By \cref{lem:equiv-mono}, any such instance further reduces to (non-Monotone) Min-Plus Convolution over the universe $[n]$. Let $C$ be the output sequence, initialized to $\bot$. We will only update $C$ by updates of the form $C[k] \gets \min\set{C[k], A[i] + B[k - i]}$, guaranteeing that no output is underestimated.

Enumerate all shifts $s, t \in S$. For each pair, construct the sequences $A_s, B_t \in ([0, N] \cup \set{\bot})^{[0, n]}$ defined by
    \begin{align*}
        A_s[i] &=
        \begin{cases}
            \mathrlap{h(\phi(i, A[i]) - s)}\phantom{h(\phi(j, B[j]) - t)} &\text{if $\phi(i, A[i]) \in Y + s$,} \\
            \bot &\text{otherwise,}
        \end{cases} \\
        B_t[j] &=
        \begin{cases}
            h(\phi(j, B[j]) - t) &\text{if $\phi(j, B[j]) \in Y + t$,} \\
            \bot &\text{otherwise.}
        \end{cases}
    \end{align*}
Note that both sequences are almost monotone: The sequence $\phi(i, A[i])$ is increasing for $i = 0, \dots, n$ (when restricted to the non-$\bot$ entries), and hence, so are the images of $\phi(i,A[i])-s \in Y$ under the order-preserving function $h$. To remove the $\bot$ entries simply replace both sequences by their suffix-minima, i.e., replace $A_s$ and~$B_t$ by
\begin{align*}
    A_s'[i] &= \min_{i' \geq i} A_s[i'], \\
    B_t'[j] &= \min_{j' \geq j} B_t[j'].
\end{align*}
We compute the Monotone Min-Plus Convolution $A_s' * B_t'$ using the reduction provided by \cref{lem:equiv-mono}. Moreover, by \cref{lem:equiv-mono}, we know one witness $i$ per output $(A_s' * B_t')[k]$; then update $C[k] \gets \min\set{C[k], A[i] + B[k - i]}$. 

\paragraph{Correctness}
We argue that once the algorithm stops, we have correctly assigned $C$. Fix any witness $i^*$ of $(A * B)[k^*]$, and let $j^* = k^* - i^*$. Since $H$ is a container of the index sets $\mathcal I(A)$ and $\mathcal I(B)$, we must have $(i^*, A[i^*]), (j^*, B[j^*]) \in H$.
Thus, by $\phi(H)\subseteq Y+S $, there exist $s, t \in S$ such that $\phi(i^*, A[i^*]) \in Y + s$ and $\phi(j^*, B[j^*]) \in Y + t$. Focus on the iteration associated with that pair~$s, t$.

We first claim that any witness $i$ of $(A_s * B_t)[k^*]$ must also be a witness of $(A * B)[k^*]$. Indeed, take any other pair $i + j = k^*$ and assume that $i$ is not a witness of $(A * B)[k^*]$, i.e.,
\begin{equation*}
    A[i^*] + B[j^*] < A[i] + B[j].
\end{equation*}
Then clearly also
\begin{equation*}
    \phi(i^*, A[i^*]) + \phi(j^*, B[j^*]) < \phi(i, A[i]) + \phi(j, B[j]),
\end{equation*}
and thus,
\begin{equation*}
    (\phi(i^*, A[i^*]) - s) + (\phi(j^*, B[j^*]) - t) < (\phi(i, A[i]) - s) + (\phi(j, B[j]) - t),
\end{equation*}
But $h$ is sum-order-preserving and therefore,
\begin{equation*}
    A_s[i^*] + B_t[j^*] < A_s[i] + B_t[j],
\end{equation*}
for all non-$\bot$ entries, so $i$ cannot be a witness of $(A_s*B_t)[k^*]$. This proves that any witness $i$ of $(A_s * B_t)[k^*]$ is also a witness of $(A * B)[k^*]$ (though the converse is not necessarily true; notably, $i^*$ is not necessarily a witness of $(A_s * B_t)[k^*]$).

We finally argue that any witness of $(A_s' * B_t')[k^*]$ is also a witness of $(A_s * B_t)[k^*]$. Let $i$ be a witness of $(A_s' * B_t')[k^*]$. Let $j = k^* - i. $ By definition, there exist $i' \geq i$ and $j' \geq j$ such that
\begin{equation*}
    (A_s' * B_t')[k^*] = A_s[i'] + B_t[j'].
\end{equation*}
Suppose that $i' + j' > k^*$. Then
\begin{equation*}
    \phi(i^*, A[i^*]) + \phi(j^*, B[j^*]) < \phi(i', A[i']) + \phi(j', B[j']),
\end{equation*}
and thus by the same reasoning as above,
\begin{equation*}
    A_s[i^*] + B_t[j^*] < A_s[i'] + B_t[j'].
\end{equation*}
But then we reach the contradiction that
\begin{equation*}
    (A_s' * B_t')[k^*] \leq A'_s[i^*] + B'_t[j^*]\leq A_s[i^*] + B_t[j^*] < A_s[i'] + B_t[j'] = (A_s' * B_t')[k^*],
\end{equation*}
and must have $i' + j' = k^*$ instead. Combined with $i' \geq i$, $j' \geq j$ and $i + j = k^*$, we conclude that $i = i'$ and $j = j'$. Recall that we initially picked $i$ as an arbitrary witness of $(A_s' * B_t')[k^*]$. We have thus argued that
\begin{equation*}
    (A_s' * B_t')[k^*] = \min_{i+j=k^*} (A_s[i] + B_t[j]),
\end{equation*}
and so the witnesses of $(A_s' * B_t')[k^*]$ and $(A_s * B_t)[k^*]$ are identical.

\paragraph{Running Time}
Initially, we call \cref{lem:doubling-reduct} to reduce from one given instance to $O(n^{\epsilon})$ listing instances, each provided with a low-doubling container. \cref{lem:doubling-reduct} runs in time $O(n^{2-\delta})$ for some constant $\delta = \delta(\rho, \epsilon) > 0$. For each instance we call the algorithm implied by \cref{hypo:sum-preserving} on the container set~$\phi(H)$. This set has size $N = O(n^{2-2\rho})$ and doubling $K = O(n^\epsilon)$, and so this takes time $O(n^{2-2\rho+O(\epsilon)})$. We then apply \cref{lem:sumset-cover} in time
\begin{equation*}
    |Y - \phi(H)| \, |S| (\log u)^{O(1)} = N K^{2c+2} (\log u)^{O(1)} = n^{2-2\rho+O(\epsilon)} (\log u)^{O(1)}.
\end{equation*}

Then we reduce each listing instance (with threshold $t = O(n^{2\rho})$) further to $\tO(n^{2\rho})$ plain Min-Plus Convolution subinstances by \cref{lem:min-plus-conv-listing}. This leaves us with $\tO(n^{2\rho+\epsilon})$ instances that the main part of the algorithm treats. For each pair of shifts $s, t \in S$ we construct the sequences $A_s, B_t$ and~$A_s', B_t'$ in linear time. Assuming the premise of \cref{lem:univ-reduct-lg}, by \cref{lem:equiv-mono} we can compute their convolution in time $O(n^{2-\epsilon_0})$ for some $\epsilon_0 > 0$. The total time is
\begin{equation*}
    \parens*{n^{2-\delta} + n^{2-2\rho+O(\epsilon)} + n^{2\rho+\epsilon+2-\epsilon_0}} (\log u)^{O(1)}.
\end{equation*}
This becomes subquadratic for, say, $\rho = \epsilon_0 / 8$ and by choosing $0 < \epsilon \leq \epsilon_0 / 2$ sufficiently small so that the $O(\epsilon)$ term becomes at most $\rho$.
\end{proof}
\section{Lower Bounds for Intermediate Convolution Problems}

In this section, we show our lower bounds for intermediate problems under the Strong Min-Plus Convolution Hypothesis (also under the standard Min-Plus Convolution Hypothesis if we assume \cref{hypo:sum-preserving}). 

We start with a useful lemma, which separately handles those heavy input entries. 

\begin{lemma}
\label{lem:no-too-frequent-elements}
   Let $(A, B)$ be a given Min-Plus Convolution Listing instance with listing threshold $n^\ell$ for some $\ell \ge 1/2$. Suppose $C = A * B$ is also given and let $\delta > 0$ be a given constant. Then in $\tO(n^{2-\delta/2})$ time, we can reduce it to a Min-Plus Convolution Listing instance $(A, B')$, where all entries in $B$ that appear more than $n^{\ell+\delta}$ times are replaced with $\perp$. 
\end{lemma}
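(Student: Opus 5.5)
The plan is to handle all witnesses that involve a frequent entry of $B$ directly, and then delete those entries. Call a value $v$ \emph{heavy} if it occurs more than $n^{\ell+\delta}$ times in $B$. Since $B$ has $n+1$ entries, there are at most $n^{1-\ell-\delta}$ heavy values, and they can be found in linear time by counting. The reduced instance $(A,B')$ is obtained by setting every position $j$ with $B[j]$ heavy to $\perp$. What remains is to list, for every $k$, the witnesses $(i,j)$ of $C[k]=A[i]+B[j]$ with $B[j]$ heavy, up to the threshold $n^\ell$. The union of these lists with the output of the listing instance $(A,B')$ (truncated to $n^\ell$ per $k$) then answers the original listing instance. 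Note that $(A*B')[k]\ge C[k]$, so we only keep those witnesses of $(A,B')$ that attain the given value $C[k]$.

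Fix a heavy value $v$ and let $J_v=\{j : B[j]=v\}$. A pair $(i,j)$ with $j\in J_v$ is a witness of $k=i+j$ exactly when $A[i]=C[k]-v$. Define the Boolean sequences $S_v[j]=[B[j]=v]$ and, for the side of $A$, note that the condition couples $i$ and $k$. Rewrite it as $A[i]+v=C[i+j]$: for fixed $v$, consider the indexed sets $\{(i,A[i]+v)\}$, $\{(j,0):j\in J_v\}$ and $\{(k,C[k])\}$. We need all solutions of the 2D sum $(i,A[i]+v)+(j,0)=(k,C[k])$. Equivalently, for each $k$ we need the $i$ with $A[i]=C[k]-v$ and $k-i\in J_v$.

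The key step is to realize this as a Boolean convolution with listing (\cref{lem:listingboolean}), and this is also where the main obstacle lies. The value-matching condition is what prevents a plain convolution. My first attempt is to bucket by the value $w=A[i]$, which reduces per-$v$ work to sums over distinct values and is too slow in general. Instead I would budget per heavy value. For each $v$, list up to $n^{\ell}$ pairs $(i,k)$ with $k-i\in J_v$ and $A[i]+v=C[k]$. I would do this by random sampling of $J_v$ combined with FFT-based listing over the two-dimensional embedding $\phi(i,a)=i\cdot M+a$: compute the Boolean sumset of $\phi(\mathcal I(A)+ (0,v))$ and $\{\phi(j,0):j\in J_v\}$ restricted to the targets $\phi(k,C[k])$. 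Directly this costs time $\tO(n\cdot n^{\ell})$ per $v$ via \cref{lem:listingboolean}, once the universe is compressed by hashing the second coordinate modulo a random prime of size about $n$ and verifying candidates.

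Multiplying by the at most $n^{1-\ell-\delta}$ heavy values gives $\tO(n^{2-\delta})$, which is within the claimed $\tO(n^{2-\delta/2})$. The remaining slack $n^{\delta/2}$ absorbs the verification of false positives from hashing. Concretely, I would set the number of listed candidates per $k$ to $n^{\ell+\delta/2}$, so that listing still finds the true witnesses among the spurious collisions with high probability; that is, there are few collisions per $k$ in expectation. The hard part I expect is controlling these collisions: I must argue that the hashed Boolean convolution produces at most $n^{\ell+\delta/2}$ spurious candidates per $k$, amortized. Alternatively, I would avoid hashing entirely. For each $v$, brute-force over $j\in J_v$ and $i$ is $|J_v|\cdot n$, and the sum over heavy $v$ of $|J_v|$ is at most $n$, which gives only $n^2$. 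That is why the listing-threshold cap of $n^\ell$ per $v$ and $k$ must be exploited. Once these are resolved, outputting $(A,B')$ completes the reduction.
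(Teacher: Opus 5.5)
Your framing is sound: mark the at most $n^{1-\ell-\delta}$ heavy values, blank them out in $B'$, and separately list up to $n^\ell$ witnesses per $k$ among pairs $(i,j)$ with $B[j]$ heavy. The gap is that this central listing step is never carried out, and the hashing route you sketch cannot work. Fix a heavy $v$. The $(n+1)|J_v|$ pairs $(i,j)\in[0,n]\times J_v$ give up to that many distinct points $(i+j,A[i]+v)$. Suppose you hash down to a length-$p$ Boolean convolution with $p=\Theta(n)$, so that \cref{lem:listingboolean} costs $\tO(tn)$. Then each target $(k,C[k])$ receives on the order of $n|J_v|/p=\Theta(|J_v|)>n^{\ell+\delta}$ spurious pairs in expectation. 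A list of $n^{\ell+\delta/2}$ candidates then contains a given true witness only with probability roughly $n^{\ell+\delta/2}/|J_v|<n^{-\delta/2}$, so the bound of $n^{\ell+\delta/2}$ spurious candidates per $k$ that you hoped for is false. Forcing that bound requires $p\approx n|J_v|/n^{\ell+\delta/2}$, and the listing then costs $\tO(n|J_v|)$ per $v$, i.e.\ $\tO(n^2)$ in total. Hashing only the value coordinate while keeping indices exact leaves a universe of size $\Theta(n^2)$ per $v$. The equality constraint $A[i]=C[k]-v$ cannot be absorbed into a near-linear Boolean convolution this way, and your proposal ends with this point unresolved.

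The missing idea is precisely the value bucketing you dismissed, made cheap by a bucket-dependent listing threshold plus amortized brute force. For each heavy $v$ and each value $u$ of $A$, let $I_u=\{i: A[i]=u\}$ and $K_{u+v}=\{k: C[k]=u+v\}$. Every pair in $I_u\times J_v$ summing into $K_{u+v}$ is a witness, so this is a plain Boolean convolution. Rather than computing it in full, list only $t_u=\lfloor |I_u|/n^{1-\ell-\delta/2}\rfloor$ witnesses per $k$ via \cref{lem:listingboolean}, skipping the bucket when $t_u=0$, at cost $\tO(t_u n+1)$. Summed over all $u$ and heavy $v$, this is $\tO(n^{1-\ell-\delta}(n^{1+\ell+\delta/2}+n))=\tO(n^{2-\delta/2})$. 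For a fixed $k$, $u=C[k]-v$ is determined by $v$, so at most $n^{1-\ell-\delta}$ buckets are relevant. Buckets with $t_u=0$, or whose listing saturated, are completed by scanning $I_u$ in $O(|I_u|)$ time, and only while fewer than $n^\ell$ witnesses of $k$ are known. A saturated bucket has already contributed $t_u$ witnesses, so the scanned buckets satisfy $\sum_u |I_u|/n^{1-\ell-\delta/2}\le n^{1-\ell-\delta}+n^\ell=O(n^\ell)$. This last step uses $\ell\ge 1/2$, a hypothesis your plan never touches. It gives $O(n^{1-\delta/2})$ per $k$ and $O(n^{2-\delta/2})$ overall.
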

\begin{proof}
Call a value $v$ in $B$ heavy if it appears more than $n^{\ell + \delta}$ times in $B$.
Note that there could be at most $O(n^{1-\ell - \delta})$ heavy value $v$. 
It suffices to list up to $n^\ell$ witnesses $(i, j)$ per output $C[k]$, among those witnesses where $B[j]$ is a heavy element. 

To do so, let $v$ be an arbitrary heavy element in $B$, and define $J_v = \{j \in [0, n]: B[j] = v\}$. For every distinct value $u$ in $A$, define $I_u = \{i \in [0, n]: A[i] = u\}$ and $K_{u+v} = \{k \in [0, 2n]: C[k] = u + v\}$. These sets can be precomputed in near-linear time. Clearly, a pair $(i, j) \in I_u \times J_v$ is a witness for $C[i+j]$ if $i+j \in K_{u+v}$. 

For a fixed pair of $u,v$, using FFT, we can verify whether each $k \in K_{u+v}$ has a witness from $I_u \times J_v$ in $\tO(n)$ total time for all $k$. Moreover, using standard sampling-based approaches (\cref{lem:listingboolean}), we can list up to $\lfloor \frac{|I_u|}{n^{1-\ell - \delta / 2}}\rfloor$ witnesses per $k \in K_{u+v}$ in $\tO(\lfloor \frac{|I_u|}{n^{1-\ell - \delta / 2}}\rfloor n + 1)$ time. The $+1$ term in  the previous running time is needed for the case $\lfloor \frac{|I_u|}{n^{1-\ell - \delta / 2}}\rfloor = 0$. Over all values of $v$ and $u$, the total cost for this case is thus (up to $\polylog(n)$ factor)
\[
\sum_u \sum_v \left( \floor*{ \frac{|I_u|}{n^{1-\ell - \delta / 2}}} n + 1 \right) \le n^{1-\ell -\delta}  \sum_u \left( \frac{|I_u|}{n^{1-\ell - \delta / 2}} \cdot n  + 1\right)\le O( n^{2-\delta / 2}). 
\]

Hence, it suffices to consider some $k \in K_{u+v}$ for  sub-instances $(I_u, J_v, K_{u+v})$ where $k$ has at least $\lfloor \frac{|I_u|}{n^{1-\ell - \delta / 2}}\rfloor$ witnesses (and we have not found up to $n^\ell$ witnesses in the input instance $(A, B)$ for $k$). In this case, we simply spend $O(|I_u|)$ time to list all the witnesses for $k$ in $(I_u, J_v, K_{u+v})$. Let $U_k$ be the set of $u$ where we need to perform this brute-force over $I_u$ for $k$. Note that for a fixed $k$, $C[k] = u + v$ is fixed; in addition, the number of heavy values $v$ is $O(n^{1-\ell -\delta})$, so the number of $u$ where $k \in K_{u + v}$ is also $O(n^{1-\ell -\delta})$. Hence $|U_k| \le O(n^{1-\ell -\delta})$. 

As we only need to output $n^\ell$ total witnesses for $k$, we have 
\[
\sum_{u \in U_k} \floor*{ \frac{|I_u|}{n^{1-\ell - \delta / 2}}} \le n^{\ell}, 
\]
which implies (recall we assume $\ell \ge 1/2$)
\[
\sum_{u \in U_k}  \frac{|I_u|}{n^{1-\ell - \delta / 2}} \le |U_k| + n^{\ell} \le n^{1-\ell-\delta} + n^\ell = O(n^\ell). 
\]
Hence, the cost for a particular value of $k$ is 
\[
O\left(\sum_{u \in U_k} |I_u| \right) \le O(n^\ell \cdot n^{1-\ell - \delta / 2}) = O(n^{1-\delta / 2}). 
\]
Summing over all possible $k$ gives an $O(n^{2-\delta/2})$ running time, as desired. 
\end{proof}

\subsection{Min-Max Convolution}
\thmminmaxconv*
\begin{proof}
Suppose Min-Max Convolution can be solved in $O(n^{3/2-\eps})$ time for some $\eps > 0$ by an algorithm $\cA$. In order to show this contradicts the Strong Min-Plus Convolution Hypothesis, by \cref{cor:small-universe-listing} with $\gamma = \frac{1}{2}$, it suffices to give a truly sub-quadratic time algorithm for Min-Plus Convolution instances $(A, B)$ where the elements are from $[\sqrt{n}]$ while we need to list up to $\sqrt{n}$ witnesses per output. 

Because the universe is $[\sqrt{n}]$, $C = A* B$ can be computed in $\tO(n^{1.5})$ time.  By \cref{lem:no-too-frequent-elements}, we can further assume every non-$\perp$ value in $B$ appears at most $n^{1/2 + \eps / 3}$ times.

For every $b \in [\sqrt{n}]$, we define a Min-Max Convolution instance $(A_b, B_b)$, where $A_b[i] = A[i]$ for every $i \in [0, n]$, and $B_b[j] = -\infty$ if $B[j] = b$ and $B_b[j] = \infty$ otherwise. Let $C_b = \MinMax(A_b, B_b)$. It is not difficult to see that, $(i, j)$ is a witness for $C[k]$ (i.e., $A[i] + B[j] = C[k]$ and $i+j = k$), if and only if for some $b$, $C_b[k] = C[k] - b$ and $(i, j)$ is a witness for $C_b[k]$ in the Min-Max Convolution instance defined for $b$.

Consider the following algorithm. We enumerate $b \in [\sqrt{n}]$, create $A_b$ and $B_b$, and compute $C_b = \MinMax(A_b, B_b)$ using $\cA$ in $O(n^{1.5 - \eps})$ time per $b$.  By standard sampling-based approaches (\cref{lem:min-plus-conv-listing}), we can further list $n^{2\eps/3}$ witnesses for every $C_b[k]$ in $\tO(n^{1.5 - \eps / 3})$ time using $\cA$ as a black-box. This step takes $\tO(\sqrt{n} \cdot n^{1.5 - \eps / 3}) = \tO(n^{2 - \eps / 3})$ time over all $b \in [\sqrt{n}]$.

What remains to consider are those $C_b[k]$ that have more than $n^{2 \eps / 3}$ witnesses and $C_b[k] = C[k] - b$. We call such a pair $(b, k)$ heavy. For a heavy pair, we enumerate all elements in $S_b := \{j \in [n]: B[j] = b\}$ (this can be done in $O(|S_b|)$ time by computing these sets in preprocessing), and check whether each $j \in S_b$ forms a witness for $C_b[k]$. Recall each value in $B$ appears at most $n^{1/2+\eps / 3}$ time, so the cost for each heavy pair is $O(n^{1/2 + \eps / 3})$. Since we only need to list up to $n^{1/2}$ witness per $C[k]$, we only need to resolve $n^{1/2-2\eps/3}$ heavy pairs for a particular value of $k$ before we find enough witnesses for $C[k]$. Hence, we only need to handle $O(n^{3/2-2\eps/3})$ heavy pairs in total, and thus the overall running time for this case is $O(n^{2-\eps / 3})$. 
\end{proof}

\subsection{Min-Equality Convolution}
\thmminequalconv*
\begin{proof}
Suppose Min-Equality Convolution can be solved in $O(n^{3/2-\eps})$ time for some $\eps > 0$ by an algorithm $\cA$. Similar as before, it suffices to give a truly sub-quadratic time algorithm for Min-Plus Convolution Listing instances $(A, B)$ where the elements are from $[\sqrt{n}]$ while we need to list up to $\sqrt{n}$ witnesses per output.

We can compute $C = A* B$ in $\tO(n^{1.5})$ time and we can further assume every non-$\perp$ value in $B$ appears at most $n^{1/2 + \eps / 3}$ time by \cref{lem:no-too-frequent-elements}.

For every $\Delta \in \{-\sqrt{n}, \ldots, \sqrt{n}\}$, we define a Min-Equality Convolution instance $(A_\Delta, B_\Delta)$, where
$A_\Delta[i] = A[i] + \Delta$ for every $i \in [0, n]$ and $B_\Delta = B$. Intuitively, we aim to use $\Delta$ to represent $B[j] - A[i]$. 
Let $C_\Delta = \MinEqual(A_\Delta, B_\Delta)$. 
The following claim shows how the witnesses in the Min-Equality Convolution instance $(A_\Delta, B_\Delta)$ are related to the witnesses in the Min-Plus Convolution instance $(A, B)$. 
\begin{claim}
\label{cl:min-equal-reduction-witness}
    Let $(i_0, j_0, k_0) \in [0, n]^3$ be a triple with $i_0 + j_0 = k_0$, and let $\Delta = B[j_0] - A[i_0]$. Then, $A[i_0] + B[j_0] = C[k_0]$ holds 
    if and only if both $2 C_\Delta[k_0] = C[k_0] + \Delta$ and $A_\Delta[i_0] = B_\Delta[j_0] = C_\Delta[k_0]$ hold. 
\end{claim}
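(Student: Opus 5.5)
We prove both directions directly from the definitions of $A_\Delta$, $B_\Delta$ and $\MinEqual$. For the fixed $k_0$, we first describe $C_\Delta[k_0]$. By definition, $C_\Delta[k_0] = \min\{A[i]+\Delta : i+j=k_0,\ A[i]+\Delta = B[j]\}$. Every pair $(i,j)$ that contributes satisfies $B[j]-A[i]=\Delta$, so its value is $A[i]+\Delta = (A[i]+B[j]+\Delta)/2$. Since $\Delta$ is fixed, the map $A[i]+B[j] \mapsto (A[i]+B[j]+\Delta)/2$ is strictly increasing. Hence
\begin{equation*}
    2C_\Delta[k_0] = \min\{A[i]+B[j] : i+j=k_0,\ B[j]-A[i]=\Delta\} + \Delta .
\end{equation*}
The witnesses of $C_\Delta[k_0]$ are therefore exactly the pairs of difference $\Delta$ that minimize $A[i]+B[j]$ among such pairs.

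For the ``only if'' direction, assume $A[i_0]+B[j_0]=C[k_0]$. The pair $(i_0,j_0)$ has difference $\Delta$, so it contributes to the minimum above. Every contributing pair is a candidate for $C[k_0]$, so its sum is at least $C[k_0]$. Thus the restricted minimum equals $C[k_0]$, which gives $2C_\Delta[k_0]=C[k_0]+\Delta$. Moreover $(i_0,j_0)$ attains this minimum, so $A_\Delta[i_0]=A[i_0]+\Delta=B[j_0]=B_\Delta[j_0]=C_\Delta[k_0]$.

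For the ``if'' direction, assume $A_\Delta[i_0]=B_\Delta[j_0]=C_\Delta[k_0]$ and $2C_\Delta[k_0]=C[k_0]+\Delta$. Then $A[i_0]+B[j_0] = (C_\Delta[k_0]-\Delta) + C_\Delta[k_0] = 2C_\Delta[k_0]-\Delta = C[k_0]$, as required.

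The only care needed concerns $\bot$ entries and the range of $\Delta$. If $A[i_0]$ or $B[j_0]$ equals $\bot$, we take the statement to apply only to non-$\bot$ entries, or we use the convention from the footnote to \cref{lem:min-plus-conv-listing} that $\bot$ in $A$ and in $B$ are distinct special values that never match. Since both entries lie in $[\sqrt n]$, the difference $\Delta$ lies in $\{-\sqrt n,\dots,\sqrt n\}$, so $\Delta$ is among the enumerated values. No step here is a genuine obstacle. The only point to state explicitly is the monotonicity of the map from sum to value, which holds because all contributing pairs share the same $\Delta$.
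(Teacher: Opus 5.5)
Your proof is correct and follows essentially the same route as the paper, which also bounds $2C_\Delta[k_0]$ below by $C[k_0]+\Delta$ via the unrestricted minimum and above by the pair $(i_0,j_0)$, and obtains the converse from the same one-line computation $C[k_0]+\Delta = 2C_\Delta[k_0] = A_\Delta[i_0]+B_\Delta[j_0]$. Your rewriting of $2C_\Delta[k_0]$ as ``restricted minimum of $A[i]+B[j]$ plus $\Delta$'' is just a repackaging of the paper's inequality chain.
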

\begin{proof}
    We first show the forward direction. By definition, 
    \begin{align*}
    2C_\Delta[k_0] & = 2\min_{\substack{i + j = k_0 \\ A_\Delta[i] = B_\Delta[j]}} B_\Delta[j] =  \min_{\substack{i + j = k_0\\ A[i] + \Delta = B[j]}} \{A[i] + \Delta + B[j]\}\\
    & \ge \min_{i + j = k_0}  \{A[i] + \Delta + B[j]\} = C[k_0] + \Delta. 
    \end{align*}
    Additionally, since $A_\Delta[i_0] = A[i_0] + \Delta = B[j_0] = B_\Delta[j_0]$, 
    \[
    2C_\Delta[k_0] \le 2 B_\Delta[j_0] = A[i_0]+\Delta + B[j_0] =  C[k_0] + \Delta. 
    \]
    Hence, we have $2C_\Delta[k_0] = C[k_0] + \Delta$. Moreover, we have $C_{\Delta}[k_0]=B_{\Delta}[j_0]$ since the above inequality is tight, so  $A_\Delta[i_0] = B_\Delta[j_0] = C_\Delta[k_0]$ holds as claimed.

    Now we consider the reverse direction. From the assumptions $2 C_\Delta[k_0] = C[k_0] + \Delta$ and $A_\Delta[i_0] = B_\Delta[j_0] = C_\Delta[k_0]$, we have 
    \[
    C[k_0] + \Delta  = 2 C_\Delta[k_0] = A_\Delta[i_0] + B_\Delta[j_0] = A[i_0] + \Delta + B[j_0], 
    \]
   so      $A[i_0] + B[j_0] = C[k_0]$ holds as claimed.
\end{proof}

Consider the following algorithm. We enumerate $\Delta \in \{-\sqrt{n}, \ldots, \sqrt{n}\}$, create $A_\Delta$ and $B_\Delta$, and compute $C_\Delta = \MinEqual(A_\Delta, B_\Delta)$ using $\cA$ in $O(n^{1.5 - \eps})$ time per $\Delta$.  By standard sampling-based approaches (\cref{lem:min-plus-conv-listing}), we can further list $n^{2\eps/3}$ witnesses for every $C_\Delta$ in $\tO(n^{1.5 - \eps / 3})$ time using $\cA$ as a black-box. This step takes $\tO(\sqrt{n} \cdot n^{1.5 - \eps / 3}) = \tO(n^{2 - \eps / 3})$ over all $\Delta$.

What remains to consider are those $C_\Delta[k]$ that have more than $n^{2 \eps / 3}$ witnesses and $2C_\Delta[k] = C[k] + \Delta$. We call such a pair $(\Delta, k)$ heavy. By \cref{cl:min-equal-reduction-witness}, the set of witnesses for $C_\Delta[k]$ is exactly the set of $(i, j)$ where $A[i] + B[j] = C[k]$ and $B[j] - A[i] = \Delta$. Since the values of $C[k]$ and $\Delta$ are both known, we can determine the value $A[i] = u$ and $B[j] = v$ where $u = (C[k] - \Delta)/2, v = (C[k] + \Delta) / 2$. After simple linear time preprocessing, we can enumerate elements in the set $S_v = \{j \in [n]: B[j] = v\}$ in $O(n^{1/2+\eps / 3})$ time (recall each element can appear at most $O(n^{1/2+\eps / 3})$ times in $B$). Once we have this set, we can compute the exact set of witnesses for $C_\Delta[k]$ in $O(n^{1/2+\eps / 3})$ time, simply by  checking if $A_\Delta[k - j] = B_\Delta[j]$ and compare them with $C_\Delta[k]$ for $j \in S_v$. Since we only need to list up to $n^{1/2}$ witness per $C[k]$, we only need to resolve $n^{1/2-2\eps/3}$ heavy pairs for a particular value of $k$ before we find enough witnesses for $C[k]$. Hence, we only need to handle $O(n^{3/2-2\eps/3})$ heavy pairs, and thus the overall running time for this case is $O(n^{2-\eps / 3})$. 
\end{proof}

\subsection{Bounded Monotone Min-Plus Convolution}
\label{subsec:bdmono}

In order to show the hardness for Bounded Monotone Min-Plus Convolution, we first show the hardness of another problem in the following:

\begin{lemma}
\label{lem:bounded-monotone-helper}
Let $A \in [\sqrt{n}]^{[0, n]}$ and $B \in [\sqrt{n}]^{[0, {\sqrt{n}}]}$. Computing the Min-Plus Convolution between $A$ and $B$ requires $n^{1.5 - o(1)}$ time, under the Strong Min-Plus Convolution Hypothesis. 
\end{lemma}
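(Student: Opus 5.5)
The plan is to reduce from the hard problem of \cref{cor:small-universe-listing} with $\gamma = 1/2$. That problem is Min-Plus Convolution over universe $[\sqrt n]$ with length-$n$ sequences, where we must list $\sqrt n$ witnesses per output. Suppose, for contradiction, that the unbalanced problem in the statement (a length-$n$ sequence $A$ against a length-$\sqrt n$ sequence $B$, both over $[\sqrt n]$) can be solved in time $O(n^{1.5-\epsilon})$. I would derive a truly subquadratic algorithm for the listing problem.

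First, I would lift the unbalanced algorithm to a listing algorithm. By \cref{lem:min-plus-conv-listing}, whose sampling/restriction argument applies verbatim to unbalanced lengths, listing $n^{\epsilon/2}$ witnesses per output costs $\tO(n^{1.5-\epsilon/2})$.

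Next, given a balanced listing instance $(A,B)$ of length $n$ over $[\sqrt n]$, I would cut $B$ into $\sqrt n$ consecutive blocks $B^{(1)},\dots,B^{(\sqrt n)}$, each of length $\sqrt n$. Then
\[
(A*B)[k]=\min_{s}(A*B^{(s)})[k-s\sqrt n].
\]
I would also compute $C=A*B$ itself in $\tO(n^{1.5})$ time via the sumset/FFT method for universe $[\sqrt n]$. For each block $s$, I compute $A*B^{(s)}$ with the unbalanced algorithm and list up to $n^{\epsilon/2}$ witnesses for every output that attains $C[k]$. This costs $\tO(\sqrt n\cdot n^{1.5-\epsilon/2})=\tO(n^{2-\epsilon/2})$ in total, and every witness found is a genuine witness of $C[k]$.

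The remaining task is the pairs $(s,k)$ where block $s$ attains $C[k]$ with more than $n^{\epsilon/2}$ witnesses. Call such a pair heavy. For a heavy pair I list all witnesses inside block $s$ by brute force over its $\sqrt n$ positions, in $O(\sqrt n)$ time, and each such pair contributes at least $n^{\epsilon/2}$ new witnesses. I process heavy pairs for a fixed $k$ only until $\sqrt n$ witnesses have been collected, so each $k$ handles at most $n^{1/2-\epsilon/2}+1$ heavy pairs. Over all $O(n)$ values of $k$ this totals $O(n\cdot n^{1/2-\epsilon/2}\cdot\sqrt n)=O(n^{2-\epsilon/2})$. This is the same heavy/light charging used for Min-Max and Min-Equality Convolution. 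Identifying which pairs are heavy comes for free: a heavy pair is exactly one where the listing call returned the full $n^{\epsilon/2}$ witnesses and its value equals $C[k]$.

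The main obstacle I anticipate is the heavy-pair bookkeeping. First, the collected witnesses for $k$ must be distinct; this holds automatically because different blocks contain different $j$. Second, the threshold in the listing lemma must interact correctly with the "stop after $\sqrt n$ witnesses" rule, with enough slack that $n^{1/2-\epsilon/2}$ heavy pairs per $k$ always suffices. Unlike the earlier proofs, I do not need \cref{lem:no-too-frequent-elements}, because the brute-force cost per heavy pair is bounded by the block length $\sqrt n$ rather than by value frequencies. The resulting $O(n^{2-\epsilon/2})$ listing algorithm contradicts \cref{cor:small-universe-listing}, and hence the Strong Min-Plus Convolution Hypothesis.
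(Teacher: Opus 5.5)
Your proof is correct, but it takes a different route from the paper's. Both reduce from \cref{cor:small-universe-listing} with $\gamma = 1/2$; the difference is how each handles the length mismatch between the balanced listing instance and the unbalanced oracle.

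The paper uses random isolation. It first tries sampling rates $p = 1, 2^{-1}, \dots$ so that it only has to handle outputs with $O(\sqrt n)$ witnesses. It then works modulo a prime $N \in [2n,4n]$ and permutes $A$ by $i \mapsto xi$. For $B$, it applies a random affine map $j \mapsto xj+y$ and keeps only the entries that land in a window of length $\sqrt N$. A pairwise-independence estimate shows that any fixed witness is the unique surviving witness with probability $\Omega(N^{-1/2})$. Hence $\tO(\sqrt n)$ calls to the unbalanced algorithm, each listing just one witness per output, recover all witnesses.

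You instead split $B$ deterministically into $\sqrt n$ contiguous blocks. This makes the identity $(A*B)[k] = \min_s (A*B^{(s)})[k - s\sqrt n]$ exact. You then meet the listing requirement with the same light/heavy charging the paper uses for Min-Max and Min-Equality Convolution. Here the block length $\sqrt n$ bounds the brute-force cost per heavy pair, which replaces \cref{lem:no-too-frequent-elements}. The charging is sound because witnesses from different blocks have different $j$ and are therefore distinct, so at most about $n^{1/2-\epsilon/2}$ heavy pairs per $k$ are ever brute-forced.

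Your route is more elementary: no primes or affine hashing, no independence estimate, and no preliminary reduction of witness counts. The paper's route needs only threshold-one listing from the unbalanced oracle and no heavy-pair bookkeeping. Both arguments rely equally on extending \cref{lem:min-plus-conv-listing} to unbalanced lengths and on the usual constant-factor treatment of $\perp$ entries.
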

\begin{proof}
    Let $\cA$ be such an $O(n^{3/2-\eps})$ time algorithm. Similar as before, it suffices to give a truly sub-quadratic time algorithm for Min-Plus Convolution instances $(A, B)$ where the elements are from $[\sqrt{n}]$ while we need to list up to $\sqrt{n}$ witnesses per output.  

By trying all sampling rates $p = 1, 2^{-1}, 2^{-2}, \ldots$, and only keeping sampled elements in $B$ as their original values while setting all other elements as $\perp$, then with high probability,  there will be a round where the number of witnesses for $C[k]$ is  $\Theta(\sqrt{n})$ if the number of witnesses for $C[k]$ in the original instance is more than $\sqrt{n}$. Hence, we only need to handle $C[k]$ with $O(\sqrt{n})$ witnesses. By padding the sequences with $O(n)$ additional $\perp$'s, we can further assume that we only need to handle $C[k]$ with $\sqrt{n}$ witnesses.

Let $N$ be a prime number from $[2n, 4n]$, which can be found deterministically in $\tO(n)$ time. We extend $A$ and $B$ to length $N$, where the extra elements are padded as $\perp$. Now the Min-Plus Convolution $C$ between $A$ and $B$ can be equivalently defined by $C[k] = \min_{i \in \F_N} \{A[i] + B[k - i]\}$. 

We repeat the following $\tO(\sqrt{n})$ iterations: sample $x \sim \F_N \setminus \{0\}$ and $y \sim \F_N$. Create a length-$N$ sequence $A'$ where $A'[xi] = A[i]$ for every $i \in \F_N$; create a length-$\lfloor \sqrt{N}\rfloor$ sequence $B'$ where $B'[xj + y] = B[j]$ if $xj + y \in \{0, 1, \ldots, \lfloor\sqrt{N}\rfloor\}$. We can use $\cA$ to compute $C' = A' * B'$ in $O(n^{3/2-\eps})$ time. By the standard sampling-based approach (\cref{lem:min-plus-conv-listing} generalized to input sequences of different lengths), we can also list a witness for each $C'[k]$ in $\tO(n^{3/2-\eps})$ time, by using $\cA$ as a black-box. Observe that, if some $C[k]$ has witnesses $(i, j)$ where $A[i] + B[j] = C[k]$, and if $xj + y \in \{0, 1, \ldots, \lfloor \sqrt{N}\rfloor\}$, then $(xi, xj + y)$ will be a witness for $C'[xk + y]$. Given a witness for $C'[xk + y]$, we can convert it back to a witness for $C[k]$. By repeating $\tO(\sqrt{n})$ iterations, we can list multiple witnesses for $C[k]$, though there might be repeated witnesses from different iterations. It remains to analyze the success probability of collecting all witnesses for each $C[k]$.

Let $J_k$ be the set of $j \in \F_N$ where $j$ is a witness for $C[k]$ (i.e., $A[k - j] + B[j] = C[k]$). Recall we only need to list witnesses for $C[k]$ where $|J_k| \le \sqrt{n}$. Let $j_0 \in J_k$ be an arbitrary witness, and we focus on a particular iteration. Let $X_j$ denote the event that $B[j] = B'[xj + y]$ in this iteration, i.e., $xj + y \in \{0, 1, \ldots, \lfloor \sqrt{N}\rfloor \}$. If only $X_{j_0}$ happens, and $X_j$ does not happen for other $j \in J_k \setminus \{j_0\}$, then the algorithm is able to discover $j_0$ as a witness for $C[k]$ in this iteration. 
Using the near-independence between $X_{j'}$ and $X_j$ for $j \ne j'$ by the way we sample $x$ and $y$, this happens with probability 
\begin{align*}
    \Pr\left[X_{j_0} \wedge \bigwedge_{j \in J_k \setminus \{j_0\} } \neg X_j\right] &= \Pr\left[X_{j_0}\right] -  \Pr\left[X_{j_0} \wedge \bigvee_{j \in J_k \setminus \{j_0\} }  X_j\right]\\
    & \ge \Pr\left[X_{j_0}\right] - \sum_{j \in J_k \setminus \{j_0\}} \Pr\left[X_{j_0} \wedge X_j\right]\\
    & = \frac{\lfloor \sqrt{N}\rfloor + 1 }{N} - (|J_k|-1) \cdot \frac{\lfloor\sqrt{N}\rfloor + 1}{N}\cdot \frac{\lfloor\sqrt{N}\rfloor}{N-1} \\
    & \ge \frac{1}{\sqrt{N}} - \sqrt{n} \cdot \left( \frac{\sqrt{N}+1}{N}\right)^2 \\
    &= \Omega(N^{-1/2}),
\end{align*}
where the last step is because $n \le N / 2$. Hence, in $\tO(N^{1/2})$ iterations, we can list all witnesses for $C[k]$ with high probability. 
\end{proof}

\thmminplusconvmono*

\begin{proof}
By \cref{lem:bounded-monotone-helper}, it suffices to reduce the problem of computing the Min-Plus Convolution between $A \in [\sqrt{n}]^{[0, n]}$ and $B \in [\sqrt{n}]^{[0,\sqrt{n}]}$ to bounded-monotone Min-Plus Convolution. By padding, we can assume $\sqrt{n}$ is an integer for convenience.

We first carefully split $A$ to two halves $A^{\text{odd}}$ and $A^{\text{even}}$, where
\[A^{\text{odd}}[i] = \left\{
  \begin{array}{lr}
    A[i] & \text{if $\lfloor i / \sqrt{n}\rfloor$ is odd,}\\
    \infty & \text{if $\lfloor i / \sqrt{n}\rfloor$ is even,}
  \end{array}
\right. \quad \quad A^{\text{even}}[i] = \left\{
  \begin{array}{lr}
    \infty & \text{if $\lfloor i / \sqrt{n}\rfloor$ is odd,}\\
    A[i] & \text{if $\lfloor i / \sqrt{n}\rfloor$ is even.}
  \end{array}
\right.
\]
It suffices to separately compute $A^{\text{odd}} * B$ and $A^{\text{even}} * B$, and combine the results. As the methods are essentially the same for these two Min-Plus Convolutions, we focus on computing $C^{\text{odd}} = A^{\text{odd}} * B$. 

Create  sequences $A' \in \Z^{[0,n]}$ and $B'\in \Z^{[0,n]}$, where
\[A'[i] = \left\{
  \begin{array}{lr}
    A[i] + \sqrt{n} \cdot (i \bmod \sqrt{n}) & \text{if $\lfloor i / \sqrt{n}\rfloor$ is odd,}\\
    10n & \text{if $\lfloor i / \sqrt{n}\rfloor$ is even,}
  \end{array}
\right. \quad \quad B'[j] = \left\{
  \begin{array}{lr}
    B[j] + \sqrt{n} \cdot j & \text{if $j \le \sqrt{n}$,}\\
    10 n & \text{otherwise.}
  \end{array}
\right.
\]
By definition, all  entries of $A',B'$ are non-negative integers in $O(n)$.
Moreover, observe that $B'$ is monotone. Hence, we can use a bounded-monotone Min-Plus Convolution algorithm to compute $C' = A' * B'$. First, observe that $C^{\text{odd}}[k] = \infty$ if and only if $C'[k] \ge 10n$. Now consider the case where $C'[k] < 10n$. In this case,
\begin{align*}
    C'[k] &= \min_{\substack{0 \le j \le \min\{k, \sqrt{n}\} \\ \lfloor (k - j) / \sqrt{n}\rfloor  \text{ odd}}} \left\{ A'[k - j] + B'[j]\right\} \\
    &= \min_{\substack{0 \le j \le \min\{k, \sqrt{n}\} \\ \lfloor (k - j) / \sqrt{n} \rfloor \text{ odd}}} \left\{ A[k - j] + B[j] + ((k - j) \bmod \sqrt{n} + j) \cdot \sqrt{n}\right\}
\end{align*}
We can simplify the term $(k - j) \bmod \sqrt{n} + j$ via
\[
    (k - j) \bmod \sqrt{n} + j = k - j - \lfloor (k - j) / \sqrt{n}\rfloor \cdot \sqrt{n} + j = k - \lfloor (k - j) / \sqrt{n}\rfloor \cdot \sqrt{n}. 
\]
Because $0 \le j \le \sqrt{n}$, $\lfloor (k - j) / \sqrt{n}\rfloor \in [\lfloor (k - \sqrt{n}) / \sqrt{n}\rfloor, \lfloor k / \sqrt{n}\rfloor]$. Notice that there are exactly two integer values in this range. Combined with the condition that $\lfloor (k - j) / \sqrt{n}\rfloor$ is odd, there can actually be only one valid value, which can be computed in $O(1)$ time. Once $\lfloor (k - j) / \sqrt{n}\rfloor$ is determined, we can compute $k - \lfloor (k - j) / \sqrt{n}\rfloor \cdot \sqrt{n}$, and we denote this value by $f_{\text{odd}}(k)$. Hence, we can further simplify the expression for $C'[k]$ as 
\begin{align*}
    C'[k] 
    & = \min_{\substack{0 \le j \le \min\{k, \sqrt{n}\} \\ \lfloor (k - j) / \sqrt{n} \rfloor \text{ odd}}} \left\{ A[k - j] + B[j] + f_{\text{odd}}(k) \cdot \sqrt{n}\right\}\\
    & = \left(f_{\text{odd}}(k) \cdot \sqrt{n} \right) + \min_{\substack{0 \le j \le \min\{k, \sqrt{n}\} \\ \lfloor (k - j) / \sqrt{n}\rfloor \text{ odd}}} \left\{ A[k - j] + B[j]\right\} \\
    &= \left(f_{\text{odd}}(k) \cdot \sqrt{n} \right) + C^{\text{odd}}[k].
\end{align*}

Hence, we can compute $C^{\text{odd}}$ from $C'$ in linear time, which completes the proof. 
\end{proof}

\subsection{Min-Witness Convolution}

The following lemma is the key step in the reduction to Min-Witness Convolution. 

\begin{lemma}
\label{lem:min-witness-helper}
    Suppose Min-Witness Convolution can be solved in $O(T(n))$ time, and let $1/3 \le \ell \le 1$ be a constant. Then in $\tO(T(n) \cdot n^\ell)$ time we can list up to $n^{\ell}$ witnesses for each output in Min-Plus Convolution instances $(A, B)$ where the universe of all elements in $A$ has size $n^{1/3}$, and the universe of all elements in $B$ is $\{0, \perp\}$. The algorithm succeeds with high probability. 
\end{lemma}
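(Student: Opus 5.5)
Since $B$ takes only values in $\{0,\perp\}$, computing $A*B$ amounts to finding, for each $k$, the minimum value $A[i]$ over indices $i$ with $B[k-i]=0$. The witnesses are exactly those $i$ attaining this minimum. The plan is to turn this into a Min-Witness Convolution by encoding the value $A[i]$ into the index, so that the ``minimum index'' returned by Min-Witness is automatically an index of minimum value. Let $V$ be the value universe of $A$, so $|V|=n^{1/3}$, and sort its values as $v_1<\dots<v_{n^{1/3}}$.

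\textbf{Construction.} Define Boolean sequences of length $O(n^{4/3})$ as follows. Set $A^\star[r\cdot M+i]=1$ if and only if $A[i]=v_r$, and set $B^\star[j]=1$ if and only if $B[j]=0$. Here $M=3n$, so that every block is wider than any index sum. Then $(A^\star \ast B^\star)$ at position $rM+k$ is nonzero exactly when some $i$ with $A[i]=v_r$ pairs with some $j$ with $B[j]=0$ and $i+j=k$.

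This layout does not directly give a minimum over $r$ at a fixed $k$, because the $r$-blocks land at different output positions. I will therefore transpose it: place the value in the high-order part and the index in the low-order part, but within a single output coordinate. Concretely, set $A^\star[i\cdot R + r]$, where $R=n^{1/3}$, and have $B^\star$ supported on multiples of $R$. Then output position $kR+r$ collects exactly the pairs with $i+j=k$ and value rank $r$. This still yields only one Boolean per $(k,r)$, with no minimization over $r$.

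The real leverage of Min-Witness is that it minimizes over the \emph{index}, so the value must be encoded into the index ordering. I will use the index $i' = r\cdot(2n) + i$ for $A^\star$ and keep $B^\star[j]$ indexed by $j$. Since $A^\star$ must convolve against $B^\star$ with sum $i'+j$, I pad $B^\star$ with copies $B^\star[s\cdot 2n + j]$ for shifts $s$ such that $r+s$ is fixed. This gives length $O(n^{4/3})$ sequences. Output position $(R)\cdot 2n+k$ then receives contributions from all pairs $(r,s)$ with $r+s=R$, and the minimum witness index $i'$ corresponds to the smallest $r$, which is the minimum value, and then the smallest $i$. One call of cost $T(n^{4/3})$ thus computes one witness per $k$. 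I expect this encoding step, getting the lengths and offsets to fit so that the reduction stays within the claimed $T(n)$ per call, to be the main obstacle. If the blow-up to $n^{4/3}$ is unavoidable, I would instead split $A$ into $n^{1/3}$ chunks of length $n^{2/3}$ so that each instance has length $O(n)$.

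\textbf{Listing $n^\ell$ witnesses.} With a single-witness oracle in hand, I apply the sampling and parallel-binary-search listing technique of \cref{lem:min-plus-conv-listing}. I sample subsets of $B$'s zero-entries at geometric rates, setting the others to $\perp$, which only removes $1$'s. This yields $\tO(n^\ell)$ oracle calls and lists up to $n^\ell$ distinct witnesses per $k$ with high probability, for a total time of $\tO(T(n)n^\ell)$. The restriction $\ell\ge 1/3$ is used to absorb the $\tO(n^{4/3})$ bookkeeping cost of sorting and translating the outputs.
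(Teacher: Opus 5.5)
\textbf{There is a genuine gap.} Your encoding $i'=r\cdot M+i$ with $n^{1/3}$ shifted copies of $B$ is semantically fine: the minimum witness index at the output position for $k$ is a minimum-value witness. The problem is size. It produces Boolean sequences of length $\Theta(n^{4/3})$, so each oracle call costs $T(n^{4/3})$ rather than $T(n)$, and your listing step only gives $\tilde O(T(n^{4/3})\,n^{\ell})$. You flagged this as the main obstacle, and it is more than a constant-factor loss. In the intended use $T(n)=n^{4/3-\epsilon}$, a single call already costs about $n^{16/9}$. That exceeds the entire budget $n^{\ell}T(n)\approx n^{5/3}$ for $\ell$ near $1/3$, so the downstream lower bound would not follow.

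Your fallback does not repair this:
\begin{itemize}
\item Chunking only $A$ by index leaves the block width $M$ at $\Theta(n)$, because it must exceed the range of $i+j$ and $j$ still ranges over $[0,n]$.
\item Chunking both $A$ and $B$ into length-$n^{2/3}$ blocks gives $\Theta(n^{2/3})$ block pairs, hence $n^{2/3}$ length-$O(n)$ calls per round.
\item One instance per value costs $n^{1/3}$ calls per witness, hence $n^{\ell+1/3}$ calls for listing.
\end{itemize}
Relatedly, the hypothesis $\ell\ge 1/3$ is not about bookkeeping. The correct reduction inherently spends $\tilde\Theta(n^{1/3})$ calls of length $O(n)$ even to find one witness per output.

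\textbf{The missing idea} is to encode the value into the position by discarding most entries in each round, instead of enlarging the length. The paper's construction is as follows.
\begin{itemize}
\item Replace the values of $A$ by their ranks in $[n^{1/3}]$, pad to a prime $N\in[2n,4n]$, draw random $x\ne 0$ and $y$, and use $f(i)=xi+y \bmod N$. Because $f$ is affine, $f(i)+f(k-i)\equiv xk+2y$ for every pair summing to $k$.
\item Keep index $i$ of $A$ only if $f(i)$ falls in the window $[A[i]N^{2/3},(A[i]+1)N^{2/3})$ reserved for its value. Keep all zero entries of $B$, placed at $f(j)$.
\item The surviving positions are now ordered by value. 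So (cyclic) Min-Witness Convolution of the two length-$N$ indicator sequences, read at position $xk+2y$, returns $f(i)$ whenever $i$ is the unique witness of $C[k]$ that survives.
\item By near pairwise independence of $f$, if $C[k]$ has at most $n^{1/3}$ witnesses, each fixed witness is the unique survivor with probability $\Omega(N^{-1/3})$. Hence $\tilde O(n^{1/3})$ rounds, each costing $O(T(n))$, recover all of them.
\item For general $\ell$: first subsample $B$ at geometric rates so that each remaining $C[k]$ has $O(n^{\ell})$ witnesses. Then randomly split $B$ into $\Theta(n^{\ell-1/3})$ parts, each with at most $n^{1/3}$ witnesses per output. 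The total is $\tilde O(n^{\ell}T(n))$.
\end{itemize}
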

\begin{proof}
First, since $B[j]\in \{0,\perp\}$, we can replace each element in $A$ by its rank in the universe without loss of generality, and hence assume that the universe of elements in $A$ is $[n^{1/3}]$. 
Let $C = A * B$, which can be computed in $\tO(n^{4/3})$ time. 

By trying all sampling rates $p = 1, 2^{-1}, 2^{-2}, \ldots$, and only keeping sampled elements in $B$ as their original values  while setting all other elements as $\perp$, then with high probability,  there will be a round where the number of witnesses for $C[k]$ is  $\Theta(n^\ell)$ if the number of witnesses for $C[k]$ in the original instance is more than $n^\ell$. Hence, we only need to handle $C[k]$ with $O(n^\ell)$ witnesses. 

In addition, it suffices to list $n^{1/3}$ witnesses per $C[k]$ in $\tO(T(n) n^{1/3})$ time: If $\ell > 1/3$, we can randomly partition $B$ to $\Theta(n^{\ell - 1/3})$ random subsequences (by keeping elements not in the corresponding subsequence as $\perp$), so that for every $C[k]$, each subsequence contains at most $n^{1/3}$ witnesses, with high probability. 

Before we give the actual reduction, let us first give some intuition. The difficulty of reducing Min-Plus Convolution to Min-Witness Convolution is that, in Min-Witness Convolution, there is no value in the inputs. Hence, we somehow need to encode the weight information in the indices. One potential way is to take monotone subsequences of $A$, and create Boolean sequence $A'$ where $A'[i] = 1$ if $i$ is in the chosen subsequence and create Boolean sequence $B'$ where $B'[i] = 1$ if $B[i] = 0$. Because the chosen subsequence is monotone, the Min-Witness Convolution naturally favors candidates $(i, k - i)$ where $A[i]$ is smaller, as intended. One potential way to choose monotone subsequences is the following: We take all the $1$'s in the first $n^{2/3}$ entries of $A$, all the $2$'s in the second $n^{2/3}$ entries of $A$, and so on. The downside is that, some $C[k]$ might have many witnesses from the subsequence, while some have few. To overcome this issue, we randomly shuffle the indices and work over the shuffled sequences.  

Let $N$ be a prime number from $[2n, 4n]$, which can be found deterministically in $\tO(n)$ time. We extend $A$ and $B$ to length $N$, where the extra elements are $\perp$. Now Min-Plus Convolution $C$ between $A$ and $B$ can be equivalently defined by $C[k] = \min_{i \in \F_N} \{A[i] + B[k - i]\}$. 

We will repeat the following $\tO(n^{1/3})$ times. Let $x \sim \F_N \setminus \{0\}$ and $y \sim \F_N$ be uniformly random, and let $f_{x, y}: \F_N \rightarrow \F_N$ where $f_{x, y}(i) = i \cdot x + y$. Now define the shuffled sequences $A_{x, y}$ and $B_{x, y}$, where
\[A_{x, y}[f_{x,y}(i)] = A[i] \quad \text{and} \quad B_{x, y}[f_{x, y}(i)] = B[i] \quad \text{for }  i \in \F_N. \]
$A_{x, y}$ and $B_{x, y}$ are well-defined because $f_{x, y}$ is invertible. 
Consider two subsets $S_{x, y} = \{i \in \F_N: A_{x, y}[i] = \lfloor i / N^{2/3}\rfloor\}$ and $T_{x, y} = \{i \in \F_N: B_{x, y}[i] = 0\}$. We then create two Boolean sequences $A'_{x, y}$ and $B'_{x, y}$, which are the indicator sequences for $S_{x, y}$ and $T_{x, y}$ respectively. 

The following claim shows why it is useful to compute the Min-Witness Convolution between $A'_{x, y}$ and $B'_{x, y}$ (even though the indices of them are from $\F_N$ rather than $\Z$, we can still use standard Min-Witness Convolution with some simple postprocessing to compute it). 
\begin{claim}
\label{cl:min-witness-claim1}
Let $I_k$ be the set of witnesses for $C[k]$ in the Min-Plus convolution $C = A * B$. If there is exactly one witness $i \in I_k$ such that $f_{x, y}(i) \in S_{x, y}$, then $\MinWitness(A'_{x, y}, B'_{x, y})[x \cdot k + 2y] = f_{x, y}(i)$.  
\end{claim}
\begin{proof}
    Let $j = k - i$. Since $i$ is a witness for $C[k]$, we must have $B[j]\neq \bot $ and hence $B[j]=0$, and $A[i]=C[k]$. By definition, clearly $A'_{x, y}[f_{x, y}(i)] = 1$, and $B'_{x, y}[(x \cdot k + 2y) - f_{x, y}(i)] = B'_{x, y}[f_{x, y}(j)] = 1$. Hence, $f_{x, y}(i)$ is a witness in the Boolean convolution between $A'_{x, y}$ and $B'_{x, y}$. It remains to show that it is the minimum such index. 

    Suppose for the sake of contradiction that there exists $t < f_{x, y}(i)$ where $A'_{x, y}[t] = B'_{x, y}[(x \cdot k + 2y) - t] = 1$. By definition, these imply
    \[
    A_{x, y}[t] = \lfloor t / N^{2/3}\rfloor \quad \text{and} \quad B_{x, y}[(x \cdot k + 2y) - t] = 0.
    \]
    Because $f_{x, y}$ is invertible, there exists $i' \in \F_N$ such that $f_{x, y}(i') = t$, and correspondingly $f_{x, y}(k - i') = (x \cdot k + 2y) - t$. 
    Hence, we further obtain
    \begin{equation}
    \label{eq:min-witness-eq1}
        A[i'] = \lfloor t / N^{2/3}\rfloor \quad \text{and} \quad B[k - i'] = 0.
    \end{equation}
    
    By the fact $f_{x, y}(i) \in S_{x, y}$, we have $A_{x,y}[f_{x,y}(i)] = \lfloor f_{x,y}(i) / N^{2/3}\rfloor \ge \lfloor t / N^{2/3}\rfloor = A[i']$. Since $A[i] = A_{x,y}[f_{x,y}(i)]$, we have $A[i']\le A[i] =C[k]$.
    By  \cref{eq:min-witness-eq1}, this means $i'\in I_k$ and $A[i']=A[i]=C[k]$.
    Since $f_{x,y}(i')\in S_{x,y}$ holds,
    this contradicts the fact that $i$ is the unique element in $I_k$ where $f_{x, y}(i) \in S_{x, y}$, as $i'$ is also such an element. 
\end{proof}
By \cref{cl:min-witness-claim1}, as long as we can guarantee that there is a unique witness $i \in I_k$ satisfying $f_{x, y}(i) \in S_{x, y}$, we are able to recover it from the result of $\MinWitness(A'_{x, y}, B'_{x, y})$. The next \cref{claim:min-witness-i0} shows this happens with probability $\Omega(1/N^{1/3})$. 
Since we initially assumed $|I_k|\le n^{1/3}$, it thus takes only $\tO(N^{1/3})$ repetitions to recover all elements in $I_k$, in total $\tO(n^{1/3}T(n))$ time as desired. It thus remains to show \cref{claim:min-witness-i0} to finish the proof. 
\begin{claim}
\label{claim:min-witness-i0}
    Let $I_k$ be the set of witnesses for $C[k]$ in the Min-Plus Convolution $C = A * B$ where $|I_k| \le n^{1/3}$, and let $i_0 \in I_k$ be an arbitrary element. Then the probability that $i_0$ is the unique element in $I_k$ that satisfies $f_{x, y}(i) \in S_{x, y}$ is $\Omega(1/N^{1/3})$. 
\end{claim}
\begin{proof}
Note that every $i \in I_k$ has the same value $A[i]$. Let us denote this value by $u$. Then, for $i\in I_k$, the condition $f_{x, y}(i) \in S_{x, y}$ can be equivalently written as 
\[
f_{x, y}(i) \in [u \cdot N^{2/3}, (u+1)\cdot N^{2/3}). 
\]
Let $X_i$ denote the above event.  
Note that $\{f_{x, y}\}_{x, y}$ is a nearly $2$-wise independent hash family. Hence, 
\begin{align*}
    \Pr\left[X_{i_0} \wedge \bigwedge_{i \in I_k \setminus \{i_0\} } \neg X_i\right] &= \Pr\left[X_{i_0}\right] -  \Pr\left[X_{i_0} \wedge \bigvee_{i \in I_k \setminus \{i_0\} }  X_i\right]\\
    & \ge \frac{N^{2/3} - 1}{N} - \sum_{i \in I_k \setminus \{i_0\}} \Pr\left[X_{i_0} \wedge X_i\right]\\
    & \ge \frac{N^{2/3} - 1}{N} - n^{1/3}  \cdot \left(\frac{N^{2/3} + 1}{N} \right)^{2}\\
    &= \Omega(N^{-1/3}),
\end{align*}
where the final step holds because $n \le N / 2$. 
\end{proof}

\end{proof}

\thmminwitnessconv*
\begin{proof}

Suppose Min-Witness Convolution can be solved in $O(n^{4/3-\eps})$ time for some $\eps > 0$ by an algorithm $\cA$. In order to show this contradicts the Strong Min-Plus Convolution Hypothesis, by \cref{cor:small-universe-listing} with $\gamma = \frac{2}{3}$, it suffices to give a truly sub-quadratic time algorithm for $\MinPlusConvListing(n, n^{2/3} \mid u \le n^{1/3})$. Let $(A, B)$ be such a Min-Plus Convolution Listing instance. 
By \cref{lem:no-too-frequent-elements}, we can further assume no elements in $B$ appear more than $n^{2/3 + \eps / 3}$ times. 
Let $C = A * B$, which can be computed in $\tO(n^{4/3})$ time. 

For every $b \in [n^{1/3}]$, we solve a sub-instance between $(A, B_b)$, where $B_b[j] = 0$ if $B[j] = b$, and $B_b[j] = \infty$ otherwise. Using \cref{lem:min-witness-helper} with $\ell = 1/3 + 2 \eps / 3$, we can list up to $n^{1/3+2 \eps/3}$ witnesses per output in $C_b = A * B_b$. The total cost for calling \cref{lem:min-witness-helper} over $b \in [n^{1/3}]$ is $O(n^{5/3+2\eps / 3 - \eps} \cdot n^{1/3}) = O(n^{2-\eps/3})$. 

If $C_b[k] = C[k] - b$, and \cref{lem:min-witness-helper} lists fewer than $n^{1/3 + 2\eps / 3}$ witnesses for $C_b[k]$, then we have listed all witnesses $(i, j)$ satisfying $B[j] = b$ for $C[k]$. Otherwise, if we have not listed $n^{2/3}$ witnesses for $C[k]$, we enumerate all non-$\infty$ indices in $B_b$ (there can be at most $n^{2/3 + \eps / 3}$ such indices), and check whether it forms a witness for $C[k]$. The total number of such enumerations is $O(n \cdot n^{1/3-2\eps / 3})$, as for each $C[k]$, one such enumeration increases the number of witnesses found for it by at least $n^{1/3+2\eps/3}$. Hence, the total cost is $O(n \cdot n^{1/3-2\eps/3} \cdot n^{2/3 + \eps / 3}) = O(n^{2-\eps / 3})$, which is truly subquadratic as desired. 
\end{proof}
\section{Asymmetric BSG Theorem} \label{sec:asym-bsg}

In this section, we show an algorithmic version of  the Asymmetric BSG Theorem (originally proved by Tao and Vu \cite[Section 2.6]{TaoV06}), stated as follows. 
\thmasymbsg*

\subsection{Preliminary Lemmas} \label{sec:asymbsg}
Our algorithmic proof of \cref{thm:asym-bsg} is heavily based on the original proof for the non-algorithmic version by Tao and Vu \cite[Section 2.6]{TaoV06}. We turn each step into an efficient algorithm using mostly basic ideas.  

We start with the following definition from \cite[Definition 2.32]{TaoV06}:

\begin{definition}[Symmetry Sets]
For $A \subseteq \Int$ and $0 < \alpha \leq 1$, we define the \emph{symmetry set}
\begin{equation*}
    \Sym(A; \alpha) = \set{ x \in A - A : r_{A-A}(x) \geq \alpha |A|}.
\end{equation*}
\end{definition}

\noindent
Note that $0 \in \Sym(A; \alpha)$ and $\Sym(A; \alpha)= -\Sym(A; \alpha)$. Moreover, it holds that
\begin{equation*}
    |\Sym(A;\alpha)| \leq \frac{|A|}{\alpha},
\end{equation*}
which follows readily from $\sum_{x \in A - A} r_{A-A}(x) = |A|^2$. In the notation from before one can equivalently express the symmetry set as a set of popular differences, $\Sym(A; \alpha) = \Pop(A, -A; \alpha |A|)$.

We need the following lemma corresponding to~\cite[Lemma~2.34]{TaoV06}.

\begin{lemma}[Dyadic Pigeonhole Principle] \label{lem:dyadic-pigeonhole}
Let $0 < \alpha < 1$ and let $A, D \subseteq \Int$ be such that
\begin{equation*}
    \sum_{x \in D} r_{A-A}(x) \geq \alpha |A|^2.
\end{equation*}
Then there is a subset $D' \subseteq D$ such that, for every $x \in D'$,
\begin{equation*}
    r_{A-A}(x) \geq \Omega\parens*{\frac{\alpha}{1+\log \frac{1}{\alpha} +\log(1 + \frac{|D|}{|A|})} \cdot \frac{|A|^2}{|D'|}}.
\end{equation*}
Moreover, there is a randomized algorithm that, given $A$, $\alpha$ and $D$, computes such a set $D'$ in time $\tO(|A| + |D| / \alpha)$.
\end{lemma}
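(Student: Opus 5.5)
We prove the existential statement by dyadic pigeonholing over multiplicity levels, and then make each step algorithmic with random sampling (\cref{lem:approx-pop-sums}).

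\textbf{Existential part.} Write $L = 1 + \log\frac{1}{\alpha} + \log(1 + \frac{|D|}{|A|})$. Every $x \in D$ has $r_{A-A}(x) \le |A|$, so elements with $r_{A-A}(x) < \alpha|A|^2/(2|D|)$ contribute at most $\alpha|A|^2/2$ to the sum. Discard them. The remaining elements have multiplicity in $[\alpha|A|^2/(2|D|), |A|]$, and we split them into dyadic classes $D_j = \set{x \in D : 2^j \le r_{A-A}(x) < 2^{j+1}}$. The number of classes is $O(\log(2|D|/(\alpha|A|))) = O(L)$, and together they carry mass at least $\alpha|A|^2/2$. So some class satisfies $\sum_{x \in D_j} r_{A-A}(x) \ge \Omega(\alpha|A|^2/L)$. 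Since every element of $D_j$ has multiplicity within a factor $2$ of $2^j$, we get
\begin{equation*}
    2^{j+1}|D_j| \ge \Omega(\alpha |A|^2/L).
\end{equation*}
Hence every $x \in D_j$ satisfies $r_{A-A}(x) \ge 2^j \ge \Omega(\alpha|A|^2/(L|D_j|))$, and we take $D' = D_j$.

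\textbf{Algorithm.} We only ever need to estimate multiplicities of elements of $D$, and only up to constant factors, and only above the threshold $\tau = \alpha|A|^2/(2|D|)$. \cref{lem:approx-mult} applied with $(A, -A, D)$ and additive error $\epsilon|A|$ would cost $\tO(|A| + |A|/\epsilon + |D|)$. This is too coarse, because $\tau$ may be far below $|A|$.

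Instead, we use sampling. We draw $m = \tO(|A|^2/\tau) = \tO(|D|/\alpha)$ uniform pairs $(a, a') \in A^2$, and for each $x \in D$ we count the pairs with $a - a' = x$, using a hash table on $D$. By Chernoff, with high probability every $x$ with $r_{A-A}(x) \ge \tau/4$ receives an estimate $\tilde r(x)$ within a factor $(1 \pm 1/10)$ of the truth, and every $x$ with $r_{A-A}(x) < \tau/8$ receives an estimate below $\tau/4$. This is exactly the argument of \cref{lem:approx-pop-sums} at threshold $\tau$, and it costs $\tO(|A| + |D|/\alpha)$.

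We then bucket $D$ by the estimates, $\tilde D_j = \set{x \in D : 2^j \le \tilde r(x) < 2^{j+1}}$ for $2^j \ge \tau/4$, and compute $S_j = \sum_{x \in \tilde D_j} \tilde r(x)$ for each bucket. We output the bucket maximizing $S_j$.

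\textbf{Correctness of the algorithm.} Elements discarded for having small estimates have true multiplicity below $\tau/2$ or so, so together they lose at most a constant fraction of the mass. The remaining estimated mass is therefore $\Omega(\alpha|A|^2)$, and the best bucket has $S_j \ge \Omega(\alpha|A|^2/L)$. Within that bucket, the true multiplicities are within a constant factor of $2^j$, so the same calculation as in the existential part gives the claimed bound for every $x \in D'$.

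\textbf{Main obstacle.} The delicate step is reconciling the sampling resolution with the required running time $\tO(|A| + |D|/\alpha)$. We must sample finely enough to resolve multiplicities as small as $\alpha|A|^2/|D|$, and we must check that this is always affordable. The check is that $|A|^2/\tau = O(|D|/\alpha)$, so the cost is affordable regardless of whether $\tau$ exceeds $|A|$. If $\tau > |A|$, no element survives the threshold, which would contradict the hypothesis. So that case cannot occur, and the dyadic range is nonempty.
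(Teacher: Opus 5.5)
Your proof is correct and follows essentially the same route as the paper. Both arguments estimate $r_{A-A}$ on $D$ to resolution $\Theta(\alpha|A|^2/|D|)$, discard elements with small estimates, bucket the rest dyadically by their estimates, and output the heaviest of the $O(L)$ buckets.

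The one misstep is your claim that \cref{lem:approx-mult} is ``too coarse''. The paper calls it with $\epsilon = \min(\alpha|A|/(8|D|), 1)$, i.e.\ additive error $\epsilon|A| = \tau/4$. Its $|A|/\epsilon$ term is then $O(|D|/\alpha)$, which is exactly your sampling cost $|A|^2/\tau$. So your hand-rolled pair sampler is a re-derivation of that lemma. The only other difference is in how the estimate is used: the paper thresholds at $\tilde r(x) \ge \tau/2$, which directly gives $r_{A-A}(x) \ge \tilde r(x)/2$, instead of a multiplicative Chernoff guarantee.
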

\begin{proof}
We first call \cref{lem:approx-mult} on the sets $A, -A, D$ with parameter \smash{$\epsilon = \min(\frac{\alpha |A|}{8 |D|}, 1)$} to approximate the multiplicities $r_{A-A}(x)$ for all $x \in D$ by $0 \leq \tilde r(x) \leq |A|$ within additive error
\begin{equation*}
    |r_{A-A}(x) - \tilde r(x)| \leq \epsilon |A| = \frac{\alpha |A|^2}{8 |D|}.
\end{equation*}
Let
\begin{equation*}
    D^* = \set*{ x \in D : \tilde r(x) \geq \frac{\alpha |A|^2}{4 |D|}}.
\end{equation*}
This set is chosen so that for all $x \in D^*$ we have $r_{A-A}(x) \geq \tilde r(x) / 2$. Further observe that
\begin{align*}
    \sum_{x \in D^*} \tilde r(x)
    &\geq \sum_{x \in D} \tilde r(x) - \sum_{x \in D \setminus D^*} \tilde r(x) \\
    &\geq \sum_{x \in D} \tilde r(x) - |D| \cdot \frac{\alpha |A|^2}{4 |D|} \\
    &\geq \sum_{x \in D} r_{A-A}(x) - |D| \cdot \frac{\alpha |A|^2}{8 |D|}- |D| \cdot \frac{\alpha |A|^2}{4 |D|} \\
    &\geq \frac{\alpha |A|^2}{2}.
\end{align*}
Now let $L$ be the smallest integer with \smash{$2^{-L} < \frac{\alpha |A|}{2 |D|}$}, and consider the partition of $D^*$ into $D_1, \dots, D_L$ defined by
\begin{equation*}
    D_\ell = \set*{x \in D^* : \frac{|A|}{2^\ell} < \tilde r(x) \leq \frac{|A|}{2^{\ell-1}}}.
\end{equation*}
By the pigeonhole principle at least one part $1 \leq \ell \leq L$ satisfies
\begin{equation*}
    \sum_{x \in D_\ell} \tilde r(x) \geq \frac{\alpha |A|^2}{2 L};
\end{equation*}
take that part and report $D' = D_\ell$. Indeed, each $x \in D_\ell$ satisfies
\begin{equation*}
    r_{A-A}(x) \geq \frac{\tilde r(x)}{2} \geq \frac{|A|}{2^{\ell+1}} \geq \frac{\sum_{x \in D_\ell} \tilde r(x)}{4 |D_\ell|} \geq \frac{\alpha |A|^2}{8 L |D_\ell|},
\end{equation*}
which is as claimed, recalling that \smash{$L \leq 1 + \log \frac{1}{\alpha} + \log (1 + \frac{|D|}{|A|})$}.

Finally, consider the running time. \cref{lem:approx-mult} takes time $\tO(|A| / \epsilon + |D|) = \tO(|A| + |D| / \alpha)$. All subsequent steps, including the construction of $D^*$, its partitioning, and selecting the relevant part~$D_\ell$, run in linear time $O(|D|)$.
\end{proof}

We also rely on the following lemma, which is a combination of \cite[Lemma~2.33]{TaoV06} and \cite[Lemma~2.34]{TaoV06}.

\begin{lemma} \label{lem:asym-bsg-iter}
Let $0 < \alpha \leq 1$ and let $A, S$ be sets with $S \subseteq \Sym(A; \alpha)$. Then:
\begin{equation*}
    \abs{\set{(s_1, s_2) \in S^2 : s_1 - s_2 \in \Sym(A; \alpha^2 / 2)}} \geq \frac{\alpha^2 |S|^2}{2}.
\end{equation*}
\end{lemma}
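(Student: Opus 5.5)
We double count pairs $(s_1,s_2)\in S^2$ together with a common ``witness'' element of $A$, and then use Cauchy--Schwarz to pass from these common witnesses to popular differences.

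For each $s \in S$ write $A_s = A \cap (A + s)$, the set of $a\in A$ with $a - s \in A$. Since $S \subseteq \Sym(A;\alpha)$, we have $|A_s| = r_{A-A}(s) \geq \alpha |A|$. Summing over $S$ gives
\begin{equation*}
    \sum_{a \in A} \abs{\set{s \in S : a \in A_s}} = \sum_{s\in S} |A_s| \geq \alpha |A|\,|S|.
\end{equation*}
By Cauchy--Schwarz over $a \in A$,
\begin{equation*}
    \sum_{(s_1,s_2)\in S^2} \abs{A_{s_1}\cap A_{s_2}} = \sum_{a\in A}\abs{\set{s\in S : a\in A_s}}^2 \geq \frac{(\alpha|A|\,|S|)^2}{|A|} = \alpha^2 |A|\,|S|^2 .
\end{equation*}

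Next we relate common witnesses to multiplicities of differences. If $a \in A_{s_1}\cap A_{s_2}$, then $a-s_1$ and $a-s_2$ both lie in $A$, and their difference is $(a-s_2)-(a-s_1) = s_1 - s_2$. Distinct $a$ give distinct pairs $(a-s_2,a-s_1)$, so
\begin{equation*}
    \abs{A_{s_1}\cap A_{s_2}} \leq r_{A-A}(s_1-s_2).
\end{equation*}

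Now split the pairs $(s_1,s_2)$ according to whether $s_1-s_2 \in \Sym(A;\alpha^2/2)$. If $s_1 - s_2 \notin \Sym(A;\alpha^2/2)$, the bound above gives $|A_{s_1}\cap A_{s_2}| < \alpha^2|A|/2$. Hence these bad pairs contribute less than $\alpha^2|A|\,|S|^2/2$ to the total sum. Every good pair contributes at most $|A|$, since $A_{s_1}\cap A_{s_2}\subseteq A$. Therefore
\begin{equation*}
    |A|\cdot\#\text{good} \;\geq\; \alpha^2|A|\,|S|^2 - \frac{\alpha^2|A|\,|S|^2}{2},
\end{equation*}
which yields $\#\text{good} \geq \alpha^2|S|^2/2$, as claimed.

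The only step that needs care is the injection $a \mapsto (a-s_2,\,a-s_1)$, which identifies common witnesses with representations of $s_1-s_2$ as a difference in $A-A$. Everything else is a Cauchy--Schwarz step and threshold bookkeeping.
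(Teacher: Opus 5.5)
Your proof is correct and is essentially the paper's argument. Your quantity $\sum_{(s_1,s_2)\in S^2}|A_{s_1}\cap A_{s_2}|$ is exactly $\sum_{a\in A} r_{A+S}(a)^2$, which the paper bounds below by Cauchy--Schwarz and bounds above via $E(A,S)=\sum_{s_1,s_2\in S} r_{A-A}(s_1-s_2)$, splitting into good and bad pairs just as you do. The differences are only cosmetic: you argue directly rather than by contradiction, and you make the injection $a\mapsto(a-s_2,a-s_1)$ explicit.
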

\begin{proof}
Supposing otherwise,
\begin{align*}
    E(A, S)
    &= \sum_{s_1, s_2 \in S} r_{A-A}(s_1 - s_2) \\
    &< \frac{\alpha^2 |S|^2}{2} \cdot |A| + |S|^2 \cdot \frac{\alpha^2 |A|}{2} \\[1ex]
    &= \alpha^2 |S|^2 |A|.
\end{align*}
But at the same time, by Cauchy--Schwarz,
\begin{align*}
    E(A, S)
    &\geq \sum_{a \in A} r_{A+S}(a)^2 \\
    &\geq \frac{\parens*{\sum_{a \in A} r_{A+S}(a)}^2}{|A|} \\
    &= \frac{\parens*{\sum_{s \in S} r_{A-A}(s)}^2}{|A|} \\
    &\geq \frac{(|S| \cdot \alpha |A|)^2}{|A|} \\[1ex]
    &= \alpha^2 |S|^2 |A|,
\end{align*}
which shows the desired contradiction.
\end{proof}

\subsection{Existential Proof}
In this subsection we finally prove the existential statement of \cref{thm:asym-bsg}. Recall that we start from integer sets $A, B$ with energy
\begin{equation*}
    E(A, B) \geq \alpha |A|^2 |B|.
\end{equation*}
The proof can conceptually be split into four steps. In the first step we will successively replace~$A$ by ``smoother'' sets $A_1, A_2, \dots$ that admit more additive structure. As an instructive example, consider the set $B = \set{1, \dots, |B|}$ and let $A$ be a random subset of $B$. When $B$ is large compared to $A$, $A$ essentially admits no additive structure at all. We will therefore replace $A$ by smoother sets (think of $A - A, 2A - 2A, \dots$) converging to the highly structured set $B$. Once we reach a set that is sufficiently structured, in the second step we apply the symmetric BSG theorem to identify a large low-doubling subset $H$. In the third step we unroll the iterative smoothing procedure to relate $H$ to the initial set $A$. In the fourth step we relate $H$ to $B$.

Following~\cite{TaoV06}, we introduce a useful piece of notation. We write $f = O_j(g)$ if there exists a positive constant $C(j)$ (that only depends on $j$) such that
\begin{equation*}
    f \leq C(j) \cdot \alpha^{-C(j)} \cdot \parens*{1 + \log \frac{|B|}{|A|}}^{C(j)} \cdot g;
\end{equation*}
that is, the $O_j$ notation hides constant factors, polynomial factors in $1/\alpha$, and polylogarithmic factors in $|B| / |A|$. Similarly define $f = \Omega_j(g)$.

\paragraph{Step 1: Iteratively Smoothing $A$}
Let $J = J(\epsilon)$ be a constant to be fixed later. Let $\alpha_1 = \alpha/2$ and recursively define $\alpha_2, \dots, \alpha_J$ by $\alpha_{j+1} = \alpha_j^2 / 2$. Note that $\alpha_1 > \dots > \alpha_J > 0$ and $\alpha_j = \Omega_j(1)$. We will inductively construct a sequence of sets $A_0, A_1, \dots, A_J$ with the following properties:
\begin{enumerate}[label=(\roman*)]
    \item for all $1 \leq j \leq J$, $A_j \subseteq \Sym(B; \alpha_j)$, \label{thm:asym-bsg:itm:property-sym}
    \item for all $0 \leq j < J$ and all $x \in A_{j+1}$, $r_{A_j-A_j}(x) \geq \Omega_j\parens{|A_j|^2 / |A_{j+1}|}$, \label{thm:asym-bsg:itm:property-pop}
    \item for all $0 \leq j \leq J$, $|A_j| = \Omega_j(|A|)$. \label{thm:asym-bsg:itm:property-size}
\end{enumerate}
In fact, Property~\ref{thm:asym-bsg:itm:property-size} is always implied by Property~\ref{thm:asym-bsg:itm:property-pop}: If $\sum_{x \in A_{j+1}} r_{A_j - A_j}(x) \geq \Omega_j(|A_j|^2)$, then we must have $|A_{j+1}| \geq \Omega_j(|A_j|)$, and then Property~\ref{thm:asym-bsg:itm:property-size} follows by induction. For this reason we will only ensure Properties~\ref{thm:asym-bsg:itm:property-sym} and~\ref{thm:asym-bsg:itm:property-pop} in the following construction.

The base case is concerned with the construction of $A_0$ and $A_1$. We trivially take $A_0 = A$. To construct $A_1$, first observe that
\begin{align*}
    E(A, B)
    &= \sum_{a_1, a_2 \in A} r_{B-B}(a_1 - a_2) \\
    &\leq \sum_{a_1, a_2 \in A} \frac{\alpha |B|}{2} + \abs{\set{(a_1, a_2) \in A^2 : a_1 - a_2 \in \Sym(B; \alpha/2)}} \cdot |B|.
\end{align*}
In particular, recalling that $E(A, B) \geq \alpha |A|^2 |B|$, and letting $D = \Sym(B; \alpha/2)$, we must have
\begin{equation*}
    \sum_{x \in D} r_{A-A}(x) \geq \frac{\alpha |A|^2}{2}.
\end{equation*}
Hence, \cref{lem:dyadic-pigeonhole} yields a set $A_1 \subseteq D = \Sym(B; \alpha/2)$ that directly satisfies both Properties~\ref{thm:asym-bsg:itm:property-sym} and~\ref{thm:asym-bsg:itm:property-pop}.

Now focus on the inductive case, i.e., let $j \geq 1$, assume that $A_j$ has already been constructed, and focus on the construction of $A_{j+1}$. By~\ref{thm:asym-bsg:itm:property-sym} we have $A_j \subseteq \Sym(B; \alpha_j)$. Apply \cref{lem:asym-bsg-iter} with~$\alpha_j$ and sets $B, A_j$, yielding that
\begin{equation*}
    \abs{\set{(a_1, a_2) \in A_j^2 : a_1 - a_2 \in \Sym(B; \alpha_j^2 / 2)}} \geq \frac{\alpha_j^2 |A_j|^2}{2}.
\end{equation*}
Next, apply \cref{lem:dyadic-pigeonhole} with parameter $\alpha_{j+1} = \alpha_j^2 / 2$ on the sets $A_j, \Sym(B; \alpha_{j+1})$ to construct a set $A_{j+1} \subseteq \Sym(B; \alpha_{j+1})$ satisfying, for all $x \in A_{j+1}$,
\begin{equation*}
    r_{A_j-A_j}(x) = \Omega\parens*{\frac{\alpha_j}{1 + \log \frac{1}{\alpha_j} + \log(1 + \frac{|\Sym(B; \alpha_{j+1})|}{|A_j|})} \cdot \frac{|A_j|^2}{|A_{j+1}|}} = \Omega_j\parens*{\frac{|A_j|^2}{|A_{j+1}|}}.
\end{equation*}
In the last step we used that $|A_j| = \Omega_j(|A|)$ (by~\ref{thm:asym-bsg:itm:property-size}) and $|\Sym(B; \alpha_{j+1})| \leq |B| / \alpha_{j+1}$; taken together these imply that \smash{$\log(1 + |\Sym(B; \alpha_{j+1})| / |A_j|)$} can be hidden in the $\Omega_j$ notation. Evidently, the set~$A_{j+1}$ satisfies both Properties~\ref{thm:asym-bsg:itm:property-sym} and~\ref{thm:asym-bsg:itm:property-pop}, completing the construction.

\paragraph{Step 2: Constructing $H$ by the BSG Theorem}
Take $0 \leq j < J$ minimizing $|A_{j+1}| / |A_j|$. Recall that $|A_0| = |A|$ and that $|A_J| \leq |\Sym(B; \alpha_J)| \leq |B| / \alpha_J = O_J(|B|)$ by Property~\ref{thm:asym-bsg:itm:property-sym}. Thus, writing $L = 1 + |B| / |A|$, we must have
\begin{equation*}
    |A_{j+1}| = O_J\parens*{L^{1/J} \cdot |A_j|}.
\end{equation*}
Moreover, Property~\ref{thm:asym-bsg:itm:property-pop} implies that
\begin{align*}
    E(A_j)
    &= \sum_{x \in A_j - A_j} r_{A_j - A_j}(x)^2 \\
    &\geq \sum_{x \in A_{j+1}} r_{A_j - A_j}(x)^2 \\
    &= \Omega_j\parens*{\frac{|A_j|^4}{|A_{j+1}|}} \\
    &= \Omega_J\parens*{\frac{|A_j|^3}{L^{1/J}}}.
\end{align*}
In this situation, the symmetric BSG theorem (\cref{thm:bsg-symm}) applies, yielding some $K_j = O_J(L^{O(1/J)})$ and a set $H$ with
\begin{equation} \label{thm:asym-bsg:eq:H-size}
    |H| \leq K_j |A_j|,
\end{equation}
and
\begin{equation} \label{thm:asym-bsg:eq:H-doubling}
    |H + H| \leq K_j |H|,
\end{equation}
and
\begin{equation} \label{thm:asym-bsg:eq:H-cover-A_j}
    |A_j \cap H| \geq \frac{|A_j|}{K_j}.
\end{equation}
\Cref{thm:asym-bsg:eq:H-doubling} is claimed as in the theorem statement (once we fix the parameter $J$). \Cref{thm:asym-bsg:eq:H-size,thm:asym-bsg:eq:H-cover-A_j} involve $A_j$ rather than $A$; in the following we show that these imply similar statements for~$A$.

\paragraph{Step 3: Relating $H$ to $A$}
To relate $H$ to $A$, we traverse the sets $A = A_0, A_1, \dots, A_j$ \emph{backwards}. More precisely, we deduce from $|A_{j} \cap (H + x_{j})| \geq |A_{j}| / K_{j}$ that $|A_{j-1} \cap (H + x_{j-1})| \geq |A_{j-1}| / K_{j-1}$, for some shifts $x_{j-1}$ and parameters $K_{j-1}$ (we have $x_j = 0$). Indeed, assuming $|A_{j} \cap (H + x_{j})| \geq |A_{j}| / K_{j}$, Property~\ref{thm:asym-bsg:itm:property-pop} implies that
\begin{equation*}
    \abs{\set{(a_1, a_2) \in A_{j-1}^2 : a_1 - a_2 \in (H + x_{j})}} \geq \Omega_{j}\parens*{\frac{|A_{j-1}|^2}{K_{j}}},
\end{equation*}
or equivalently,
\begin{equation*}
    \sum_{a \in A_{j-1}} |A_{j-1} \cap (H + x_{j} + a)| \geq \Omega_{j}\parens*{\frac{|A_{j-1}|^2}{K_{j}}}.
\end{equation*}
In particular, there is some $a$ such that $|A_{j-1} \cap (H + x_{j} + a)| \geq \Omega_{j}(|A_{j-1}| / K_{j})$. Choosing $x_{j-1} = x_{j} + a$ and $K_{j-1} = O_j(K_{j})$, the claim follows. By induction we learn that
\begin{equation} \label{thm:asym-bsg:eq:H-cover-A}
    |A \cap (H + x_0)| \geq \frac{|A|}{K_0},
\end{equation}
for some translate $x_0$ and some parameter
\begin{equation*}
    K_0 = O_j(K_j) = O_J(L^{O(1/J)}). 
\end{equation*}
\cref{thm:asym-bsg:eq:H-cover-A} is exactly as claimed in the theorem statement, when shifting $H$ by $x_0$.

\paragraph{Step 4: Relating $H$ to $B$}
We finally relate $H$ to $B$. Combining Property~\ref{thm:asym-bsg:itm:property-sym} with \cref{thm:asym-bsg:eq:H-size,thm:asym-bsg:eq:H-cover-A_j} implies that
\begin{equation*}
    |H \cap \Sym(B; \alpha_j)| \geq \frac{|H|}{K_j^2}.
\end{equation*}
In particular, by definition of $\Sym(B; \alpha_j)$ we have
\begin{equation} \label{thm:asym-bsg:eq:B+H}
    \abs{\set{(b, h) \in B \times H : b + h \in B}} \geq \frac{\alpha_j}{K_j^2} \cdot |B| \, |H|.
\end{equation}
Let $\Lambda \subseteq B$ be a random subset, sampled with rate $p = \min(1, 4 K_j^2 / (\alpha_j |H|))$. Then, with constant probability,
\begin{equation} \label{thm:asym-bsg:eq:H-cover-B}
    |B \cap (H + \Lambda)| \geq \frac{\alpha_j}{4 K_j^2} \cdot |B|.
\end{equation}
To see this, call an element $b \in B$ \emph{relevant} if
\begin{equation*}
    r_{B+H}(b) \geq \frac{\alpha_j}{2 K_j^2} \cdot |H|.
\end{equation*}
\cref{thm:asym-bsg:eq:B+H} entails that there must be at least $\alpha_j / 2K_j^2 \cdot |B|$ relevant elements. But each relevant item is not covered in $H + \Lambda$ with probability at most $(1 - p)^{\frac{\alpha_j}{2K_j^2} \cdot |H|}$. When $p = 1$, this trivially equals $0$; otherwise, 
\begin{equation*}
    (1 - p)^{\frac{\alpha_j}{2K_j^2} \cdot |H|} \leq \exp(-p \cdot \frac{\alpha_j}{2K_j^2} \cdot |H|) \leq \exp(-2) < \frac{1}{4}.
\end{equation*}
By Markov's inequality, with probability at least $1/2$, at least half the relevant elements are covered by $H + \Lambda$. This proves \cref{thm:asym-bsg:eq:H-cover-B}. At the same time, $\Lambda$ satisfies the following size bound with high probability:
\begin{equation} \label{thm:asym-bsg:eq:Lambda-size}
    |\Lambda| \leq O\parens*{\frac{K_j^2}{\alpha_j} \cdot \frac{|B|}{|H|}}.
\end{equation}

\paragraph{Wrapping Up}
All in all, \cref{thm:asym-bsg:eq:H-doubling,thm:asym-bsg:eq:Lambda-size,thm:asym-bsg:eq:H-cover-A,thm:asym-bsg:eq:H-cover-B} are as claimed in \cref{thm:asym-bsg}; it only remains to choose appropriate parameters. All these equations have an overhead of the form $O_J(L^{O(1/J)})$, i.e.,
\begin{equation*}
    C(J) \cdot \alpha^{-C(J)} \cdot L^{C_0 / J} \cdot (\log L)^{C(J)},
\end{equation*}
for some absolute constant $C_0$ and some constant $C(J)$ depending on $J$. Choosing $J > C_0 / \epsilon$, this overhead indeed takes the form
\begin{equation*}
    K = O_\epsilon(\alpha^{-O_\epsilon(1)} L^\epsilon).
\end{equation*}
The proof of \cref{thm:asym-bsg} is complete.

\subsection{Efficient Algorithm}
In this section we finally argue that the previous proof can be implemented by an efficient algorithm, running in time $\tO((|A| + |B|) K)$ for some $K \leq O_\epsilon(\alpha^{-O_\epsilon(1)} L^\epsilon)$. To this end, we revisit the four steps from before and comment on their implementations.

\paragraph{Step 1: Iteratively Smoothing $A$}
In the first step we explicitly compute the sets $A_1, \dots, A_J$. This step is already almost completely algorithmic. Each set $A_{j+1}$ is constructed by one call to \cref{lem:dyadic-pigeonhole} on input $A_j$ and $D = \Sym(B; \alpha_{j+1})$; the only problem is that we do not have access to $\Sym(B; \alpha_{j+1})$. Instead, we will compute some superset
\begin{equation*}
    \Sym(B; \alpha_{j+1}) \subseteq D \subseteq \Sym(B; \alpha_{j+1}/2)
\end{equation*}
by \cref{lem:approx-pop-sums} in time $\tO(|B| / \alpha_{j+1})$. This leaves the correctness unchanged. The running time of \cref{lem:dyadic-pigeonhole} can then be bounded by
\begin{equation*}
    \tO\parens*{|A_j| + \frac{|D|}{\alpha_{j+1}}} = \tO\parens*{\frac{|B|}{\alpha_{j+1}^2}}.
\end{equation*}
The total time of step 1 is dominated by the final iteration, $\tO(|B| / \alpha_J^2)$, which is within the budget.

\paragraph{Step 2: Constructing $H$ by the BSG Theorem}
In step 2 we first identify the index $j$ minimizing $|A_{j+1}| / |A_j|$. Since we have computed the sets $A_1, \dots, A_J$ explicitly, this can be done in linear time. Then we apply the BSG theorem (\cref{thm:bsg-symm})
to the set $A_j$ to construct $H$. This runs in time $\tO(|A_j| K_j)$, for some $K_j = O_J(L^{O(1/J)})$ as before. Again, recalling that $|A_j| \leq |B| / \alpha_j$ by Property~\ref{thm:asym-bsg:itm:property-size}, this is within the claimed time budget.

\paragraph{Step 3: Relating $H$ to $A$}
The only data that we need to compute in this step is the shift $x_0$ satisfying \cref{thm:asym-bsg:eq:H-cover-A}, i.e.,
\begin{equation*}
    |A \cap (H + x_0)| \geq \frac{|A|}{K_0}.
\end{equation*}
In fact, we settle for a slightly weaker condition. Namely, observe that this shift $x_0$ is equivalently an element of $\Pop(A, -H; |A| / K_0)$. Hence, to find $x_0$ we may run \cref{lem:approx-pop-sums} to compute a set sandwiched between $\Pop(A, -H; |A| / K_0)$ and $\Pop(A, -H; |A| / 2 K_0)$, and pick any element $x_0$ from that set. By definition, that element satisfies
\begin{equation*}
    |A \cap (H + x_0)| \geq \frac{|A|}{2K_0}.
\end{equation*}
We may increase $K_0$ to $2K_0$ without any harm, and then $x_0$ is as claimed in \cref{thm:asym-bsg:eq:H-cover-A}.

The running time is dominated by the call to \cref{lem:approx-pop-sums} in time $\tO(|H| K_0)$. By \cref{thm:asym-bsg:eq:H-size} and Property~\ref{thm:asym-bsg:itm:property-size} this is at most $\tO(|A_j| K_j K_0) = \tO(|B| K_j K_0 / \alpha_j)$, which is within the budget.

\paragraph{Step 4: Relating $H$ to $B$}
The final step only involves the construction of $\Lambda$. Recall that $\Lambda$ is constructed as a random subset of $B$, sampled with an appropriate rate $p$. We have argued that with constant probability, the size $|B \cap (H + \Lambda)|$ is at least as large as claimed in \cref{thm:asym-bsg:eq:H-cover-B}. The only modification in our algorithm is to amplify the success probability. That is, we sample $R = \Theta(\log |B|)$ sets $\Lambda_1, \dots, \Lambda_R$, compute the sumsets $H + \Lambda_1, \dots, H + \Lambda_R$ by brute-force, and report the set $\Lambda_r$ maximizing $|B \cap (H + \Lambda_r)|$. With high probability that choice satisfies \cref{thm:asym-bsg:eq:H-cover-B}. Moreover, with high probability all sets $\Lambda_1, \dots, \Lambda_R$ simultaneously satisfy the size bound from \cref{thm:asym-bsg:eq:Lambda-size},
\begin{equation*}
    |\Lambda_1|, \dots, |\Lambda_R| \leq O\parens*{\frac{K_j^2}{\alpha_j} \cdot \frac{|B|}{|H|}}.
\end{equation*}
The running time of step 4 is therefore bounded by
\begin{equation*}
    O\parens*{\sum_{r=1}^R \parens*{|B| + |H| \, |\Lambda_r|}} = \tO\parens*{\frac{K_j^2}{\alpha_j} \cdot |B|}.
\end{equation*}
This is within the claimed time budget.

\section*{Acknowledgements}
We are very grateful for the detailed feedback of an anonymous reviewer, especially for suggesting a substantial simplification of the proof of \cref{thm:k-sum-pop}.

C.\ J.\ is supported by the Miller Research Fellowship at the Miller Institute for Basic Research in Science, UC Berkeley. Y.\ X.\ is supported by NSF HDR TRIPODS Phase II grant 2217058 (EnCORE Institute).

\bibliographystyle{alphaurl}
\bibliography{main}

\end{document}